\title{Scheduling in Queueing Systems with \\ Uncertain and Evolving Holding Costs}
\author{Caner Gocmen\thanks{Meta Platforms, \texttt{caner@meta.com}} \and Thodoris Lykouris\thanks{Massachusetts Institute of Technology, \texttt{lykouris@mit.edu}} \and Deeksha Sinha\thanks{Meta Platforms, \texttt{deekshasinha@meta.com}}
\and Wentao Weng\thanks{Massachusetts Institute of Technology, \texttt{wweng@mit.edu}}}
\date{
}
\begin{document}

\maketitle

\begin{abstract}
In content moderation for social media platforms, the cost of delaying the review of a content is proportional to its view trajectory, which fluctuates and is apriori unknown. Motivated by such uncertain and evolving holding costs, we consider a queueing model where job states evolve based on a Markov chain with state-dependent instantaneous holding costs. We demonstrate that in the presence of such uncertain and evolving holding costs, the two canonical algorithmic principles, instantaneous-cost ($c\mu$-rule) and expected-remaining-cost ($c\mu/\theta$-rule), are suboptimal. By viewing each job as a Markovian ski-rental problem, we develop a new index-based algorithm, \textsc{Opportunity-adjusted Remaining Cost} ($\alg$), that adjusts to the opportunity of serving jobs in the future when uncertainty partly resolves. We show that the suboptimality gap of $\alg$ scales as $\tilde{O}(\sqrt{N})$, where $N$ is the system size. This bound shows that $\alg$ achieves asymptotic optimality for overloaded systems when the system size $N$ scales to infinity. Moreover, the bound is independent of the state-space size, which is a desirable property when job states contain contextual information. We corroborate our results with an extensive simulation study based on two holding cost patterns (online ads and user-generated content) that arise in content moderation for social media platforms. Our simulations based on synthetic and real datasets demonstrate that $\alg$ consistently outperforms existing practice, which is based on the two canonical algorithmic principles.
\end{abstract}

\section{Introduction}\label{sec:intro}
Content moderation is important for any platform \citep{gillespie2018custodians} and is especially technically challenging for large social media platforms \citep{halevy2022preserving} due to the scale of content from billions of users. To address this technical challenge, social media platforms like Meta and TikTok rely on an AI-human pipeline and employ tens of thousands of humans reviewers \citep{meta-humanreview,tiktok}. In particular, Artificial Intelligence (AI) models initially filter content that clearly violates the platform's policies (henceforth referred to as \emph{policy-violating content}) \citep{facebook-standard, tiktok-standard} and route ambiguous content to human reviewers (see Section~\ref{sec:background} for further discussion of this pipeline). Human reviewers then scrutinize the latter content to identify further policy violations. The central objective of this human review system is to reduce the prevalence of policy-violating content (henceforth referred to as \emph{prevalence}), which is typically proportional to its number of views \citep{halevy2022preserving}. As such, a content's \emph{holding cost}, which is proportional to its accumulated views while waiting for review, differs across content. Given the large volume of content and the heterogeneity in holding costs, the scheduling algorithm, which determines the review order, plays a vital role in the prevalence reduction efforts for content moderation.

There are two canonical algorithmic principles on how to schedule jobs with heterogeneous holding costs, which have been extensively studied with applications from manufacturing systems \citep{pinedo2012scheduling} to service systems such as call centers \citep{gans2003telephone} and hospitals \citep{armony2015patient}. First,  the  (generalized) $c\mu$-rule greedily serves jobs with the maximum (service-rate-weighted) \emph{instantaneous holding cost}. This algorithm is optimal for linear holding costs \citep{CoxSmith61} as well as heavy-traffic optimal for convex holding costs \citep{van1995dynamic,mandelbaum2004scheduling}. Second, the $c\mu/\theta$-rule prioritizes jobs with highest (service-rate-weighted) \emph{expected remaining holding cost}. This algorithm is asymptotically optimal when jobs have linear holding costs and abandon the system at an exponential rate \citep{atar2010cmu,atar2011asymptotic}. The optimality guarantees for both algorithmic principles rely on the assumption that the holding cost of a job follows a deterministic and known function of the wait time until the job gets served or (independently) abandoned.

This known-holding-cost setting is not applicable to content moderation. In content moderation, the holding cost of a job (content waiting for human review) is proportional to the number of views it accumulates \citep{halevy2022preserving}. The number of views a content piece may obtain over time is inherently uncertain; that said, the uncertainty gradually resolves over time \citep{cha2009analyzing}. Such evolving uncertainty manifests in social media platforms due to the large sources of inherent randomness contributing to a content's view trajectory. For example, a content's view trajectory depends on whether it is liked or reshared by friends \citep{rizoiu2017expecting,haimovich2021popularity}, whether it is recommended by the platform \citep{haimovich2021popularity}, or even whether it turns out to provoke positive emotion \citep{berger2012makes}. Despite the available tools for content popularity prediction (see \cite{zhao2015seismic,rizoiu2017expecting, haimovich2021popularity} and the references therein), perfectly predicting a content's view trajectory is unrealistic \citep{salganik2006experimental,cheng2014can}, which separates content moderation from the typical known-holding-cost setting. We note that uncertain and evolving holding costs are not unique to content moderation but  also arise in other applications. For example, in healthcare, patients’ health conditions can deteriorate or improve over time \citep{hu2022optimal,akan2012broader}; in call centers, low-priority customers may randomly upgrade their priority and then have higher holding cost rates \citep{down2010n}. As a result, although this paper is primarily motivated by content moderation, our insights extend beyond this application.

\begin{figure}
\centering
\includegraphics[width=2.1in]{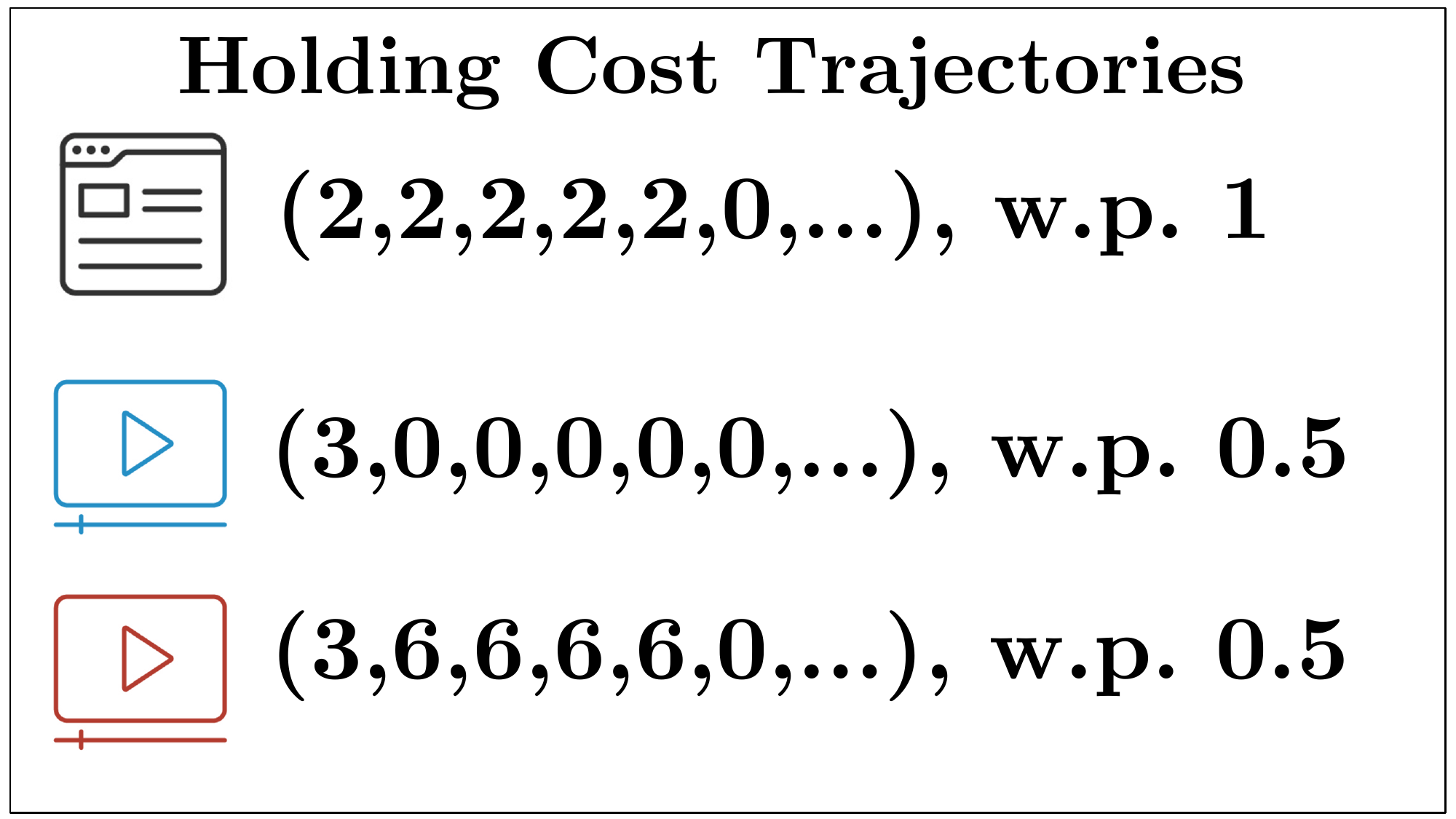}
\includegraphics[width=2.1in]{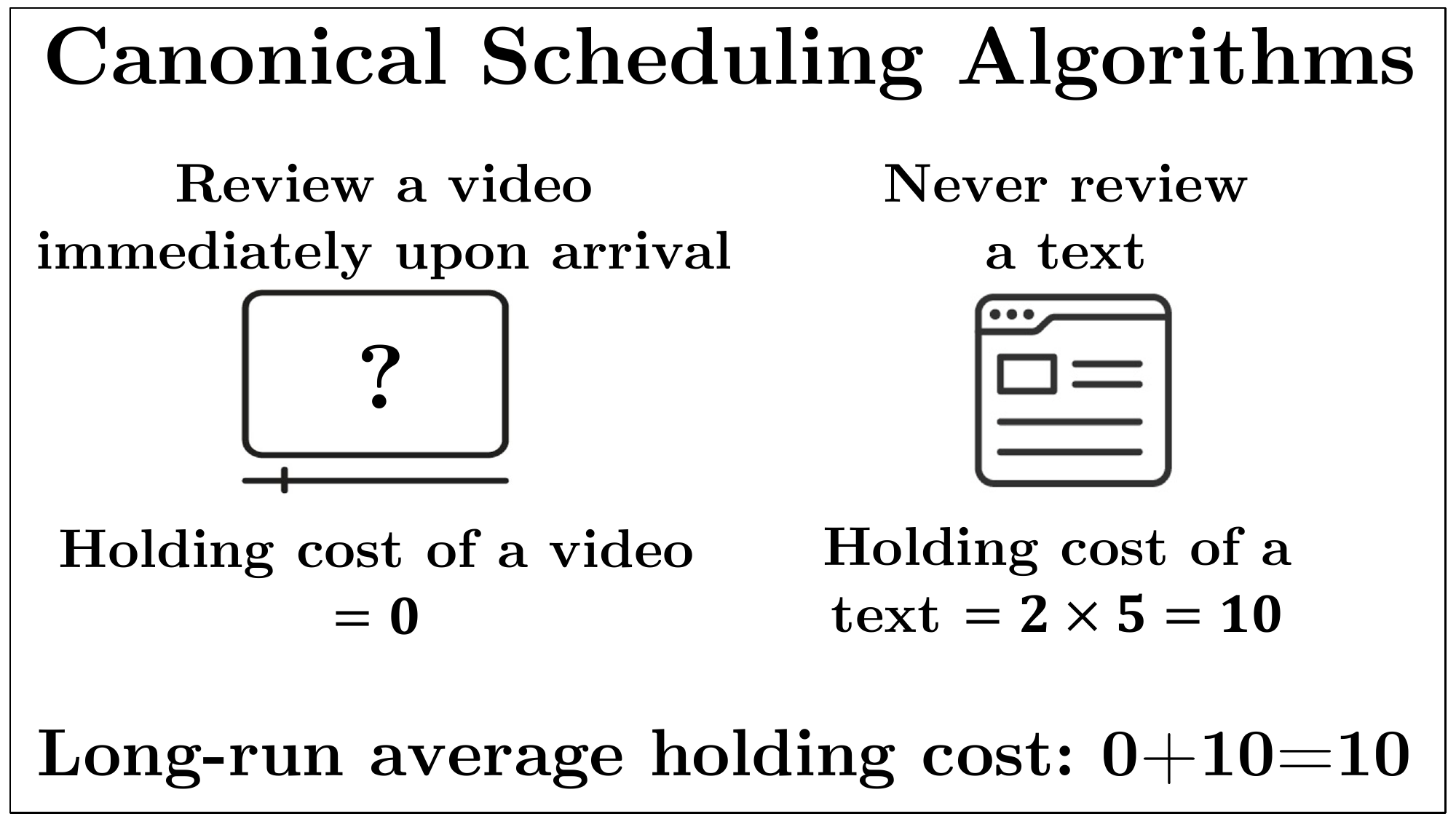}
\includegraphics[width=2.1in]{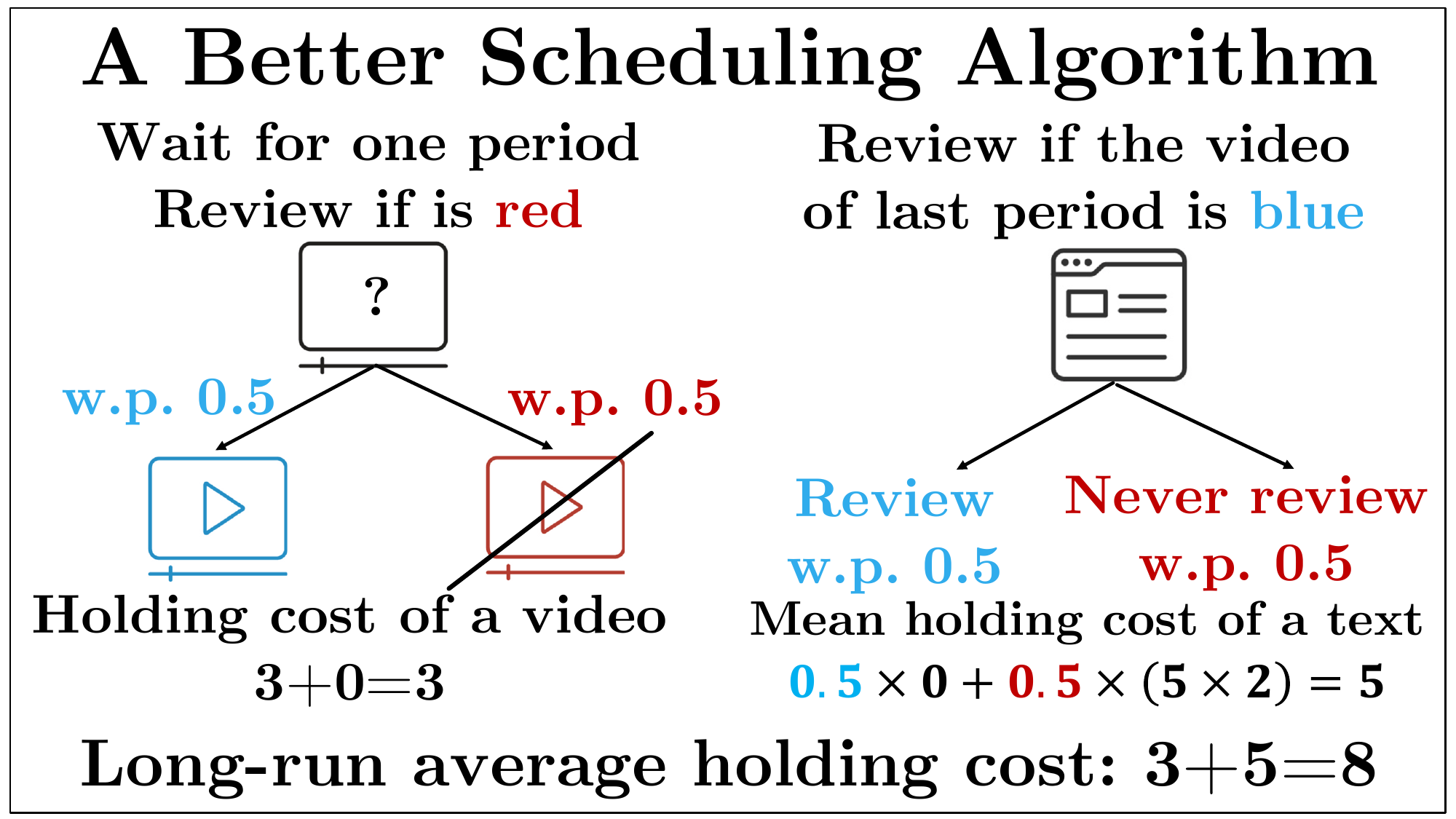}
\caption{Consider a setting where in each period both a \textsc{Text} and a \textsc{Video} (with unknown color) arrive, with the given knowledge of holding cost trajectories. With review capacity of only one post per period, existing algorithmic principles (instantaneous or expected remaining cost) always serve new \textsc{Video} jobs. A better algorithm Waits for one period to know whether a \textsc{Video} is of low (blue) or high (red) virality. This enables it to save capacity by not reviewing a blue \textsc{Video}.}
\label{fig:failure}
\end{figure}

When holding costs are uncertain and evolving, the two canonical algorithmic principles discussed above fail to deliver near-optimal performance; Figure~\ref{fig:failure} provides an example illustrating this suboptimality. In this example, the system operates in discrete-time periods and there are three possible job holding cost trajectories, which we pictorially refer to as \textsc{Text}, \textsc{BlueVideo}, and \textsc{RedVideo}. All jobs incur zero cost after five periods. A \textsc{Text} job incurs a cost of $2$ for the initial five periods, a \textsc{BlueVideo} job incurs a cost of $3$ for the first period but has $0$ cost afterwards, and a \textsc{RedVideo} job incurs a cost of $3$ for the first period and a constant cost of $6$ afterwards. The algorithm observes the current holding costs (but not the future), serves one job, and pays the holding cost for unserved jobs. In each period, two jobs arrive: one is \textsc{Text} (the algorithm knows since only \textsc{Text} jobs have holding cost $2$) and the other one is \textsc{Video}, which has equal probability to be a \textsc{BlueVideo} or a \textsc{RedVideo} job (the algorithm knows it is either of the two since the holding cost is $3$, but not the actual realization since it cannot observe future holding costs). Both the canonical algorithmic principles will prioritize a \textsc{Video} job over any \textsc{Text} job as it has a larger instantaneous cost ($3 > 2$) and a larger expected remaining cost ($0.5\times 3 + 0.5 \times (3+6\times 4) > 5 \times 2$). The algorithm will serve a \textsc{Video} job for every period and all \textsc{Text} jobs will backlog in the queue. However, serving a \textsc{BlueVideo} job is suboptimal as it will incur no more cost after its first period. Instead, a better algorithm can account for the future opportunity to serve a \textsc{Video} job once its uncertainty is resolved. If the algorithm prioritizes  a new \textsc{Text} job over a new \textsc{Video} job, it can avoid servicing a \textsc{BlueVideo} job while still being able to serve a \textsc{RedVideo} job in the next period. This illustrates that with uncertain holding costs, canonical algorithmic principles are suboptimal: the instantaneous-cost principle completely ignores the future holding cost a job can incur, while the expected-remaining-cost principle disregards the opportunity of serving a job in the future when the job's uncertainty partially resolves. 

Thus motivated, this paper tackles the following research question:   
\begin{center}
\emph{How should we redesign the scheduling algorithm when holding costs are uncertain and evolving}?
\end{center}

\subsection{Our methodological contributions}

\paragraph{Model.} To capture uncertain and evolving holding costs, we consider a discrete-time queueing system where the instantaneous holding cost of a waiting job is determined by the job's state and the state evolves according to a known Markov chain. In particular, the system is specified by the following model primitives: a system size  $N$, an arrival rate $\lambda$, a service rate $\mu$, a cost vector $\bolds{c}$, and a transition kernel between the states. In each period $t$, a random number of servers $R(t) \sim \mathrm{Bin}(N,\mu)$ are available to serve jobs. The scheduling algorithm selects at most $R(t)$ jobs from the waiting queue $\set{Q}(t)$ and those jobs leave the system without incurring further cost. All remaining jobs $j$ with state~$S_j(t)$ incur an instantaneous holding cost $c(S_j(t))$. Moreover, each job $j$ transitions to a new state $S_j(t+1)$ in the next period according to the Markov chain, which we assume is a directed tree. A job can also transition into an empty state $\perp$, meaning that the job abandons the system without getting service. At the end of a period, a random number of new jobs, $A(t) \sim \mathrm{Bin}(N,\lambda)$,  join the system. A feasible scheduling algorithm selects jobs for service with known model primitives and current states of all waiting jobs but no knowledge of jobs' future states and thus cost trajectories. The goal is to design a feasible scheduling algorithm with minimum holding cost. A scheduling algorithm is \emph{asymptotically optimal} if, as the system size $N$ increases, the gap between its long-run average holding cost and that of an optimal feasible scheduling algorithm is sublinear in $N$. 

\paragraph{Algorithm.} The key tension, as demonstrated in Figure~\ref{fig:failure}, is whether to serve a job \emph{now} (to prevent its future uncertain cost using current capacity) or \emph{postpone} its service (to save current capacity but endure an instantaneous cost). Relaxing the capacity constraint with a capacity  price~$\gamma$ that we later specify, this tension is captured by a ski-rental problem. In ski-rental, a skier chooses, over a time horizon, whether to pay a one-time cost of buying equipment (similar to how serving a job pays the capacity  price $\gamma$) or to pay a daily rental cost (similar to how postponing the service of a job yields an instantaneous holding cost). Although a typical ski-rental problem assumes a uniform daily rental cost that becomes zero after an ex-ante unknown time \citep{karlin2001dynamic}, the daily rental cost in our setting is uncertain as the instantaneous cost of a job changes over time. 

Thanks to the Markovian structure of jobs, this problem, which we refer to as \emph{Markovian ski-rental}, can be formulated as a dynamic program  for any given capacity price $\gamma$. Specifically, the cost-to-go function $V(\gamma,i)$ for a job with current state $i$ satisfies $V(\gamma, i) = \min\left\{\gamma, c(i) + V^f(\gamma, i)\right\}$ with $V^f(\gamma,i)$ being the expected (future) cost-to-go function for the next state conditioning on the current state. This recursion shows that the optimal cost for a job with current state $i$ is the minimum between (i) the cost $\gamma$ of serving it now, or (ii) the sum of the instantaneous cost $c(i)$ and the future minimum cost $V^f(\gamma, i)$. Based on this formulation, our scheduling algorithm, named \textsc{Opportunity-adjusted Remaining Cost} (or $\alg$ in short), computes
a suitable capacity price~$\gamma^\star$ from the dual of a linear program (LP), which is a fluid relaxation of the original problem. It then creates an index for a job with state $i$ by $\ind_{\alg}(i) = c(i) + V^f(\gamma^\star, i)$ and serves the $R(t)$ jobs with highest indices.  Theoretically, this algorithm enjoys a suboptimality gap bound independent of the state-space size and achieves asymptotic optimality (Theorem~\ref{thm:pafou}). Practically, it motivates an easy-to-implement heuristic that shows strong numerical performance (see Sections~\ref{sec:background}~and~\ref{sec:numerics}).

\paragraph{Analysis.} Our analysis contains four components. First, using the tree structure, we derive the fluid equilibrium for any priority algorithm, which operates based on a fixed ordering of states and serves jobs with states in this ordering (Section~\ref{sec:fluid-sol}). The class of priority algorithms includes any index-based scheduling algorithms, such as $\alg$, with a suitable tie-breaking rule. Second, using the complementary slackness theorem, we prove that the fluid equilibrium for $\alg$ is optimal for the fluid LP (Lemma~\ref{lem:optimal-fluid}). Third, we show that the steady-state distribution of queue lengths under any priority algorithm behaves similarly with its fluid equilibrium up to a $\tilde{O}(\sqrt{N})$ term (Lemma~\ref{lem:stochastic}). Lastly, we establish that the fluid LP is a lower bound for the long-run average holding cost of any feasible algorithm (Lemma \ref{lem:lower-bound}). These combined show that the suboptimality gap of $\alg$ is at most $\tilde{O}(\sqrt{N})$, thus establishing its asymptotic optimality.

Our main contribution lies in the third component (Section~\ref{sec:stochastic}) which shows that the steady-state distribution for a priority algorithm is close to its fluid equilibrium. In particular, for any priority algorithm, there is a minimum set of states, which we call the \emph{top set}, such that fully serving jobs with these states would clear all the jobs the  algorithm should serve in its fluid equilibrium. We establish that the algorithm (with high probability) serves all the jobs in the subtree of this top set by analyzing a corresponding Lyapunov function. The Lyapunov analysis contains two steps. The first step shows that this Lyapunov function has a negative drift if its value is large enough (Lemma~\ref{lem:drift-bound}). A main challenge is to upper bound the numbers of jobs with states in the top set, which are \emph{correlated} random variables. We tackle this challenge by analyzing a system with no job-level correlation and establish a large deviation result (Lemma~\ref{lem:queue-upper-bound}). The second step translates the negative drift into a high probability bound of the Lyapunov function. A difficulty is that our Lyapunov function can, albeit with small probability, change significantly within one period. Existing techniques rely on an almost sure upper bound on the change and thus fail to give satisfactory guarantees. To address this issue, we develop a new geometric tail bound (Lemma~\ref{lem:tail-bound}) that only requires the Lyapunov function to have a small change with high probability. More broadly, our techniques contribute to the literature on restless bandits by providing performance guarantee of priority algorithms without assuming the canonical Global Attractor Property (GAP); see related work in Section~\ref{sec:related-work} and further comparison in Appendix~\ref{app:comp}.

\subsection{Application to content moderation for social media platforms}\label{sec:background}

Our work provides a better scheduling algorithm for content moderation in social media platforms. The volume of content in social media platforms is too large to be all reviewed by humans \citep{avadhanula2022}. As a result, current content moderation systems combine AI and human reviewers, where an initial AI filter determines whether a content violates any of the platform's policies and admits it for further human review if the AI is uncertain \citep{lykouris2024learning}. The content that is deemed policy-violating either by the AI system or by human reviewers is then removed from the platform. The scheduling algorithm for the human review system determines the order in which humans review the content 
that was admitted by the AI system, i.e., the content enqueued for human review. Let $p\violating(j)$ be the probability that content $j$ is policy-violating and $\cumview(j)$ be the cumulative number of views of content $j$ until a human reviews it (or the end of a time horizon if a human never reviews it). The system seeks to minimize the \emph{predicted policy-violating views}, defined by the sum of $p\violating(j)\times \cumview(j)$ among enqueued content~$j$.\footnote{We use the predicted instead of the actual policy-violating views, which is based on the actual human labels of content, because the latter is unobservable for unreviewed content. See Section~\ref{sec:instantiate} for a more detailed discussion.} There are three scheduling algorithms used in practice: 
\begin{itemize}
\item \textsc{pViolating} \citep{Linkedin} prioritizes the content with the largest policy-violating probability, which is typically available from (a group of) machine learning (ML) models \citep{avadhanula2022}.
\item \textsc{Velocity} \citep{avadhanula2022} prioritizes the content with the largest number of views in the last period weighted by the policy-violating probability. This mimics the first canonical algorithmic principle discussed above and optimizes instantaneous cost similar to the $c\mu$-rule.
\item $\textsc{pIV}$ \citep{makhijani2021} prioritizes content with the highest \emph{predicted} number of remaining views weighted by the policy-violating probability. This mimics the second canonical algorithmic principle discussed above and optimizes the expected remaining cost similar to the $c\mu/\theta$ rule.
\end{itemize}
These baseline algorithms do not carefully incorporate the uncertainty in the content's view trajectory. The first two do not account for future views at all, while $\textsc{pIV}$ only considers the expected number of future views and, as a result, may overtly tailor to content that may become extremely viral but with low probability of being policy-violating. In contrast, our algorithm adapts to this uncertainty, yielding significant practical improvements as we detail below. 

To avoid the overhead of training a reinforcement learning model for the cost-to-go function~$V^f$, we propose a simplified implementation of $\alg$ based on hindsight approximation \citep{sinclair2023hindsight}, which we call $\hindalg$. Specifically, we approximate the cost-to-go function by the product of (i) the policy-violating probability and (ii) the predicted remaining views capped by a hyper-parameter~$\gamma$. We prioritize content with the highest $\hindalg$ index, (informally) defined by
\[
\ind_{\hindalg} = p\violating \cdot \left(\text{Views in the last period} + \text{Prediction of }\min\{\gamma, \text{Remaining views}\}\right).
\]
This algorithm is easy to deploy in practice as it is index-based like the existing algorithms. Although it requires building a new ML model to predict $\min\{\gamma, \text{Remaining views}\}$, there are existing ML models predicting a content's remaining views \citep{zhao2015seismic,rizoiu2017expecting,haimovich2021popularity}. One can reuse the training pipeline of these models to create the new one by capping the target value with $\gamma$.

Our simulations based on a dataset containing view trajectories of YouTube videos (Section~\ref{sec:numerics}) demonstrate substantial improvement of $\hindalg$ over existing scheduling approaches. For a wide range of reviewing capacity, $\hindalg$ consistently delivers $3.2\%$ to $8.5\%$ reduction in policy-violating views compared to existing practices. Another way to measure the  benefit of $\hindalg$ is through its reviewer-hour savings, i.e., the reduction in the number of human reviews needed to prevent the same amount of policy-violating views as a baseline scheduling algorithm. Given that social media platforms employ tens of thousands of reviewers \citep{meta-humanreview,tiktok}, even one percent reviewer-hour savings can unleash thousands of reviewer hours per day with no deterioration in moderation quality. Our simulations demonstrate that $\hindalg$ offers $7\%$ to $20\%$ reviewer-hour savings; if realized in practice, this can translate to millions of reviewer hours for one year. We also conduct robustness checks to show that these benefits persist even when (i) the simulations are for alternative synthetic datasets tailed to ads and user-generated content or (ii) the probability of violation  is uncalibrated or updated dynamically (either by Bayesian learning or bandit learning).

\subsection{Related work}\label{sec:related-work}
\noindent\textbf{Bayesian learning and scheduling.} Our work lies in the recent literature of using online information (learning) to enable better scheduling decisions \citep{walton2021learning}. Specifically, our model resembles a \emph{Bayesian} setting where dynamic job classes capture uncertainties in job information that gradually resolve over time. Leveraging information in such uncertainties can help develop better scheduling algorithms. For example, \cite{bassamboo2016scheduling, bassamboo2023optimally} show how to use information on the job abandonment distribution to achieve optimal queue length as some jobs leave the system sooner than others. To capture uncertainty in patients' future health conditions, \cite{hu2022optimal} study a queueing model where patients randomly transition between a moderate and an urgent state. They propose an optimal index-based scheduling algorithm that incorporates state transition information. Jobs with dynamic classes are also used to model liver allocation systems \citep{akan2012broader} as well as service systems with customer upgrades \citep{down2010n}. Our contribution to this literature is using general Markov chain to model job state transitions and proposing an asymptotically optimal algorithm with a suboptimality gap bound independent of the number of possible job states. Although jobs in our model and these studies change states only while they are waiting, in another literature they change states only upon service but can either rejoin the system or leave \citep{alizamir2013diagnostic, massoulie2016capacity, bimpikis2019learning, shah2020adaptive}. Such models capture expert systems where inspection of a job by the server reveals new Lastly, there is research \citep{singh2025feature,zhang2025admission} focusing on \emph{offline} learning settings where a job has an unknown job type and the decision maker trains a machine learning model based on history data to find the optimal mapping between a job's observed covariate and its latent type. Our work differs from this literature by allowing the decision maker to update her belief of a job's type as new information becomes available.

\paragraph{Restless bandits and scheduling.} Our model can be viewed as restless bandits with dynamically arriving and departing arms. In a typical restless bandit setting, a decision maker activates a subset of arms and the arms transition according to a known Markov chain depending on whether the arm is activated or not \citep{whittle1988restless}. Our work is close to the development of index-based algorithms in restless bandits; see \cite{nino2023markovian} for a survey. \cite{whittle1988restless} derives the Whittle index algorithm which is asymptotic optimal under an indexability condition \citep{whittle1988restless,weber1990index}. \cite{larranaga2015asymptotically,ayesta2017scheduling,aalto2024whittle} show the optimality of Whittle index for certain multiclass queueing systems with abandonment. However, Whittle index can be difficult to compute \citep{akbarzadeh2022conditions}  and the indexability condition can be hard to verify \citep{bertsimas2000restless}. Later work develops indices based on Lagrangian duals \citep{bertsimas2000restless, brown2020index, verloop2016asymptotically, avrachenkov2024lagrangian}; however, the algorithms require the hard-to-verify GAP condition to be asymptotically optimal \citep{weber1990index, hong2023restless}. Recent work has devised non index-based algorithms with asymptotic optimality without the GAP assumption \citep{hong2023restless, hong2024unichain, hong2024achieving,gast2024linear}. Concurrent to our work, \cite{li2025improving} shows how scheduling in a multi-class queueing system with convex holding costs translates into a restless bandit problem and designs a heuristic based on Whittle index. They demonstrate that their algorithm has numerically improved performance, though no theoretical guarantee is given. Our work contributes to this literature by showing that an index-based algorithm can still be asymptotic optimal without the GAP assumption if, when activated, an arm stops incurring any cost and, when not activated, evolves according to a Markov chain that reaches an absorbing state with non-zero probability; see Appendix~\ref{app:comp} for further discussion.

\paragraph{Techniques on showing steady-state optimality.} Existing work in the analysis of index-based scheduling algorithms mostly focuses on analyzing a fluid version of the stochastic system \citep{hu2022optimal,long2020dynamic} or establishing ``fluid optimality'' \citep{atar2010cmu}, i.e., fixing a time horizon $T$ the algorithm becomes optimal as the system size $N$ goes to $\infty$; see \cite{puha2022fluid} for a general treatment for fluid analysis of scheduling algorithms in a multi-class multi-server system. Our work differs from these studies in two aspects: first, our system is discrete-time; second and more importantly, we bound the steady state ($T \to \infty$) gap for the stochastic system for any system size $N$. This requires new analytical tools as discussed in \cite{atar2011asymptotic}. We achieve this via the drift method. Building on tools from \cite{hajek1982hitting, bertsimas2001performance}, the drift method was established in \cite{stolyar2004maxweight, eryilmaz2012asymptotically, maguluri2016heavy} on showing heavy-traffic optimality of the \textsc{MaxWeight} algorithm where the system size is fixed but the load scales to a critical point. Our work is closer to the literature of using drift method to analyze load balancing algorithms in a continuous-time many-server regime; see e.g. \cite{ying2017stein, liu2020steady,weng2020optimal,varma2023power}. Our contribution is extending techniques from this literature to a discrete-time setting where the arrival intensity per period scales to infinity. This requires the development of a new geometric tail bound (see the discussion below Lemma~\ref{lem:tail-bound}).

\noindent\textbf{Analytics for Content moderation.} Our work contributes to the recent effort of using analytics to reduce AI error and improve human review efficiency in content moderation. Focusing on the interface between AI and humans, \cite{avadhanula2022} apply online learning to decide, based on risk scores from multiple AI models, which content should be prioritized for human reviews. \cite{lykouris2024learning} abstract the content moderation pipeline into classification, admission, and scheduling decisions and propose a near-optimal algorithm that balances classification error and delay in human reviews. Focusing on the human review system, \cite{makhijani2021} propose a simulation framework that allows the platform to evaluate the impact of different routing and capacity decisions. Given that social media platforms rely on global labor for content moderation \cite{roberts2019behind}, \cite{allouah2023fair} study how to fairly allocate content moderation workload to different review teams. The work closest to ours is \cite{lee2024design}, who extends the $c\mu-$rule to develop an index-based scheduling algorithm accounting for the impact of prediction errors on content holding costs. Their algorithm is heavy-traffic optimal under the assumption of known and convex holding costs. As discussed above, the holding cost of a job in content moderation does not satisfy this assumption as it is proportional to its uncertain (and non-convex) view trajectory. Our work addresses this complexity by designing a near-optimal index-based scheduling algorithm handling uncertain and non-convex holding costs.

\section{Model}\label{sec:model}
\paragraph{Uncertain and evolving holding cost trajectory.} We model the uncertain nature of jobs' holding cost trajectory by a \emph{known} Markov chain. The Markov chain has a finite state space $\set{S}$, a length $L$, a cost vector $\bolds{c} = (c(i))_{i \in \set{S}}$ with $c(i) > 0$ for any $i \in \set{S}$, and a transition kernel $P=(P(i,k))_{i,k \in \set{S}}$ denoting the probability of transitioning from a state $i$ to another state $k$. We assume the Markov process has a tree structure,\footnote{A tree Markov chain can model any stochastic process by defining states as the observed history. Although the number of states can be exponential in the state space of the original stochastic process, our guarantees will not scale with the number of states.} such that the state space partitions into $L$ levels of $\set{S}_0, \ldots, \set{S}_{L - 1}$. Each state $i \in \set{S}_{\ell}$ with $\ell > 0$, has a unique \emph{parent} $\pa(i) \in \set{S}_{\ell - 1}$ such that its transition probability $p(i) \equiv P(\pa(i), i) > 0$ and $P(k, i) = 0$ for any $k \neq \pa(i).$ Let $\child(i)$ denote the set of \emph{children} states for state $i$, i.e., $\child(i) = \{k \in \set{S} : \pa(k) = i\}$. We allow $\sum_{k \in \child(i)} P(i,k) < 1$; in that case, state $i$ has probability $1 - \sum_{k \in \child(i)} P(i, k)$ to transition into an empty state $\perp$ after which no future cost will incur, i.e., the job abandons the system. Letting $\theta = \min_{i \in \set{S}} 1 - \sum_{k \in \child(i)} P(i,k)$ be the minimum abandonment probability across states, we assume throughout this paper that $\theta > 0$. Since the Markov chain is a tree, there is a unique state in $\set{S}_0$, which we call the root and denote by $\rootNode \in \set{S}_0$. For the root node we set $p(\rootNode) = 1$.

\paragraph{A discrete-time queueing system.} 
Given a system size $N$, normalized arrival rate $\lambda \in (0,1)$ and service rate $\mu \in (0,1]$, the queueing system operates in periods $ \{1,2,\ldots\}.$\footnote{We assume $\lambda < 1$ so that in each period the system has nonzero probability to see no arrival.} Initially there is no waiting job in the system. For a period $t \geq 1$, the following events happen.  
\begin{enumerate}
\item \emph{Server capacity:} A random number of servers $R(t)$ are available with  $R(t) \sim \mathrm{Bin}(N, \mu)$.
\item \emph{Service decision:} Observing the states of waiting jobs, the system selects a set $\set{R}(t)$ of jobs from the the queue $\set{Q}(t)$ of jobs waiting for service, which satisfies $\set{R}(t) \subseteq \set{Q}(t)$ and $|\set{R}(t)| \leq R(t).$ Selected jobs leave the system without incurring any cost.
\item \emph{Holding cost:} For all remaining jobs \emph{waiting for service}, they incur a holding cost. Denote by $S_j(t) \in \set{S}$ the random but observable state of a job $j$ in period $t$. The total cost incurred in this period is $\sum_{j \in \set{Q}(t) \setminus \set{R}(t)} c\left(S_j(t)\right).$  
\item \emph{Transition:} Any remaining job $j \in \set{Q}(t) \setminus \set{R}(t)$ transitions (evolves) into a new state $S_j(t+1)$ following the transition kernel $P(S_j(t),S_j(t+1))$. A job transitioning into the empty state~$\perp$ implies that it \emph{abandons}, i.e., it leaves the system without getting service. 
\item \emph{Arrivals:} A set of new jobs $\set{A}(t)$ arrive, with cardinality  of $\set{A}(t)$ being $A(t)$. Further, $A(t)$ is a binomial distribution $\mathrm{Bin}(N,\lambda)$ with $\lambda > 0$ being the normalized arrival rate. All new jobs have the same initial state,  i.e.,  the root node of the Markov chain. Thus, $S_j(t+1) = \rootNode$ for any new job $j \in \set{A}(t)$. 
\item \emph{Queue update:} The queue of waiting jobs for the next period is then $\set{Q}(t+1) = \{j \in \set{Q}(t) \setminus \set{R}(t) \colon S_j(t+1) \neq \perp\} \cup \set{A}(t)$. 
\end{enumerate}

A scheduling algorithm $\nalg$ selects the set $\set{R}(t)$ of jobs for servers to work on in each period $t$. We say that an algorithm $\nalg$ is feasible, if the selection is only based on the historical observed information $\{R(\tau),\set{Q}(\tau), \{S_j(\tau)\}_{j \in \set{Q}(\tau)}\}_{\tau \leq t} \cup \{\set{R}(\tau),\set{A}(\tau)\}_{\tau < t}$,  model primitives $N,\lambda, \mu$, and the Markov chain of jobs, including the cost of each state and the transition kernel. However, the algorithm cannot use future information including the future available servers, the future number of arrivals, and the future states of jobs. 

\paragraph{Objectives.}
Let $\set{Q}(t,\nalg)$ be the set of jobs in queue and $\set{R}(t,\nalg)$ be the selected set of jobs for service under an algorithm $\nalg$ in period $t$. We want to select a feasible algorithm that minimizes the long-run average holding cost defined by
\begin{equation}\label{eq:def-cost}
C(\nalg) = \lim \sup_{T \to \infty} \frac{1}{T} \expect{\sum_{t=1}^T \sum_{j \in \set{Q}(t,\nalg) \setminus \set{R}(t,\nalg)} c(S_j(t))},
\end{equation}
where the expectation is taken over the randomness in (i) server capacity $R(t)$, (ii) algorithm's service decision, (iii) job transitions, and (iv) arrivals $A(t)$. Letting the set of feasible algorithms be $\Pi$, minimizing the long-run average holding cost is equivalent to minimize the \emph{suboptimality gap}, which is 
the increase in cost compared to an optimal feasible algorithm:
\begin{equation}\label{eq:def-regret}
\subopt(\nalg) = C(\nalg) - \inf_{\nalg' \in \Pi} C(\nalg').
\end{equation}
Our goal is to design a scheduling algorithm with small suboptimality gap. Specifically, varying the system size $N$ gives a sequence of queueing systems. For each system, an algorithm $\nalg$ has a corresponding long-run average holding cost $C(N, \nalg)$ and suboptimality gap $\subopt(N, \nalg)$. An algorithm $\nalg$ has vanishing suboptimality gap if $\subopt(N,\nalg) / N \to 0$ as $N \to \infty$.

Vanishing suboptimality gap implies \emph{asymptotic optimality} for overloaded systems ($\lambda > \mu$), which is the case for content moderation. By asymptotic optimality, we mean that as the system size increases, the algorithm obtains nearly the same long-run average holding cost as the optimal algorithm: 
\[\lim_{N \to \infty}\frac{C(\nalg)}{ \inf_{\nalg' \in \Pi} C(N,\nalg')} \to 1.\]
Vanishing suboptimality gap for $\nalg$ implies the above convergence for overloaded systems because for any algorithm $\nalg'$, its cost $C(N,\nalg')$ is at least $c(\rootNode)(\lambda - \mu)N$, which is linear in $N$.

\begin{remark}\label{remark:general-momdel}   
The assumption that job state transitions form a directed tree with a finite depth  is without loss of generality. For any algorithm with vanishing suboptimality gap in our model, we can slightly modify it to achieve the same guarantee in a model where state transitions form a general discrete-time Markov chain (DTMC). This claim is valid as long as there is non-zero abandonment probability. The main idea is to expand the DTMC into a tree by the number of periods a job has been in the system and remove states with a depth larger than $L = \lceil -\ln_{1-\theta} N \rceil$ as a job is unlikely to reach them. See Appendix~\ref{app:general-markov} for further details.
\end{remark}

\paragraph{Notation.} To ease the exposition of jobs' tree Markov chain, we introduce the following notation: for a state $i \in \set{S}$, its ancestor set $\anc(i)$ includes states on the unique path from the root $\rootNode$ to state~$i$ (both included). The subtree of a state $i$, $\sub(i)$, refers to all states $k$ such that $i$ is an ancestor of $k$, i.e., $i \in \anc(k).$ We also write $\sub(\set{X})$ to denote the subtree of a set of states $\set{X}$ such that $\sub(\set{X}) = \bigcup_{i \in \set{X}} \sub(i).$ Similarly, $\anc(\set{X}), \pa(\set{X}), \child(\set{X})$ denote the (unioned) set of ancestors, parents and children for states in $\set{X}.$ Throughout this paper, we use $\mathbb{R}$ to denote the set of real numbers, $\mathbb{R}_{\geq 0}$ to denote the set of non-negative real numbers, and $\set{R}_{\geq 0}^{\set{S}}$ to denote the set of non-negative vectors defined over set $\set{S}$. For any real value $x$, $(x)^+$ denotes $\max(x,0).$ If $x$ is a positive integer, we use $[x]$ to denote the set $\{1,\ldots,x\}.$ We use the notation $\perp$ to denote an empty state and view $\{\perp\}$ as an empty set $\emptyset$ with no element. We include a table of notation in Appendix~\ref{app:table} for frequently used notation in this paper.
\section{Markovian Ski-Rental, Fluid Relaxation, and Our Algorithm}\label{sec:algorithm}
This section presents our algorithm \textsc{Opportunity-adjusted Remaining Cost} ($\alg$) that achieves asymptotic optimality. We first give intuition of the algorithm via the connection between our problem and a Markovian ski-rental problem in Section~\ref{sec:ski-rental}. Solving the ski-rental problem requires the input of a suitable capacity price, which we address in Section~\ref{sec:fluid-lp} by analyzing the dual of a  linear program. Section~\ref{sec:pafou} gives the full algorithm description of $\alg$ and its guarantee. 

\subsection{Intuition: Markovian ski-rental}\label{sec:ski-rental}
To provide the intuition behind $\alg$, we focus on a warm-up setting where, instead of a capacity constraint $R(t)$ for each period, there is a cost $\gamma$ that the algorithm pays for each job it selects to serve. In this warm-up setting, each job is independent of other jobs and follows a \emph{Markovian ski-rental} problem. In typical ski-rental, a skier who is going to ski for an uncertain number of days, makes daily decisions on whether to pay a fixed cost of buying equipments or to pay a unit cost of renting them. Our setting is a Markovian version of this problem as, in every period, the algorithm can either pay a fixed cost $\gamma$ to serve a job (buying) or pay state-dependent cost $c(i)$ if the job is in state $i$ (renting). Formally, suppose there is a job whose initial state is $S_1 = i$ but its future states $S_2,S_3,\dots$ are uncertain. In a period $\tau$, the algorithm can pay an upfront cost $\gamma$ to stop the job from incurring any further cost or do nothing. In the latter case, the job incurs a cost $c(S_{\tau})$ and transitions into the next state according to the Markov chain in Section~\ref{sec:model}. 

We can minimize the total cost of a job via the following dynamic program: letting $V(\gamma, i)$ be the cost-to-go function for state $i$, the Bellman equation gives:
\begin{equation}\label{eq:bellman}
V(\gamma, i) = \min\left\{\gamma, c(i) + V^f(\gamma,i)\right\} \text{ where }V^f(\gamma,i) = \sum_{k \in \child(i)} P(i,k)V(\gamma, k).
\end{equation}
In the minimization problem, the term $\gamma$ corresponds to the cost of serving a job, and the term $c(i) + V^f(\gamma,i)$ represents the cost of \emph{not serving it}, which involves both the known instantaneous cost $c(i)$ and the expected future cost $V^f(\gamma,i).$ 

The cost of not serving a job, $c(i) + V^f(\gamma,i)$, naturally gives an index measuring the value of serving each job, which motivates $\alg$. Specifically,  given a suitable capacity  price $\gamma^\star$ capturing the cost of serving a job, the algorithm assigns, in each period, an index to each waiting job with state $i$, which adjusts to the opportunity of serving the job in the future: 
\begin{equation}\label{eq:index-alg}
\ind_{\alg}(i) = c(i) + V^f(\gamma^\star,i).
\end{equation}
 The index has two components. The first component is the \emph{instantaneous prevented cost}, i.e., serving a job prevents its cost for this period, which is certain and equal to $c(i)$ for a job with state~$i$. The second component is the \emph{future prevented cost} $V^f(\gamma^\star,i)$, i.e.,  serving a job also prevents its future cost. However, this cost is uncertain as the job can transition to different states. The key novelty of our algorithm is to calculate this future uncertainty via the cost-to-go function derived in \eqref{eq:bellman}. Therefore, even if the job can have very high future cost, $\alg$ may  not act now; instead, it may wait until the uncertainty gradually realizes, adjusting for the opportunity to serve this job later. Implementing $\alg$ requires computing a suitable capacity price $\gamma^\star$. 

\subsection{Computing the capacity price: a fluid linear program and its dual}\label{sec:fluid-lp}
To correctly price the capacity, we consider a fluid version of the  problem, which removes randomness in the arrival and service process. To that end, for a job state $i \in \mathcal{S}$, let $q_i$ be the normalized fluid queue length of state-$i$ jobs which approximates the number of waiting jobs of state $i$ at the beginning of a period. Moreover, let $\nu_i$ be the normalized fluid number of served state-$i$ jobs. That is, one expects $Nq_i \approx \left|\{j \in \set{Q}(t) \colon S_j(t) = i\}\right|$ and $N\nu_i \approx \left|\{j \in \set{R}(t) \colon S_j(t) = i\}\right|$ when the system size is $N$. Defining the (fluid) queue-length decision variable $\bolds{q} = (q_i)_{i \in \set{S}}$ and the (fluid) service decision variable $\bolds{\nu} = (\nu_i)_{i \in \set{S}}$, we obtain \eqref{eq:orifluid}, which we explain next.

The objective of \eqref{eq:orifluid} corresponds to the long-run average holding cost defined in \eqref{eq:def-cost} where each state-$i$ job that is in the queue and not served, captured by the term $q_i - \nu_i$, contributes cost $c(i)$. The constraints in \eqref{eq:orifluid} represent the  transition constraint and the capacity constraint. The transition constraint requires that the number of state-$i$ jobs in a period is (in expectation) equal to the number of unserved state-$\pa(i)$ jobs in the last period multiplied by the probability of state $\pa(i)$ transitioning into state $i$. The capacity constraint requires that an algorithm cannot serve more jobs than what are currently waiting or the available capacity.

\framebox(220,150){
 \begin{minipage}{3in}
\begin{equation}\label{eq:orifluid}
\tag{OriginalFluid}
\begin{aligned}
&C^\star = \min_{\bolds{q}, \bolds{\nu} \in \mathbb{R}_{\geq 0}^{\set{S}}} \sum_{i \in \set{S}} c(i)(q_i-\nu_i)\\
&~\text{s.t.} ~~ q_{\rootNode} = \lambda, \\
& q_i = (q_{\pa(i)} - \nu_{\pa(i)})P(\pa(i),i),~\forall i \in \set{S} \setminus \{\rootNode\} \\
&\nu_i \leq q_i~\forall i \in \set{S}  \\
&\sum_{i \in \set{S}} \nu_i \leq \mu.
\end{aligned}
\end{equation}
\end{minipage}
}
\framebox(220,150){
\begin{minipage}{3in}
\begin{equation}
\label{eq:simfluid}
\tag{\text{SimplerFluid}}
\begin{aligned}
C^\star &= \lambda c^f(\rootNode)-\max_{\bolds{\nu} \in \mathbb{R}_{\geq 0}^{\set{S}}}\left(\sum_{a \in \set{S}} \nu_a c^{f}(a)\right) \\
&\text{s.t.}  \sum_{a \in \anc(i)} \nu_a \cdot \frac{\pi(i)}{\pi(a)} \leq \lambda \pi(i),~\forall i\\
& ~~~~~~~\sum_{i \in \set{S}} \nu_i \leq \mu.
\end{aligned}
\end{equation}
\end{minipage}
}

We reformulate \eqref{eq:orifluid} into \eqref{eq:simfluid} which only contains the service decision variable. To do so, recalling that $\anc(i)$ denotes the set of ancestors of a state $i$ (including $i$), let $\pi(i) = \prod_{k \in \anc(i)} p(k)$ be the probability of a job passing state $i$ if not served in the meantime. The transition constraint in \eqref{eq:orifluid} implies that, given a service decision variable $\bolds{\nu}$, the queue decision variable $\bolds{q}$ is
\begin{equation}\label{eq:def-fluid-q}
q_i = \lambda \pi(i) - \sum_{a \in \anc(i) \setminus \{i\}} \nu_a \cdot \frac{\pi(i)}{\pi(a)}.
\end{equation}
To intuitively understand this equation, the term $\lambda \pi(i)$ captures the amount of state-$i$ jobs if there is no service applied to any job whose state is an ancestor of state $i$ (herein referenced to an \emph{ancestor job}). The second term $\sum_{a \in \anc(i) \setminus \{i\}} \nu_a \cdot \frac{\pi(i)}{\pi(a)}$ captures the expected amount of served ancestor jobs that would have  become state-$i$ jobs if they had not been served, since $\frac{\pi(i)}{\pi(a)}$ is the probability of a job becoming state $i$ conditioning on it being state $a$. As a result, the second term subtracts all jobs that were served in states prior to $i$ and therefore never reached $i$. Combining it with the requirement that $\nu_i \leq q_i$ and $q_{\rootNode} = \lambda$, the constraints in \eqref{eq:orifluid} are equivalent to 
the constraints in \eqref{eq:simfluid} because
\begin{equation}\label{eq:equiv-constraint}
\nu_i \leq q_i \Leftrightarrow \nu_i \leq \lambda \pi(i) - \sum_{a \in \anc(i) \setminus \{i\}} \nu_a \cdot \frac{\pi(i)}{\pi(a)} \Leftrightarrow \sum_{a \in \anc(i)} \nu_a \cdot \frac{\pi(i)}{\pi(a)}  \leq \lambda \pi(i).
\end{equation}
Moreover, the objective of \eqref{eq:orifluid} becomes 
\begin{equation}\label{eq:sim-obj}
\sum_{i \in \set{S}} c(i) (q_i - \nu_i) = \sum_{i \in \set{S}} c(i)\left(\lambda \pi(i) - \sum_{a \in \anc(i)} \nu_a \cdot \frac{\pi(i)}{\pi(a)}\right) = \lambda \sum_{i \in \set{S}} \pi(i)c(i) - \sum_{a \in \set{S}} \nu_a \sum_{i \in \sub(a)} c(i)\cdot \frac{\pi(i)}{\pi(a)},
\end{equation}
where we recall that $\sub(a)$ is the subtree of state $a$. We define $c^{f}(a) = \sum_{i \in \sub(a)} c(i)\cdot \frac{\pi(i)}{\pi(a)}$ as
the expected \emph{future} cost of a job conditioning on it being in state $a$. The objective of \eqref{eq:orifluid} then becomes the same as the one of \eqref{eq:simfluid}. Hence, the two programs are equivalent.

Given that the term $\lambda c^f(\rootNode)$ in \eqref{eq:simfluid} is a constant, we translate the objective of \eqref{eq:simfluid} to $P^\star = \lambda c^f(\rootNode) - C^\star$, which corresponds to the \emph{maximum prevented cost} in the objective of\eqref{eq:simfluid} and is the objective for a linear program. With the \emph{state duals} $\bolds{\beta} \in \mathbb{R}_{\geq 0}^{\set{S}}$ corresponding to the duals of the first constraint in \eqref{eq:simfluid} and the \emph{capacity dual} $\gamma \geq 0$ being the dual of the second constraint in \eqref{eq:simfluid}, the dual problem of $P^\star$ is given by $D^\star = \min_{\gamma\geq 0} D^{\star}(\gamma)$, where 
\begin{equation}
\label{eq:dual}
\tag{\text{Dual}}
\begin{aligned}
D^{\star}(\gamma) &= \min_{\bolds{\beta} \in \set{R}^{\set{S}}_{\geq 0}} \lambda \sum_{i \in \set{S}} \pi(i) \cdot \beta_i  + \mu \cdot \gamma\\
\text{s.t.}~~~&  \gamma + \sum_{i \in \sub(a)} \beta_i \cdot \frac{\pi(i)}{\pi(a)} \geq c^f(a),~\forall a \in \set{S}.
\end{aligned}
\end{equation}
Intuitively, the dual $\gamma$ corresponds to the capacity price in the Markovian ski-rental problem (Section~\ref{sec:ski-rental}). Indeed, we also show that the optimal state duals closely connect with the cost of not serving a job given a capacity price $\gamma$, $c(i)+V^f(\gamma,i)$, in \eqref{eq:bellman}. The below result (shown in Appendix~\ref{app:lem-dual-structure}) shows that $c(i)+V^f(\gamma,i)$ is related to an optimal state dual. Moreover, there is a clean form of $D^\star(\gamma)$ that $\alg$ later relies on.
\begin{lemma}\label{lem:dual-structure} 
For the dual program $D^\star(\gamma)$ in \eqref{eq:dual}, (i) it has an optimal solution $\bolds{\beta}^\star(\gamma)$ with $\beta_i^\star(\gamma) = \max\left\{0, c(i) + V^f(\gamma, i) - \gamma\right\}$ and 
(ii) $D^\star(\gamma) = \mu \cdot \gamma + \lambda\left(c^f(\rootNode) - V(\gamma,\rootNode)\right).$ 
\end{lemma}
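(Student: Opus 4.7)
The plan is to prove both parts jointly by (a) verifying that $\bolds{\beta}^\star(\gamma)$ is feasible for \eqref{eq:dual}, (b) evaluating the objective at this point to recover formula~(ii), and (c) certifying optimality via a matching witness for the LP obtained by fixing $\gamma$ in \eqref{eq:dual}. Standard LP duality applied to that sub-LP (which is linear in $\bolds{\beta}$ with the $\mu\gamma$ term a constant) produces a complementary maximization problem over $\bolds{\nu} \geq 0$ with objective $\sum_a \nu_a (c^f(a) - \gamma)$ and constraints $\sum_{a \in \anc(i)} \nu_a \pi(i)/\pi(a) \leq \lambda \pi(i)$ for all $i$. Exhibiting a feasible $\bolds{\nu}$ whose objective matches the value at $\bolds{\beta}^\star(\gamma)$ then closes the argument by weak duality.

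The technical core is a single inductive identity. For each $a \in \set{S}$, let $F(a) := \gamma + \sum_{i \in \sub(a)} \beta_i^\star(\gamma)\,\pi(i)/\pi(a)$ denote the left-hand side of the constraint at $a$. Using $\sub(a) = \{a\} \cup \bigcup_{k \in \child(a)} \sub(k)$ together with $\pi(k)/\pi(a) = P(a,k)$ for $k \in \child(a)$ produces the recursion $F(a) = \gamma + \beta_a^\star(\gamma) + \sum_{k \in \child(a)} P(a,k)\bigl(F(k) - \gamma\bigr)$. I will show by induction from leaves up the tree that
\begin{equation*}
F(a) = c^f(a) + \gamma - V(\gamma, a).
\end{equation*}
The base case (leaves) uses $V^f(\gamma, a) = 0$, $c^f(a) = c(a)$, and a two-way case check on $\beta_a^\star(\gamma) = \max\{0, c(a) - \gamma\}$. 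The inductive step substitutes the hypothesis into the recursion, invokes $c^f(a) = c(a) + \sum_k P(a,k) c^f(k)$, $V^f(\gamma, a) = \sum_k P(a,k) V(\gamma, k)$, and the Bellman rewriting $\beta_a^\star(\gamma) = c(a) + V^f(\gamma,a) - V(\gamma, a)$, after which everything collapses algebraically. Because Bellman also enforces $V(\gamma, a) \leq \gamma$, the identity gives feasibility $F(a) \geq c^f(a)$ for free. Evaluated at the root (where $\pi(\rootNode) = 1$ and $\sub(\rootNode) = \set{S}$), it produces objective value $\lambda(F(\rootNode) - \gamma) + \mu\gamma = \mu\gamma + \lambda(c^f(\rootNode) - V(\gamma, \rootNode))$, which is exactly formula~(ii).

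For optimality, I build the primal witness suggested by the optimal Markovian ski-rental policy. Let $\set{T}^\star = \{a : V(\gamma, a) = \gamma\}$ and let $\set{T}_{\min}$ denote its tree-minimal elements (states in $\set{T}^\star$ with no strict ancestor in $\set{T}^\star$); set $\nu_a = \lambda \pi(a)$ for $a \in \set{T}_{\min}$ and $\nu_a = 0$ otherwise. Feasibility holds because $\set{T}_{\min}$ is an antichain, so at most one $a \in \anc(i)$ lies in $\set{T}_{\min}$, giving $\sum_{a \in \anc(i)} \nu_a \pi(i)/\pi(a) \leq \lambda \pi(i)$. A short second induction on $g(a) := c^f(a) - V(\gamma, a)$---which by Bellman satisfies $g(a) = c^f(a) - \gamma$ on $\set{T}^\star$ and $g(a) = \sum_{k \in \child(a)} P(a,k) g(k)$ off $\set{T}^\star$---yields $g(\rootNode) = \sum_{a \in \set{T}_{\min}} \pi(a)(c^f(a) - \gamma)$. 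Hence the witness objective equals $\lambda g(\rootNode) = \lambda(c^f(\rootNode) - V(\gamma, \rootNode))$, matching the value of $\bolds{\beta}^\star(\gamma)$ up to the constant $\mu\gamma$, so weak duality certifies that $\bolds{\beta}^\star(\gamma)$ solves \eqref{eq:dual} for the given $\gamma$.

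The main hurdle is guessing the correct inductive identity $F(a) = c^f(a) + \gamma - V(\gamma, a)$; once it is in hand, feasibility, the closed-form value in~(ii), and the complementary-slackness structure used to define the primal witness all follow essentially by unwinding Bellman. A minor subtlety is that ties in the Bellman operator (when $\gamma = c(a) + V^f(\gamma, a)$) make $\set{T}_{\min}$ non-unique, but any consistent choice yields a feasible primal of the same value, so the argument is unaffected.
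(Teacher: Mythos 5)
Your proposal is correct and follows essentially the same route as the paper: your inductive identity $F(a) = c^f(a) + \gamma - V(\gamma,a)$ is exactly the paper's Lemma~\ref{lem:sub-tree-beta} restated, feasibility follows the same way from $V(\gamma,a)\leq\gamma$, and your primal witness supported on the tree-minimal ``buy'' states with weights $\lambda\pi(a)$ is the (scaled) analogue of the paper's $\bolds{y}^\star$. The only cosmetic differences are that you verify the matching objective values by a second induction on $g(a)=c^f(a)-V(\gamma,a)$ rather than the paper's term-by-term complementary-slackness cancellation, and you take the non-strict set $\{a: V(\gamma,a)=\gamma\}$ where the paper uses the strict inequality; both choices work.
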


\paragraph{Connection between the fluid relaxations and the two canonical algorithmic principles.}
The two equivalent optimization problems provide natural motivations for the two canonical algorithmic principles in the literature (discussed in Section~\ref{sec:intro}), which respectively solve \eqref{eq:orifluid} and \eqref{eq:simfluid} \emph{greedily} without considering the job state transitions. 
\begin{itemize}
\item The instantaneous cost principle (i.e. the $c\mu$-rule) schedules jobs with the highest current cost~$c(i)$ which ostensibly minimizes the objective of \eqref{eq:orifluid} by allocating service capacity to those states with highest instantaneous cost $c(i)$. It is however suboptimal because it ignores the impact of service on queue length $\bolds{q}$ through the transition equation. 
\item The expected remaining cost principle (i.e., the $c\mu/\theta$-rule) seeks to optimize \eqref{eq:simfluid} as it allocates capacity to those states $a$ with highest expected remaining cost $c^f(a)$. However, the first constraint in \eqref{eq:simfluid} implies that allocating capacity to a state $a$ also limits the demand (number of jobs) for a future state $i$ in the substree of $a$. Not considering such \emph{externalities} is suboptimal because, as in Figure~\ref{fig:failure}, it may be better to delay the capacity for a new \textsc{Video} job to a \textsc{RedVideo} job, whose uncertainty is resolved and we can avoid wasting capacity in serving a \textsc{BlueVideo} job.
\end{itemize}

\subsection{The index algorithm $\alg$ and its guarantee}\label{sec:pafou}

Our algorithm $\alg$ (Algorithm~\ref{algo:pafou}) consists of an offline training component and an online scheduling component. In the offline training component, $\alg$ computes the optimal capacity dual $\gamma^{\star}$ with the following single-variable optimization problem: 
\[
\gamma^\star \in \arg\min_{\gamma \geq 0} \left\{ \mu \cdot \gamma - \lambda V(\gamma, \rootNode) \right\}.
\]
This problem gives the optimal capacity dual $\gamma^\star$ by the form of $D^\star(\gamma)$ in Lemma~\ref{lem:dual-structure}.  A subroutine of solving $\gamma^\star$ is the calculation of the cost-to-go functions $V(\gamma,\cdot)$ and $V^f(\gamma,\cdot)$ given $\gamma$. Computationally, they are solvable with time complexity linear in $|\set{S}|$ via the Bellman equation \eqref{eq:bellman} and the tree structure of the Markov chain. In practice, we rely on approximate dynamic programming techniques that Section~\ref{sec:numerics} further discusses. The output of the training component is an index for any state $i$ as in \eqref{eq:index-alg}:
$\ind_{\alg}(i)=c(i)+V^f(\gamma^\star, i).$
In the online scheduling component, $\alg$ operates as an index algorithm. For a period $t$, $\alg$ assigns to each waiting job $j \in \set{Q}(t)$ an index $\ind_{\alg}(S_j(t))$ based on its current state and serves jobs in a decreasing order of their indices. Technically, two states can have the same indices. We assume a \emph{consistent} tie-breaking rule such that the algorithm always prioritizes one state over the other state. This rule, e.g., assigns unique identifiers to states and compares the identifier for two states with the same indices. 

\begin{algorithm}[H]
\LinesNumbered
\DontPrintSemicolon
  \caption{
  \textsc{Opportunity-adjusted Remaining Cost} ($\alg$)}\label{algo:pafou}
  \KwData{cost vector $\bolds{c}$ and transition probability $P$;~Normalized arrival and service rates $\lambda, \mu$}
  \tcc{Offline training}
  Solve $\gamma^\star = \arg\min_{\gamma \geq 0} \left\{\mu \gamma - \lambda V(\gamma,\rootNode)\right\}$ and $V^f(\gamma^\star,\cdot)$ by \eqref{eq:bellman} \; 
  \tcc{Online scheduling}
  \For{$t = 1$ \KwTo $T$}{
    Observe the set of waiting jobs $\set{Q}(t)$ and the random services $R(t)$ \;
    Assign to each job $j \in \set{Q}(t)$ an index $\ind_{\alg}(S_j(t))$\; \label{algoline:index}
    Serve the $R(t)$ jobs from $\set{Q}(t)$ with highest indices using a consistent tie-breaking rule \;
    Incur cost of unserved jobs and observe new arrivals \;
  }
\end{algorithm}

Our main result for $\alg$ is as follows. Let $c_{\max} = \max_{i \in \set{S}} c_i$ be the maximum instantaneous cost.
Recall that the transition probability from every state to the empty state is non-zero, i.e., for some $\theta > 0$, $\sum_{k \in \child(i)} P(i,k) \leq 1 - \theta$ for any state $i\in\set{S}$. The algorithm $\alg$ enjoys a $\tilde{O}(\sqrt{N})$ suboptimality gap when the system size is $N$. 
\begin{theorem}\label{thm:pafou}
For any system size $N$, the suboptimality gap of $\alg$ is at most $U\Big( \theta, \ln(NL)\Big)\sqrt{N}$, where $U\Big(\theta, \ln(NL)\Big) = 1930c_{\max}\theta^{-3.5}\ln^2(73NL).$ 
\end{theorem}
Theorem~\ref{thm:pafou} shows two appealing properties of $\alg$. First, it is asymptotically optimal when the system is overloaded as it enjoys vanishing suboptimality gap. Second, its suboptimality gap is \emph{independent} of the size of the state space $|\set{S}|$. This is favorable in practice where the state of a job can incorporate contextual information; see e.g. \cite{haimovich2021popularity} and \eqref{eq:features} in our numerical section. The logarithmic dependence of the tree depth $L$ arises from our analysis technique (e.g., our tail bound in Lemma~\ref{lem:tail-bound}) rather than reflecting a fundamental limitation. In fact, the same trick as in Remark~\ref{remark:general-momdel} can remove this dependence by restricting the tree length $L$ to be approximately $-\ln(N)/\ln(1-\theta)$.   
To solve the value function, $\alg$ requires knowledge of the transition kernel. In practice, this kernel is unknown; what is available is a dataset of the holding-cost trajectories and the state evolution of historical jobs. To address this challenge, our numerical section follows a model-free reinforcement learning approach (Section~\ref{sec:prac-imp}) to directly estimate the value function based on historical data, without estimating the actual transition kernel. Our theoretical results can be expanded with a robustness guarantee of $\alg$ when it is run with inaccurate value functions. In particular, if the algorithm has access to estimated indices $\{\ind_{\esti}(i)\}_{i \in \set{S}}$ that satisfy $\max_{i \in \set{S}} |\ind_{\alg}(i) - \ind_{\esti}(i)| \leq \varepsilon$ for some error $\varepsilon$, it enjoys a suboptimality gap that is at most $(4\varepsilon/\theta)N + U(\theta,\ln(NL))\sqrt{N}$ (see Appendix~\ref{app:robustalg}). $\alg$ is thus robust to estimation error in the indices in the sense that its performance only degrades linearly  with the error.
\begin{remark}
As discussed above, the theoretical guarantees of our algorithm  degrade linearly with the error in value function estimation (Appendix~\ref{app:robustalg}). In practice this error is tied to how fine-grain the state space is (e.g., how much information the platform captures about a content piece's evolution) and the number of samples available to estimate the value function. Our error bound thus provides guidance for practitioners to navigate the tradeoffs between model complexity, sample complexity, and the algorithmic efficiency.
\end{remark}
\section{Analysis}\label{sec:analysis}
The proof of Theorem~\ref{thm:pafou} contains four components. The first component identifies a fluid solution to \eqref{eq:orifluid} from any (state) priority algorithm $\prio(\bolds{o})$ which, given an ordered list $\bolds{o}$ of states, serves jobs based on this ordering. Applied with a consistent tie-breaking rule, $\alg$ is such a priority algorithm. The second component of the proof is showing that the fluid solution corresponding to $\alg$ is indeed an optimal solution to \eqref{eq:orifluid}. The third component of the proof shows that the long-run average holding cost of a priority algorithm is upper bounded by the (fluid) cost of the corresponding fluid solution in \eqref{eq:orifluid} (scaled by the system size $N$), plus a term that is of the order of $\sqrt{N}$ ignoring poly-logarithmic factors. The last component shows that the optimal value of \eqref{eq:orifluid} scaled by $N$ indeed lower bounds the long-run average holding cost of any feasible algorithm.

This section structures as follows. As a warm up, Section~\ref{sec:fluid-sol} constructs a fluid solution for any priority algorithm and motivates it via a water-filling argument. Section~\ref{sec:proof-pafou} presents key lemmas and the full proof of Theorem~\ref{thm:pafou}. Section~\ref{sec:optimal-fluid} shows that the constructed fluid solution of $\alg$ is optimal to \eqref{eq:orifluid}. Section~\ref{sec:stochastic} bounds the long-run average holding cost of a priority algorithm by the cost of its constructed fluid solution.

\subsection{Corresponding fluid solutions of priority algorithms}\label{sec:fluid-sol}

Our analysis starts by constructing a fluid solution for any priority algorithm. This fluid solution provides guidance on how the stochastic system would look like when the scheduling decisions follow the given priority algorithm. Formally, a \emph{priority ordering} $\bolds{o} = (o_1,\ldots,o_{|\set{S}|})$  ranks states in~$\set{S}$ from highest ($o_1$) to lowest priorities ($o_{|\set{S}|}$) and specifies a priority algorithm $\prio(\bolds{o}).$  In a period~$t$ with queue $\set{Q}(t)$ and $R(t)$ available servers, the algorithm serves jobs in the order prescribed by~$\bolds{o}.$ Let $Q_i(t)$ be the number of state-$i$ jobs in $\set{Q}(t)$ and $o^{-1}(i)$ be the position of state $i$ in the priority ordering. The number of served state-$i$ jobs, denoted by $R_i(t)$ (with the vector $\bolds{R}(t) = (R_i(t))_{i \in \set{S}}$), is equal to 
\begin{equation}\label{eq:priority-service}
R_i(t) = \min\left(Q_i(t), \left(R(t) - \sum_{h=1}^{o^{-1}(i)-1} Q_{o_h}(t)\right)^+\right).
\end{equation}
That is, the number of served state-$i$ jobs is the minimum between the number of waiting state-$i$ jobs $Q_i(t)$ and the remaining available servers after serving jobs with states of higher priorities. Index-based algorithms, including $c\mu-$rule, $c\mu/\theta-$rule, and $\alg$, are priority algorithms as long as they use consistent tie-breaking rules. The priority ordering of an index-based algorithm orders states with decreasing indices and breaks tie with the consistent tie-breaking rule. 

Under any priority algorithm, the queueing system has a stationary distribution. Formally, letting the \emph{system state} in period $t$ be $(\bolds{Q}(t), R(t))$ with $\bolds{Q}(t) = (Q_i(t))_{i \in \set{S}}$, the system evolves as a discrete-time Markov chain. Restricting to system states reachable from the initial empty queue ($Q_i(1) = 0$ for any $i$), the below lemma (proved in Appendix~\ref{app:lem-aperiodic}) shows that the Markov chain for system states is aperiodic and irreducible. Since it also has finitely many states, it has a unique stationary and limiting distribution by \cite[theorem 9.4]{harchol2013performance}, which we denote by $(\bolds{Q}(\infty), R(\infty)).$
\begin{lemma}\label{lem:aperiodic}
Restricting to system states reachable from $(\bolds{Q}(1) = \bolds{0},R(1)=n)$ for some $n \in [N]$, the system states form an aperiodic, irreducible, and finite Markov chain.
\end{lemma}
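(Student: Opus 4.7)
The plan is to verify finiteness, aperiodicity, and irreducibility in turn, relying on two structural facts: (i) the Markov chain on job states has depth $L$, so no unserved job can remain past $L$ periods before abandoning, and (ii) every state has abandonment probability at least $\theta > 0$. A useful auxiliary observation is that $R(t+1)\sim\mathrm{Bin}(N,\mu)$ is drawn independently of everything in period $t$, which makes transitions out of any empty-queue state particularly simple.

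Finiteness follows from a short induction on levels: any job at level $\ell$ in period $t$ must have arrived at time $t-\ell$, so $|\set{Q}(t)|\leq \sum_{\ell=0}^{L-1} A(t-\ell) \leq NL$, and since $R(t)\in\{0,\dots,N\}$ the reachable state space is contained in a finite set. For aperiodicity, I would pick any $n^\star$ with $\binom{N}{n^\star}\mu^{n^\star}(1-\mu)^{N-n^\star}>0$ and consider $s^\star := (\bolds{0},n^\star)$. Starting from $(\bolds{0},n)$, the one-step probability of going to $s^\star$ is at least $(1-\lambda)^N\binom{N}{n^\star}\mu^{n^\star}(1-\mu)^{N-n^\star} > 0$ (no arrivals and $R(2)=n^\star$), so $s^\star$ is in the reachable set; the same calculation yields a self-loop $s^\star \to s^\star$ with positive probability, establishing aperiodicity.

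For irreducibility, I would show two things: from any reachable state $s=(\bolds{Q},R)$ there is a one-step positive-probability transition to $s^\star$, and from $s^\star$ every state reachable from $(\bolds{0},n)$ can be reached. For $s\to s^\star$: the priority algorithm deterministically picks some $\set{R}(t)\subseteq\set{Q}(t)$, and each of the at most $NL$ remaining jobs independently transitions to $\perp$ with probability at least $\theta$; combining with no arrivals and $R(t+1)=n^\star$ gives one-step probability at least $\theta^{NL}(1-\lambda)^N\binom{N}{n^\star}\mu^{n^\star}(1-\mu)^{N-n^\star}>0$ of landing in $s^\star$. For $s^\star \to s'$: the one-step transition law out of any empty-queue state depends only on the fresh $A(\cdot)$ and $R(\cdot+1)$, both of which are i.i.d.\ across periods, so any positive-probability path $(\bolds{0},n)\to s_1\to\cdots\to s'$ can be replayed starting from $s^\star$. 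Chaining $s\to s^\star \to s'$ gives irreducibility. No step is really the main obstacle, since the lemma is a structural check rather than an analytic one; the only mild subtlety is to confirm $s^\star$ lies in the restricted reachable set and to exploit the $\mathrm{Bin}(N,\mu)$ refresh of $R(t)$ to decouple its value from the queue dynamics, both handled by the one-step arguments above.
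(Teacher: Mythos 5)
Your proof is correct and follows essentially the same route as the paper's: finiteness via the level/arrival-time bound $\sum_i Q_i(t)\leq NL$, aperiodicity via a self-loop at an empty-queue state, and irreducibility by funneling every reachable state through an empty-queue state (no arrivals, all jobs abandoning, a fixed realization of $R$) and then replaying any path from the initial state. The only difference is cosmetic: you route through $(\bolds{0},n^\star)$ for an $n^\star$ with positive binomial mass rather than $(\bolds{0},0)$, which is if anything slightly more careful about the edge case $\mu=1$.
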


A first step of understanding the long-run average holding cost of $\prio(\bolds{o})$ is \emph{guessing} the mean (or informally \emph{equilibrium}) of $\bolds{Q}(\infty)$ and $\bolds{R}(\infty) = \lim_{t \to \infty} \bolds{R}(t)$. The latter exists because $\bolds{R}(t)$ is a function of $\bolds{Q}(t)$ and the binomial random variable $R(t)$ in \eqref{eq:priority-service}. We denote the equilibrium of a priority algorithm $\prio(\bo)$ by $\bolds{q}^{\bolds{o}} = (q^{\bo}_i)_{i \in \set{S}}$ and $\bolds{\nu}^{\bolds{o}} = (\nu^{\bo}_i)_{i \in \set{S}}$. 

\begin{figure}
\centering
\includegraphics[width=6in]{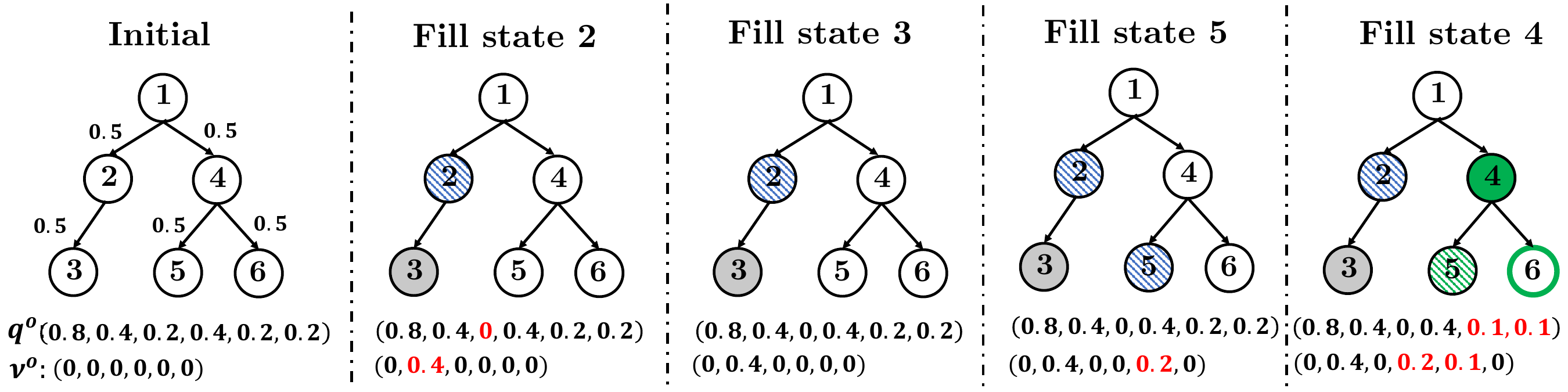}
\caption{An example for the water-filling procedure of constructing $(\bolds{q}^{\bo}, \bolds{\nu}^{\bo})$. There are six states, $\set{S} = \{1,2,3,4,5,6\}$ with transition probability as given in the leftmost figure. The arrival rate and service rate are $\lambda = 0.8$ and $\mu = 0.7.$ The priority ordering is $(2,3,5,4,6,1).$ States with strips are blocked states (State $2$ is fully, State $5$ is partially). State $4$ is the partially-served state and State $6$ is a partially-reduced state. State $3$ is an empty state and State $1$ is a un-reduced state.} 
\label{fig:water-filling}
\end{figure}
Our construction of the equilibrium  $(\bolds{q}^{\bo}, \bolds{\nu}^{\bo})$ simulates a water-filling procedure for $\prio(\bo)$. To give intuition, we follow Figure~\ref{fig:water-filling} for a concrete example with six states $\set{S} = \{1,2,3,4,5,6\}$ and a priority ordering $\bolds{o} = (2,3,5,4,6,1)$. The arrival rate is $\lambda = 0.8$ and the service rate is $\mu = 0.7$. A state-$1$ job has equal probability to evolve to a state-$2$ or state-$4$ job. A state-$2$ job has equal probability to evolve to a state-$3$ job or abandon. A state-$4$ job has equal probability to evolve to a state-$5$ or state-$6$ job. Jobs of states $3,5$ or $6$ abandon if not getting served. The water-filling procedure serves jobs based on the priority ordering $\bo$. Initially no state receives service and thus $q_i^{\bo} = \lambda \pi(i)$ and $\nu_i^{\bo} = 0$ for any $i$. Following the priority, the procedure works as follows:
\begin{itemize}
\item First, it fills state $2$. Since $\mu = 0.7$ and $q^{\bo}_2 = 0.4$, it has enough capacity to serve all state-$2$ jobs by setting $\nu^{\bo}_2 = 0.4$. This also has downstream effect: there will be no more state-$3$ jobs as all state-$2$ jobs are served. Therefore $q^{\bo}_3$ becomes zero. 
\item Second, it fills state $3$. However, there is no state-$3$ job and thus it consumes no service.
\item Third, it fills state $5$. Filling it is similar to the case for state $2$ since the remaining service rate is $0.7-0.4 = 0.3$ and $q^{\bo}_5 = 0.2$. It increases $\nu^{\bo}_5 = 0.2$ to serve all state-$5$ jobs.
\item Fourth, it fills state $4$. Unlike the above steps, the remaining service rate $0.7 - 0.4 - 0.2 = 0.1$ is insufficient to serve all state-$4$ jobs as $q^{\bo}_4 = 0.4 > 0.1.$ One intuitive step would be to set $\nu^{\bo}_4 = 0.1$ as that is the remaining capacity. However, this ignores the \emph{downstream effect} of serving state $4$ on state $5$. Specifically, serving one state-$4$ job reduces the (expected) number of state-$5$ jobs by half, which increases the service rate available to serve state-$4$ jobs. As a result, setting $\nu_4^{\bo} = 0.2$ reduces $q^{\bo}_5$ from $0.2$ to $0.1$ and thus we only need $\nu^{\bo}_5 = 0.1$ to serve all state-$5$ jobs. It then leaves a capacity of $0.7-0.4-0.1 = 0.2=\nu_4^{\bo}$ for state $4$.
\item Finally, since no capacity remains after the last step, states $1$ and $6$ receive no service. 
\end{itemize}

To formalize the above water-filling procedure,  we introduce the notion of \emph{top sets}. For a set of states $\set{X}$, serving its top set, $\Top(\set{X})$, ensures service of all states in $\set{X}$ with minimum capacity. Given that the Markov chain is a rooted tree, the top set is defined as the unique set such that
\begin{equation}\label{eq:top-set}
(i) \Top(\set{X}) \subseteq \set{X};\quad(ii) \set{X} \subseteq \sub(\Top(\set{X}));\quad(iii) k \not \in \sub(i),\quad\forall i \neq k \in \Top(\set{X}). 
\end{equation}
The first and third conditions ensure that $\Top(\set{X})$ is a minimum subset of $\set{X}$ while the second condition ensures that $\set{X}$ is inside this subset's subtree $\sub(\Top(\set{X}))$.

Top sets enable a succinct construction of the equilibrium $(\bolds{q}^{\bo}, \bolds{\nu}^{\bo})$. Recall that the probability of a job passing state $i$ is $\pi(i) = \prod_{k \in \anc(i)} p(k)$. Let $m \in \{0,\ldots,|\set{S}|\}$ be the maximum position in the ordering $\bo$ such that the service rate $\mu$ is sufficient to serve all states in $\Top(o_{[m]})$ where the notation $\bo_{\set{H}}$ denotes the set $\{o_h \colon h \in \set{H}\}$. Formally, $m$ is the maximum position with $\sum_{i \in \Top(\bo_{[m]})} \lambda \pi(i) \leq \mu$. If the inequality is strict and $m < |\set{S}|,$ let $\partialNode = o_{m+1}$ be the (only) state that is partially served. Otherwise, we let $\partialNode = \perp$. The water-filling procedure prescribes an equilibrium based on six types of states (labeled as in Figure~\ref{fig:water-filling}):  
\begin{enumerate}[label=\textnormal{(T-\arabic*)}]
\item An \emph{un-reduced} state, $i \in \set{S} \setminus \sub(\partialNode) \setminus \sub(\bo_{[m]})$, is such that neither its ancestors nor itself receives any service. This implies that $q_i^{\bo} = \lambda \pi(i)$ and $\nu_i^{\bo} = 0$. \label{item:un-reduced}
\item A \emph{fully-blocked} state, $i \in \Top(\bo_{[m]}) \setminus \sub(\partialNode)$, is such that it receives full service but its ancestors receive no service. This implies that $q_i^{\bo} = \nu_i^{\bo} = \lambda \pi(i)$. \label{item:fully-blocked}
\item An \emph{empty} state, $i \in \sub(\bo_{[m]}) \setminus \Top(\bo_{[m]})$, has an ancestor in $\Top(\bo_{[m]})$. This implies that $q_i^{\bo} = \nu_i^{\bo} = 0$. \label{item:empty}
\item the unique \emph{partially-served} state $\partialNode$ receives non-zero service but is not fully served. Given that none of its ancestor is served, $q_{\partialNode}^{\bo} = \lambda \pi(\partialNode)$. Accounting for the downstream effect that serving state $\partialNode$ has on decreasing the required service of its subtree, $\nu_{\partialNode}^{\bo} = \frac{\mu - \lambda\sum_{i \in \Top(\bo_{[m]})} \pi(i)}{\kappa}$ where $\kappa$ is the degeneracy parameter defined by
\begin{equation}\label{eq:def-degeneracy}
\kappa = 1 - \sum_{i \in \Top(\bo_{[m]}) \cap \sub(\partialNode)} \pi(i) / \pi(\partialNode).
\end{equation}
The degeneracy parameter $\kappa$ must be positive as otherwise we can set $m$ to include $\partialNode$. If $\partialNode = \perp$, we define $\kappa = +\infty$, so $\nu_{\partialNode}^{\bo} = 0.$
\label{item:partially-served}
\item A \emph{partially-blocked} state, $i \in \Top(\bo_{[m]}) \cap \sub(\partialNode)$, receives full service but has a partially-served ancestor. This implies $q_i^{\bo} = \nu_i^{\bo} = \lambda \pi(i) - \nu_{\partialNode}^{\bo}\pi(i) / \pi(\partialNode).$ \label{item:partially-blocked}
\item A \emph{partially-reduced} state, $i \in \sub(\partialNode) \setminus \{\partialNode\} \setminus \sub(\bo_{[m]})$, does not receive any service but has a partially-served ancestor. This implies $q_i^{\bo} = \lambda \pi(i) - \nu_{\partialNode}^{\bo}\pi(i) / \pi(\partialNode)$ and $\nu_i^{\bo} = 0.$ \label{item:partially-reduced}
\end{enumerate}
Summarizing the above, blocked states (either fully or partially) form the set $\Top(\bo_{[m]})$. Empty states are those in the subtree of blocked states. The partially-served state is $\partialNode$ and the un-reduced states are all the remaining states, i.e., those not in the subtree of $\bo_{[m]}$ and $\partialNode.$ The below result (proved in Appendix~\ref{app:lem-feasible-nu}) verifies that $(\bolds{q}^{\bo}, \bolds{\nu}^{\bo})$ is feasible to \eqref{eq:orifluid} and that the equilibrium uses all capacity when the partially-served state is non-empty. 
\begin{lemma}\label{lem:feasible-nu}
The equilibrium $(\bolds{q}^{\bo},\bolds{\nu}^{\bo})$ is feasible to \eqref{eq:orifluid}. Moreover, if  $\partialNode \neq \perp$, then $\sum_{i \in \set{S}} \nu_i^{\bo} = \mu.$
\end{lemma}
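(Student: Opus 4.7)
\textbf{Proof plan for Lemma~\ref{lem:feasible-nu}.} The plan is to verify each of the five conditions defining \eqref{eq:orifluid} — non-negativity of $(\bolds{q}^{\bo},\bolds{\nu}^{\bo})$, the root condition $q_{\rootNode}^{\bo}=\lambda$, the transition equation, the service-domination condition $\nu_i^{\bo}\le q_i^{\bo}$, and the aggregate capacity inequality — by case analysis across the six types \ref{item:un-reduced}--\ref{item:partially-reduced}. The equality statement when $\partialNode\neq\perp$ will then fall out of a direct calculation of $\sum_i\nu_i^{\bo}$.

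Before diving into cases, I would establish two structural facts about $\partialNode$ and the top-set operator. First, if $\partialNode\neq\perp$ then no strict ancestor of $\partialNode$ lies in $\bo_{[m]}$; otherwise, since $\Top$ is the subset of minimal states, inserting $\partialNode$ into $\bo_{[m]}$ would leave $\Top(\bo_{[m]})$ unchanged, contradicting the maximality of $m$. Consequently $\partialNode\notin\sub(\bo_{[m]})$, and $\Top(\bo_{[m+1]})=\bigl(\Top(\bo_{[m]})\setminus\sub(\partialNode)\bigr)\cup\{\partialNode\}$: states of $\Top(\bo_{[m]})$ that happen to be descendants of $\partialNode$ are absorbed into the new subtree rooted at $\partialNode$, and no state outside $\sub(\partialNode)$ is disturbed. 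Maximality of $m$ then yields $\lambda\sum_{i\in\Top(\bo_{[m+1]})}\pi(i)>\mu$, which rearranges (after multiplying by $\pi(\partialNode)/\kappa$ and using $\kappa=1-\sum_{i\in\Top(\bo_{[m]})\cap\sub(\partialNode)}\pi(i)/\pi(\partialNode)>0$) into the key inequality $\nu_{\partialNode}^{\bo}\le\lambda\pi(\partialNode)$. Non-negativity of all $q_i^{\bo}$ and $\nu_i^{\bo}$ then follows: for \ref{item:un-reduced} and \ref{item:fully-blocked} it is immediate, for \ref{item:partially-served} it uses $\mu\ge\lambda\sum_{i\in\Top(\bo_{[m]})}\pi(i)$, and for \ref{item:partially-blocked} and \ref{item:partially-reduced} it rests on the key inequality above. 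The root condition $q_{\rootNode}^{\bo}=\lambda$ is immediate from $\pi(\rootNode)=1$ together with the observation that $\rootNode$ can be of any type except \ref{item:empty} and \ref{item:partially-blocked}, each of the allowed types giving $q_{\rootNode}^{\bo}=\lambda$. For the capacity constraint, I would compute
\[
\sum_{i\in\set{S}}\nu_i^{\bo}=\lambda\!\!\!\sum_{i\in\Top(\bo_{[m]})\setminus\sub(\partialNode)}\!\!\!\pi(i)+\!\!\!\sum_{i\in\Top(\bo_{[m]})\cap\sub(\partialNode)}\!\!\bigl(\lambda\pi(i)-\nu_{\partialNode}^{\bo}\pi(i)/\pi(\partialNode)\bigr)+\nu_{\partialNode}^{\bo}=\lambda\!\!\!\sum_{i\in\Top(\bo_{[m]})}\!\!\!\pi(i)+\nu_{\partialNode}^{\bo}\kappa,
\]
which equals $\mu$ exactly when $\partialNode\neq\perp$ (by the definition of $\nu_{\partialNode}^{\bo}$) and is at most $\mu$ otherwise.

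The transition identity $q_i^{\bo}=(q_{\pa(i)}^{\bo}-\nu_{\pa(i)}^{\bo})p(i)$ is the most tedious piece, and I expect it to be the main obstacle because the type of $\pa(i)$ varies with the type of $i$ and depends on whether $\pa(i)$ equals $\partialNode$ or lies in $\sub(\partialNode)\setminus\{\partialNode\}$ or outside. The way through is to catalogue, for each type of $i$, the possible types of $\pa(i)$: if $i\in\Top(\bo_{[m]})$ then no strict ancestor of $i$ belongs to $\bo_{[m]}$, so $\pa(i)$ is either un-reduced (when $i\notin\sub(\partialNode)$) or partially-served/partially-reduced (when $i\in\sub(\partialNode)$); if $i$ is empty with $\pa(i)\in\Top(\bo_{[m]})$ then $q_{\pa(i)}^{\bo}=\nu_{\pa(i)}^{\bo}$ forces both sides to zero; if $i$ is empty with $\pa(i)\in\sub(\bo_{[m]})\setminus\Top(\bo_{[m]})$ then $\pa(i)$ is also empty; and the remaining cases are analogous, in each instance the ratio $\pi(i)/\pi(\pa(i))=p(i)$ precisely cancels the factor $\nu_{\partialNode}^{\bo}/\pi(\partialNode)$ produced by the water-filling rule. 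The service domination $\nu_i^{\bo}\le q_i^{\bo}$ then holds trivially outside $\{\partialNode\}$ (since in every type either $\nu_i^{\bo}=q_i^{\bo}$ or $\nu_i^{\bo}=0$) and at $\partialNode$ by the key inequality above, completing the verification.
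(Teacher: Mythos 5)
Your proposal is correct and matches the paper's proof in substance: the two non-trivial checks --- that $\nu_{\partialNode}^{\bo}\le q_{\partialNode}^{\bo}=\lambda\pi(\partialNode)$, obtained from the maximality of $m$ via $\Top(\bo_{[m+1]})=\{\partialNode\}\cup\Top(\bo_{[m]})\setminus\sub(\partialNode)$, and the telescoping computation $\sum_{i\in\set{S}}\nu_i^{\bo}=\lambda\sum_{i\in\Top(\bo_{[m]})}\pi(i)+\kappa\,\nu_{\partialNode}^{\bo}=\mu$ --- are exactly the ones the paper performs. The only organizational difference is that you verify the \eqref{eq:orifluid} constraints (transition identity, root condition, non-negativity) by direct case analysis over the six state types, whereas the paper notes that the construction satisfies \eqref{eq:def-fluid-q} by design and therefore, using \eqref{eq:equiv-constraint}, only the two constraints of \eqref{eq:simfluid} need to be checked.
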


As discussed above, $\alg$ is a priority algorithm. We denote its equilibrium, obtained by taking $\bolds{o}$ to be its priority ordering of states, by $(\bolds{q}^{\alg},\bolds{\nu}^{\alg})$. 

\subsection{Proof of Theorem~\ref{thm:pafou}}\label{sec:proof-pafou}
This section lists three key lemmas and presents the proof of Theorem~\ref{thm:pafou}. The first lemma shows that the equilibrium of $\alg$ is an optimal solution to \eqref{eq:orifluid}.
\begin{lemma}\label{lem:optimal-fluid}
The equilibrium of $\alg$ has optimal fluid cost: $\sum_{i \in \set{S}} c_i (q_i^{\alg} - \nu_i^{\alg}) = C^\star$.
\end{lemma}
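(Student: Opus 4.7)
The plan is to apply complementary slackness between the primal $(\bolds{q}^{\alg}, \bolds{\nu}^{\alg})$, feasible by Lemma~\ref{lem:feasible-nu}, and the dual candidate $(\bolds{\beta}^\star(\gamma^\star), \gamma^\star)$, which is dual-optimal (value $D^\star$) by Lemma~\ref{lem:dual-structure} combined with the definition of $\gamma^\star$. Using $\beta_i^\star(\gamma^\star) = c(i) + V^f(\gamma^\star,i) - V(\gamma^\star,i)$ (immediate from Bellman) and telescoping over $\sub(a)$, I get $c^f(a) = V(\gamma^\star, a) + \sum_{i \in \sub(a)} \beta_i^\star(\gamma^\star)\pi(i)/\pi(a)$; substituting into the primal objective and simplifying via \eqref{eq:def-fluid-q} yields the duality-gap decomposition
\[
D^\star - \sum_{a \in \set{S}} \nu_a^{\alg}\, c^f(a) \;=\; \sum_{i \in \set{S}} \beta_i^\star(\gamma^\star)\bigl(q_i^{\alg} - \nu_i^{\alg}\bigr) \;+\; \gamma^\star\Bigl(\mu - \sum_a \nu_a^{\alg}\Bigr) \;+\; \sum_a \nu_a^{\alg}\bigl(\gamma^\star - V(\gamma^\star, a)\bigr),
\]
a sum of three non-negative terms. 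Hence optimality reduces to three complementary slackness identities: $(\mathrm{C1})$ $\nu_a^{\alg}>0 \Rightarrow \ind_{\alg}(a) \geq \gamma^\star$; $(\mathrm{C2})$ $\gamma^\star > 0 \Rightarrow \sum_a \nu_a^{\alg} = \mu$; $(\mathrm{C3})$ $\ind_{\alg}(i) > \gamma^\star \Rightarrow q_i^{\alg} = \nu_i^{\alg}$.

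The key structural lemma I would establish is a pair of inequalities coming from the first-order optimality of $\gamma^\star$: letting $\set{S}^+ := \{i: \ind_{\alg}(i) > \gamma^\star\}$ and $\set{S}^\geq := \{i: \ind_{\alg}(i) \geq \gamma^\star\}$,
\[
\lambda \sum_{i \in \Top(\set{S}^+)} \pi(i) \;\leq\; \mu \;\leq\; \lambda \sum_{i \in \Top(\set{S}^\geq)} \pi(i),
\]
where the upper bound applies only when $\gamma^\star > 0$. I derive this by representing $V(\gamma, \rootNode) = \min_{B \subseteq \set{S}}\bigl[\gamma \sum_{i \in \Top(B)} \pi(i) + \sum_{i \notin \sub(B)} c(i)\,\pi(i)\bigr]$, which exhibits $V(\cdot, \rootNode)$ as a concave piecewise linear function of $\gamma$, and showing via continuity of $V^f(\cdot, i)$ combined with Bellman that both $B = \set{S}^+$ and $B = \set{S}^\geq$ are minimizing buy-sets at $\gamma^\star$; the subdifferential $\partial V(\gamma^\star, \rootNode)$ is thus the interval $[\sum_{\Top(\set{S}^+)}\pi(i),\, \sum_{\Top(\set{S}^\geq)}\pi(i)]$. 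The optimality condition $0 \in \partial D^\star(\gamma^\star)$ for the convex $D^\star(\gamma) = \mu\gamma - \lambda V(\gamma, \rootNode) + \lambda c^f(\rootNode)$ then yields the claim. The main obstacle is precisely this subdifferential computation: ruling out any $B \notin \{\set{S}^+, \set{S}^\geq\}$ as an active minimizer at $\gamma^\star$ requires careful tracking of how Bellman's ``buy-or-wait'' decision at each $\set{S}^=$ state flips under infinitesimal perturbations of $\gamma$, using the generic fact that the slope of $V^f(\cdot, i)$ is strictly below $1$.

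Given the key inequality, the CS identities follow from $\alg$'s priority structure. Since $\alg$ orders states by decreasing $\ind_{\alg}$ with consistent tie-breaking, $\bo_{[|\set{S}^+|]} = \set{S}^+$, and the lower key inequality forces the maximal $m^\star$ to satisfy $m^\star \geq |\set{S}^+|$. Thus $\set{S}^+ \subseteq \bo_{[m^\star]} \subseteq \sub(\bo_{[m^\star]})$, and every state in $\sub(\bo_{[m^\star]})$ has type \ref{item:fully-blocked}, \ref{item:empty}, or \ref{item:partially-blocked} (all with $q_i^{\alg} = \nu_i^{\alg}$), yielding $(\mathrm{C3})$. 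For $(\mathrm{C1})$ I case-split: (a) underloaded $\lambda \leq \mu$, where $m^\star = |\set{S}|$, $\partialNode = \perp$, and $V(\gamma^\star, \rootNode) = \gamma^\star$ on the leftmost linear piece of $V$ places the only served state $\rootNode$ in $\set{S}^\geq$; (b) strict inequality at $m^\star$ with $\lambda > \mu$ (so $\gamma^\star > 0$), where the upper key inequality gives $m^\star < |\set{S}^\geq|$ hence $\partialNode \in \set{S}^\geq$; (c) equality at $m^\star$ with $\partialNode = \perp$, where $\theta > 0$ ensures that any $\set{S}^-$ state outside $\sub(\set{S}^\geq)$ would strictly increase the top-set sum past $\mu$, forcing $\Top(\bo_{[m^\star]}) \subseteq \set{S}^\geq$. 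Finally, $(\mathrm{C2})$ follows from Lemma~\ref{lem:feasible-nu} whenever $\partialNode \neq \perp$, and in case (a) either from $\gamma^\star = 0$ (when $\lambda < \mu$, since $\partial_\gamma D^\star(0^+) = \mu - \lambda > 0$) or from $\sum_a \nu_a^{\alg} = \lambda = \mu$.
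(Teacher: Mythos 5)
Your proposal is correct in outline, but it reaches the crux of the argument by a genuinely different route than the paper. Both proofs reduce optimality to the complementary-slackness identities of Lemma~\ref{lem:comple-slack} (your duality-gap decomposition is exactly that lemma), and both hinge on a capacity inequality of the form $\lambda\sum_{i\in\Top(\set{S}^+)}\pi(i)\le\mu$. The paper obtains this inequality \emph{indirectly}: it invokes an (abstract) optimal primal solution $(\bolds{q}^\star,\bolds{\nu}^\star)$, applies the necessary direction of complementary slackness to it, and sums the resulting equalities over $\Top(\set{S}^+)$ using Lemma~\ref{lem:top-prob}. For the low-index states the paper then sidesteps CS verification entirely in the hard case ($m+1>\mlow$), instead comparing the equilibrium cost directly against the lower bound $\lambda\sum_{i\notin\sub(\bo_{[\mlow]})}c(i)\pi(i)\le C^\star$. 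You instead derive \emph{both} halves of the sandwich $\lambda\sum_{\Top(\set{S}^+)}\pi(i)\le\mu\le\lambda\sum_{\Top(\set{S}^\geq)}\pi(i)$ from first-order optimality of the one-dimensional dual $\gamma\mapsto\mu\gamma-\lambda V(\gamma,\rootNode)$, computing the one-sided derivatives of the concave piecewise-linear $V(\cdot,\rootNode)$ via the optimal-stopping representation. This is self-contained (no appeal to an unspecified optimal primal solution) and yields extra structural information about $\gamma^\star$, but it shifts the burden onto the superdifferential computation, which you correctly flag as the main obstacle and only sketch. Two remarks there: what must be ruled out is not that buy-sets other than $\set{S}^+$ and $\set{S}^\geq$ are active at $\gamma^\star$ (every closed $B$ with $\set{S}^+\subseteq B\subseteq\set{S}^\geq$ is active); what must be shown is that the one-sided derivatives equal $\sum_{\Top(\set{S}^+)}\pi(i)$ and $\sum_{\Top(\set{S}^\geq)}\pi(i)$, for which your "slope of $V^f(\cdot,i)$ strictly below $1$" argument (which uses $\theta>0$) is the right tool. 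Also, in your case (c) the relevant fact is $\pi(k)>0$ for every state (which follows from $p(i)>0$), not $\theta>0$ per se; and when the equality $\lambda\sum_{\Top(\bo_{[m]})}\pi(i)=\mu$ holds with $\bo_{[m]}\supsetneq\set{S}^\geq$, excluding positive service on $\set{S}^-$ states requires the equality-case analysis of Lemma~\ref{lem:top-prob}. These are fillable details, not gaps in the approach.
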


The second lemma shows that the long-run average holding cost for any priority algorithm is within $\tilde{O}(\sqrt{N})$ to the (fluid) cost of its equilibrium (as in the objective of \eqref{eq:orifluid}). Recall that $\theta = \min_{i \in \set{S}} \left(1 - \sum_{k \in \child(i)} P(i,k)\right)$ is the minimum abandonment probability of every state. Recall the following quantity which is a polynomial of $\theta$ and $\ln(NL)$:
\begin{equation*}
U(\theta,\ln(NL)) \coloneqq 1930c_{\max}\theta^{-3.5}\ln^2(73NL).
\end{equation*}
\begin{lemma}\label{lem:stochastic}
If $\theta > 0$, then for any priority ordering $\bo$ and system size $N$, its long-run average holding cost is upper bounded by $C(N,\prio(\bo)) \leq N\sum_{i \in \set{S}} c_i (q_i^{\bo} - \nu_i^{\bo}) + U(\theta,\ln(NL))\sqrt{N}.$
\end{lemma}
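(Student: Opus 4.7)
The plan is to control the steady-state queue lengths of $\prio(\bolds{o})$ via a drift analysis around the fluid equilibrium $(\bolds{q}^{\bolds{o}}, \bolds{\nu}^{\bolds{o}})$ constructed in Section~\ref{sec:fluid-sol}. Since in stationarity the long-run holding cost equals $\sum_i c_i \expect{Q_i(\infty) - R_i(\infty)}$, the gap $C(N, \prio(\bolds{o})) - N\sum_i c_i(q_i^{\bolds{o}} - \nu_i^{\bolds{o}})$ decomposes into two pieces: (a) excess jobs in the subtree of the top set $\Top(\bolds{o}_{[m]})$, which the fluid clears completely but which the stochastic system might temporarily fail to clear, and (b) deviations of un-reduced and partially-reduced queue lengths around their fluid values. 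Piece (b) is handled by direct binomial concentration on the $L$-layer arrival-and-transition chain, which contributes $\tilde{O}(\sqrt{N})$ per state. The bulk of the work, and the source of the $L^{1.5}\theta^{-2}$ factors in $U(L,\theta,\ln N)$, lies in (a).

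For (a), I would introduce a Lyapunov function $\Phi(t)$ measuring the excess of high-priority jobs over their fluid forecast, along the lines of
\[
\Phi(t) = \Bigl( \sum_{k \in \sub(\Top(\bolds{o}_{[m]}))} Q_k(t) - N \sum_{i \in \Top(\bolds{o}_{[m]})} q_i^{\bolds{o}} \Bigr)^+,
\]
or a variant that tracks excess separately on each top-set subtree. Lemma~\ref{lem:drift-bound} would establish that whenever $\Phi(t) \geq \Theta(\sqrt{N}\ln N)$, the one-step conditional drift $\expect{\Phi(t+1) - \Phi(t) \mid \Phi(t)}$ is strictly negative. The intuition is that when the high-priority backlog is large, the algorithm devotes essentially its full $\mathrm{Bin}(N,\mu)$ capacity to these jobs, so the drift reduces to the net balance between arrivals into $\sub(\Top(\bolds{o}_{[m]}))$ and completed services there; feasibility of $\bolds{\nu}^{\bolds{o}}$ for \eqref{eq:orifluid} guarantees this balance is non-positive, and the minimum abandonment probability $\theta > 0$ provides the strict slack. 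A subtle point is that the counts $\{Q_i(t)\}_{i \in \Top(\bolds{o}_{[m]})}$ are \emph{correlated} through shared ancestor jobs; I would address this via the coupling to a decoupled system where each job evolves independently that underlies Lemma~\ref{lem:queue-upper-bound}, and apply standard Chernoff/Hoeffding concentration there.

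The \textbf{main obstacle} is converting the conditional-drift statement into a stationary tail bound on $\Phi$. Classical Hajek-style arguments require an almost-sure bound on $|\Phi(t+1) - \Phi(t)|$, but in our discrete-time many-server setting the one-step increment can, with small probability, be as large as $\Theta(N)$: a period in which many jobs transition into the top-set subtree, or an atypical realization of $R(t)$, shifts $\Phi$ dramatically. To circumvent this I would invoke the new geometric tail bound of Lemma~\ref{lem:tail-bound}, which only requires that $|\Phi(t+1) - \Phi(t)| \leq \tilde{O}(\sqrt{N})$ \emph{with high probability} and that the rare large-jump events possess a light enough tail. The latter follows from binomial concentration of $R(t)$ and of the number of jobs transitioning in one period, combined with the fact that $\theta > 0$ keeps the total number of waiting jobs concentrated at $\tilde{O}(N/\theta)$, which caps the worst-case one-step change.

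Finally, I would combine the resulting $\expect{\Phi(\infty)} \leq \tilde{O}(\sqrt{N})$ from piece (a) with the straightforward $\tilde{O}(\sqrt{N})$ deviations from piece (b), summing the per-state holding-cost contributions weighted by $c_i \leq c_{\max}$. The $L^{1.5}$ scaling emerges from summing (or Cauchy-Schwarz-ing) deviation bounds across the $L$ levels of the tree, while the $\theta^{-2}$ factor traces back to the drift lemma and the high-probability bound on total queue length; the $\ln^2$ factor collects the union bounds and the $\ln N$ threshold at which the drift turns negative. Assembling these ingredients yields exactly $C(N, \prio(\bolds{o})) \leq N\sum_i c_i(q_i^{\bolds{o}} - \nu_i^{\bolds{o}}) + c_{\max} U(L, \theta, \ln N)\sqrt{N}$.
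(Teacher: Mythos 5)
Your overall architecture matches the paper's: a Lyapunov/drift analysis of the backlog in $\sub(\Top(\bolds{o}_{[m]}))$ (the paper uses $\Phi(\bolds{X}(t)) = \sum_{i \in \sub(\bolds{o}_{[m]})} Z_i(t)$, essentially your function since $z_i^{\bolds{o}} = 0$ on that set), the fictitious-job decoupling to handle correlated queue counts (Lemma~\ref{lem:queue-upper-bound}), the negative drift supplied by the abandonment probability $\theta$ (Lemma~\ref{lem:drift-bound} gives drift $\leq -\theta\Phi + \tilde{O}(\sqrt{NL})$), and the new geometric tail bound (Lemma~\ref{lem:tail-bound}) to convert a high-probability drift into a stationary tail despite the $\Theta(NL)$ worst-case one-step jump. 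All of that is the paper's proof of Lemma~\ref{lem:iterative-ssc} and its assembly.

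The genuine gap is your piece (b): you propose to handle the partially-served state $\partialNode$ and the partially-reduced states by ``direct binomial concentration on the $L$-layer arrival-and-transition chain,'' but these states are not governed by an autonomous arrival-and-transition chain. Their fluid values are $z_i^{\bolds{o}} = \lambda\pi(i) - \nu_{\partialNode}^{\bolds{o}}\pi(i)/\pi(\partialNode)$ with $\nu_{\partialNode}^{\bolds{o}} = \bigl(\mu - \lambda\sum_{i \in \Top(\bolds{o}_{[m]})}\pi(i)\bigr)/\kappa$, where the degeneracy parameter $\kappa$ encodes the feedback that serving $\partialNode$ reduces the demand of its own descendants inside $\Top(\bolds{o}_{[m]})$ and thereby frees capacity for $\partialNode$ itself. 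Concentration of arrivals and transitions says nothing about how much \emph{service} $\partialNode$ actually receives, which is the leftover capacity $\bigl(R(\infty) - \sum_{i \in \bolds{o}_{[m]}} Q_i(\infty)\bigr)^+$ and depends on the stochastic backlog you bounded in piece (a). The paper closes this with a separate argument (Lemma~\ref{lem:connect-partial}): it derives a self-referential inequality $\expect{Z_{\partialNode}(\infty)} \leq N\lambda\sum_{i}\pi(i) - N\mu + \expect{Z_{\partialNode}(\infty)}\sum_{i \in \Top(\bolds{o}_{[m]}) \cap \sub(\partialNode)}\pi(i)/\pi(\partialNode) + \tilde{O}(\sqrt{N})$, solves it by dividing through by $\kappa$, and then needs $\kappa \geq \theta$ (Lemma~\ref{lem:degen-aban}) to keep the resulting constant controlled. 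Without this step, the contribution of $\partialNode$ and its subtree to the cost gap is not bounded; the rest of your plan (un-reduced states satisfy $\expect{Z_i(\infty)} \leq N\lambda\pi(i) = Nz_i^{\bolds{o}}$ and contribute nothing, and the fully-served subtree is covered by the SSC bound) goes through as you describe.
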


The last lemma shows that the optimal fluid cost in \eqref{eq:orifluid} lower bounds the long-run average holding cost of any feasible algorithm.
\begin{lemma}\label{lem:lower-bound}
For any system size $N$ and feasible algorithm $\nalg$, the fluid LP \eqref{eq:orifluid} lower bounds the long-run average holding cost cost by $C(N,\nalg) \geq NC^\star.$
\end{lemma}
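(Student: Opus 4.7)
The plan is to show that every feasible scheduling algorithm $\nalg$ induces, through its long-run average queue lengths and service frequencies, a feasible point of \eqref{eq:orifluid} whose objective equals $C(N,\nalg)/N$; since $C^\star$ is by definition the minimum of that LP, the bound $C(N,\nalg) \geq N C^\star$ is immediate. The argument is a standard LP-relaxation/time-averaging argument, but three ingredients need to be handled with care: (i) the $\limsup$ in the definition of $C(N,\nalg)$, (ii) the boundedness of expected queue lengths, and (iii) the passage to the limit inside the transition constraints.

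Concretely, let $Q_i(t) = |\{j \in \set{Q}(t,\nalg) \colon S_j(t) = i\}|$ and $R_i(t) = |\{j \in \set{R}(t,\nalg) \colon S_j(t) = i\}|$, and define the normalized Cesaro averages
\[
\bar{q}_i^{(T)} = \frac{1}{NT}\sum_{t=1}^T \mathbb{E}[Q_i(t)], \qquad \bar{\nu}_i^{(T)} = \frac{1}{NT}\sum_{t=1}^T \mathbb{E}[R_i(t)].
\]
We may assume $C(N,\nalg) < \infty$ (otherwise the claim is trivial) and pick a subsequence $T_k \to \infty$ along which $\sum_i c(i)\bigl(\bar{q}_i^{(T_k)} - \bar{\nu}_i^{(T_k)}\bigr) \to C(N,\nalg)/N$. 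A short Lyapunov computation on the total queue length, exploiting that every unserved job abandons with probability at least $\theta > 0$, yields the uniform bound $\mathbb{E}[|\set{Q}(t)|] \leq N\lambda/\theta$ for all $t$. This makes $\{\bar{q}_i^{(T)}\}$ and $\{\bar{\nu}_i^{(T)}\}$ bounded, so by passing to a further subsequence we may assume they converge to a limit $(\bar{q}_i,\bar{\nu}_i)_{i \in \set{S}}$.

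The next step is to verify that $(\bar{q},\bar{\nu})$ satisfies every constraint of \eqref{eq:orifluid}. Non-negativity and $\bar{\nu}_i \leq \bar{q}_i$ follow by taking limits in the per-period bound $R_i(t) \leq Q_i(t)$; the capacity bound $\sum_i \bar{\nu}_i \leq \mu$ follows because $\sum_i R_i(t) \leq R(t)$ and $\mathbb{E}[R(t)] = N\mu$. Since the Markov chain is a tree, only new arrivals occupy the root, so $Q_{\rootNode}(t) = A(t-1)$ for $t \geq 2$, giving $\bar{q}_{\rootNode}^{(T)} \to \lambda$. For the transition constraint at $i \neq \rootNode$, the law of total expectation yields $\mathbb{E}[Q_i(t+1)] = P(\pa(i),i)\,\mathbb{E}[Q_{\pa(i)}(t) - R_{\pa(i)}(t)]$; summing over $t = 1,\dots,T$ and dividing by $NT$ produces
\[
\bar{q}_i^{(T)} = P(\pa(i),i)\bigl(\bar{q}_{\pa(i)}^{(T)} - \bar{\nu}_{\pa(i)}^{(T)}\bigr) + \frac{\mathbb{E}[Q_i(1)] - \mathbb{E}[Q_i(T+1)]}{NT},
\]
and the boundary term is $O(1/T)$ by the $\lambda/\theta$ bound above. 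Sending $T_k \to \infty$ preserves the identity. Hence $(\bar{q},\bar{\nu})$ is feasible for \eqref{eq:orifluid}, so $\sum_i c(i)(\bar{q}_i - \bar{\nu}_i) \geq C^\star$; combined with the subsequence choice, $C(N,\nalg)/N \geq C^\star$.

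I expect the only mildly delicate step to be the handling of the $\limsup$ together with the $O(1/T)$ boundary term in the transition balance, because naively the $\limsup$ only guarantees control along one subsequence while the transition identity must be passed to the limit along the \emph{same} subsequence. Extracting a diagonal subsequence on which all $\bar{q}_i^{(T)}$, $\bar{\nu}_i^{(T)}$ converge simultaneously (possible since the index set $\set{S}$ is finite and the sequences are bounded) resolves this, after which the rest of the argument is routine.
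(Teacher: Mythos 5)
Your proposal is correct and follows essentially the same route as the paper: both turn the time-averaged expected queue lengths and service counts into a feasible point of \eqref{eq:orifluid} and invoke the optimality of $C^\star$, with the transition-constraint boundary terms controlled by the boundedness of the queue. The only difference is cosmetic --- the paper adds explicit $O(1/T)$ correction terms to make the finite-$T$ averages exactly feasible and then bounds the resulting error, whereas you pass to a convergent subsequence; both are valid.
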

Sections~ \ref{sec:optimal-fluid} and \ref{sec:stochastic} prove Lemmas~\ref{lem:optimal-fluid} and \ref{lem:stochastic} respectively. We defer the proof of Lemma~\ref{lem:lower-bound} to Appendix~\ref{app:lem-lower-bound} as it follows a standard argument.
\begin{proof}[Proof of Theorem~\ref{thm:pafou}]
Since $\alg$ is a priority algorithm, for any system size $N$,
\begin{align*}
\subopt(N,\alg)&=C(N,\alg) - \inf_{\nalg \in \Pi} C(N,\nalg)\\ 
\overset{\text{Lemma }\ref{lem:lower-bound}}{\leq} &C(N,\alg) - NC^\star \\
\overset{\text{Lemma}~\ref{lem:stochastic}}{\leq}&N\sum_{i \in \set{S}} c_i(q_i^{\alg} - \nu_i^{\alg}) +U(\theta,\ln(NL))\sqrt{N} - NC^\star \\
\overset{\text{Lemma}~\ref{lem:optimal-fluid}}{=} &NC^\star + U(\theta,\ln(NL))\sqrt{N} - NC^\star = U(\theta,\ln(NL))\sqrt{N}.
\end{align*}
\end{proof}

\subsection{Optimality of $\alg$'s fluid cost (Lemma~\ref{lem:optimal-fluid})}\label{sec:optimal-fluid}
To show fluid optimality, We partition states into three classes $\set{S}_{\textsc{Hi}}, \set{S}_{\textsc{Eq}}, \set{S}_{\textsc{Lo}}$ corresponding to states with indices $\ind_{\alg}(i) = c(i) + V^f(\gamma^\star,i)$ higher, equal, or lower than $\gamma^\star$ respectively. The next lemma shows that if a priority ordering $\bo$ ranks $\set{S}_{\mhigh} \succ \set{S}_{\mequal} \succ \set{S}_{\mlow}$, i.e., ranks these three classes in order without ordering within each, then its equilibrium is optimal to \eqref{eq:orifluid}. The proof follows a complementary slackness argument and is provided in Appendix~\ref{app:prio-optimal}.
\begin{lemma}\label{lem:prio-optimal}
If a priority ordering $\bo$ ranks $\set{S}_{\mhigh} \succ \set{S}_{\mequal} \succ \set{S}_{\mlow}$, its equilibrium is fluid optimal. 
\end{lemma}
\begin{proof}[Proof of Lemma~\ref{lem:optimal-fluid}]
Since $\alg$ ranks all states in a decreasing order of $c(i) + V^f(\gamma^\star,i)$, its corresponding priority ordering must rank $\set{S}_{\textsc{Hi}}$ before $\set{S}_{\textsc{Eq}}$ before $\set{S}_{\textsc{Lo}}$. Lemma~\ref{lem:prio-optimal} thus applies. 
\end{proof}

\begin{remark}\label{remark:gittins}
By Lemma \ref{lem:prio-optimal}, any priority algorithm that ranks $\set{S}_{\mhigh} \succ \set{S}_{\mequal} \succ \set{S}_{\mlow}$ is fluid-optimal, which implies asymptotic optimality as in the proof of Theorem~\ref{thm:pafou}. This generality in our framework allows us to show asymptotic optimality of a different algorithm based on Gittins index \cite{gittins2011multi,whittle1988restless,scully2025gittins} in Appendix~\ref{app:gittins}. This is interesting as Gittins index typically works for classical bandits problem while we have a restless bandit problem. In our data-driven application (Section~\ref{sec:numerics}), we do not use this Gittins-based algorithm as there are  significant computational challenges of solving Gittins Index in data-driven settings; see section 3.4 of \cite{scully2025gittins} for further discussion.
\end{remark}

\subsection{Technical crux: gap between the stochastic and fluid cost (Lemma~\ref{lem:stochastic})}\label{sec:stochastic}
The key idea of Lemma~\ref{lem:stochastic} is to show $Q_i(\infty) - R_i(\infty) \approx q_i^{\bo} - \nu_i^{\bo}$ for any priority algorithm $\prio(\bo).$ That is, the limiting distribution in the stochastic system closely mimics the equilibrium for the fluid system. Throughout this section, we fix a priority ordering $\bo$ and focus on a system with a scheduling algorithm $\prio(\bo)$ that selects $\bolds{R}(t)$ based on \eqref{eq:priority-service}. To ease notation, let $Z_i(t) = Q_i(t) - R_i(t)$ be the random variable of \emph{remaining state-$i$ jobs} after service in period $t$ and $z_i^{\bo} = q_i^{\bo} - \nu_i^{\bo}$ be the corresponding fluid equilibrium. We also use $\bolds{Z}(t)$ {and} $ \bolds{z}^{\bo}$ to denote the associated vectors. Recall from Section~\ref{sec:fluid-sol} that $m$ is the maximum position such that the service rate $\mu$ is sufficient to serve all states in $\Top(\bo_{[m]})$: $\sum_{i \in \Top(\bo_{[m]})} \lambda \pi(i) \leq \mu$. If the inequality is strict and $m < |S|$, we denote the partially-served state by $\partialNode = o_{m+1}$ and if not, $\partialNode = \perp$. Using the structure of $\bolds{z}^{\bo}$ (Section~\ref{sec:fluid-sol}), our proof of Lemma~\ref{lem:stochastic} splits into two components showing that $Z_i(\infty) \approx 0$ for any $i \in \bo_{[m]}$ and that $Z_{\partialNode}(\infty) \approx z_{\partialNode}^{\bo}$.

\paragraph{Fully serving blocked states and their subtree.} Our core result is that  $Z_i(\infty) \approx 0$ for any $i \in \bo_{[m]}$ since the equilibrium satisfies $z_i^{\bo} = 0$. This result is not trivial: although by the definition of $\bo_{[m]}$, the service rate $\mu$ satisfies $\sum_{i \in \Top(\bo_{[m]})} \lambda \pi(i) \leq \mu$, the priority algorithm $\prio(\bo)$ may not simply use all available service to states in $\Top(\bo_{[m]})$. This is because some states in  $\bo_{[m]} \setminus \Top(\bo_{[m]})$ can have higher priority
than those in $\Top(\bo_{[m]}).$\footnote{For example, if the priority ordering is $(3,2,5,4,6,1)$ in Figure~\ref{fig:water-filling}, then $\bo_{[m]}$ is $\{3,2,5\}$ and $\Top(\bo_{[m]}) = \{2,5\}.$ However, state $3 \in \bo_{[m]} \setminus \Top(\bo_{[m]})$ has a higher priority than both $2$ and $5$.} Our key insight is that, although these states can have higher priority, their steady-state number of jobs is close to zero. As a result, the priority algorithm has enough capacity to serve nearly all jobs of states in $\Top(\bo_{[m]}).$ This implies, in the steady state, that almost no remaining job is in $\bo_{[m]}$ since the state of a job must first be in $\Top(\bo_{[m]})$ in order to be in $\bo_{[m]} \setminus \Top(\bo_{[m]})$.

We leverage state space collapse (SSC) to formalize the above intuition.  Formally, fix a parameter $\delta \in (0,e^{-1})$ that we later tune. The following result establishes state space collapse: the steady-state number of remaining jobs in $\bo_{[m]}$ and its subtree is with high probability close to zero. 

\begin{lemma}\label{lem:iterative-ssc}
If the minimum abandonment probability $\theta > 0$, then 
\[
\Pr\left\{\sum_{i \in \sub(o_{[m]})} Z_i(\infty) > 48\theta^{-1.5}\sqrt{N}\ln^2\frac{1}{\delta}\right\} \leq 73NL\delta^2\ln\frac{1}{\delta}.
\]
\end{lemma}
The proof relies on a few results. The first result shows that the number of waiting jobs with states in any set $\set{X}$ and any period $t$ is at most $N\lambda \sum_{i \in \set{X}} \pi(i) + \tilde{O}(\sqrt{N})$ with high probability. 

\begin{lemma}\label{lem:queue-upper-bound}
For any $t \geq 1$, set of states $\set{X} \subseteq \set{S}$ and $\delta \in (0,1/e]$, the queue length $\sum_{i \in \set{X}} Q_i(t)$ is almost surely upper bounded by $N\min(L,|\set{X}|)$. Moreover, it has a high probability upper bound:
\[
\Pr\left\{\sum_{i \in \set{X}} Q_i(t) \leq N\lambda\sum_{i \in \set{X}}\pi(i) + 3\sqrt{N/\theta}\ln \frac{1}{\delta}\right\} \geq 1 - \delta^2.
\]
\end{lemma}

The intuition behind this result is that without any service, the expected number of jobs with states in $\set{X}$ is equal to $N\lambda \sum_{i \in \set{X}} \pi(i)$ and accounting for service should not increase this number. The difficulty in proving this result is that the numbers of jobs in different states are correlated with each other and thus typical concentration bounds do not immediately apply. We address this issue in its proof (Appendix~\ref{app:lem-queue-upper-bound}) by considering a system with no job departures, which makes jobs uncorrelated with each other and allows the use of concentration bounds.

The second result is a new geometric tail bound for a Lyapunov function, showing that: if with high probability, a Lyapunov function has (i) bounded increase and (ii) negative drift when it is high enough, then the steady-state value of this function decays geometrically fast. 
\begin{lemma}\label{lem:tail-bound}
Consider a discrete-time Markov chain $\{\bolds{X}(t)\}_{t \geq 1}$ with a state space $\set{X}$ and a limiting  distribution. Suppose there is an event $\set{G}$, a Lyapunov function $\Phi\colon \set{X} \to \mathbb{R}_{\geq 0}$, and constants $\varepsilon, v_{\whp}, v_{\max},B, \Delta \geq 0$ with $\Delta \leq v_{\whp}$, such that $\Pr\{\bolds{X}(\infty) \in \set{G}^c\} \leq \varepsilon$, $\expect{\Phi(\bolds{X}(\infty))} < +\infty$, and $\forall t$,
\begin{enumerate}
\item[(i)] for any $\bolds{x} \in \set{G}$, $\Pr\{\Phi(\bolds{X}(t+1)) - \Phi(\bolds{X}(t)) \geq v_{\whp} | \bolds{X}(t) = \bolds{x}\} \leq \varepsilon$;
\item[(ii)] for any $\bolds{x} \in \set{G}$ with $\Phi(\bolds{x}) \geq B$, $\Pr\{\Phi(\bolds{X}(t+1)) - \Phi(\bolds{X}(t)) \geq -\Delta | \bolds{X}(t) = \bolds{x}\} \leq \varepsilon$.
\item[(iii)] $\Phi(X(t+1)) - \Phi(X(t)) \leq v_{\max}$ almost surely.
\end{enumerate}
Then for any integer $\ell \geq 0$, the steady-state Lyapunov function satisfies
\[
\Pr\{\Phi(\bolds{X}(\infty)) > B + 2v_{\whp} \ell\} \leq \left(\frac{v_{\whp}}{v_{\whp}+\Delta}\right)^{\ell+1}+\frac{6(\ell+1)\varepsilon v_{\max}}{v_{\whp}}.
\]

\end{lemma}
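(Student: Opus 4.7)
The plan is to establish a one-step recursion of the form $p_{k+1}\leq \rho\,p_k + c\,\varepsilon v_{\max}/v_{\whp}$ on the tail probabilities $p_k := \Pr\{\Phi(\bolds{X}(\infty)) > B + 2v_{\whp}k\}$ with $\rho = v_{\whp}/(v_{\whp}+\Delta)$, and then iterate $\ell+1$ times. The starting point is stationarity: since $\bolds{X}(t+1)\sim\bolds{X}(\infty)$ at equilibrium, I would write $p_{k+1} = \Pr\{\Phi(\bolds{X}(t+1)) > B + 2v_{\whp}(k+1)\}$ for any $t$ and analyze the right side by conditioning on $\bolds{X}(t)$, which also has the limiting distribution.

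The conditioning splits into cases based on whether $\bolds{X}(t)\in\set{G}$ and on the value of $\Phi(\bolds{X}(t))$. The event $\{\bolds{X}(t)\notin\set{G}\}$ contributes at most $\varepsilon$ by stationarity. When $\bolds{X}(t)\in\set{G}$ with $\Phi(\bolds{X}(t)) < B$, reaching the higher threshold requires a single jump exceeding $v_{\whp}$, bounded by $\varepsilon$ via condition (i). When $\bolds{X}(t)\in\set{G}$ with $B \leq \Phi(\bolds{X}(t)) \leq B + 2v_{\whp}(k+1)+\Delta$, it requires violating the negative-drift condition, bounded by $\varepsilon$ via condition (ii). The remaining high-$\Phi$ regime must feed back into the recursion with the geometric factor $\rho$; the natural way to extract the ratio $v_{\whp}/(v_{\whp}+\Delta)$ is via a refined two-step (or multi-step) drift comparison with an auxiliary two-point random walk that matches the worst-case moments of $\Delta\Phi$ under the typical behavior. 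Iterating the recursion $\ell+1$ times then gives $p_\ell\leq \rho^{\ell+1} + (\ell+1)\cdot O(\varepsilon v_{\max}/v_{\whp})$; the linear dependence on $\ell+1$ in the error term indicates that the contribution from bad events is essentially union-bounded across steps rather than absorbed geometrically.

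The hard part is producing the geometric factor $\rho$ when the almost-sure increment bound $v_{\max}$ may be much larger than the typical bound $v_{\whp}$. A textbook Hajek-style exponential-moment argument would multiply by $e^{\eta\Phi}$ and try to show that $\expect{e^{\eta\Delta\Phi}\mid\Phi\geq B} < 1$, but the contribution $\varepsilon\,e^{\eta v_{\max}}$ from rare large jumps cannot be controlled unless $\eta \lesssim 1/v_{\max}$, which ruins the decay rate. The new technique must therefore decouple the two regimes: handle the typical dynamics via a bounded-increment geometric argument (yielding $\rho^{\ell+1}$), and handle the rare large jumps via a separate union bound over the $\ell+1$ relevant steps, each contributing at most $O(\varepsilon\,v_{\max}/v_{\whp})$, where the factor $v_{\max}/v_{\whp}$ reflects that a single bad jump can traverse roughly that many slabs of width $2v_{\whp}$. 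Gluing these two analyses together cleanly with the explicit constants stated in the lemma---in particular the factor of $6$ multiplying the error term---is where the main technical novelty lies.
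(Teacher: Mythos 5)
Your high-level skeleton matches the paper's: both reduce the lemma to a one-step recursion on tail probabilities at thresholds spaced $2v_{\whp}$ apart, with contraction factor $v_{\whp}/(v_{\whp}+\Delta)$ and an additive error of order $\varepsilon v_{\max}/v_{\whp}$ per step, iterated $\ell+1$ times; and you correctly diagnose why a Hajek-style exponential-moment argument fails when $v_{\max} \gg v_{\whp}$. However, the step you yourself flag as ``the hard part'' --- actually producing the factor $v_{\whp}/(v_{\whp}+\Delta)$ --- is exactly the step you do not supply. A ``refined two-step drift comparison with an auxiliary two-point random walk that matches the worst-case moments'' is a heuristic, not an argument; making such a coupling or stochastic-dominance comparison rigorous would itself require controlling the rare $v_{\max}$-sized jumps, which is the whole difficulty. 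As written, your proposal establishes the shape of the error term but not the geometric term, so the proof is incomplete.

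The missing idea in the paper (a refinement of Bertsimas--Gamarnik--Tsitsiklis) is elementary and involves no auxiliary walk: fix a threshold $a \geq B - v_{\whp}$, truncate the Lyapunov function as $\hat\Phi(\bolds{x}) = \max(a, \Phi(\bolds{x}))$, and use the exact stationarity identity $\expect{\hat\Phi(\bolds{X}(t+1))} = \expect{\hat\Phi(\bolds{X}(t))}$ in steady state, i.e., the expected one-step drift of $\hat\Phi$ is zero. Decomposing this zero expectation over three bands of states in $\set{G}$ --- $\Phi \leq a - v_{\whp}$, where the drift of $\hat\Phi$ is at most $\varepsilon v_{\max}$ because a jump of at least $v_{\whp}$ is needed to move $\hat\Phi$ off the floor $a$; $\Phi \in (a-v_{\whp}, a+v_{\whp}]$, where it is at most $v_{\whp}+\varepsilon v_{\max}$ by condition (i); and $\Phi > a + v_{\whp}$, where it is at most $-\Delta + 2\varepsilon v_{\max}$ by condition (ii) and $\Delta \leq v_{\whp}$ --- plus an $\varepsilon v_{\max}$ contribution from $\set{G}^c$, and rearranging the resulting inequality yields precisely
\[
\Pr\{\Phi(\bolds{X}(\infty)) > a + v_{\whp}\} \leq \frac{v_{\whp}}{v_{\whp}+\Delta}\,\Pr\{\Phi(\bolds{X}(\infty)) > a - v_{\whp}\} + \frac{6\varepsilon v_{\max}}{v_{\whp}},
\]
with the ratio and the constant $6$ falling out of the algebra rather than from any moment matching. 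Iterating this for $a = B + (2k-1)v_{\whp}$, $k = \ell, \ell-1, \ldots, 0$, gives the stated bound. You would need to supply this truncation-plus-stationarity mechanism (or a genuinely worked-out alternative) for your proof to go through.
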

The proof of Lemma~\ref{lem:tail-bound} is given in Appendix~\ref{app:lem-tail-bound}. Our analysis instantiates this lemma to a designed Lyapunov function by taking $\varepsilon\approx 1/N^2$; $v_{\whp}, \Delta, B \approx \sqrt{N}; \ell \approx \ln N$; and $v_{\max} \approx N$, which shows that the Lyapunov function is larger than $\tilde{O}(\sqrt{N})$ with probability less than $\tilde{O}(1/N)$.

Although bounds with forms similar to Lemma~\ref{lem:tail-bound} exist in the literature (e.g. \cite[theorem 1]{bertsimas2001performance} and \cite[lemma 4.6]{weng2020optimal}), they do not distinguish between $v_{\whp}$ (a high probability bound on the increase in Lyapunov function)  and $v_{\max}$ (an almost sure upper bound on the increase). When we scale the system size $N$ to infinity, these two quantities can be substantially different ($\sqrt{N}$ versus $N$ as in later analysis) and our bound yields better guarantee than the one implied by \cite[theorem 1]{bertsimas2001performance}, which was not designed for such an asymptotic regime. Admittedly, there exist bounds, such as the one in \cite{hajek1982hitting}, that only require the change in the Lyapunov function to have a bounded moment generating function instead of an absolute upper bound. That said, as we demonstrate in Appendix~\ref{app:comp-hajek}, the bound in \cite{hajek1982hitting} induces a suboptimal dependence on $N$ for our purpose. Therefore, the careful distinction of $v_{\whp}$ and $v_{\max}$ in Lemma~\ref{lem:tail-bound} is a key innovation that enables our guarantee. 

\paragraph{Proof sketch of Lemma~\ref{lem:iterative-ssc}.} 
Letting the system state be $\bolds{X}(t) = (\bolds{Q}(t), R(t), \bolds{Q}(t+1))$, the proof (formalized in Appendix \ref{app:lem-iterative-ssc}) is by showing the below Lyapunov function has negative drift:
\[
\Phi(\bolds{X}(t)) = \sum_{i \in \sub(\bo_{[m]})} Z_i(t) = \sum_{i \in \sub(\bo_{[m]})} (Q_i(t) - R_i(t))
\]
where $Q_i(t)$ is the number of state-$i$ jobs at the beginning of a period $t$ and $R_i(t)$ is the number of served state-$i$ jobs in period $t$. The drift of $\Phi(\bolds{X}(t))$ consists of a positive and a negative part:
\begin{equation}\label{eq:drift-positive-negative}
\Phi(\bolds{X}(t + 1)) - \Phi(\bolds{X}(t)) = \sum_{i \in \sub(\bo_{[m]})} Q_i(t+1) - \left(\sum_{i \in \sub(\bo_{[m]})} R_i(t+1) + \sum_{i \in \sub(\bo_{[m]})} Z_i(t)\right).
\end{equation}
Assuming that the available service $R(t+1)$ is close to $N\mu$ (which is true by a concentration bound up to a $\sqrt{N}$ term) and given that states in $\bo_{[m]}$ have higher priority than states in $\set{S} \setminus \bo_{[m]}$, the service allocated to serve jobs with states in $\sub(\bo_{[m]})$ is around $N\mu$ unless there are fewer such jobs. In the latter case, the drift in \eqref{eq:drift-positive-negative} must be negative as the first two terms cancel out. In the former case, let $\set{T} = \Top(\bo_{[m]})$, the top set of states in $\sub(\bo_{[m]})$. \eqref{eq:drift-positive-negative} gives
\begin{align*}
\Phi(\bolds{X}(t + 1)) - \Phi(\bolds{X}(t)) &\lesssim \sum_{i \in \sub(\bo_{[m]})} Q_i(t+1) - N\mu - \sum_{i \in \sub(\bo_{[m]})} Z_i(t) \\
&= \underbrace{\sum_{i \in \set{T}} Q_i(t+1) - N\mu}_{(\text{service})} + \underbrace{\sum_{i \in \sub(\bo_{[m]}) \setminus \set{T}} Q_i(t+1) - \sum_{i \in \sub(\bo_{[m]})} Z_i(t)}_{(\text{abandonment})}.
\end{align*}
The (service) term is non-positive because the service rate is sufficient to serve all jobs in $\set{T}$:
\begin{equation}\label{eq:bound-q-by-mu}
\sum_{i \in \set{T}} Q_i(\infty) \lesssim N\lambda \sum_{i \in \set{T}} \pi(i) \leq N\mu,
\end{equation}
where the first inequality is by Lemmas~\ref{lem:queue-upper-bound} and the second inequality uses the definition of~$m$ which is the maximum position that allows all jobs in $\set{T}$ to be served in expectation.

In addition, the (abandonment) term is (with high probability) negative because (1) any job with states in $\sub(\bo_{[m]}) \setminus \set{T}$ in period $t + 1$ must be a remaining job with states in $\sub(\bo_{[m]})$ in period $t$; (2) a remaining job in period $t$ has probability at least $\theta$ to abandon the system. These combined suggest that $(\text{abandonment}) \lesssim -\theta \sum_{i \in \sub(\bo_{[m]})} Z_i(t).$ 

As a result, we can upper bound the drift $\Phi(\bolds{X}(t + 1)) - \Phi(\bolds{X}(t)) \lesssim -\theta \Phi(\bolds{X}(t))$. The proof of Lemma~\ref{lem:iterative-ssc} formalizes this via the below lemma (proved in Appendix~\ref{app:lem-drift-bound})  and applies Lemma~\ref{lem:tail-bound} to translate the drift bound into a high probability bound on the Lyapunov function.

\begin{lemma}\label{lem:drift-bound}
There exists an event $\set{G}$, such that: (i) $\Pr\{\bolds{X}(\infty) \in \set{G}\} \geq 1-3\delta^2$ and (ii) for any $t$, conditioning on $\bolds{X}(t) \in \set{G}$, with probability at least $1 - \delta^2$, the drift satisfies 
\begin{equation}\label{eq:self-reflected}
\Phi(\bolds{X}(t+1)) - \Phi(\bolds{X}(t)) \leq -\theta \Phi(\bolds{X}(t)) + 6\sqrt{N/\theta}\ln\frac{1}{\delta}.
\end{equation}
Moreover, for any $t$, $\Phi(\bolds{X}(t+1)) - \Phi(\bolds{X}(t)) \leq NL$ almost surely. 
\end{lemma}

\paragraph{Bounding the number of jobs in the partially-served state and its subtree.}

We show that if the number of remaining  jobs in $\bo_{[m]}$ is small,  the expected number of remaining jobs in the partially-served state, $\expect{Z_{\partialNode}(\infty)}$, is not much larger than the fluid value $Nz_{\partialNode}^{\bo}.$ The proof of Lemma~\ref{lem:iterative-ssc} does not apply to the partially-served state because Inequality~\eqref{eq:bound-q-by-mu} does not hold beyond the maximum position $m$ for which all jobs in $\Top(\bo_{[m]})$ can be served (in expectation). That said, for the partially-served state, it suffices to bound the \emph{expectation} of the remaining number of jobs rather than provide a high probability guarantee. This is because unlike those fully-served states, there is no downstream effect of serving the partially-served state, as it is the last state that would get capacity in the priority ordering $\bo.$ The proof of Lemma~\ref{lem:connect-partial} is in Appendix~\ref{app:lem-connect-partial}. 

\begin{lemma}\label{lem:connect-partial}
If there exists $\tilde{U}$ such that $\Pr\{\sum_{i \in \sub(\bo_{[m]})} Z_i(\infty) \leq \tilde{U}\sqrt{N}\} \geq \frac{N-1}{N}$, then $\expect{Z_{\partialNode}(\infty)} \leq Nz_{\partialNode}^{\bo}+\frac{4\sqrt{N/\theta}\ln N+\tilde{U}\sqrt{N}+3}{\theta}.$
\end{lemma}

\paragraph{Combining the pieces.} The final step in the proof of Lemma~\ref{lem:stochastic} is the below lemma. Following Lemma~\ref{lem:connect-partial}, this lemma shows that if there is a high-probability bound on $\sum_{i \in \sub(\bo_{[m]})} Z_i(\infty),$ then it implies a bound on the difference between a priority algorithm's long-run average holding cost and its equilibrium's fluid cost. 
\begin{lemma}\label{lem:ssc-cost-diff}
If there exists $\tilde{U}$ such that $\Pr\{\sum_{i \in \sub(\bo_{[m]})} Z_i(\infty) \leq \tilde{U}\sqrt{N}\} \geq \frac{N-1}{N}$, then the cost difference $C(N,\prio(\bo))-N\sum_{i \in \set{S}}c_i(q_i^{\bo} - \nu_i^{\bo})$ is upper bounded by $10c_{\max} \theta^{-2}\left(\sqrt{N/\theta}\ln N+\tilde{U}\sqrt{N}\right).$
\end{lemma}
\begin{proof}[Proof sketch of Lemma~\ref{lem:ssc-cost-diff}.]
The proof (provided in Appendix~\ref{app:lem-ssc-cost-diff}) uses the fact that the long-run average holding cost of $\prio(\bo)$ is equal to $C(N, \prio(\bo)) = \sum_{i \in \set{S}} c_i Z_i(\infty).$ As a result, recalling that $z_i^{\bo} = q_i^{\bo} - \nu_i^{\bo}$ is the fluid remaining number of jobs, the expected difference between $C(N,\prio(\bo))$ and the fluid cost of the equilibrium $N\sum_{i \in \set{S}} c_i z_i^{\bo}$ is 
\begin{align*}
C(N,\prio(\bo)) - N\sum_{i \in \set{S}} c_iz_i^{\bo}&=\underbrace{\sum_{i \in \sub(\bo_{[m]})}c_i\expect{Z_i(\infty) - Nz_i^{\bo}}}_{\text{fully served}} + \underbrace{\sum_{i \in \sub(\partialNode) \setminus \sub(\bo_{[m]})} c_i\expect{Z_i(\infty)-Nz_i^{\bo}}}_{\text{partially served}} \\
&\hspace{0.5in}+ \underbrace{\sum_{i \in \set{S} \setminus \sub(\bo_{[m]}) \setminus \sub(\partialNode)} c_i\expect{Z_i(\infty) - Nz_i^{\bo}}}_{\text{never served}}.
\end{align*}
The proof proceeds by bounding each of the three terms:
\begin{itemize}
\item For the first term (fully served), $z_i^{\bo} = 0$ for $i \in \sub(\bo_{[m]})$ as such a state is fully-blocked \ref{item:fully-blocked}, empty \ref{item:empty}, or partially-blocked \ref{item:partially-blocked}. Moreover, the lemma assumption yields that $\expect{\sum_{i \in \sub(\bo_{[m]})} Z_i(\infty)}$ is at most $\tilde{O}(\sqrt{N})$, and thus the first summation is at most $\tilde{O}(\sqrt{N})$.
\item for the second term (partially served), observe that $\sub(\partialNode) \setminus \sub(\bo_{[m]})$ include both the partially-served state $\partialNode$ \ref{item:partially-served}) and those partially-reduced states \ref{item:partially-reduced}. For any such state $i$, $z_i^{\bo} = z_{\partialNode}^{\bo} \cdot (\pi(i)/\pi(\partialNode))$ by construction. Moreover, the expected number of remaining jobs satisfies $\expect{Z_i(\infty)} \leq \expect{Z_{\partialNode}(\infty)} \cdot (\pi(i) / \pi(\partialNode))$ by induction. The second term is thus upper bounded by 
\[
c_{\max} \left(\sum_{i \in \sub(\partialNode) \setminus \sub(\bo_{[m]})} \pi(i) / \pi(\partialNode)\right) \left(\expect{Z_p(\infty)} - Nz_{\partialNode}^{\bo}\right) = \tilde{O}(\sqrt{N}),
\]
which uses Lemma~\ref{lem:connect-partial} to bound $\expect{Z_p(\infty)} - z_{\partialNode}^{\bo}$.
\item for the last term (never served), any $i \in \set{S} \setminus \sub(\bo_{[m]}) \setminus \sub(\partialNode)$ is an un-reduced state \ref{item:un-reduced} with $z_i^{\bo} = \lambda\pi(i).$ Since $\expect{Z_i(\infty)} \leq N\lambda \pi(i)$, which is the expected number of state-$i$ jobs assuming no service, the last summation is at most zero.
\end{itemize}
\end{proof}

The proof of Lemma~\ref{lem:stochastic} (Appendix~\ref{app:lem-stochastic}) combines Lemma~\ref{lem:iterative-ssc} for a suitable $\delta$ with Lemma~\ref{lem:ssc-cost-diff}. 
\section{Data-Driven Application to Content Moderation}\label{sec:numerics}
With a focus on content moderation, this section evaluates the operational benefit of designing a scheduling algorithm incorporating uncertain and evolving holding costs. The key takeaway from our simulations is that a practical implementation of our algorithm, called $\hindalg$, has the potential to drastically improve the efficiency of typical heuristics employed by social media platforms. Our results demonstrate that $\hindalg$ reduces the policy-violating views by $3.2\%$ to $8.5\%$ compared to these heuristics across various settings and datasets. Another way to quantify the efficiency gain is through the reviewer-hour savings where we see $7\%$ to $20\%$ reduction in the number of needed reviews to prevent the same amount of policy-violating views. 

We organize this section as follows. Section~\ref{sec:instantiate} instantiates our model to the human review system in content moderation. Section~\ref{sec:set-up} describes the set-up of our simulation and three heuristics employed by social media platforms. Section~\ref{sec:prac-imp} gives a practical implementation of $\alg$ (Algorithm~\ref{algo:pafou}) with hindsight approximation and machine learning. Section \ref{sec:video-real} evaluates the effectiveness of various algorithms using a dataset of YouTube videos. Section~\ref{ssec:robustness} showcases the robustness of our algorithm.

\subsection{Instantiating our model to content moderation}\label{sec:instantiate}
Our model captures the scheduling component of a human review system in content moderation, which typically operates as an AI-human pipeline (see \cite{lykouris2024learning}). To evaluate our algorithm in an environment similar to a practical content moderation system, we consider the below setting. 

Over a fixed time horizon (e.g., one week), new pieces of content are enqueued into the human review system after user reporting or AI filtering. A content $j$ is characterized by two quantities: 
\begin{enumerate}
    \item[(i)] an indicator $\violating(j) \in \{0,1\}$ on whether the content violates the platform's policy and 
    \item[(ii)] a view function $\cumview(j,t)$ denoting the cumulative number of views this content will receive if left uninterrupted on the platform for $t$ time units after its creation (e.g., hours).
\end{enumerate} Both the violation indicator and the view function are unknown when a content is enqueued. Letting $\tau(j, T)$ be the time until a human review or the end of the horizon $T$ for content $j$, we consider the objective of minimizing \emph{policy-violating views} created by all enqueued content: 
\begin{equation}\label{eq:def-vioviews}
\vioviews(T) = \sum_{\text{any enqueued content }j} \violating(j) \times \cumview(j, \tau(j, T)).
\end{equation}
Minimizing the number of policy-violating views represents a typical objective of social media content moderation systems. For example, Meta measures the ``prevalence of violating content'' (a normalized version of policy-violating views) to evaluate the effectiveness of their content moderation system \cite{meta_prevalence_metric,meta_cser}; YouTube \cite{youtube_vvr_2021,google_youtube_policy_views} and Snapchat \cite{snap_transparency_h1_2025} also use similar metric under the name ``violative view rate'', which was noted by \cite{youtube_vvr_2021} as ``the best way for us to understand how harmful content impacts viewers, and to identify where we need to make improvements.''

We cannot directly optimize policy-violating views in our model because this requires the knowledge of whether a content is violating, which is unknown until a human review. In this section, we focus on an approximate metric, predicted policy-violating views, that replaces the actual violation of content $j$ with a probability of violation $p\violating(j)$:
\begin{equation}\label{eq:def-pvioviews}
p\vioviews(T) = \sum_{\text{any enqueued content }j} p\violating(j) \times \cumview(j, \tau(j, T)).
\end{equation}
The probability of violation in practice corresponds to outputs of machine learning models that predict whether a content piece violates certain platform policies. These models are widely applied in existing content moderation systems, see e.g. \cite{avadhanula2022} for Meta and \cite{Linkedin} for LinkedIn.

We consider the following instantiation of the model in Section~\ref{sec:model} for content moderation. The system operates in discrete periods. A job in this system corresponds to an enqueued content piece. The cost for content $j$ at time $t$, which is $c(S_j(t))$ in our model, is defined by $c(S_j(t)) = p\violating(j) \times \expect{\view(j,t) \mid S_j(t)}$ where $\view(j,t)$ is the number of views content $j$ gets in period~$t$. Here the state $S_j(t)$ captures any existing information for content $j$, including, for example, the predicted probability of policy-violating $p\violating(j)$ and the number of views in the last few periods. In each period, a random number of human reviewers become available to review content and a random number of new content arrives into the system. Minimizing the cost defined in \eqref{eq:def-cost} is equivalent to minimizing a long-run average of the predicted policy-violating view metric in \eqref{eq:def-pvioviews}. 

\noindent \textbf{Discussion on assumptions.}
There are three assumptions when we work with \eqref{eq:def-pvioviews} as an approximation of \eqref{eq:def-vioviews}. The first assumption is the independence between  the actual violation event of a content piece and its view trajectory, under which \eqref{eq:def-pvioviews} becomes an unbiased estimate of \eqref{eq:def-vioviews}. Although we do not test this assumption formally, it is consistent with the literature which uses data of non-policy-violating posts and randomly sampled posts to train virality prediction model \cite{haimovich2021popularity}. The second and third assumptions are that the predictor $p\violating(j)$ is well-calibrated and is fixed across the lifetime of a content piece. These may not be true in practice, and we discuss in Section~\ref{ssec:robustness} on simulations that show the robustness of our algorithm against these assumptions.

We can avoid these three assumptions by directly optimizing the expectation of \eqref{eq:def-vioviews} using another instantiation of our model. In particular, we can set the instantaneous holding cost for a state $s$ to be the expected per-period policy-violating views \eqref{eq:def-vioviews} of a content piece, conditioning on this content having state $s$. We can then run our algorithm for this model instantiation to minimize the expected number of policy-violating views. We provide more details in Appendix~\ref{app:model-general}.

However, we choose not to base our algorithm design on this general model instantiation due to the limited amount of suitable data available in practice. To see this, the general instantiation requires a dataset with full (untruncated) view trajectories of policy-violating content pieces to reliably estimate the expected per-period policy-violating views. These policy-violating pieces are either (i) identified by the platform proactively or (ii) known to the platform much later through user reports or measurement effort \cite{meta_prevalence_metric}. In the case of (i), the platform is expected to remove the pieces swiftly, and thus their view trajectory is truncated. For (ii), though the platform can observe the full view trajectory, an efficient content moderation system should leave very few such pieces since each such piece denotes a failure point for the system. Thus the size of suitable data is limited for the general instantiation.

In contrast to the more general instantiation, the separation of $p\violating$ and the view trajectory in \eqref{eq:def-pvioviews} addresses the difficulty in data collection (though with added assumptions). To work with \eqref{eq:def-pvioviews}, the platform only needs two datasets: one containing actual violation of selected content pieces and the other one containing the view trajectories of another set of content pieces. Either dataset is easy to collect: for example, the first one can be obtained by offline manual labeling \cite{dataset-wiki}; and the second one is available through  the view trajectories of the massive amount of content pieces live on the platform \cite{haimovich2021popularity}. Therefore, for our algorithm to be practically relevant, we choose to focus on the formulation in \eqref{eq:def-pvioviews}.

\subsection{Simulation Set-Up}\label{sec:set-up}
The simulation assumes an offline dataset $\set{D}$. For each content $j$ in the dataset we have access to:
(i) a predicted probability of violation $p\violating(j)$; (ii) ground-truth information of actual violation $\violating(j)$; (iii) ground-truth trajectory of number of views $\widetilde{\view}(j,d)$ denoting the number of views for content $j$ in the $d-$th period since its creation.\footnote{For content $j$ that arrives in period $\tau$, its view in period $t$ is denoted by $\view(j,t)$ or $\widetilde{\view}(j,t-\tau)$.} Specifically, given a maximum length $L$ of content's view trajectory (which can be different from the time horizon $T$), the dataset is of the form 
\begin{equation}\label{eq:dataset}
\left\{p\violating(j), \violating(j), \left(\widetilde{\view}(j, d), d \leq L\right)\right\}_{j \in \set{D}}.
\end{equation}
We randomly and equally partition the dataset into a training set $\set{D}_{\text{train}}$ and a test set $\set{D}_{\text{test}}$ as our algorithm requires an offline training component. The predictions $p\violating$ are all calibrated such that the ground truth actual violation $\violating(j)$ is sampled from an independent Bernoulli random variable with mean $p\violating(j)$ for any content $j$ (Appendix~\ref{app:uncalibrated} considers the impact of uncalibrated predictions). The distribution of view trajectories is at the core of our simulation. In Section~\ref{sec:video-real}, we use a real-world dataset containing view trajectories of YouTube videos to evaluate the algorithms. To check the robustness of our algorithm in different business settings, we also conduct simulations on a synthetic dataset for ads content in Appendix~\ref{app:paid} and a synthetic dataset for UGC content in Appendix~\ref{app:organic-synthetic}. As an example, Figure~\ref{fig:example-trajectory} illustrates the uncertainty in view trajectories from these three settings via the five trajectories with the highest cumulative views in the corresponding training sets.
\begin{figure}
\centering
\includegraphics[width=6in]{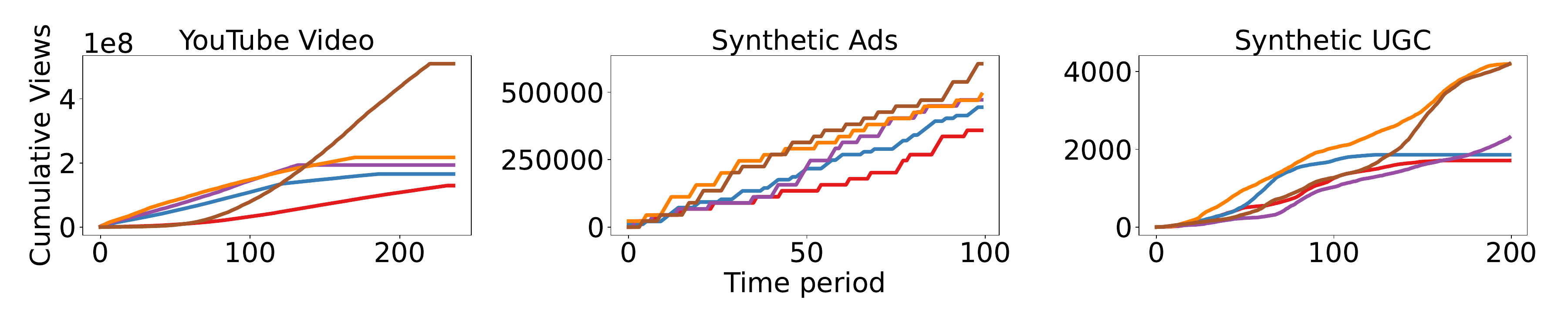}
\caption{View trajectories of five content pieces with highest cumulative views in the three datasets}
\label{fig:example-trajectory}
\end{figure}

\paragraph{Simulating a scheduling algorithm.} Given a scheduling algorithm $\nalg$, the simulation evaluates its performance via the policy-violating views metric (defined in \eqref{eq:def-vioviews}). Each run contains $T = 500$ periods where the events in Section~\ref{sec:model} happen in order for a period $t$: 
\begin{enumerate}
\item \emph{Server capacity:} a random number of reviewers $R(t)\sim \mathrm{Bin}(N,\mu)$ become available;
\item \emph{Service decision:} the algorithm selects a set of size $R(t)$ from the content enqueued for human review;
\item \emph{Holding cost:} non-reviewed content $j$ incur policy-violating views $\violating(j) \times \widetilde{\view}(j, d(j,t))$ where $d(j,t)$ is the number of time periods since the arrival of content $j$ on time platform;
\item \emph{Transition:} non-reviewed contents deterministically transition with $d(j,t+1) = d(j,t) + 1$;
\item \emph{Arrivals:} a random number $A(t) \sim \mathrm{Bin}(N,\lambda)$ of new contents is sampled i.i.d. from $\set{D}_{\text{test}}$;
\item \emph{Queue update:} the queue consists of the union of non-reviewed and new content. If~$d(j,t+1) > L$ for a content $j$, the content leaves the queue because it will no longer incur any views. In our model, this means that the content transitions to the empty state.
\end{enumerate}

The simulation takes $N = 1000$ and $\lambda = 0.1$. The service rate $\mu$ varies by $\mu = \lambda \times \ratio$ with the $\ratio$ ranging from $1\%$ to $20\%$. This parameter, review ratio, captures the percentage of enqueued content that can be human reviewed on average. We restrict the review ratio up to $20\%$ to focus on an overloaded system motivated by the fact that significantly more content is created than the amount humans can review on social media platforms. The policy-violating views of an algorithm is averaged over $10$ independent runs.

\paragraph{Baseline algorithms.} Each setting simulates a practical version of our $\alg$ algorithm (details in Section~\ref{sec:prac-imp}) and three heuristic algorithms. All heuristics assign indices to waiting jobs and review the jobs with the highest indices. The three index heuristics are: 
\begin{itemize}
\item $p\violating$ scheduling (\textsc{pViolating}): the index for content $j$ in period $t$ is $\ind_{pv}(j,t) = p\violating(j)$; this heuristic has been deployed for content moderation at LinkedIn \citep{Linkedin}. 
\item Velocity scheduling (\textsc{Velocity}): the index for content $j$ in period $t$ is $\ind_{ve}(j, t) = p\violating(j) \times \view(j,t - 1),$ i.e., the probability of being policy-violating multiplied by the number of views in the \emph{last} period. This heuristic has been deployed at Meta \citep{avadhanula2022}.
\item pIV scheduling (\textsc{pIV}): the index of content $j$ for period $t$ is $\ind_{pIV}(j,t) = p\violating(j) \times \text{predicted remaining views of content $j$ from period $t$}$.
This heuristic was considered in \cite{makhijani2021}, which documented a simulator for Meta's content moderation system. Different from the above two heuristics, pIV requires an additional prediction model for future views. 
\end{itemize}

One can view $\textsc{Velocity}$ as a practical implementation of the generalized $c\mu$-rule \citep{van1995dynamic} as it prioritizes content with largest increase in holding cost, approximated by the views this content got in the last period. Similarly, $\textsc{pIV}$ follows the same idea of the $c\mu/\theta$-rule \citep{atar2010cmu} by prioritizing content with largest remaining holding cost. Both are canonical scheduling algorithms from the literature handling heterogeneous job holding costs (as discussed in Section~\ref{sec:intro}).

\subsection{Practical implementation of our algorithm}\label{sec:prac-imp}
The first step of implementing $\alg$ is identifying a  state definition of content capturing its view trajectory. We encode the state $S_j(t)$ for a content $j$ in period $t$ by a six-dimensional vector:
\begin{equation}\label{eq:features}
S_j(t) = \Big(p\violating(j), d(j,t), \cumview(j, t-1), (\widetilde{\view}(j, d(j,t) - k))_{ k \in\{1,2,3\}}\Big),
\end{equation}
corresponding to the probability of the content being policy-violating, the number of periods since its creation, its total views before this period, and the per-period view in the last three periods. In practice, the state can include thousands of content features, that can either be static (such as who created the content) or dynamic (such as number of likes and shares); see, e.g., \cite{rizoiu2017expecting, haimovich2021popularity} and the references therein for the literature studying the prediction of content view trajectory.

With the state representation, the next step is to solve  for $\gamma^\star$ and the cost-to-go function $V^f(\gamma^\star, \cdot).$ Fixing $\gamma$, the cost-to-go function $V^f(\gamma,\cdot)$ is solvable via standard value-function based reinforcement learning (RL) algorithm such as deep Q-learning \citep{mnih2015human}. However, such an RL algorithm is difficult to train in content moderation given the billions of pieces of new content being created per day. Moreover, $\alg$ requires solving for $\gamma^\star$ on top of solving the cost-to-go function. 

We thus explore a \emph{hindsight approximation} version of $\alg$, which we call $\hindalg$ and is motivated by \cite{sinclair2023hindsight}. Recall from Section~\ref{sec:ski-rental} that the cost-to-go function $V^f(\gamma,i)$ represents the minimum expected future cost of a Markovian ski-rental problem, conditioning on an initial state $S_1 = i$. In its definition \eqref{eq:bellman}, the future cost trajectory is unknown through the uncertainty in future states $S_2,\ldots,S_L$ where we recall from Section~\ref{sec:model} the single-job Markov chain is a tree with at most $L$ levels. Now suppose instead that we do know the future state trajectory is $S_k = s_k, k \leq L$ and thus the future cost trajectory is $c(s_2),\ldots,c(s_L).$ The hindsight minimum cost is equal to $\min(\gamma, c(s_2)+\ldots+c(s_L))$ since the optimal action is to buy at a cost of $\gamma$ if the total future renting cost is higher than this buying cost, or the optimal action is to always rent. A hindsight cost-to-go function is the expectation of the hindsight minimum cost, defined by 
\[
\tilde{V}(\gamma, i) = \expectsub{S_2,\ldots,}{\min\left(\sum_{\ell=2}^L c(S_{\ell}), \gamma\right) \mid S_1 = i, S_{k+1} \sim P(S_{k}, \cdot), \forall k}.
\]

Recall that when we instantiate our model for content moderation the cost $c(S_j(t))$ for a content $j$ in period $t$ corresponds to $p\violating(j) \times \widetilde{\view}(j, d(j, t))$. Putting aside $p\violating(j)$ and assuming the cost for a content in a period is equal to its number of views in that period, the above hindsight cost-to-go function $\tilde{V}$ leads to a simple offline training procedure. In particular, given the training set $\set{D}_{\text{train}}$, for each content $j$ let $\hat{s}(j, \tau)$ be the state representation of the view trajectory of content $j$ in the $\tau-$th period and $\text{futureView}(j,\tau)$ be its future views after the $\tau-$th period, i.e., $\text{futureView}(j,\tau) = \sum_{k = \tau + 1}^L \widetilde{\view}(j, k).$ We train a regression model for any given $\gamma$ by
\begin{equation}\label{eq:regression}
\set{M}_\gamma = \text{Regression}\left(\hat{s}(j,\tau) \to \min\left\{\gamma, \text{futureView}(j,\tau)\right\} \colon j \in \set{D}_{\text{train}}, 1 \leq \tau \leq L\right),
\end{equation}
i.e., given the current state $s$ of a content, the regressor outputs $\set{M}_\gamma(s)$, which predicts the expectation of the minimum between $\gamma$ and the future number of views. In our implementation, we fit an XGBoost regressor \citep{chen2016xgboost} with maximum depth equal to $10$ and number of estimators equal to~$100$.

When $\gamma$ is set sufficiently large, $\set{M}_\gamma$ predicts the future views of a content, which is what we use for $\textsc{pIV}$ scheduling in Section~\ref{sec:set-up}. For Algorithm~\ref{algo:pafou}, we need to set a suitable hyper-parameter $\gamma^\star$. This can be done by viewing $\gamma^\star$ as a hyper-parameter to tune. In our simulation, we take $\gamma^\star$ to be the $99\%$-percentile of contents' total views in the training set $\set{D}_{\text{train}}.$

Summarizing $\hindalg$, it takes as input a hyper-parameter $\gamma^\star$ and train a regression model $\set{M}(\gamma^\star)$ according to \eqref{eq:regression}. Then similar to \eqref{eq:index-alg}, the index for content $j$ in period $t$ is given by 
\begin{equation}\label{eq:ind-hindalg}
\ind_{\hindalg}(j,t) = p\violating(j) \times \left[\view(j, t - 1) + \set{M}_{\gamma^\star}(\hat{s}(j,d(j,t)))\right].
\end{equation}
When $\gamma^\star = 0$, the index becomes the same as that of $\textsc{Velocity}$. Alternatively, if $\gamma^\star = \infty$, the index behaves like $\textsc{pIV}$ though it also considers the views in the last period. Tuning $\gamma^\star$ seeks a balance between the \emph{accurate} instantaneous view information and the \emph{uncertain} future view information. 

\subsection{Simulations on YouTube video data}\label{sec:video-real}
We use the \textsc{Active} dataset from \cite{rizoiu2017expecting}, which contains daily number of views of certain YouTube videos, to evaluate the algorithms.  

\paragraph{Data description.} We provide a general discussion of the \textsc{Active} dataset, and refer the readers to \cite{rizoiu2017expecting} for the detailed data collection process. The dataset contains $14,041$ YouTube videos. Each video comes with a vector of daily number of views, the daily number of times it is tweeted or shared, and video features such as the category and duration of this video. For the purpose of our analysis, we only use the daily number of views. The dataset tracks the daily views for a minimum of $119$ days and a maximum of $237$ days among these videos. The maximum number of total views is $509,245,784$ and the minimum is $18$. See Figure~\ref{fig:example-trajectory} for examples of view trajectories of these videos. 

We generate a dataset of the form \eqref{eq:dataset} for our simulation as follows. The dataset contains all videos in the \textsc{Active} dataset and uses the daily number of views as $\widetilde{\view}(j,d)$ for a video $j$ and a period $d \leq L = 237.$ We fill the views by zero if the original dataset does not include enough days of views for a video. The probability of policy-violating $p\violating(j)$ is a uniform random variable in $[0,1].$ We equally and randomly separate the dataset into a training and a test set. 
\paragraph{Simulation results.} Recall that the review ratio $r = \mu / \lambda$ captures the average fraction of enqueued contents that can be reviewed by human reviewers. Varying the review ratio $r \in \set{R} = \{0.01 + 0.005k\colon 0 \leq k < 40\}$, Figure~\ref{fig:youtube-calibrated} shows the effectiveness of $\hindalg$ compared to the other heuristics considered in Section~\ref{sec:set-up}. Letting $\vioviews(\nalg, r)$ be the number of policy-violating views for algorithm $\nalg$ when the review ratio is $r$, the left plot in Figure~\ref{fig:youtube-calibrated} plots $ \vioviews(\nalg, r)$ for $\nalg \in \{\textsc{pViolating}, \textsc{Velocity}, \textsc{pIV}, \hindalg\}$ and different review ratio $r$. The plot shows that $\hindalg$ clearly outperforms other algorithms: it achieves at least $3.2\%$ reduction in policy-violating views compared to $\textsc{pIV}$ and $\textsc{Velocity}$ and $17.8\%$ reduction compared to $\textsc{pViolating}$.

To highlight how such prevalence reduction effort translates into reviewer-hour savings for a platform, the right plot in Figure~\ref{fig:youtube-calibrated} shows the percentage of reviewer-hour savings by using $\hindalg$ (we omit the comparison with $\textsc{pViolating}$ to focus on the better benchmark algorithms). That is, for an algorithm $\nalg$ and a review ratio $r$, we find the minimum review ratio $r'$ from the list $\set{R}$ such that $\vioviews(\hindalg, r') \leq \vioviews(\nalg, r)$ and then show $1 - \frac{r'}{r}$. This quantity captures the reduction in capacity (thus capturing reviewer-hour savings) to achieve the same amount of policy-violating views when we switch from an algorithm $\nalg$ to $\hindalg$. The plot shows that $\alg$ achieves $7\%$ to $20\%$ reviewer-hour savings when compared to $\textsc{pIV}$ or $\textsc{Velocity}$. We note that when the baseline policy is $\textsc{Velocity}$ and the review ratio is small, the improvement from our policy is mild (e.g., reviewer-hour savings over $\textsc{Velocity}$ appear only when the review ratio is higher than $3\%$.) This is predictable by theory: when the review ratio is sufficiently small, prioritizing jobs with highest instantaneous holding costs is equivalent to the near-optimal Gittins policy discussed in Remark~\ref{remark:gittins} (see Remark~\ref{remark:connection} of Appendix~\ref{app:gittins} for more dicussion).

\begin{figure}[hbtp]
\centering
\includegraphics[width=2.7in]{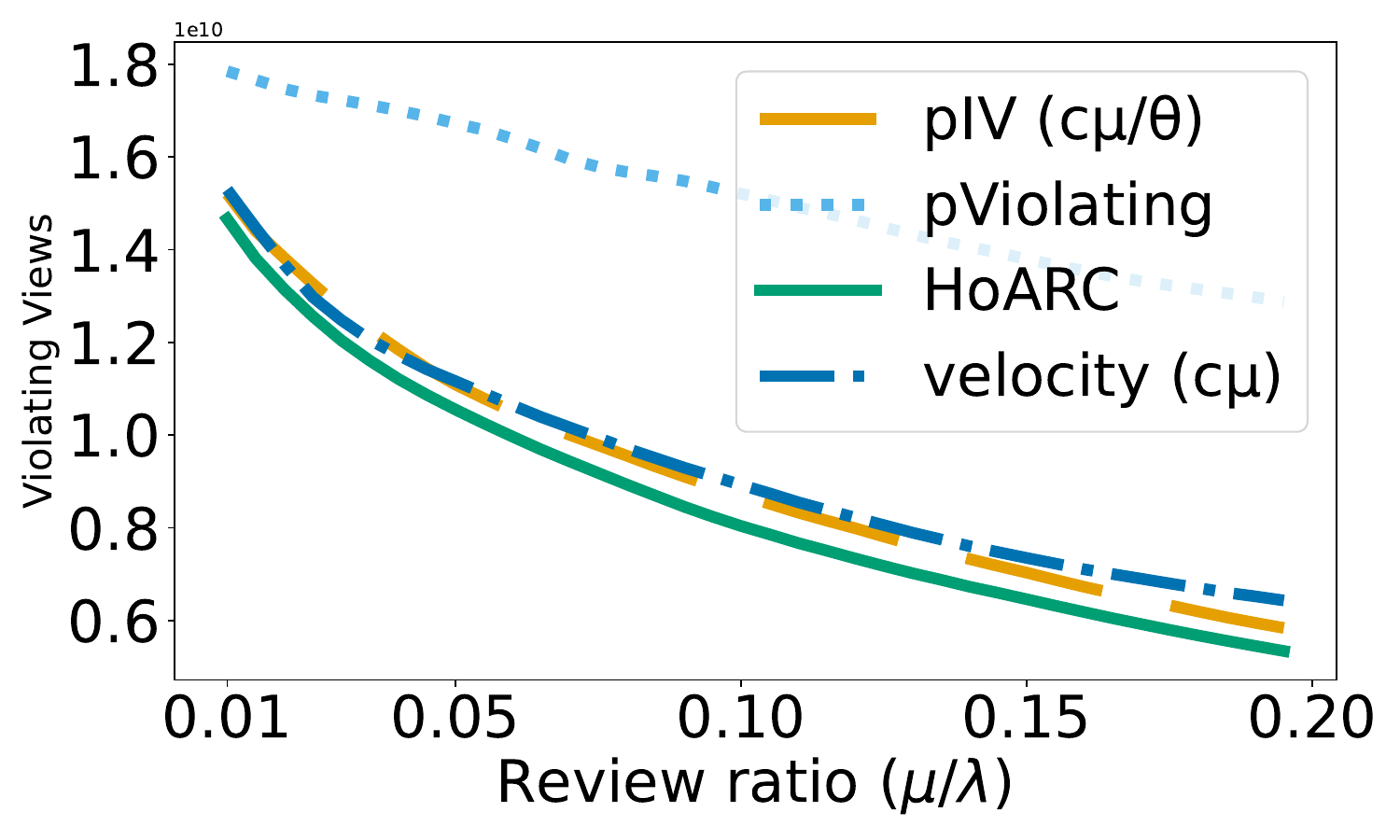}
\includegraphics[width=2.7in]{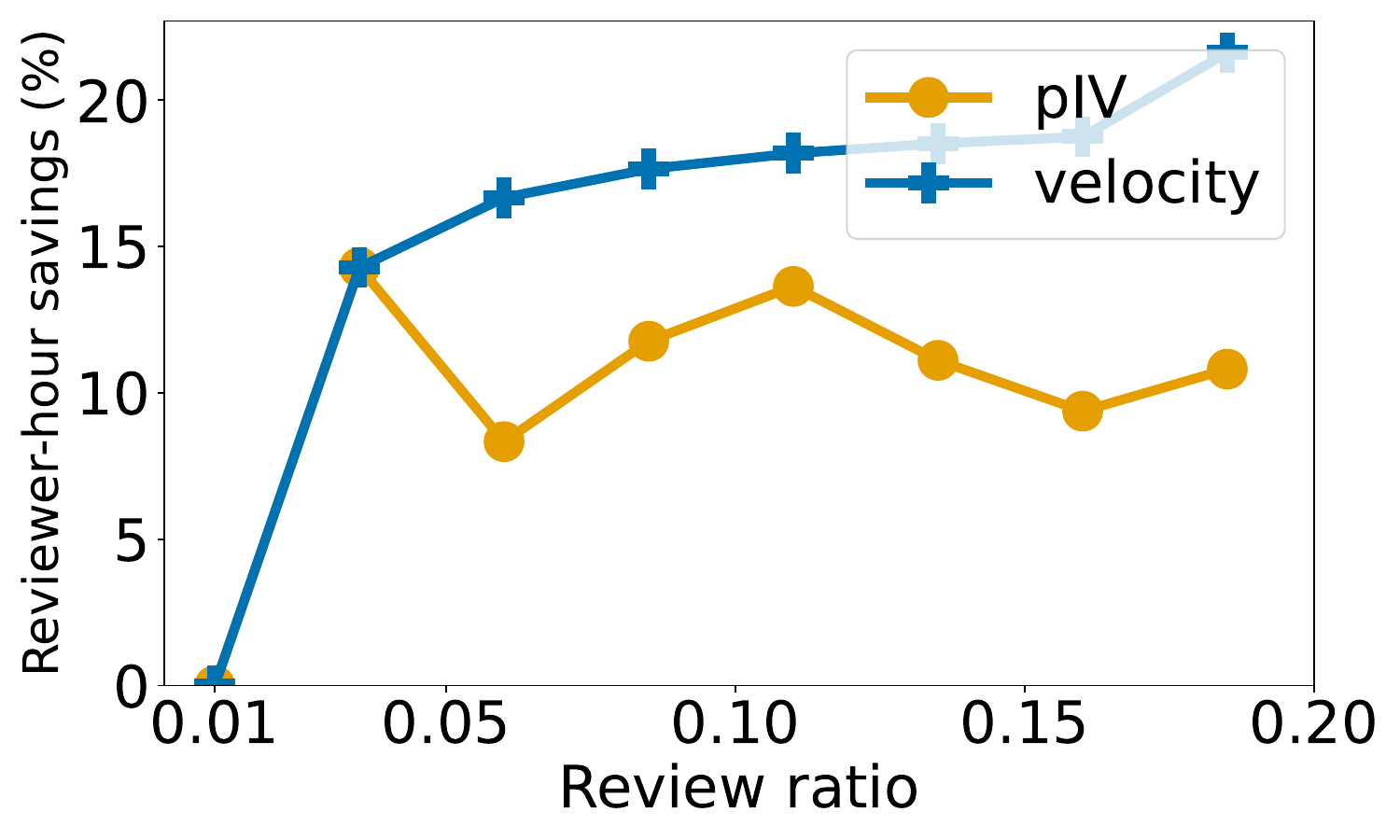}
\caption{YouTube data: policy-violating views of algorithms and reviewer-hour savings by $\hindalg$}
\label{fig:youtube-calibrated}
\end{figure}

\subsection{Robustness checks} \label{ssec:robustness}
Appendix~\ref{app:simulations} includes additional simulations testing the robustness of our performance gain in terms of (1) other view patterns; (2) uncalibrated prediction; (3) evolution in the probability of violation; and (4) joint bandit learning. The below summarizes the key insights from these robustness studies.

\noindent\textbf{1. Robustness of insight on typical view patterns}. Besides the YouTube dataset in Section~\ref{sec:video-real}, we also conduct the simulations on two synthetic datasets that aim to mimic view patterns of online ads (Appendix~\ref{app:paid}) and user generated content (Appendix~\ref{app:organic-synthetic}). Simulation results on these datasets are consistent with the results on the YouTube dataset and show that $\hindalg$ leads to smaller number of policy-violating views than the benchmarks.

\noindent \textbf{2. Robustness to uncalibrated predictions.} Our model instantiation in Section~\ref{sec:instantiate} assumes that the platform has access to well-calibrated predictions $p\violating(j)$ on the probability of violation for each content, which may not be true in practice. Appendix~\ref{app:uncalibrated} studies the impact of uncalibrated predictions by randomly perturbing $p\violating(j)$ and finds that the performance gain from our algorithm is robust to calibration errors.

\noindent \textbf{3. Robustness to evolution in the probability of violation.} In Appendix~\ref{app:pvio-update}, we study a setting with evolving $p\violating(j)$. For each content piece, the platform has a prior distribution on its actual violation and dynamically updates this distribution with new information. We find that if the initial prior is accurate, then the original version of $\hindalg$ (without incorporating the evolution) retains favorable performance. With an inaccurate prior, incorporating the actual evolution in $p\violating(j)$ into the design of $\hindalg$ (at the expense of a more costly dataset as discussed above) is fruitful: the resulting algorithm performs much better than other algorithms that do not truly capture the evolution in the probability of violation.

\noindent \textbf{
    4. Robustness to joint bandit learning.} Recently, \cite{avadhanula2022,lykouris2024learning} propose bandit-learning-based approaches for better predicting $p\violating$. In Appendix~\ref{app:bandit-learning}, we conduct simulations to study how scheduling algorithms interact with such bandit learning. The simulation compares six algorithmic combinations (two $p\violating$ estimation prediction approaches, offline learning and bandit learning, and three scheduling approaches $\{\textsc{Velocity}, \textsc{pIV}, \hindalg\}$). Consistent with prior work, we find that bandit learning on $p\violating$ leads to fewer policy-violating views than using offline learning. Moreover, combining bandit learning with our proposed scheduling algorithm $\hindalg$ consistently gives the fewest policy-violating views, showing that our scheduling innovation is compatible with prior learning-based approaches.
\section{Conclusion}\label{sec:conclusion}

This paper studies the problem of scheduling in queueing systems and evolving holding costs, which arises in  content moderation for social media platforms. Unlike existing approaches that  optimize either based on the instantaneous or the expected remaining holding cost, we design an algorithm that adjusts to the opportunity of serving a job in the future once its uncertainty partly resolves. Our algorithm is asymptotically optimal and outperforms existing approaches. On the analytical front, our result relies on  two key contributions: (a) a complete characterization of the fluid equilibrium for any priority algorithm and (b) a new tail bound that separates high-probability and almost-sure upper bounds on the drift of a Lyapunov function.

Our work opens up several interesting directions to further improve the operational efficiency of content moderation in social media platforms. First, our model assumes access to the Markov chain of jobs (that depends on the full view trajectory of jobs). In practice, the future view trajectory of jobs that are removed from the platform is unobservable. Designing a scheduling algorithm that can sufficiently learn the view trajectories despite this censored feedback is an intriguing open direction. Second, our algorithm assumes perfect knowledge of the underlying Markov chain. On the other hand, the instantaneous-cost based algorithmic principle (the $c\mu-$rule) does not require such information. Designing an index-based algorithm that achieves the Pareto frontier between these two scenarios and adjusts to the potential estimation error is an interesting open question.   Third, our model and algorithm require stationary arrival patterns and review capacity. The optimal capacity dual $\gamma^\star$ depends on both the arrival rate $\lambda$ and the service rate $\mu$. However, in practice, the availability of human reviewers may fluctuate over time \citep{makhijani2021} and the platform can have time-varying influx of content. Designing a scheduling algorithm that handles non-stationarity is a theoretically challenging and practically important open question.  

\subsection*{Acknowledgements}
This work was supported by an MIT-Meta Sponsored Research Agreement. 

\bibliographystyle{alpha}
\bibliography{references}

\newpage
\appendix

\section{Tables of Notation}\label{app:table}
This section summarizes key notation used in the paper. It contains three tables of notation, classified by whether the notation is mainly for jobs' tree-shaped Markov chain (Table \ref{table:tree}), system dynamics (Table \ref{table:dynamics}), or the analysis (Table \ref{table:analysis}).

\begin{table}[!hbp]
\centering
\caption{Notation for jobs' tree-shaped Markov chain}
\label{table:tree}
\begin{tabular}{|c|c|}
\hline
Symbol                                                        & Meaning                                                                                                                                                         \\ \hline
$\set{S}$                                                     & state space                                                                                                                                                     \\ \hline
$L$                                                           & \begin{tabular}[c]{@{}c@{}}number of levels of the tree\\ (maximum length of a job's cost trajectory)\end{tabular}                                              \\ \hline
$\theta$                                                      & minimum abandonment probability                                                                                                                                 \\ \hline
$\set{S}_0,\ldots,\set{S}_{\ell},\ldots,\set{S}_{L-1}$                              & states on level $\ell$ of the tree                                                                                                              \\ \hline
$P(i,k)$                                                      & the probability of transitioning from state $i$ to state $k$                                                                                                    \\ \hline
$p(i)$                    & the probability of transitioning from its parent to state $i$                                                                     \\ \hline
$c(i)$                                                        & the instantaneous holding cost for a job with state $i$                                                                                                         \\ \hline
$\rootNode$                                                   & the root of the tree                                                                                                                                            \\ \hline
$\pa(i)$                                                      & the parent of state $i$                                                                                                                                         \\ \hline
$\child(i)$                                                   & the set of child(ren) of state $i$                                                                                                                              \\ \hline
$\anc(i)$                                                     & the set of ancestors of state $i$                                                                                                                               \\ \hline
$\sub(i)$                                                     & the set of states in the subtree of state $i$                                                                                                                   \\ \hline
$\pa(\set{X}), \child(\set{X}), \anc(\set{X}), \sub(\set{X})$ & \begin{tabular}[c]{@{}c@{}}the union of parents, children, ancestors or subtrees\\ over all states in $\set{X}$\end{tabular}                                    \\ \hline
$\Top(\set{X})$                                               & \begin{tabular}[c]{@{}c@{}}the top set of states in $\set{X}$, which is the minimum subset of $\set{X}$\\ whose subtree is a superset of $\set{X}$\end{tabular} \\ \hline
\end{tabular}
\end{table}

\begin{table}[!hbp]
\centering
\caption{Notation for system dynamics}
\label{table:dynamics}
\begin{tabular}{|c|c|}
\hline
Symbol                                    & Meaning                                                                                                                                            \\ \hline
$N$                                       & the system size which scales to $\infty$                                                                                                           \\ \hline
$\lambda, \mu$                            & the (normalized) arrival and service rates                                                                                                         \\ \hline
$R(t)$                                    & the random number of available servers in period $t$                                                                                               \\ \hline
$\bolds{Q}(t) = (Q_i(t))_{i \in \set{S}}$ & the number of state-$i$ jobs at the beginning of period $t$                                                                                        \\ \hline
$\bolds{R}(t) = (R_i(t))_{i \in \set{S}}$ & the number of served state-$i$ jobs in period $t$                                                                                                  \\ \hline
$\bolds{Z}(t) = (Z_i(t))_{i \in \set{S}}$ & \begin{tabular}[c]{@{}c@{}}the number of remaining state-$i$ jobs after service in period $t$, \\ which is equal to $Q_i(t) - R_i(t)$\end{tabular} \\ \hline
$C(\set{A}), C(N,\set{A})$                & \begin{tabular}[c]{@{}c@{}}the long-run average holding cost of an algorithm $\set{A}$ \\ (when the system size is $N$)\end{tabular}               \\ \hline
$\subopt(\set{A}), \subopt(N,\set{A})$          & the suboptimality gap of algorithm $\set{A}$ (when the system size is $N$)                                                                                 \\ \hline
\end{tabular}
\end{table}

\begin{table}[]
\centering
\caption{Notation for analysis}
\label{table:analysis}
\begin{tabular}{|c|c|}
\hline
Symbol                                & Meaning                                                                                                                                                                                                             \\ \hline
$\bolds{q}=(q_i)_{i \in \set{S}}$     & fluid-scaled queue length for state $i$                                                                                                                                                                             \\ \hline
$\bolds{\nu}=(\nu_i)_{i \in \set{S}}$ & fluid-scaled service for state $i$                                                                                                                                                                                  \\ \hline
$C^\star$                             & the minimum cost in the fluid LP \eqref{eq:orifluid}                                                                                                                                                                \\ \hline
$P^\star$                             & the maximum prevented cost, which is $\lambda c^f(\rootNode) - C^\star$                                                                                                                                             \\ \hline
$\pi(i)$                              & \begin{tabular}[c]{@{}c@{}}the probability of a job becoming state $i$, \\ which is $\prod_{a \in \anc(i)} p(\pa(a),a)$\end{tabular}                                                                                \\ \hline
$\pi(i) / \pi(a)$                     & the probability of a state-$a$ job to become a state-$i$ job                                                                                                                                                        \\ \hline
$c^f(a)$                              & the conditional future cost for state $i$, $\sum_{i \in \sub(a)} c(i)\pi(i)/\pi(a)$                                                                                                                                 \\ \hline
$\gamma$                              & capacity dual / price                                                                                                                                                                                               \\ \hline
$\bolds{\beta}^\star(\gamma)$         & the optimal state duals when the capacity dual is $\gamma$                                                                                                                                                          \\ \hline
$V^f(\gamma,i)$                       & the future cost-to-go function conditioning on price $\gamma$ and state $i$                                                                                                                                         \\ \hline
$\bo = (o_1,\ldots,o_{|\set{S}|})$    & a priority ordering denoting the order in which states are served                                                                                                                                                   \\ \hline
$\bo_{[h]}$                           & the first $h$ states, $\{o_1,\ldots,o_h\}$ in the priority ordering $\bo$                                                                                                                                           \\ \hline
$m$                                   & \begin{tabular}[c]{@{}c@{}}the maximum position with $\sum_{i \in \Top(\bo_{[m})} \lambda \pi(i) \leq \mu$,\\ i.e., the service rate can serve (in expectation) all jobs in the top set of $\bo_{[m]}$\end{tabular} \\ \hline
$\partialNode$                        & \begin{tabular}[c]{@{}c@{}}the partially-served state which is $o_{m+1}$ \\ unless if the above inequality is equality or $m = |\set{S}|$\end{tabular}                                                              \\ \hline
$\kappa$                              & the degeneracy parameter defined in \eqref{eq:def-degeneracy}                                                                                                                                                       \\ \hline
$\bolds{q}^{\bo}, \bolds{\nu}^{\bo}$  & fluid equilibrium for priority algorithm based on ordering $\bo$                                                                                                                                                    \\ \hline
$\bolds{q}^{\alg},\bolds{\nu}^{\alg}$ & fluid equilibrium for $\alg$                                                                                                                                                                                        \\ \hline
\end{tabular}
\end{table}

\newpage

\section{Detailed Comparison with Existing Results in Restless Bandits}\label{app:comp}
Our setting can be viewed as  restless bandits with dynamic populations (RBDP), which was studied by \cite{verloop2016asymptotically,zayas2019asymptotically, fu2022optimal}. Among these works, \cite{verloop2016asymptotically} studies the most general model of which ours is a special case. To highlight our technical contribution, we compare our results with those in \cite{verloop2016asymptotically}, which studies asymptotically optimal algorithms for RBDP.

In the model of \cite{verloop2016asymptotically}, there is a set of arms each with a state. At any time, each arm can be made passive or active, but the number of active arms cannot exceed a fixed budget $\alpha$. Arms can be of different types and each type has two Markov chains dictating the transitions of an arm's state under the passive or the active action. New arms may arrive at type-dependent arrival rates. With a state-dependent holding cost rate, the goal is to find a policy that minimizes the long-run average holding costs among all arms. Our model in Section~\ref{sec:model} is then a special case of this model (if one ignores the subtle difference between continuous-time and discrete-time settings): a job corresponds to an arm, its state transitions according to the Markov chain if the system does not serve the job (i.e., it is passive) and transitions to $\perp$ if the system serve it (i.e., it is active).

Focusing on a scaling regime where the arrival rates scale with the budget $\alpha \to \infty$, the analytical framework of \cite{verloop2016asymptotically} is as follows. With the optimal fluid LP solution (similar to ours in \eqref{eq:orifluid}), \cite{verloop2016asymptotically} defines a set of priority policies $\Pi^\star$ such that any policy in this set satisfies 
\begin{itemize}
\item states that are always active are given priority over states that are sometimes passive; 
\item states that are sometimes active and sometimes passive are given priority over states that are always passive;
\item if the capacity constraint is not tight in the LP solution, then the policy never activates states that are always passive in the LP solution.
\end{itemize}
Indeed, specializing to our settings, one can verify that the $\alg$ is a policy in $\Pi^\star$.

\cite{verloop2016asymptotically} then shows asymptotic optimality for policies in $\Pi^\star$ in three steps: First, the optimal LP solution must be a lower bound for the (scaled) long-run average holding cost of any policy (lemma 4.3 of \cite{verloop2016asymptotically}), which is akin to our Lemma~\ref{lem:lower-bound}. Second,  the optimal LP solution must be an equilibrium point of the fluid system for any policy in $\Pi^\star$ (lemma 4.9 of \cite{verloop2016asymptotically}), as what we show in Lemma~\ref{lem:optimal-fluid} for $\alg$. Third, the (scaled) steady-state queue length vector converges to the optimal LP solution (proposition 4.14 of \cite{verloop2016asymptotically}).

The key distinction between the proof of \cite{verloop2016asymptotically} and ours is the assumptions needed for the  step establishing convergence. To show such convergence, \cite{verloop2016asymptotically} assumes two crucial conditions. The first condition requires the tightness and uniform integrability of the sequence of steady-state queue length distributions when $\alpha \to \infty$. This condition is easy to satisfy when there is abandonment, i.e., an arm being kept passive must eventually leave the system (proposition 4.13 of \cite{verloop2016asymptotically}). The second condition is the canonical Global Attractor Property (GAP) \citep{weber1990index, verloop2016asymptotically}. This property means that the fluid dynamics under a given policy in $\Pi^\star$ must converge to the LP solution, which is an equilibrium, regardless of the initial condition. However, as commented by \cite{verloop2016asymptotically}, despite the crucial role of the GAP property in the proof, there are ``in general no sufficient conditions'' and this property is ``often verified only numerically''. Furthermore, \cite{verloop2016asymptotically} views that ``finding sufficient conditions under which the global attractor property holds for policies in $\Pi^\star$ is therefore important on its own''.

Compared to \cite{verloop2016asymptotically}, our main analytical contribution is   proving asymptotic optimality without the GAP assumption for a broad class of RBDP capturing queueing systems with uncertain holding costs. In particular, beyond showing asymptotic optimality, our result in Lemma~\ref{lem:stochastic} provides a non-asymptotic $\tilde{O}(\sqrt{N})$ bound between the long-run average holding cost of the stochastic system and the cost of the fluid equilibrium. We are able to achieve this result without the GAP assumption by the Lyapunov drift argument in Section~\ref{sec:stochastic}, which is tailored to the special water-filling structure in Section~\ref{sec:fluid-sol} provided by our queueing models.

\section{Extension to General Markov Chain (Remark~\ref{remark:general-momdel})}\label{app:general-markov}

This section provides further detail to Remark~\ref{remark:general-momdel} on how our results can accommodate job state transitions with a general discrete-time Markov chain (DTMC). Suppose that job states  transition according to a DTMC with a finite state space $\set{S}$ and a transition kernel $P=(P(i,k))_{i,k\in \set{S}}$. The job has a fixed initial state $\rootNode$ and has a non-zero abandonment probability $\theta$, i.e., $1-\sum_{k \in \set{S}} P(i,k) \geq \theta$ for any state $i$. Denote this problem instance $\set{I}$.

We construct another problem instance $\widetilde{\set{I}}$ with the same model primitives as $\set{I}$ but a different Markov chain. In $\widetilde{\set{I}}$, job states evolve according to a finite-depth tree-shaped Markov chain. The tree has a depth $L = \lceil -\ln_{1-\theta} N \rceil$ and a state space $\set{S} \times \cdots \times \set{S}$ where we apply the Cartesian product for $L+1$ times. The states are pairs of $(\ell,i)_{0\leq \ell \leq L,i \in \set{S}}$ corresponding to a job that has been in the system for $\ell$ periods and currently has state $i$. State $(\ell,i)$ transitions to state $(\ell+1,k)$ with probability $P(i,k)$ unless $\ell = L$, in which case the state transitions to $\perp$ deterministically. The two problem instances $\set{I}$ and $\widetilde{\set{I}}$ are coupled in the sense that we can simulate a sample path in $\widetilde{\set{I}}$ with a sample path in $\set{I}$ by ignoring all jobs that have stayed for more than $L$ periods. 

Suppose that we have an algorithm $\widetilde{\nalg}$ for $\widetilde{\set{I}}$. Denote its long-run average holding cost in $\widetilde{\set{I}}$ by $C(\widetilde{\set{I}}, \widetilde{\nalg})$ and its suboptimality gap by $\subopt(\widetilde{\set{I}}, \widetilde{\nalg})$. To run this algorithm in $\set{I}$, in each period we ignore all jobs that have stayed for more than $L$ periods and let $\widetilde{\nalg}$ to run as if it is in $\widetilde{\set{I}}$. Let this algorithm be $\nalg$. its long-run average holding cost in $\set{I}$, $C(\set{I}, \nalg)$, is no more than the long-run average holding cost of $\widetilde{\nalg}$ in $\widetilde{\set{I}}$,  $C(\widetilde{\set{I}},\widetilde{\nalg})$, except for the jobs that we ignore in $\widetilde{\set{I}}$, which are those that have stayed for more than $L$ periods. Since the probability that a job arriving in period $\tau$ stays until period $t$ is at most $(1-\theta)^{t-\tau}$, we  upper bound $C(\set{I}, \nalg)$ by
\begin{align*}
C(\set{I}, \nalg) &\leq C(\widetilde{\set{I}}, \widetilde{\nalg}) + \lim\sup_{T \to \infty}\frac{1}{T}\sum_{t=L}^T \sum_{\tau=1}^{t-L} N\lambda c_{\max} (1-\theta)^{t - \tau} \\
&\leq C(\widetilde{\set{I}}, \widetilde{\nalg}) + \frac{Nc_{\max}(1-\theta)^L}{\theta} \leq C(\widetilde{\set{I}}, \widetilde{\nalg}) + \frac{c_{\max}}{\theta}.
\end{align*}
Moreover, for any algorithm $\alg'$ for instance $\set{I}$, we can run it on a sample path of instance $\widetilde{\set{I}}$ by keeping virtual copies of jobs that abandon because they have stayed for more than $L$ periods and then following the service decision of $alg'$. This leads to an algorithm $\widetilde{\nalg'}$ for $\widetilde{\set{I}}$. By the coupling, the long-run average holding cost of $\nalg'$ in $\set{I}$ is at least that of $\widetilde{\nalg'}$ in $\widetilde{\set{I}}$, i.e., $C(\set{I}, \nalg') \geq C(\widetilde{\set{I}}, \widetilde{\nalg'}).$ As a result, 
\[
C(\set{I},\nalg) - C(\set{I},\nalg') \leq C(\widetilde{\set{I}}, \widetilde{\nalg}) + \frac{c_{\max}}{\theta} - C(\widetilde{\set{I}}, \widetilde{\nalg'}) \leq \subopt(\widetilde{\set{I}}, \widetilde{\nalg}) + \frac{c_{\max}}{\theta}.
\]
Since $\nalg'$ is chosen arbitrarily, the above inequality implies that the suboptimality gap of the constructed algorithm $\nalg$ for instance $\set{I}$ is at most the suboptimality gap of $\widetilde{\nalg}$ for instance $\widetilde{\set{I}}$ plus an additional term $\frac{c_{\max}}{\theta}$. Taking $\widetilde{\nalg}$ to be $\alg$, Theorem~\ref{thm:pafou} shows that its suboptimality gap is $\tilde{O}(\sqrt{N})$ for $\widetilde{\set{I}}$. Therefore, the constructed algorithm $\nalg$ enjoys a suboptimality gap of $\tilde{O}(\sqrt{N}) + \frac{c_{\max}}{\theta} = \tilde{O}(\sqrt{N})$ for the instance $\set{I}$ with the general Markov chain.

\section{Additional Proofs}\label{app:analysis}
\subsection{Proof of Lemma~\ref{lem:dual-structure} (Section~\ref{sec:fluid-lp})}\label{app:lem-dual-structure}
Fix a capacity dual $\gamma \geq 0$. Recall that $\bolds{\beta}^\star(\gamma) = (\beta^\star_i(\gamma))_{i \in \set{S}}$ with $\beta^\star_i(\gamma) = \max(0, c(i) + V^f(\gamma,i) - \gamma).$  The following result establishes a connection between the cost-to-go function and $\bolds{\beta}^\star(\gamma).$
\begin{lemma}\label{lem:sub-tree-beta}
For any $a \in \set{S}$, the cost-to-go function $V(\gamma,a)= c^f(a) - \sum_{i \in \sub(a)} \beta^\star_i(\gamma)\pi(i) / \pi(a)$.
\end{lemma}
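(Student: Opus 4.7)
\textbf{Proof proposal for Lemma~\ref{lem:sub-tree-beta}.}

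The plan is to proceed by backward induction on the tree, from the leaves up to $a$. The crucial preliminary observation is that the Bellman equation~\eqref{eq:bellman} together with the definition of $\beta_i^\star(\gamma)$ immediately yields the pointwise identity
\[
V(\gamma, i) \;=\; c(i) + V^f(\gamma, i) - \beta_i^\star(\gamma), \qquad \forall i \in \set{S}.
\]
Indeed, if $c(i)+V^f(\gamma,i) \leq \gamma$ then $\beta_i^\star(\gamma)=0$ and $V(\gamma,i)=c(i)+V^f(\gamma,i)$; otherwise $\beta_i^\star(\gamma)=c(i)+V^f(\gamma,i)-\gamma$ and $V(\gamma,i)=\gamma$. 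In both cases the identity holds.

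For the induction, the base case is any leaf $a$ (a state with $\child(a)=\emptyset$): then $V^f(\gamma,a)=0$, $\sub(a)=\{a\}$ and $c^f(a)=c(a)$, so the claim reduces to $V(\gamma,a)=c(a)-\beta_a^\star(\gamma)$, which is the identity above. For the inductive step, suppose the claim holds for every child $k\in\child(a)$. Using the identity above, the definition of $V^f$, and the fact that $P(a,k)=p(k)$ for $k\in\child(a)$,
\[
V(\gamma,a) \;=\; c(a) + \sum_{k\in\child(a)} p(k)\,V(\gamma,k) - \beta_a^\star(\gamma).
\]
Apply the inductive hypothesis to each $V(\gamma,k)$ and then use the key bookkeeping fact that $\pi(k)=p(k)\pi(a)$ for $k\in\child(a)$, so that $p(k)\cdot \pi(i)/\pi(k)=\pi(i)/\pi(a)$ for every $i\in\sub(k)$. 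The two double sums that appear then reassemble, via the disjoint decomposition $\sub(a)=\{a\}\cup\bigcup_{k\in\child(a)}\sub(k)$ (which holds because the Markov chain is a rooted tree), into $\sum_{i\in\sub(a)} c(i)\pi(i)/\pi(a)=c^f(a)$ and $\sum_{i\in\sub(a)} \beta_i^\star(\gamma)\pi(i)/\pi(a)$ respectively. Combining yields the desired expression for $V(\gamma,a)$.

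There is no genuine obstacle here: the Bellman equation reduces the claim to a transparent algebraic identity once one rewrites $V$ in terms of $V^f$ and $\beta^\star$, and the rest is tree-induction together with the multiplicative factorization $\pi(k)=p(k)\pi(a)$. The only thing to be careful about is the bookkeeping of the ratios $\pi(i)/\pi(a)$ when moving between the conditional future cost from $a$ and those from the children $k$, which is where the cancellation $p(k)\cdot\pi(i)/\pi(k)=\pi(i)/\pi(a)$ does all the work.
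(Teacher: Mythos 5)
Your proposal is correct and follows essentially the same route as the paper: both prove the identity by induction up the tree (the paper phrases it as induction over levels $\set{S}_{L-1}$ down to $\set{S}_0$, which is the same thing), using the factorization $\pi(k)=P(a,k)\pi(a)$ to reassemble the sums and the observation that $c(i)+V^f(\gamma,i)-\beta_i^\star(\gamma)=\min\{\gamma,\,c(i)+V^f(\gamma,i)\}=V(\gamma,i)$. Isolating that pointwise identity up front, as you do, is a minor reorganization rather than a different argument.
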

\begin{proof}
For ease of notation, we omit the dependence on $\gamma$ for $\bolds{\beta}^\star(\gamma)$ and use $\bolds{\beta}^\star = \bolds{\beta}^\star(\gamma)$ with $\beta^\star_i = \beta^\star_i(\gamma).$  Denoting the right-hand-side of the lemma by
$$\tilde{V}(a) = c^f(a) -\sum_{i \in \sub(a)} \beta^\star_i\pi(i) / \pi(a) = \sum_{i \in \sub(a)} (c(i) - \beta^\star_i)\pi(i) / \pi(a),$$ we show by induction that $\tilde{V}(a) = V(\gamma,a)$ defined in \eqref{eq:bellman}. Recall that states in $\set{S}$ are partitioned into states $\set{S}_0,\ldots,\set{S}_{L-1}$ in each level. Our induction moves from the largest level $L-1$ to the lowest level $0$. First, for states $a \in \set{S}_{L-1}$ in the highest level,  $\sub(a) = \{a\}.$ Then $\tilde{V}(a) = c(a) - \beta_a^\star = \min(c(a),\gamma) = V(\gamma,a)$ by \eqref{eq:bellman}. As our inductive hypothesis, suppose for some $0 < \ell < L-1$,we have shown $\tilde{V}(i) = V(\gamma,i)$ for any $i \in \set{S}_{\ell+1}$; our inductive step now establishes this for $a \in \set{S}_{\ell}$. 
\begin{align*}
\tilde{V}(a) = \sum_{i \in \sub(a)} (c(i) - \beta^\star_i)\pi(i) / \pi(a) &\overset{(i)}{=} c(a)-\beta^\star_a + \sum_{k \in \child(a)} \sum_{i \in \sub(k)} (c(i)-\beta_i^\star)\pi(i) / \pi(k) \cdot P(a,k) \\
&\overset{(ii)}{=} c(a) - \beta^\star_a + \sum_{k \in \child(a)} P(a,k)V(\gamma,k) \overset{(iii)}{=} c(a) - \beta^\star_a + V^f(\gamma,a),
\end{align*}
where (i) is because $\pi(k) = \pi(a)P(a,k)$ for $k \in \child(a)$; (ii) is because of the induction hypothesis and $k \in \set{S}_{\ell+1}$; (iii) is by the definition of $V^f(\gamma,a)$ in \eqref{eq:bellman}. Using $\beta_a^\star = \max(0, c(i)+V^f(\gamma,i) - \gamma)$ gives $\tilde{V}(a) = \min(\gamma, c(a)+V^f(\gamma,a)) = V(\gamma,a),$ finishing the induction.
\end{proof}

\begin{proof}[Proof of Lemma~\ref{lem:dual-structure}]
Define the function $f(\bolds{\beta}) = \sum_{i \in \set{S}} \pi(i)\beta_i$ for a state dual vector $\bolds{\beta} = (\beta_i)_{i \in \set{S})},$ which is the first part of $D^\star(\gamma)$ in \eqref{eq:dual}. Define the feasible set of state duals for $D^\star(\gamma)$ by
\[
\set{B}(\gamma) = \left\{\bolds{\beta} \in \set{R}_{\geq 0}^{\set{S}}\colon c^f(a) - \sum_{i \in \sub(a)} \beta_i \pi(i) / \pi(a) \leq \gamma, \forall a \in \set{S}\right\}.
\]
Then $D^\star(\gamma) = \mu\cdot \gamma + \lambda\min_{\bolds{\beta} \in \set{B}(\gamma)} f(\bolds{\beta})$ by \eqref{eq:dual}. The problem $\min_{\bolds{\beta} \in \set{B}(\gamma)} f(\bolds{\beta})$ is an LP whose dual is given by
\begin{equation}
\label{eq:dual-D}
\begin{aligned}
 & \max_{\bolds{y} \in \mathbb{R}_{\geq 0}^{\set{S}}}\sum_{a \in \set{S}} (c^f(a)-\gamma)y_a \\
&\text{s.t.}  \sum_{a \in \anc(i)} y_a \cdot \frac{\pi(i)}{\pi(a)} \leq \pi(i),~\forall i.
\end{aligned}
\end{equation}

We show the first result of the lemma (that $\bolds{\beta}^\star(\gamma)$ is an optimal solution to $D^\star(\gamma)$) by proving two properties: (1) it is feasible, i.e., $\bolds{\beta}^\star(\gamma) \in \set{B}(\gamma)$; (2) there exists $\bolds{y}^\star(\gamma)$ that is feasible to \eqref{eq:dual-D} such that $f(\bolds{\beta}^\star(\gamma)) = \sum_{a \in \set{S}} (c^f(a)-\gamma)y_a$ and thus by weak duality $\bolds{\beta}^\star(\gamma)$ is optimal. 

To prove $\bolds{\beta}^\star(\gamma) \in \set{B}(\gamma)$, Lemma~\ref{lem:sub-tree-beta} shows that $c^f(a) - \sum_{i \in \sub(a)}\beta_i^\star(\gamma)\pi(i) / \pi(a)$ is equal to $V(\gamma, a)$, which is upper bounded by $\gamma$ by its definition \eqref{eq:bellman}. Therefore, $\bolds{\beta}^\star(\gamma) \in \set{B}(\gamma).$

To show $\bolds{\beta}^\star(\gamma)$ is optimal, we construct a feasible $\bolds{y}^\star$ to \eqref{eq:dual-D} with $f(\bolds{\beta}^\star(\gamma)) = \sum_{a \in \set{S}}(c^f(a)-\gamma)y^\star_a.$ This shows $\bolds{\beta}^\star(\gamma)$ is optimal because by weak duality, any feasible $\bolds{\beta} \in \set{B}(\gamma)$ must have $f(\bolds{\beta}) \geq \sum_{a \in \set{S}}(c^f(a)-\gamma)y^\star_a.$ We construct $\bolds{y}^\star = (y^\star_a)_{a \in \set{S}}$ as follows from the lowest level $\set{S}_0$ to the highest level $\set{S}_{L-1}$, which ensures that for any state, at most one of its ancestor $a$ has non-zero $y^\star_a$:
\begin{equation}\label{eq:def-y}
y^\star_a = \left\{
\begin{aligned}
\pi(a),~&\text{if }c(a)+V^f(\gamma,a) > \gamma \text{ and }\pi(k) = 0,~\forall k \in \anc(a), k \neq a \\
0,~&\text{otherwise}.
\end{aligned}
\right.
\end{equation}
Since $y^\star_i \geq 0$ and $\sum_{a \in \anc(i)}y_a \pi(i) / \pi(a) \leq \pi(i)$ for any $i$, $\bolds{y}^\star$ is feasible to \eqref{eq:dual-D}. Moreover, by rearranging terms, we verify that
\begin{align*}
\sum_{a \in \set{S}} (c^f(a) - \gamma)y^\star_a - \sum_{i \in \set{S}} \pi(i)\beta_i^\star(\gamma) &=  \underbrace{\sum_{a \in \set{S}} y^\star_a\left( c^f(a) - \gamma + \sum_{i \in \sub(a)}\beta_i^\star(\gamma) \pi(i) / \pi(a)\right)}_{\text{Term 1}} \\
&~- \underbrace{\sum_{i \in \set{S}}\beta_i^\star(\gamma) \left(\pi(i) - \sum_{a \in \anc(i)} y^\star_a \pi(i)/\pi(a)\right)}_{\text{Term 2}}.
\end{align*}
If we establish Terms 1 and 2 are equal to zero, this implies $\sum_{i \in \set{S}} \pi(i)\beta_i^\star(\gamma) = \sum_{a \in \set{S}} (c^f(a) - \gamma)y^\star_a$ and thus $\bolds{\beta}^\star(\gamma)$ is optimal. We prove them as follows:
\begin{itemize}
\item For Term 1, Lemma~\ref{lem:sub-tree-beta} shows $c^f(a) - \sum_{i\in\sub(a)}\beta_i^\star(\gamma)\pi(i)/\pi(a)= V(\gamma,a)$. As a result, $\text{Term 1} = \sum_{a \in \set{S}} y_a^\star (\gamma - V(\gamma,a)) = 0$ because for any $a$, if $y_a^\star > 0$, \eqref{eq:def-y} requires $c(a) + V^f(\gamma,a) > \gamma$ and thus $V(\gamma,a) = \min(c(a)+V^(\gamma,a),\gamma) = \gamma.$
\item For Term 2, if $\beta_i^\star(\gamma) > 0$, then $c(i) + V^f(\gamma,i) > \gamma$. As a result, either $y_i^\star = \pi(i)$ and any ancestor $a$ has $y_a^\star = 0$, or there exists an ancestor $a$ with $y_a^\star = \pi(a)$ by \eqref{eq:def-y}. This implies $\pi(i) - \sum_{a \in \anc(i)} y_a^\star \pi(i) / \pi(a) = 0$. As a result, $\text{Term 2} = 0.$
\end{itemize}
Summarizing the above shows $f(\bolds{\beta}^\star(\gamma)) = \sum_{a \in \set{S}} (c^f(a) - \gamma)y_a^\star$ and thus $\bolds{\beta}^\star(\gamma)$ is optimal  to $D^\star(\gamma).$

For the second result of the lemma, since $\bolds{\beta}^\star(\gamma)$ is an optimal solution, 
\[
D^\star(\gamma) = \mu\cdot \gamma + \lambda f(\bolds{\beta}^\star(\gamma)) = \mu \cdot \gamma + \lambda \sum_{i \in \set{S}} \pi(i)\beta^\star_i(\gamma) = \mu \cdot \gamma + \lambda(c^f(\rootNode) - V(\gamma,\rootNode)),
\]
where, for the last equality, we use Lemma~\ref{lem:sub-tree-beta}.
\end{proof}

\subsection{Proof of Lemma~\ref{lem:aperiodic} (Section~\ref{sec:fluid-sol})}\label{app:lem-aperiodic}
\begin{proof}[Proof of Lemma~\ref{lem:aperiodic}]
Recall that the system state is $(\bolds{Q}(t),R(t))$ and initially $\bolds{Q}(1) = \bolds{0}.$ Define the system state space by the set of states reachable from $(\bolds{0},n)$ for some $n \in [N].$ To show irreducibility, it is sufficient to show that there are paths connecting these states to state $(\bolds{0},0)$ as state $(\bolds{0},0)$ reaches state $(\bolds{0},n)$ in one step for any $n$. Fix any state $(\bolds{q},n)$ in the system state space and suppose $\bolds{Q}(t) = \bolds{q}$ and $R(t) = n$. Since $\lambda < 1$, meaning that with non-zero probability there is no arrival for this period. Moreover, given that the minimum abandonment probability $\theta > 0$, every job in the queue has non-zero probability to abandon. Moreover, $R(t+1) = 0$ with non-zero probability. As a result, $\bolds{Q}(t+1) = \bolds{0}$ and $R(t+1) = 0$ with non-zero probability, which establishes that the Markov chain is irreducible. For aperiodicity, since state $(\bolds{0},0)$ can transition to itself in one step, the Markov chain contains a self-loop, and is thus aperiodic. Finally, the Markov chain has a finite state space because (i) $Q_i(t), R(t)$ are non-negative integers for any $i$; (ii) $R(t) \leq N$; (iii) $\sum_{i \in \set{S}} Q_i(t) \leq \sum_{\ell = 0}^{L-1} A(t - \ell - 1) \leq L\cdot N$ because a job with state $i \in \set{S}_{\ell}$ in period $t$ must have arrived in period $t - \ell - 1$.     
\end{proof}

\subsection{Proof of Lemma~\ref{lem:feasible-nu} (Section~\ref{sec:fluid-sol})}\label{app:lem-feasible-nu}
\begin{proof}[Proof of Lemma~\ref{lem:feasible-nu}]
Since $\bolds{q}^{\bo}$ and $\bolds{\nu}^{\bo}$ satisfy \eqref{eq:def-fluid-q}, it suffices to verify the two constraints of \eqref{eq:simfluid}, which imply that $(\bolds{q}^{\bo}, \bolds{\nu}^{\bo})$ is feasible to \eqref{eq:orifluid}. For the first constraint of \eqref{eq:simfluid}, recall from \eqref{eq:equiv-constraint} that it is equivalent to having $q_i^{\bo} \geq \nu_i^{\bo}$ for any state $i$. This inequality holds by definition except for the partially-served state $\partialNode$ \ref{item:partially-served}, for which we verify that, if $\partialNode \neq \perp$, then
\begin{align*}
q_{\partialNode}^{\bo} - \nu_{\partialNode}^{\bo} &= \lambda \pi(\partialNode) - \frac{\mu-\lambda\sum_{i \in \Top(\bo_{[m]})} \pi(i)}{\kappa} &= \frac{\lambda \pi(\partialNode) - \mu + \lambda\sum_{i \in \Top(\bo_{[m]}) \setminus \sub(\partialNode)} \pi(i)}{\kappa} \geq 0,
\end{align*}
where the second equality uses the definition of $\kappa$ in \eqref{eq:def-degeneracy}. To see the last inequality, if $\partialNode = \perp$, then $\kappa = +\infty$ and thus the fraction becomes zero. When $\partialNode \neq \perp$, the partially-served state $\partialNode$ is equal to $\bo_{[m+1]}$ and $\Top(\bo_{[m+1]}) = \{\partialNode\} \cup \Top(\bo_{[m]}) \setminus \sub(\partialNode).$ If the last inequality is not true, then it contradicts with the definition of $m$ which defines $m$ as the maximum position with $\sum_{i \in \Top(\bo_{[m]})} \lambda \pi(i) \leq \mu$. 

For the second constraint of \eqref{eq:simfluid}, we verify that $\sum_{i \in \set{S}} \nu_i^{\bo} = \sum_{i \in \Top(\bo_{[m]})} \lambda \pi(i) \leq \mu$ when $\partialNode = \perp$, and when $\partialNode \neq \perp$, it is
\begin{align}
 \sum_{i \in \set{S}} \nu_i^{\bo}&= \sum_{i \in \Top(\bo_{[m]}) \setminus \sub(\partialNode)} \lambda \pi(i) + \nu_{\partialNode}^{\bo} + \sum_{i \in \Top(\bo_{[m]}) \cap \sub(\partialNode)} \left(\lambda \pi(i) - \frac{\nu_{\partialNode}^{\bo}\pi(i)}{ \pi(\partialNode)})\right) \nonumber\\
 &= \sum_{i \in \Top(\bo_{[m]})} \lambda \pi(i) + \nu_{\partialNode}^{\bo}\left(1 - \sum_{i \in \Top(\bo_{[m]}) \cap \sub(\partialNode)} \pi(i) / \pi(\partialNode)\right) \nonumber\\
 &= \sum_{i \in \Top(\bo_{[m]})} \lambda \pi(i) + \nu_{\partialNode}^{\bo}\kappa  = \mu, \nonumber
\end{align}
where the second equality uses the definition of $\kappa$ in \eqref{eq:def-degeneracy} and the last equality uses the definition of $\nu_{\partialNode}^{\bo}$ in \ref{item:partially-served}. We have thus established that $\bolds{\nu}^{\bo}$ is feasible to \eqref{eq:simfluid}.
\end{proof}

\subsection{Proof of Lemma~\ref{lem:lower-bound} (Section~\ref{sec:proof-pafou})}\label{app:lem-lower-bound}
Throughout this proof, we assume a fixed system size $N$ and omit it in the notation. Fixing a feasible algorithm $\nalg$, we denote the number of state-$i$ jobs in the queue at the beginning of period $t$ by $Q_i(t,\nalg)$ and the number of served state-$i$ jobs by $R_i(t,\nalg).$ To establish the lower bound, for any horizon $T$, we define a feasible solution, $\bolds{q}^T = (q_i^T), \bolds{\nu}^T = (\nu_i^T),$ to \eqref{eq:orifluid} from the stochastic system by setting
\begin{align}
q^T_i &= \frac{1}{NT}\expect{\sum_{t=1}^T Q_i(t,\nalg)} + \frac{1}{NT}\expect{\sum_{k \in \anc(i) \setminus \{i\}} Z_k(T,\nalg)\frac{\pi(i)}{\pi(k)}}+\frac{\lambda\pi(i)}{T} \label{eq:def-sol-q}\\
\nu^T_i &= \frac{1}{NT}\expect{\sum_{t=1}^T R_i(t,\nalg)}, \label{eq:def-sol-nu}
\end{align}
where we define that $Z_i(t,\nalg) = Q_i(t,\nalg) - R_i(t,\nalg)$ and $\pi(i) = \prod_{k \in \anc(i)} P(\pa(k),k).$
\begin{lemma}\label{lem:feasible-sol}
The constructed $(\bolds{q}^T,\bolds{\nu}^T)$ is a feasible solution to \eqref{eq:orifluid}.
\end{lemma}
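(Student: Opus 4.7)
\medskip

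\noindent\textbf{Proof proposal for Lemma~\ref{lem:feasible-sol}.}

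The plan is to verify, one by one, each of the four defining constraints of \eqref{eq:orifluid}, together with non-negativity. Non-negativity is immediate because every summand appearing in \eqref{eq:def-sol-q} and \eqref{eq:def-sol-nu} is a non-negative random variable. The boundary equation $q_{\rootNode}^T=\lambda$ is a short computation: since $\pi(\rootNode)=1$ and the only source of root-state jobs is arrivals, $Q_{\rootNode}(t,\nalg)=A(t-1)$ for $t\ge 2$ and $Q_{\rootNode}(1,\nalg)=0$; thus $\expect{\sum_{t=1}^T Q_{\rootNode}(t,\nalg)}=(T-1)N\lambda$, and the correction term $\lambda\pi(\rootNode)/T=\lambda/T$ exactly closes the missing first period.

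The main step is the transition equation $q_i^T=(q_{\pa(i)}^T-\nu_{\pa(i)}^T)P(\pa(i),i)$ for $i\neq\rootNode$. The idea is to use the one-step Markovian identity
\[
\expectcond{Q_i(t+1,\nalg)}{Z_{\pa(i)}(t,\nalg)}=P(\pa(i),i)\,Z_{\pa(i)}(t,\nalg),\qquad i\neq\rootNode,
\]
which holds because jobs in state $i$ at the start of period $t+1$ are exactly unreviewed state-$\pa(i)$ jobs from period $t$ that transitioned to $i$. Summing in $t$ from $1$ to $T-1$ and using that $Q_i(1,\nalg)=0$ for $i\neq\rootNode$ gives
\[
P(\pa(i),i)\,\expect{\sum_{t=1}^T Z_{\pa(i)}(t,\nalg)}=\expect{\sum_{t=1}^T Q_i(t,\nalg)}+P(\pa(i),i)\,\expect{Z_{\pa(i)}(T,\nalg)}.
\]
Using $P(\pa(i),i)=\pi(i)/\pi(\pa(i))$ and $\anc(i)\setminus\{i\}=\anc(\pa(i))$, the last boundary term at $T$ neatly merges the parent's ancestor sum in \eqref{eq:def-sol-q} into the corresponding ancestor sum for $i$, and $P(\pa(i),i)\cdot\lambda\pi(\pa(i))/T=\lambda\pi(i)/T$ reproduces the third term at $i$. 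Putting the pieces together produces exactly $q_i^T$, as desired. The only work is careful index bookkeeping; I expect this to be the main (but routine) obstacle.

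The remaining constraints are short. For $\nu_i^T\le q_i^T$: $R_i(t,\nalg)\le Q_i(t,\nalg)$ by feasibility, so $\expect{\sum_{t=1}^T R_i(t,\nalg)}\le \expect{\sum_{t=1}^T Q_i(t,\nalg)}$, and the remaining two terms in \eqref{eq:def-sol-q} are non-negative. For $\sum_i\nu_i^T\le\mu$: the service decision satisfies $\sum_i R_i(t,\nalg)=|\set{R}(t,\nalg)|\le R(t)$ and $\expect{R(t)}=N\mu$, so
\[
\sum_{i\in\set{S}}\nu_i^T=\frac{1}{NT}\expect{\sum_{t=1}^T |\set{R}(t,\nalg)|}\le \frac{1}{NT}\sum_{t=1}^T\expect{R(t)}=\mu.
\]
Combining these five checks yields feasibility of $(\bolds{q}^T,\bolds{\nu}^T)$.
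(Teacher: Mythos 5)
Your proposal is correct and follows essentially the same route as the paper: it verifies the boundary condition, the transition constraint (via the one-step identity $\expect{Q_i(t+1)}=P(\pa(i),i)\,\expect{Z_{\pa(i)}(t)}$ summed over $t$, with the boundary term at $T$ absorbed into the ancestor-sum correction in \eqref{eq:def-sol-q}), and the two inequality constraints exactly as in Appendix~\ref{app:lem-lower-bound}. No gaps.
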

\begin{proof}
We verify the constraints in \eqref{eq:orifluid}:
\begin{enumerate}
\item [(i)] $q^T_i = (q^T_{\pa(i)} - \nu^T_{\pa(i)})P(\pa(i),i)$ for any $i \in \set{S} \setminus \{\rootNode\}$: this is because 
\begin{align}
 \frac{1}{NT}\expect{\sum_{t=1}^T Q_i(t,\nalg)} &= \frac{1}{NT}\expect{\sum_{t=1}^T Z_{\pa(i)}(t-1,\nalg)P(\pa(i),i)} \nonumber\\
 &= P(\pa(i),i)\frac{1}{NT}\expect{\sum_{t=1}^{T-1} (Q_{\pa(i)}(t) - R_{\pa(i)}(t))} \nonumber\\
 &\hspace{-1.5in}= P(\pa(i),i))\left(q^T_{\pa(i)} - \nu^T_{\pa(i)} - \frac{1}{NT}\expect{\sum_{k \in \anc(\pa(i))} Z_k(T,\nalg)\frac{\pi(\pa(i))}{\pi(k)}}-\frac{\lambda\pi(\pa(i))}{T}\right) \nonumber\\
 &\hspace{-1.5in}= P(\pa(i),i)(q^T_{\pa(i)} - \nu^T_{\pa(i)}) - \frac{1}{NT}\expect{\sum_{k \in \anc(i) \setminus \{i\}} Z_k(T,\nalg)\frac{\pi(i)}{\pi(k)}} - \frac{\lambda\pi(i)}{T},\label{eq:sol-q-simp}
\end{align}
where the first equality is because each remaining job with state $\pa(i)$ has probability $P(\pa(i),i)$ to become a state-$i$ job; the second uses the definition of $Z$ and shifts $t$ by one period; the third uses the definitions in \eqref{eq:def-sol-q} and \eqref{eq:def-sol-nu}; the last one uses $\pi(i) = \pi(\pa(i))P(\pa(i),i)$ and $\anc(\pa(i)) = \anc(i) \setminus \{i\}$. Plugging \eqref{eq:sol-q-simp} into \eqref{eq:def-sol-q} gives
\begin{align*}
q^T_i &= P(\pa(i),i)(q^T_{\pa(i)} - \nu^T_{\pa(i)}) - \cancel{\frac{1}{NT}\expect{\sum_{k \in \anc(i) \setminus \{i\}} Z_k(T,\nalg)\frac{\pi(i)}{\pi(k)}}} - \cancel{\frac{\lambda\pi(i)}{T}} \\
&\hspace{0.5in}+ \cancel{\frac{1}{NT}\expect{\sum_{k \in \anc(i) \setminus \{i\}} Z_k(T,\nalg)\frac{\pi(i)}{\pi(k)}}}+\cancel{\frac{\lambda\pi(i)}{T}} \\
&= (q^T_{\pa(i)}-\nu^T_{\pa(i)})P(\pa(i),i).
\end{align*}
\item [(ii)] $q^T_{\rootNode} = \lambda$ because \eqref{eq:def-sol-q} and the fact that $r$ does not have ancestors implies 
\[
q^T_{\rootNode} = \frac{1}{NT}\expect{\sum_{t=1}^T Q_{\rootNode}(t,\nalg)}+\frac{\lambda\pi(\rootNode)}{T} = \frac{1}{NT}\expect{\sum_{t=1}^{T-1} A(t)}+\frac{\lambda}{T} = \frac{\lambda (T-1)}{T}+\frac{\lambda}{T} = \lambda.
\]
\item [(iii)] $\nu^T_i \leq q^T_i$ for any $i \in \set{S}$ because $R_i(t,\nalg) \leq Q_i(t,\nalg)$ for any period $t$.
\item [(iv)] $\sum_{i \in \set{S}} \nu^T_i \leq \mu$ because
\[
\sum_{i \in \set{S}} \nu^T_i = \frac{1}{NT}\sum_{i \in \set{S}}\sum_{t=1}^T \expect{R_i(t,\nalg)} = \frac{1}{NT}\sum_{t=1}^T \expect{\sum_{i \in \set{S}} R_i(t,\nalg)} \leq \frac{1}{NT}\sum_{t=1}^T \expect{R(t)} = \mu,
\]
where $R(t)$ is the number of available servers for period $t$ and has expectation $N\mu.$
\end{enumerate}
\end{proof}

\begin{proof}[Proof of Lemma~\ref{lem:lower-bound}]
For any finite horizon $T$, define the finite-horizon average holding cost 
\[
C_T(\nalg) = \frac{1}{T}\expect{\sum_{t=1}^T \sum_{i \in \set{S}} c(i)(Q_i(t,\nalg) - R_i(t,\nalg))}.
\] 
The long-run average holding cost of $\nalg$ in \eqref{eq:def-cost} satisfies 
$C(\nalg) = \lim\sup_{T \to \infty} C_T(\nalg).$ 

Recall the definition of $\bolds{q}^T$ and $\bolds{\nu}^T$ in \eqref{eq:def-sol-q} and \eqref{eq:def-sol-nu}. We lower bound $C_T(\nalg)$ by
\begin{align}
C_T(\nalg) &= \frac{1}{T}\expect{\sum_{t=1}^T \sum_{i \in \set{S}} c(i)(Q_i(t,\nalg) - R_i(t,\nalg))} \nonumber\\
&= N\sum_{i \in \set{S}} c(i)(q_i^T - \nu_i^T) - \frac{1}{T}\sum_{i \in \set{S}}c(i)\left(\expect{\sum_{k \in \anc(i)\setminus \{i\}} Z_k(T,\nalg)\frac{\pi(i)}{\pi(k)}} + \lambda N\pi(i)\right) \nonumber\\
&\geq NC^\star - \underbrace{\frac{1}{T}\sum_{i \in \set{S}}c(i)\left(\expect{\sum_{k \in \anc(i)\setminus \{i\}} Z_k(T,\nalg)\frac{\pi(i)}{\pi(k)}} + \lambda N\pi(i)\right)}_{(*)},\label{eq:bound-avg-cost}
\end{align}
where the last inequality is because $(\bolds{q}^T,\bolds{\nu}^T)$ is feasible to \eqref{eq:orifluid} by Lemma~\ref{lem:feasible-sol} and $C^\star$ is the optimal value to \eqref{eq:orifluid}. 
To lower bound $(*)$, note that for any state $i$ with level $\ell$ and period $t$, the expected value of $Z_i(t)$ is at most 
\begin{align}
\expect{Z_i(t,\nalg)} &\leq \expect{Z_{\pa(i)}(t-1,\nalg)}P(\pa(i),i) \nonumber\\
&\leq \expect{Z_{\pa(\pa(i))}(t-2,\nalg)}P(\pa(\pa(i)),\pa(i))P(\pa(i),i) \nonumber\\
&\leq \ldots \leq \expect{A(t - \ell - 1)}\pi(i) \leq N\lambda \pi(i). \label{eq:bound-z-iter}
\end{align}
Recalling $c_{\max} = \max_{i \in \set{S}} c(i)$, we lower  bound $(*)$ by:
\begin{align*}
(*) \leq \frac{c_{\max}}{T}\sum_{i \in \set{S}}\left(\sum_{k \in \anc(i) \setminus \{i\}} N\lambda \pi(k)\cdot \frac{\pi(i)}{\pi(k)} + \lambda N \pi(i)\right) &= \frac{c_{\max}}{T}\sum_{i \in \set{S}}\sum_{k \in \anc(i)} N\lambda \pi(i) \\
&\leq \frac{N\lambda c_{\max}}{T}\sum_{i \in \set{S}} L\pi(i) \leq \frac{N\lambda c_{\max}L^2}{T},
\end{align*}
where the first inequality uses \eqref{eq:bound-z-iter}, the equality merges terms; the second inequality uses the fact that a state has at most $L$ ancestors; the last inequality uses $\sum_{i \in \set{S}} \pi(i) \leq L$. The latter is because 
\[
\sum_{i \in \set{S}} \pi(i) = \sum_{\ell=0}^{L-1}\sum_{i \in \set{S}_{\ell}} \pi(i) \leq \sum_{\ell=0}^{L-1}\pi(\rootNode) = L,\quad \text{by applying Lemma~\ref{lem:top-prob} with } \set{Y} = \set{S}_{\ell} \text{ and }\set{X} = \set{S}
\]
and because $\Top(\set{S}_{\ell}) = \set{S}_{\ell}$, $\Top(\set{S}) = \{\rootNode\}$.
As a result, \eqref{eq:bound-avg-cost} gives $C_T(\nalg) \geq NC^\star - \frac{N\lambda c_{\max}L^2}{T}$ and  $C(\nalg) = \limsup_{T \to \infty} C_T(\nalg) \geq NC^\star$ for any feasible algorithm $\nalg.$
\end{proof}

\subsection{Proof of Lemma~\ref{lem:prio-optimal} (Section~\ref{sec:optimal-fluid})}\label{app:prio-optimal}
A result we need is that the probability of a job being in a top state of a set $\set{X}$ is larger than that of a subset $\set{Y}$. 
Although this is direct when $\Top(\set{Y}) \subseteq \Top(\set{X})$, this condition is in general not true. For example, consider $\set{X} = \{2,3,4,5\}$ and $\set{Y} = \{3,5\}$ in Figure~\ref{fig:water-filling}. Then $\Top(\set{X}) = \{2,4\}$ while $\Top(\set{Y}) = \{3,5\}$. The proof of Lemma~\ref{lem:top-prob} is based on induction argument.
\begin{lemma}\label{lem:top-prob}
If $\set{Y} \subseteq \set{X} \subseteq \set{S}$, then $\sum_{i \in \Top(\set{Y})} \pi(i) \leq \sum_{i \in \Top(\set{X})} \pi(i).$
\end{lemma}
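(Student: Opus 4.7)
\textbf{Proof plan for Lemma~\ref{lem:top-prob}.} The plan is to show, for every $a \in \Top(\set{X})$, the contribution of those elements of $\Top(\set{Y})$ that lie in the subtree of $a$ is at most $\pi(a)$, and then sum this bound over $a \in \Top(\set{X})$.

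First, I would partition $\Top(\set{Y})$ according to ancestry in $\Top(\set{X})$. For any $i \in \Top(\set{Y})$, we have $i \in \set{Y} \subseteq \set{X} \subseteq \sub(\Top(\set{X}))$ by condition (ii) in \eqref{eq:top-set} applied to $\set{X}$, so there exists some $a \in \Top(\set{X})$ with $a \in \anc(i)$. Condition (iii) in \eqref{eq:top-set} applied to $\set{X}$ implies this $a$ is unique, because if two distinct elements $a,a' \in \Top(\set{X})$ both satisfied $a,a' \in \anc(i)$ then one would be in the subtree of the other, contradicting (iii). Denote this unique ancestor by $a(i)$, and set
\[
I_a = \{i \in \Top(\set{Y}) : a(i) = a\}, \qquad a \in \Top(\set{X}),
\]
so that $\Top(\set{Y}) = \bigsqcup_{a \in \Top(\set{X})} I_a$ is a disjoint union.

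Second, I would prove the key inequality $\sum_{i \in I_a}\pi(i) \le \pi(a)$ for every $a \in \Top(\set{X})$. Each $i \in I_a$ lies in $\sub(a)$, so $\pi(i) = \pi(a)\cdot \prod_{k \in \anc(i)\setminus \anc(a)} p(k)$, which represents the probability that a job, conditional on passing through state $a$, subsequently reaches $i$. Moreover, the elements of $I_a \subseteq \Top(\set{Y})$ are pairwise non-comparable by condition (iii) applied to $\set{Y}$. Since the Markov chain is a rooted tree, the events ``the trajectory passes through $i$'' for pairwise non-comparable descendants $i$ of $a$ are mutually exclusive, so the sum of the conditional probabilities is at most $1$. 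Multiplying by $\pi(a)$ yields $\sum_{i \in I_a}\pi(i) \le \pi(a)$.

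Finally, combining the partition with the key inequality,
\[
\sum_{i \in \Top(\set{Y})} \pi(i) \;=\; \sum_{a \in \Top(\set{X})} \sum_{i \in I_a}\pi(i) \;\le\; \sum_{a \in \Top(\set{X})} \pi(a),
\]
which is the desired conclusion. The only potentially subtle step is the disjointness argument in the key inequality, but this follows cleanly from the tree structure together with the defining property (iii) of $\Top(\set{Y})$; no induction or additional machinery appears to be needed.
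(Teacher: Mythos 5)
Your proof is correct. The decomposition is the same as the paper's: both partition $\Top(\set{Y})$ according to the unique ancestor in $\Top(\set{X})$ (the paper writes $\set{V}_a = \Top(\set{Y}) \cap \sub(a)$, which coincides with your $I_a$). Where you differ is in how the key inequality $\sum_{i \in I_a}\pi(i) \le \pi(a)$ is established. The paper defines $f(i) = \pi(i) - \sum_{k \in \sub(i)\cap \Top(\set{Y})}\pi(k)$ and proves $f(i)\ge 0$ by induction from the leaf level $\set{S}_{L-1}$ down to $\set{S}_0$, using the bound $\sum_{k\in\child(i)}P(i,k)\le 1$ at each step. You instead observe that $I_a$ is an antichain in $\sub(a)$ (by property (iii) of $\Top(\set{Y})$), that a job's trajectory is a single root-to-leaf path so the events ``the trajectory passes through $i$'' for non-comparable $i$ are disjoint, and that each such event is contained in ``the trajectory passes through $a$''; summing disjoint probabilities gives the bound directly. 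Your route is more elementary in that it avoids the induction entirely and makes the combinatorial reason for the inequality (antichains correspond to disjoint events on a tree) explicit; the paper's inductive $f(i)\ge 0$ is slightly more general in form (it holds for every state $i$, not just those in $\Top(\set{X})$) but that extra generality is not used. The one point worth stating carefully in a final write-up is the passage from the probabilistic statement to the algebraic one: either note that $\pi(i)>0$ for all $i$ so the conditioning is always legitimate, or phrase the disjointness argument unconditionally as $\sum_{i\in I_a}\pi(i) = \Pr\bigl(\bigcup_{i\in I_a}\{\text{pass through }i\}\bigr)\le \Pr(\text{pass through }a) = \pi(a)$.
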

\begin{proof}
Fix any $\set{X}, \set{Y} \subseteq \set{S}$ such that $\set{Y} \subseteq \set{X}.$ We denote the top set of $\set{Y}$ by $\set{V} = \Top(\set{Y}).$ For any state~$i$, we define a function $f(i) = \pi(i) - \sum_{k \in \sub(i) \cap \set{V}} \pi(k),$ which is the difference between $\pi_i$ and the sum of $\pi_k$ across any $k$ that is in both the top set $\set{V}$ \emph{and} the subtree of $i$.  Since $\set{Y} \subseteq \set{X},$ its top set $\set{V}$ is in the subtree of $\Top(\set{X})$, and thus has a partition $\{\set{V}_i\}_{i \in \Top(\set{X})}$ with $\set{V}_i = \set{V} \cap \sub(i)$ for $i \in \Top(\set{X})$ such that $\set{V} = \bigcup_{i \in \Top(\set{X})} \set{V}_i.$ As a result,
\begin{equation}\label{eq:decomp-diff}
\sum_{i \in \Top(\set{X})} \pi(i) - \sum_{i \in \Top(\set{Y})} \pi(i) = \sum_{i \in \Top(\set{X})} \pi(i) - \sum_{i \in \Top(\set{X})}\sum_{k \in \set{V}_i} \pi(k) = \sum_{i \in \Top(\set{X})} f(i).
\end{equation}
To prove the lemma, it is sufficient to show $f(i) \geq 0$ for any state $i$. Recall that states have a partition based on their levels $\set{S} = \set{S}_0 \cup \cdots \set{S}_{L-1}.$ We show $f(i) \geq 0$ by induction from $\set{S}_{L-1}$ to $\set{S}_0$:
\begin{itemize}
\item base case: for any $i \in \set{S}_{L-1},$ since it has no children, $f(i) = \pi(i) - \pi(i) \indic{i \in \set{V}} \geq 0.$
\item induction step: suppose $f(k) \geq 0$ for any $k \in \set{S}_{\ell+1}.$ Given a state $i \in \set{S}_{\ell},$ if $i \in \set{V}$, this implies there is no other state in its subtree inside $\set{V}$; otherwise $\set{V}$ is not a top set. As a result, $f(i) = \pi(i) - \pi(i) = 0.$ If $i \not \in \set{V},$ 
\[
f(i) = \pi(i) - \sum_{a \in \child(i)} \sum_{k \in \sub(a) \cap \set{V}} \pi(k) \geq \pi(i) - \sum_{a \in \child(i)} \pi(a) = \pi(i) - \pi(i)\sum_{a \in \child(i)} P(i,a) \geq 0,
\]
where the first inequality is by the induction hypothesis that $f(a) = \pi(a) - \sum_{k \in \sub(a) \cap \set{V}} \pi(k) \geq 0$ since $a \in \set{S}_{\ell+1}.$
\end{itemize}
The above shows that $f(i) \geq 0$ for any state $i$. Using \eqref{eq:decomp-diff} gives $\sum_{i \in \Top(\set{X})} \pi(i) \geq \sum_{i \in \Top(\set{Y})} \pi(i).$
\end{proof}
Recall that $\gamma^\star$ is the optimal capacity dual with the lowest $D^\star(\gamma)$ in \eqref{eq:dual}. Complementary slackness characterizes whether a feasible service decision vector $\bolds{\nu}$ is optimal for \eqref{eq:simfluid} in the below lemma. 

\begin{lemma}\label{lem:comple-slack}
Given a service decision vector $\bolds{\nu}$ that is feasible to \eqref{eq:simfluid}, it is optimal if and only if the following three conditions hold:
\begin{enumerate}[label=\textnormal{(C-\arabic*)}]
\item $\sum_{i \in \set{S}} \nu_i = \mu$ when $\gamma^\star > 0$;  \label{item:c-1}
\item if $c(i) + V^f(\gamma^\star,i) > \gamma^\star$, then $q_i = \nu_i$ where $q_i$ is defined by \eqref{eq:def-fluid-q}; \label{item:c-2}
\item if $c(i) + V^f(\gamma^\star,i) < \gamma^\star$ for state $i$, then $\nu_i = 0$. \label{item:c-3}
\end{enumerate}
\end{lemma}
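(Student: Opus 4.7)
The plan is to invoke standard LP complementary slackness on \eqref{eq:simfluid} and its dual \eqref{eq:dual}, with the dual optimal solution supplied by Lemma~\ref{lem:dual-structure}. Both are finite-dimensional linear programs with finite optima, so strong duality holds and a feasible primal $\bolds{\nu}$ is optimal if and only if there exists a dual feasible pair with which $\bolds{\nu}$ satisfies complementary slackness. Fixing the particular dual optimal $(\bolds{\beta}^\star(\gamma^\star), \gamma^\star)$ from Lemma~\ref{lem:dual-structure}, where $\beta_i^\star(\gamma^\star) = \max\{0, c(i) + V^f(\gamma^\star,i) - \gamma^\star\}$, the proof reduces to rewriting its three complementary slackness conditions as (C-1)--(C-3).

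The first two translations are essentially immediate. The capacity dual $\gamma^\star$ pairs with the constraint $\sum_{i} \nu_i \leq \mu$, so complementary slackness yields (C-1): $\gamma^\star > 0$ forces this constraint to be tight. The state dual $\beta_i^\star(\gamma^\star)$ is strictly positive exactly when $c(i) + V^f(\gamma^\star,i) > \gamma^\star$, in which case the $i$-th state constraint of \eqref{eq:simfluid} must be tight; by \eqref{eq:equiv-constraint}, tightness is precisely $\nu_i = q_i$, yielding (C-2).

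Condition (C-3) requires a short computation to put the dual constraint at state $a$ into a clean form involving $V(\gamma^\star, a)$. Writing $\beta_i^\star(\gamma) = c(i) + V^f(\gamma,i) - V(\gamma,i)$ (which holds in both branches of the Bellman recursion \eqref{eq:bellman}) and substituting $V^f(\gamma,i) = \sum_{k \in \child(i)} V(\gamma,k)\pi(k)/\pi(i)$, the double sum telescopes along the subtree of $a$ to give
\begin{equation*}
\sum_{i \in \sub(a)} \beta_i^\star(\gamma^\star) \frac{\pi(i)}{\pi(a)} = c^f(a) - V(\gamma^\star, a).
\end{equation*}
Thus the dual constraint at $a$ reduces to $\gamma^\star \geq V(\gamma^\star, a)$, and it is tight iff $c(a) + V^f(\gamma^\star,a) \geq \gamma^\star$. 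Complementary slackness then says $\nu_a > 0$ forces this constraint to be tight, which is exactly the contrapositive of (C-3).

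Since (C-1)--(C-3) are precisely the three complementary slackness conditions for the primal-dual pair $(\bolds{\nu}, (\bolds{\beta}^\star(\gamma^\star), \gamma^\star))$, both directions of the equivalence follow from strong duality: the ``only if'' direction uses that a primal optimal $\bolds{\nu}$ must be complementary to every dual optimal, and the ``if'' direction uses that feasibility plus complementary slackness with a dual feasible solution implies both are optimal. The only nontrivial step is the telescoping identity in the third paragraph, which hinges on the tree structure via the bijection $\{(i,k) : i \in \sub(a), k \in \child(i)\} \leftrightarrow \sub(a) \setminus \{a\}$; everything else is routine LP duality applied with the dual optimum already provided by Lemma~\ref{lem:dual-structure}.
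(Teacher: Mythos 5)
Your proposal is correct and follows essentially the same route as the paper: both arguments apply strong duality and complementary slackness to the pair \eqref{eq:simfluid}/\eqref{eq:dual} with the dual optimum $(\gamma^\star,\bolds{\beta}^\star(\gamma^\star))$ supplied by Lemma~\ref{lem:dual-structure}, and both hinge on the identity $\sum_{i\in\sub(a)}\beta_i^\star(\gamma^\star)\pi(i)/\pi(a)=c^f(a)-V(\gamma^\star,a)$ to reduce the dual constraint at $a$ to $\gamma^\star\ge V(\gamma^\star,a)$. The only cosmetic differences are that the paper spells out the complementary slackness theorem by writing the duality gap as an explicit sum of three nonnegative terms, and proves the subtree identity (its Lemma~\ref{lem:sub-tree-beta}) by induction on levels rather than by your telescoping bijection.
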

\begin{proof}
Recall that $\gamma^\star$ is the optimal capacity dual and $\bolds{\beta}^\star(\gamma^\star)$ is an optimal solution to \eqref{eq:dual} by Lemma~\ref{lem:dual-structure}. To ease the notation, let $\beta_i^\star = \beta_i^\star(\gamma^\star) = \max(0, c(i) + V^f(\gamma^\star,i)-\gamma^\star)$ for any state $i$. For any feasible service decision $\bolds{\nu}=(\nu_i)_{i \in \set{S}}$ to \eqref{eq:simfluid}, the following equation holds true by rearranging terms: 
\begin{align*}
\left(\lambda\sum_{i \in \set{S}} \pi_i\beta_i^\star + \mu \gamma^\star\right) - \sum_{a \in \set{S}} \nu_a c^f(a) &= 
\underbrace{\gamma^\star \left(\mu - \sum_{i \in \set{S}} \nu_i\right)}_{\text{Term 1}}+\underbrace{\sum_{i \in \set{S}} \beta_i^\star\left(\lambda \pi(i) - \sum_{a \in \anc(i)} \nu_a\frac{\pi(i)}{\pi(a)}\right) + }_{\text{Term 2}} \\
&\hspace{0.2in}+\underbrace{\sum_{a \in \set{S}} \nu_a\left(\gamma^{\star}+\sum_{i \in \sub(a)} \beta^\star_i \frac{\pi(i)}{\pi(a)} - c^f(a)\right)}_{\text{Term 3}}.
\end{align*}
By strong duality, $\bolds{\nu}$ is optimal to \eqref{eq:simfluid}if and only if the left hand side of the above equation is equal to zero, which is equivalent to have Terms 1 to 3 equal to zero because each of these terms is non-negative by the feasibility of $\bolds{\nu}$ to \eqref{eq:simfluid} and the feasibility of $(\gamma^\star,\bolds{\beta}^\star)$ to \eqref{eq:dual}. 

We prove the lemma by showing that Terms 1 to 3 are equivalent to Conditions \ref{item:c-1} to \ref{item:c-3}:
\begin{itemize}
\item Term 1 and Condition \ref{item:c-1} are equivalent because (i) if \ref{item:c-1} is satisfied, then Term 1 evaluates to zero because $\gamma^\star \geq 0$; (ii) if Term 1 is zero, then either $\gamma^\star$ or $\mu - \sum_{i \in \set{S}} \nu_i$ is zero, satisfying \ref{item:c-1}.
\item Term 2 and Condition \ref{item:c-2} are equivalent. If \ref{item:c-2} is satisfied, it implies that for any state $i$ with $\beta_i^\star > 0$, we have $q_i$, as defined in \eqref{eq:def-fluid-q}, equal to $\nu_i$, implying that $\lambda \pi(i) = \sum_{a \in \anc(i)} \nu_a \frac{\pi(i)}{\pi(a)}$. Since $\beta_i^\star \geq 0$ for any state $i$, this implies Term 2 evaluates to zero. In addition, if Term 2 is equal to zero, it requires that for any state $i$, either $\beta_i^\star = 0$ or $\lambda \pi(i) = \sum_{a \in \anc(i)} \nu_a \frac{\pi(i)}{\pi(a)}$. Therefore, if $c(i) + V^f(\gamma^\star,i) > \gamma$, i.e., $\beta_i^\star > 0$, \eqref{eq:def-fluid-q} requires $q_i = \lambda \pi(i) - \sum_{a \in \anc(i) \setminus \{i\}} \nu_a \frac{\pi(i)}{\pi(a)} = \nu_i,$ satisfying \ref{item:c-2}.  
\item Term 3 and Condition \eqref{item:c-3} are equivalent. To see this, Lemma~\ref{lem:sub-tree-beta} shows that $V(\gamma^\star,a) = c^f(a) - \sum_{i \in \sub(a)}\beta_i^\star \pi(i) / \pi(a)$ and thus Term 3 simplifies to $\sum_{a \in \set{S}} \nu_a (\gamma - V(\gamma^\star,a))$ where $V(\gamma^\star,a) = \min(\gamma^\star, c(i) + V^f(\gamma^\star,i)).$ If \ref{item:c-3} holds true, it implies that for any state $a$ with $V(\gamma^\star,a) < \gamma^{\star}$, $\nu_a = 0$ and thus Term 3 is equal to zero. Moreover, if Term 3 is equal to zero, either $\nu_a = 0$ or $V(\gamma^\star,a) = \gamma^{\star}$ for any state $a$, which implies \ref{item:c-3}.
\end{itemize}
Summarizing the above, $\bolds{\nu}$ is optimal to \eqref{eq:simfluid} if and only if Terms 1 to 3 are equal to zero, which we established is equivalent to having Conditions \ref{item:c-1} to \ref{item:c-3}.
\end{proof}

\begin{proof}[Proof of Lemma~\ref{lem:prio-optimal}]
We show that $\bolds{\nu}^{\bo}$ is optimal to \eqref{eq:simfluid} using Lemma~\ref{lem:comple-slack}, which implies the optimality of $(\bolds{q}^{\bo}, \bolds{\nu}^{\bo})$ to \eqref{eq:orifluid} by the equivalence between \eqref{eq:orifluid} and \eqref{eq:simfluid}. 

To do so, recall that the priority ordering $\bo$ ranks states in $\set{S}_{\textsc{Hi}}$ before states in $\set{S}_{\textsc{Eq}}$ and those in $\set{S}_{\textsc{Eq}}$ before $\set{S}_{\textsc{Lo}}$. Accordingly, there is a \emph{high position} $\mhigh$ and a \emph{low position} $\mlow$ such that $c(o_h) + V^f(\gamma^\star,o_h) > \gamma^\star$ for any $h \leq \mhigh$, $c(o_h) + V^f(\gamma^\star, i) = \gamma^\star$ for any $h \in (\mhigh,\mlow]$, and $c(o_h) + V^f(\gamma^\star, o_h) < \gamma^\star$ for any $h > \mlow$. Since Lemma~\ref{lem:feasible-nu} implies that $\bolds{\nu}^{\bo}$ is feasible to \eqref{eq:simfluid}, it remains to verify conditions \ref{item:c-1} to \ref{item:c-3} in Lemma~\ref{lem:comple-slack} for $\bolds{\nu}^{\bo}.$ Before that, recall for the priority ordering $\bo$, $m$ is the maximum position such that $\sum_{i \in \Top(\bo_{[m]})} \lambda \pi(i) \leq \mu$. If the inequality is strict and $m < |\set{S}|$, the partially-served state $\partialNode$ is $o_{m+1}$; otherwise it is $\perp$.

Condition \ref{item:c-1} requires that $\sum_{i \in \set{S}} \nu_i^{\bo} = \mu$ when $\gamma^\star > 0$. By Lemma~\ref{lem:feasible-nu}, if the partially-served state $\partialNode \neq \perp$, then $\sum_{i \in \set{S}} \nu_i^{\bo} = \mu$. Thus we focus on the case when $\partialNode = \perp.$ In this case, $\sum_{i \in \set{S}} \nu_i^{\bo} = \sum_{i \in \Top(\bo_{[m]})} \lambda \pi(i)$. Moreover, by our construction, the case $\partialNode = \perp$ only happens when $\sum_{i \in \Top(\bo_{[m]})} \lambda \pi(i) = \mu$, which implies $\sum_{i \in \set{S}} \nu_i^{\bo} = \mu$, or when $m = |\set{S}|$ and $\sum_{i \in \Top(\bo_{[m]})} \lambda \pi(i) < \mu$. For the latter scenario, note that when $m = |\set{S}|$, the top set $\Top(\bo_{[m]}) = \Top(\set{S})$ is equal to the root note $\rootNode.$ As a result, in this scenario $\lambda \pi(\rootNode) = \lambda < \mu.$ Recall that $\gamma^\star$ minimizes $g(\gamma) \coloneqq \mu \cdot \gamma - \lambda V(\gamma,\rootNode)$ among $\gamma \geq 0$ by Lemma~\ref{lem:dual-structure}. Since now $\lambda < \mu$, this implies 
\begin{equation}\label{eq:g-gamma}
\forall \gamma > 0,~g(\gamma)=\mu \cdot \gamma - \lambda V(\gamma,\rootNode) \geq \mu \cdot \gamma - \lambda \cdot \gamma = (\mu-\lambda)\gamma > 0 = g(0).
\end{equation}
Therefore, $\gamma^\star = 0$ when $\lambda < \mu$, satisfying \ref{item:c-1}. We thus verify this condition for $\bolds{\nu}^{\bo}$.

Condition \ref{item:c-2} requires $q_i^{\bo} = \nu_i^{\bo}$ for any $i \in \bo_{[\mhigh]}.$ To show this, it is sufficient to show that $\sum_{i \in \Top(\bo_{[\mhigh]})} \lambda \pi(i) \leq \mu$ as it implies any state $i \in \bo_{[\mhigh]}$ is fully-blocked \ref{item:fully-blocked}, partially-blocked \ref{item:partially-blocked} or an empty state \ref{item:empty}, which gives $q_i^{\bo} = \nu_i^{\bo}$ by our construction. To show $\sum_{i \in \Top(\bo_{[\mhigh]})} \lambda \pi(i) \leq \mu$, by \ref{item:c-2} of Lemma~\ref{lem:comple-slack} an optimal solution $(\bolds{q}^\star,\bolds{\nu}^\star)$ to \eqref{eq:orifluid} satisfies $q_i^{\star} = \nu_i^{\star}$ for $i \in \bo_{[\mhigh]}$ and thus
\begin{align}
\sum_{i \in \Top(\set{S}_{\mhigh})} \lambda \pi(i) &\overset{\eqref{eq:def-fluid-q}}{=} \sum_{i \in \Top(\bo_{[\mhigh]})} \left(q^\star_i + \sum_{a \in \anc(i) \setminus \{i\}} \nu_a^\star \cdot \frac{\pi(i)}{\pi(a)}\right) \overset{\text{Lemma~\ref{lem:comple-slack}}}{=} \sum_{i \in \Top(\bo_{[\mhigh]})} \sum_{a \in \anc(i)} \nu_a^\star \cdot \frac{\pi(i)}{\pi(a)} \nonumber\\
&\hspace{0.2in}= \sum_{a \in \set{S}} \nu_a^\star \left(\sum_{i \in \Top(\bo_{[\mhigh]}) \cap \sub(a)} \pi(i)\right) / \pi(a) \overset{(*)}{\leq} \sum_{a \in \set{S}} \nu_a^\star \pi(a) / \pi(a) = \sum_{a \in \set{S}} \nu_a^\star \leq \mu, \label{eq:shi-prob}
\end{align}
where Inequality (*) applies Lemma~\ref{lem:top-prob} with $\set{X} = \sub(a)$ and $\set{Y} = \Top(\bo_{[\mhigh]}) \cap \sub(a)$ and noting that $\Top(\set{X}) = \{a\}, \Top(\set{Y}) = \set{Y}, \set{Y} \subseteq \set{X}.$ We thus show $\sum_{i \in \Top(\bo_{[\mhigh]})} \lambda \pi(i) \leq \mu$, which implies Condition \ref{item:c-2} as we showed at the beginning of this part.

Condition \ref{item:c-3} requires $\nu_i^{\bo} = 0$ for any $i \not \in \bo_{[\mlow]}.$ We show this by a case discussion:
\begin{itemize}
\item $m + 1 \leq \mlow$: since our construction  only allows state $i \in \bo_{[m+1]}$ to have non-zero $\nu_i^{\bo}$, this gives $\nu_i^{\bo} = 0$ for any $i \not \in \bo_{[\mlow]}$ since $\bo_{[m+1]} \subseteq \bo_{[\mlow]}$ when $m + 1 \leq \mlow$.
\item $m+1 > \mlow$: By \ref{item:c-3} of Lemma~\ref{lem:comple-slack}, an optimal solution $(\bolds{q}^\star,\bolds{\nu}^\star)$ to \eqref{eq:orifluid} satisfies $\nu_i^{\star} = 0$ for $i \not \in \bo_{[\mlow]}$ and thus
\begin{align}
C^\star &= \sum_{i \in \set{S}} c(i)(q_i^\star - \nu_i^\star) \overset{\eqref{eq:sim-obj}}{=} \lambda \sum_{i \in \set{S}} \pi(i)c(i) - \sum_{a \in \set{S}} \nu^\star_a \sum_{i \in \sub(a)} c(i)\pi(i) / \pi(a) \nonumber\\
&\hspace{0.2in}\overset{\text{Lemma~\ref{lem:comple-slack}}}{=} \lambda \sum_{i \in \set{S}} \pi(i)c(i) - \sum_{a \in \bo_{[\mlow]}} \nu^\star_a \sum_{i \in \sub(a)} c(i)\pi(i) / \pi(a) \nonumber\\
&= \lambda \sum_{i \in \set{S}} \pi(i)c(i) - \sum_{i \in \sub(\bo_{[\mlow]})} c(i)\sum_{a \in \anc(i)} \nu_a^\star \pi(i) / \pi(a) \nonumber\\
&\geq \lambda \sum_{i \in \set{S}} \pi(i)c(i) - \sum_{i \in \sub(\bo_{[\mlow]})} \lambda c(i)\pi(i) = \lambda \sum_{i \neq \sub(\bo_{[\mlow]})} \pi(i)c(i),\label{eq:bound-opt}
\end{align}
where the inequality is by the first constraint of \eqref{eq:simfluid}. As a result,
\[
\sum_{i \in \set{S}} c(i)(q_i^{\bo} - \nu_i^{\bo}) = \sum_{i \in \set{S} \setminus \sub(\bo_{[\mlow]})} c(i)(q_i^{\bo} - \nu_i^{\bo}) \leq \lambda \sum_{i \in \set{S} \setminus \sub(\bo_{[\mlow]})} c(i) \pi(i) \leq C^\star,
\]
where the first equality is because $m + 1 > \mlow$ and thus $q_i^{\bo} = \nu_i^{\bo}$ for $i \in \sub(\bo_{[\mlow]})$ by our construction in \ref{item:fully-blocked}, \ref{item:empty} and \ref{item:partially-blocked}; the first inequality is because $q_i^{\bo} \leq \lambda \pi(i)$ for any $i$; the last inequality is by \eqref{eq:bound-opt}. Hence, if $m + 1 > \mlow$, we directly show that $(\bolds{q}^{\bo}, \bolds{\nu}^{\bo})$ is optimal to \eqref{eq:orifluid}, the desired result of this lemma, which also implies \ref{item:c-3} by Lemma~\ref{lem:comple-slack}.
\end{itemize}
Summarizing the above, Conditions \ref{item:c-1}, \ref{item:c-2}, \ref{item:c-3} hold for $\bolds{\nu}^{\bo}$. Lemma~\ref{lem:comple-slack} thus shows it is optimal to \eqref{eq:simfluid}, implying that $(\bolds{q}^{\bo}, \bolds{\nu}^{\bo})$ is optimal to \eqref{eq:orifluid}.
\end{proof}

\subsection{Proof of Lemma~\ref{lem:queue-upper-bound} (Section~\ref{sec:stochastic})}\label{app:lem-queue-upper-bound}
\begin{proof}
The challenge in directly proving this result lies in the following two issues: 
\begin{enumerate}
    \item[(a)] the number of arrivals in each period is random; 
    \item[(b)] the correlation between jobs due to capacity constraint because a job's presence in the queue affects another job's chance of getting service.
\end{enumerate}
We circumvent these two issues by expanding the original system with \emph{fictitious} jobs such that (a) in each period there are always $N$ new (fictitious) jobs and (b) even if a (fictitious) job leaves because of service, we keep it in the system and thus remove correlation between jobs' presence.

In particular, in each period there are always $N$ new (fictitious) jobs. A new job $j$ has probability $\lambda$ to have a state $S_j(t+1) = \rootNode$, while with probability $1-\lambda$, the job has a state $S_j(t+1) = \perp$, i.e., it immediately abandons upon its arrival. The original set of new jobs, $\set{A}(t)$, corresponds to the fictitious jobs with $S_j(t+1) = \rootNode.$  Moreover, for the jobs that obtain service and leave, i.e., for $j \in \set{R}(1) \cup \cdots \set{R}(t)$, we assume that their states still transition according to the Markov chain. For a period $t$, we define the set of fictitious job arrivals in this period by $\set{J}(t),$ which deterministically contains $N$ jobs. The above expansion does not give more jobs than what exist in the original system. Therefore, for a set of states $\set{X} \subseteq \set{S}$, the number of waiting jobs with states in $\set{X}$ is 
\[
\sum_{i \in \set{X}} Q_i(t) \leq \sum_{\tau=1}^{t-1}\sum_{j \in \set{J}(\tau)} \indic{S_j(t) \in \set{X}} \leq \sum_{\tau = t - \min(L, |\set{X}|)}^{t-1}\sum_{j \in \set{J}(\tau)} \indic{S_j(t) \in \set{X}} \leq N\min(L,|\set{X}|),
\]
where the second inequality is because for a job $j \in \set{J}(\tau)$ to have a state $S_j(t)$ with level $\ell$, it must arrive in period $t - \ell$, and the states in $\set{X}$ contain at most $\min(L,|\set{X}|)$ different levels. 

Since job states transition independently, $X \coloneqq \sum_{\tau=1}^{t-1}\sum_{j \in \set{J}(\tau)} \indic{S_j(t) \in \set{X}}$ is the sum of $N(t-1)$ independent binary random variables $\{X_j\}_{j \in \set{J}(1)\cup \ldots \cup \set{J}(t-1)}$ where $X_j = \indic{S_j(t) \in \set{X}}.$ Moreover,
\begin{align*}
\expect{X} = \expect{\sum_{\tau=1}^{t-1} \sum_{j \in \set{J}(\tau)} \indic{S_j(t) \in \set{X}}} &= \sum_{\ell=0}^{L-1} \sum_{i \in \set{X} \cap \set{S}_{\ell}}\sum_{j \in \set{J}(t - \ell - 1)}\Pr\{S_j(t) = i\}\\ &\leq \sum_{\ell=0}^{L-1}\sum_{i \in \set{X} \cap \set{S}_{\ell}} N\lambda\pi(i) = N\lambda\sum_{i \in \set{X}} \pi(i),
\end{align*}
where the first equality is by partitioning states in $\set{X}$ according to their levels and noting that only jobs arriving in period $t - \ell-1$ will have states of level $\ell$ in period $t$; the inequality is because (i) the size of $\set{J}(t-\ell-1)$ is $N$ when $t > \ell + 1$ and equal to zero otherwise; (ii) the probability that a new fictitious job transitions to state $i \in \set{S}_{\ell}$ in $\ell$ periods is $\lambda \pi(i)$. Moreover, the abandonment property gives that for any $\ell > 0$, by induction,
\[
\sum_{i \in \set{S}_{\ell}} \pi(i) = \sum_{i' \in \set{S}_{\ell-1}} \pi(i')\sum_{i \in \child(i')} P(i,i') \leq \sum_{i' \in \set{S}_{\ell-1}} \pi(i')(1-\theta) \leq (1-\theta)^2\sum_{i' \in \set{S}_{\ell-2}} \pi(i') \leq \cdots \leq (1-\theta)^{\ell}.
\]
Therefore, $\sum_{i \in \set{X}} \pi(i) = \sum_{\ell=0}^{L-1} (1-\theta)^{\ell} \leq \theta^{-1}$ and thus $\expect{X} \leq N/\theta$.

To give a high probability of $\sum_{i \in \set{X}} Q_i(t)$, consider two cases. First, if $\expect{X} = 0$, it implies $X = 0$ since $X \geq 0$ and thus $\sum_{i \in \set{X}} Q_i(t) = 0$. Second, if $\expect{X} > 0$, we finish the proof by
\begin{align*}
\Pr\left\{\sum_{i \in \set{X}} Q_i(t) \geq N\lambda \sum_{i \in \set{X}} \pi(i) + 3\sqrt{N/\theta}\ln \frac{1}{\delta}\right\} &\leq \Pr\left\{X \geq N\lambda \sum_{i \in \set{X}} \pi(i) + 3\sqrt{N/\theta}\ln \frac{1}{\delta}\right\} \\
&\hspace{-2in} \leq \Pr\left\{X \geq \expect{X}\left(1 + \frac{3\sqrt{N/\theta}\ln\frac{1}{\delta}}{\expect{X}}\right)\right\} \overset{(a)}{\leq} \exp\left(\frac{-9(N/\theta)\ln^2\frac{1}{\delta}}{2\expect{X} + 2\sqrt{N/\theta}\ln\frac{1}{\delta}}\right) \\
&\overset{(b)}{\leq} \exp\left(\frac{-9(N/\theta)\ln^2\frac{1}{\delta}}{4(N/\theta)\ln\frac{1}{\delta}}\right) \leq \delta^2,
\end{align*}
where Inequality (a) is by noting $X$ is the sum of independent binary random variables and applying Fact~\ref{fact:chernoff}; Inequality (b) is by $\expect{X} \leq N/\theta$ and the assumption that $\delta \in (0,1/e]$.
\end{proof}

\subsection{Proof of Lemma~\ref{lem:tail-bound} (Section~\ref{sec:stochastic})}\label{app:lem-tail-bound}
The proof follows a similar structure with that of theorem 1 in \cite{bertsimas2001performance}. We first establish the following result.
\begin{lemma}\label{lem:itera-bound}
Given the same conditions of Lemma~\ref{lem:tail-bound} and that $v_{\whp} \leq v_{\max}$, then $\forall a \geq B - v_{\whp},$ \[\Pr\{\Phi(\bolds{X}(\infty)) > a + v_{\whp}\} \leq \frac{v_{\whp}}{v_{\whp}+\Delta}\Pr\{\Phi(\bolds{X}(\infty)) > a - v_{\whp}\} + \frac{6\varepsilon v_{\max}}{v_{\whp}}.\]
\end{lemma}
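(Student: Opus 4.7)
The plan is to apply a test-function balance argument based on the stationarity of $Y := \Phi(\bolds{X}(\infty))$, carefully separating ``typical'' jumps of scale $v_{\whp}$ from ``rare'' jumps of scale $v_{\max}$ in order to extract the ratio $v_{\whp}/(v_{\whp}+\Delta)$. I would work with the integrable test function $f(y) = (y - (a + v_{\whp}))_+$, whose expectation is finite by the hypothesis $\expect{\Phi(\bolds{X}(\infty))} < \infty$. Stationarity gives $\expect{f(Y')} = \expect{f(Y)}$ for $Y' := \Phi(\bolds{X}(\infty+1))$, and decomposing $f(Y') - f(Y)$ according to whether $Y$ and $Y'$ lie above or below the threshold $a + v_{\whp}$ produces the flow-balance identity
\[
-\expect{D;~Y > a + v_{\whp}} = \expect{Y' - (a+v_{\whp});~\text{up-cross}} + \expect{(a+v_{\whp}) - Y';~\text{down-cross}},
\]
where $D := Y' - Y$ and ``up-cross'' (resp.\ ``down-cross'') denotes the event $\{Y \leq a + v_{\whp} < Y'\}$ (resp.\ $\{Y > a + v_{\whp} \geq Y'\}$).

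I would then estimate both sides. For the left-hand side, the hypothesis $a \geq B - v_{\whp}$ ensures $Y > a + v_{\whp} \geq B$, so condition (ii) yields $D \leq -\Delta$ with conditional probability at least $1 - \varepsilon$ given $\bolds{X}(\infty) \in \set{G}$; combined with the almost-sure upper bound $D \leq v_{\max}$ from (iii) and the marginal $\Pr\{\bolds{X}(\infty) \notin \set{G}\} \leq \varepsilon$, this gives $-\expect{D;~Y > a + v_{\whp}} \geq \Delta \Pr\{Y > a + v_{\whp}\} - O(\varepsilon v_{\max})$. For the up-crossing term, condition (i) bounds a typical upward jump by $v_{\whp}$, restricting the starting $Y$ to the width-$v_{\whp}$ window below $a + v_{\whp}$ and the excess $Y' - (a + v_{\whp})$ to $v_{\whp}$, yielding $\expect{Y' - (a+v_{\whp});~\text{up-cross}} \leq v_{\whp} \Pr\{a - v_{\whp} < Y \leq a + v_{\whp}\} + O(\varepsilon v_{\max})$. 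The down-crossing term is bounded analogously, though more delicately; see the obstacle discussion below.

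Combining these estimates, using $\Pr\{a - v_{\whp} < Y \leq a + v_{\whp}\} = \Pr\{Y > a - v_{\whp}\} - \Pr\{Y > a + v_{\whp}\}$, and rearranging gives $(v_{\whp} + \Delta) \Pr\{Y > a + v_{\whp}\} \leq v_{\whp} \Pr\{Y > a - v_{\whp}\} + C \varepsilon v_{\max}$ for a suitable constant $C$. Dividing by $v_{\whp} + \Delta$ and using $\Delta \leq v_{\whp}$ to absorb the multiplicative factor $(v_{\whp}+\Delta)/v_{\whp} \leq 2$ into the error constant yields the stated bound with coefficient $6$. The main obstacle is the sharp control of the down-crossing term $\expect{(a + v_{\whp}) - Y';~\text{down-cross}}$: since condition (i) only bounds the upper tail of $D$, large downward jumps are a priori possible, and a crude bound from the almost-sure $D \leq v_{\max}$ would lose a factor $v_{\max}/v_{\whp}$ in the decay rate. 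Recovering the correct ratio requires leveraging stationarity's identity $\expect{D_+} = \expect{D_-}$ (a consequence of $\expect{D} = 0$ under the finite-mean hypothesis) together with the typical-jump bound from (i) to show that the expected magnitude of downward jumps crossing the threshold is itself bounded by $v_{\whp} \Pr\{a - v_{\whp} < Y \leq a + v_{\whp}\} + O(\varepsilon v_{\max})$, matching the up-crossing bound up to the error. This refinement, which distinguishes the typical scale $v_{\whp}$ from the worst-case scale $v_{\max}$, is precisely what sets this argument apart from the standard Foster--Lyapunov treatment in \cite[Theorem 1]{bertsimas2001performance}, which would only yield the weaker rate $v_{\max}/(v_{\max}+\Delta)$.
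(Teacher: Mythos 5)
Your high-level strategy is the same as the paper's: exploit stationarity of $\Phi(\bolds{X}(\infty))$ through a truncated test function, bound the drift by cases, and use the hypothesis $a \geq B - v_{\whp}$ to activate condition (ii) above the threshold, keeping the rare $v_{\max}$-scale jumps in an $O(\varepsilon v_{\max})$ error term. The place where your proposal genuinely diverges — and where it breaks — is the down-crossing term that you yourself flag as the main obstacle. Writing $c = a + v_{\whp}$, your balance identity $-\expect{D;\,Y>c} = \expect{c-Y';\,\text{down}} + \expect{Y'-c;\,\text{up}}$ is correct, but once you replace $-\expect{D;\,Y>c}$ by its lower bound $\Delta\Pr\{Y>c\} - O(\varepsilon v_{\max})$, you are left needing an upper bound on $\expect{c-Y';\,\text{down}}$ of order $v_{\whp}\Pr\{a-v_{\whp}<Y\leq a+v_{\whp}\} + O(\varepsilon v_{\max})$, and no such bound follows from the hypotheses: conditions (i)--(iii) constrain only the upper tail of $D$, so a state just above $c$ may jump all the way to $0$ with probability one, contributing roughly $c\cdot\Pr\{Y>c\}$ to the down-crossing term with $c$ arbitrarily large compared to $v_{\whp}$. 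Your proposed rescue via $\expect{D_+}=\expect{D_-}$ does not close this: that identity is global, and localizing it gives at best $\expect{c-Y';\,\text{down}} \leq \expect{D_-} = \expect{D_+} \leq v_{\whp} + O(\varepsilon v_{\max})$, which is $v_{\whp}$ times $1$ rather than $v_{\whp}$ times the window probability — useless for the recursion. The deeper issue is that by lower-bounding $-\expect{D;\,Y>c}$ first, you discard exactly the cancellation (a huge downward jump inflates both sides of your identity equally) that would have controlled the undershoot.

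The paper's proof avoids this entirely by truncating from below: it works with $\hat{\Phi} = \max(a,\Phi)$ and decomposes by the \emph{starting region} of $\Phi(\bolds{X}(\infty))$ (below $a-v_{\whp}$, in the window, above $a+v_{\whp}$) rather than by crossing events. With that organization one only ever needs \emph{upper} bounds on the conditional drift of $\hat{\Phi}$, and large downward jumps can only make those drifts more negative; the truncation at $a$ caps any undershoot, and the choice of threshold $a$ (one $v_{\whp}$ below the target level $a+v_{\whp}$, combined with $\Delta\leq v_{\whp}$) guarantees that a $-\Delta$ drift of $\Phi$ in the high region survives truncation as a $-\Delta$ drift of $\hat{\Phi}$. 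If you replace your test function $(y-(a+v_{\whp}))_+$ by $\max(a,y)$ and switch from the up-/down-crossing decomposition to the three-region decomposition, your remaining estimates (the $\varepsilon v_{\max}$ accounting for the bad event and the rare jumps, and the $v_{\whp}$ bound in the window) go through and recover the stated constant.
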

\begin{proof}
Define a stochastic process $\{\bolds{Y}(t)\}$ such that $\bolds{Y}(t) = \bolds{X}(t+1).$ Let $\hat{\Phi}(\bolds{x}) = \max(a, \Phi(\bolds{x}))$. Condition (iii) of the lemma also implies an almost sure upper bound on $\hat{\Phi}(\bolds{X}(t+1)) - \hat{\Phi}(\bolds{X}(t))$ since if $\Phi(\bolds{X}(t+1)) - \Phi(\bolds{X}(t))) \leq v_{\max}$ {holds} almost surely, it implies almost surely that
\begin{align}
\hat{\Phi}(\bolds{X}(t+1)) - \hat{\Phi}(\bolds{X}(t)) &= \max(a, \Phi(\bolds{X}(t+1))) - \max(a,\Phi(\bolds{X}(t))) \nonumber\\
&\leq \max(a - a, \Phi(\bolds{X}(t+1)) - \Phi(\bolds{X}(t))) \leq v_{\max}. \label{eq:as-bound}
\end{align}
The assumption that $\expect{\Phi(\bolds{X}(\infty))} < \infty$ implies $\expect{\hat{\Phi}(\bolds{X}(\infty))} < \infty.$ Since $\bolds{X}(t)$ has a limiting distribution, both $\bolds{X}(\infty)$ and $\bolds{Y}(\infty)$ exist and are identically distributed (though not independent). As a result, recalling the state space is $\set{X}$, the following equation holds:
\begin{align}
&\expect{\hat{\Phi}(\bolds{Y}(\infty)) - \hat{\Phi}(\bolds{X}(\infty))} = 0, \text{ and thus } \nonumber \\ 
&\sum_{\bolds{x} \in \set{X}} \Pr\{\bolds{X}(\infty) =\bolds{x}\}\underbrace{\expect{\hat{\Phi}(\bolds{Y}(\infty)) - \hat{\Phi}(\bolds{X}(\infty)) \mid \bolds{X}(\infty) = \bolds{x}}}_{\text{drift } D(\bolds{x})} = 0. \label{eq:stationary}
\end{align}
The proof proceeds by bounding the drift term $D(\bolds{x})$ considering three types of state $\bolds{x} \in \set{G}$:
\begin{itemize}
\item  $\Phi(\bolds{x}) \leq a - v_{\whp}$: in this case $\hat{\Phi}(\bolds{x}) = a$. Moreover, by Condition (i) of the lemma, for any period $t$, $\Pr\{\Phi(\bolds{X}(t+1)) - \Phi(\bolds{X}(t) \geq v_{\whp} | \bolds{X}(t) = \bolds{x}\} \leq \varepsilon$. That is, with probability at least $1 - \varepsilon$, $\Phi(\bolds{X}(t+1)) \leq \Phi(\bolds{x}) + v_{\whp} \leq a$  and $\hat{\Phi}(\bolds{X}(t+1)) = a$ conditioning on $\bolds{X}(t) = \bolds{x}.$ Moreover, by \eqref{eq:as-bound}, $\hat{\Phi}(\bolds{X}(t+1)) - \hat{\Phi}(\bolds{X}(t)) \leq v_{\max}$ almost surely. As a result, recalling $\bolds{Y}(t) = \bolds{X}(t+1),$ 
$\expect{\hat{\Phi}(\bolds{Y}(t)) - \hat{\Phi}(\bolds{X}(t)) \mid \bolds{X}(t) = \bolds{x}} \leq (1 - \varepsilon)(a - a) + \varepsilon\cdot v_{\max} = \varepsilon v_{\max}.$
Since this bound holds for any $t$, it is also true for the limiting distribution. We thus conclude the drift $D(\bolds{x}) \leq \varepsilon v_{\max}$ when $\Phi(\bolds{x}) \leq a - v_{\whp}$ and $\bolds{x} \in \set{G}.$
\item  $\Phi(\bolds{x}) > a + v_{\whp}$: since $a \geq B - v_{\whp},$ $\Phi(\bolds{x}) \geq B$. By Condition (ii) of the lemma, conditioning on $\bolds{X}(t) = \bolds{x}$, with probability at least $1 - \varepsilon$, $\Phi(\bolds{X}(t+1)) \leq \Phi(\bolds{x}) - \Delta$, which gives 
\begin{align*}
\hat{\Phi}(\bolds{X}(t+1)) - \hat{\Phi}(\bolds{X}(t)) &= \max(a, \Phi(\bolds{X}(t+1))) - \max(a,\Phi(\bolds{x})) \\
&= \max(a, \Phi(\bolds{x}) - \Delta) - \Phi(\bolds{x}) \leq -\Delta,
\end{align*}
where the last inequality is because $\Phi(\bolds{x}) \geq a + v_{\whp}$ and $v_{\whp} \geq \Delta$ by assumption. By \eqref{eq:as-bound}$, \hat{\Phi}(\bolds{X}(t+1)) - \hat{\Phi}(\bolds{X}(t)) \leq v_{\max}$ almost surely. Therefore, \[\expect{\hat{\Phi}(\bolds{Y}(t)) - \hat{\Phi}(\bolds{X}(t)) \mid \bolds{X}(t) = \bolds{x}} \leq (1 - \varepsilon)(-\Delta) + \varepsilon\cdot v_{\max} = -\Delta + \varepsilon\Delta + \varepsilon v_{\max} \leq -\Delta + 2\varepsilon v_{\max},\]
where the last inequality is by the assumption that $\Delta \leq v_{\whp} \leq v_{\max}.$ Taking $t \to \infty$ shows that the drift $D(\bolds{x}) \leq -\Delta + 2\varepsilon v_{\max}$ when $\Phi(\bolds{x}) \geq a + v_{\whp}$ and $\bolds{x} \in \set{G}.$
\item $\Phi(\bolds{x}) \in (a - v_{\whp}, a + v_{\whp}]:$ for this case, recall from \eqref{eq:as-bound} that $\hat{\Phi}(\bolds{X}(t+1)) - \hat{\Phi}(\bolds{X})(t)) \leq \max(0, \Phi(\bolds{X}(t+1)) - \Phi(\bolds{X}(t))).$ By Condition (i) of the lemma, conditioning on $\bolds{X}(t) = \bolds{x} \in \set{G}$, with probability at least $1-\varepsilon$, the difference $\Phi(\bolds{X}(t+1)) - \Phi(\bolds{X}(t))$ is at most $v_{\whp}.$ Therefore, 
\[\expect{\hat{\Phi}(\bolds{Y}(t)) - \hat{\Phi}(\bolds{X}(t)) \mid \bolds{X}(t) = \bolds{x}} \leq (1 - \varepsilon)v_{\whp} + \varepsilon\cdot v_{\max} \leq v_{\whp} + \varepsilon v_{\max},\]
which implies $D(\bolds{x}) \leq v_{\whp} + \varepsilon v_{\max}$ when $\Phi(\bolds{x}) \in (a-v_{\whp},a+v_{\whp}]$ and $\bolds{x} \in \set{G}.$
\end{itemize}
Applying the above three cases to \eqref{eq:stationary} gives
\begin{align*}
0 &\leq \Pr\left\{\Phi(\bolds{X}(\infty)) \leq a - v_{\whp}\right\}(\varepsilon v_{\max}) + \Pr\left\{\Phi(\bolds{X}(\infty)) > a + v_{\whp}, \bolds{X}(\infty) \in \set{G}\right\}(-\Delta + 2\varepsilon v_{\max}) \\
&\hspace{0.2in}+ \Pr\left\{\Phi(\bolds{X}(\infty)) \in (a - v_{\whp}, a + v_{\whp}]\right\}(v_{\whp} + \varepsilon v_{\max}) + \Pr\left\{\bolds{X}(\infty) \not \in \set{G}\right\} v_{\max}. 
\end{align*}
Since $\Pr\left\{\Phi(\bolds{X}(\infty)) > a + v_{\whp}, \bolds{X}(\infty) \in \set{G}\right\} \geq \Pr\left\{\Phi(\bolds{X}(\infty)) > a + v_{\whp}\right\} - \Pr\{\bolds{X}(\infty) \not \in \set{G}\}$ by the union bound and $\Pr\{\bolds{X}(\infty) \not \in \set{G}\} \leq \varepsilon$ by the lemma condition, the above inequality simplifies to 
\begin{align*}
0 &\leq -\Delta\Pr\left\{\Phi(\bolds{X}(\infty)) > a + v_{\whp}\right\} + v_{\whp}\Pr\left\{\Phi(\bolds{X}(\infty)) \in (a - v_{\whp}, a + v_{\whp}]\right\}+ 6\varepsilon v_{\max} \\
&= -\Delta\Pr\left\{\Phi(\bolds{X}(\infty)) > a + v_{\whp}\right\} + v_{\whp}\left(\Pr\left\{\Phi(\bolds{X}(\infty)) > a - v_{\whp}\right\} - \Pr\left\{\Phi(\bolds{X}(\infty)) > a + v_{\whp}\right\}\right)\\
&\hspace{0.2in} + 6\varepsilon v_{\max}.
\end{align*}
Therefore, by rearranging terms and recalling the assumption that $v_{\whp} \geq 1$, we obtain
\[
\Pr\left\{\Phi(\bolds{X}(\infty)) > a + v_{\whp}\right\} \leq \frac{v_{\whp}}{v_{\whp} + \Delta}\Pr\left\{\Phi(\bolds{X}(\infty)) > a - v_{\whp}\right\} + \frac{6\varepsilon v_{\max}}{v_{\whp}}.
\]
\end{proof}
\begin{proof}[Proof of Lemma~\ref{lem:tail-bound}]
Without loss of generality, we assume $v_{\whp} \leq v_{\max}$ in the proof; otherwise the result of the lemma  is implied by the case with $v_{\whp} = v_{\max}$. For any $\ell \geq 0$, applying Lemma~\ref{lem:itera-bound}  iteratively with $a = B + 2v_{\whp}\ell - v_{\whp}, B + 2v_{\whp}(\ell-1) - v_{\whp},\ldots,B-v_{\whp}$  gives
\begin{align*}
\Pr\{\Phi(\bolds{X}(\infty)) > B + 2v_{\whp} \ell\} &\leq \frac{v_{\whp}}{v_{\whp}+\Delta}\Pr\{\Phi(\bolds{X}(\infty)) > B + 2v_{\whp}(\ell-1)\} + \frac{6\varepsilon v_{\max}}{v_{\whp}} \\
&\hspace{-2in}\leq \left(\frac{v_{\whp}}{v_{\whp}+\Delta}\right)^2 \Pr\{\Phi(\bolds{X}(\infty)) > B + 2v_{\whp}(\ell-2)\} + \left(1 + \frac{v_{\whp}}{v_{\whp}+\Delta}\right)\frac{6\varepsilon v_{\max}}{v_{\whp}} \\
&\hspace{-2in}\leq \left(\frac{v_{\whp}}{v_{\whp}+\Delta}\right)^2 \Pr\{\Phi(\bolds{X}(\infty)) > B + 2v_{\whp}(\ell-2)\} + 2\frac{6\varepsilon v_{\max}}{v_{\whp}} \\
&\hspace{-2in}\leq  \cdots \\
&\hspace{-2in}\leq  \left(\frac{v_{\whp}}{v_{\whp}+\Delta}\right)^{\ell+1} \Pr\{\Phi(\bolds{X}(\infty)) > B - 2v_{\whp}\} + \frac{6(\ell+1)\varepsilon v_{\max}}{v_{\whp}}.
\end{align*}
\end{proof}

\subsection{Existing Tail Bounds based on Moment Generating Function Do not Supersede Lemma~\ref{lem:tail-bound} (Section~\ref{sec:stochastic})}\label{app:comp-hajek}
This section illustrates why the bound in \cite{hajek1982hitting} can still give a suboptimal bound based on the conditions in Lemma~\ref{lem:tail-bound}. To set the stage, fix a real-value Markov process $\{\bolds{Y}(t)\}_{t \geq 1}$ (view it as the process of the Lyapunov function). Recall that Lemma~\ref{lem:tail-bound} essentially requires three conditions for the drift $Y(t+1)-Y(t)$:
\begin{itemize}
\item[(i)] with probability $1-\varepsilon$, it changes by at most $v_{\whp}$; 
\item [(ii)] if $Y(t) \geq B$, then with probability $1 - \varepsilon$ the drift is at most $-\Delta$; and 
\item [(iii)] the drift is almost surely upper bounded by $v_{\max}$.
\end{itemize}
With these conditions, Lemma~\ref{lem:tail-bound} shows that 
\[
\Pr\{Y(\infty) > B + 2v_{\whp}\ell\} \leq \left(\frac{v_{\whp}}{v_{\whp}+\Delta}\right)^{\ell+1} + \frac{6(\ell+1)\varepsilon v_{\max}}{v_{\whp}}.
\]
In our setting we take $\varepsilon\approx 1/N^2$; $v_{\whp}, \Delta, B \approx \sqrt{N}; \ell \approx \ln N$; and $v_{\max} \approx N$ and shows that the considered Lyapunov function is greater than $\tilde{O}(\sqrt{N})$ with probability at most $\tilde{O}(1/N)$.

We have argued why a tail bound that only relies on an absolute drift bound $v_{\max}$ (like the one in \cite{bertsimas2001performance}) leads to suboptimal guarantee (it would conclude that the Lyapunov function is greater than $O(v_{\max})=O(N)$ with low probability). However, there are bounds that focus on the moment generating function of the drift, such as the below result from Theorem 2.3 of \cite{hajek1982hitting}. It is apriori unclear whether this result can also give the desired $\tilde{O}(\sqrt{N})$ bound.
\begin{fact}\label{fact:hajek}
Suppose there exist $a,\eta,\rho,D$ such that (1) $\expect{e^{\eta (Y(t+1)-Y(t))}; Y(t) > a \mid Y(t)} \leq \rho$ and (2) $\expect{e^{\eta (Y(t+1)-a)}; Y(t) \leq a \mid Y(t)} \leq D$. Then $\Pr\{Y(t) \geq b\} \leq \rho^t e^{\eta(Y(0)-b)}+\frac{1-\rho^t}{1-\rho}De^{\eta(a-b)}$ for any $b$.
\end{fact}
To see why the above prior result does not give the desired bound, recall that we aim to show that the Lyapunov function is less than $\tilde{O}(\sqrt{N})$ with high probability in steady state. Therefore, in Fact~\ref{fact:hajek}, we need to take $a \approx \sqrt{N}, b \approx \sqrt{N}$ and let $t \to \infty$. Under the above conditions (i)-(iii), in the worst case $\expect{e^{\eta (Y(t+1)-Y(t))}; Y(t) > a \mid Y(t)}$ can be as large as $\varepsilon e^{\eta v_{\max}}$. For the bound in the fact to be meaningful, the constant $\rho$ must be less than one, which implies
\[
1 > \rho \geq \expect{e^{\eta (Y(t+1)-Y(t))}; Y(t) > a \mid Y(t)} \geq \varepsilon e^{\eta v_{\max}}.
\]
Since we want $\varepsilon = \tilde{O}(1/N)$, the constant $\eta$ should be set to at most $\ln(1/\varepsilon)/v_{\max} \approx \ln(N) / N$. In this case, Fact~\ref{fact:hajek} can at best give an upper bound
\begin{equation}
\Pr\{Y(\infty) \geq \sqrt{N}\ln N\} \leq \frac{D}{1-\rho}e^{-\eta \sqrt{N}} \lesssim e^{-\ln(N)\sqrt{N} / N} = \Theta(1),
\end{equation}
which is much looser than the $\tilde{O}(1/N)$ bound implied by Lemma~\ref{lem:tail-bound}.

\subsection{Proof of Lemma~\ref{lem:drift-bound} (Section~\ref{sec:stochastic})}\label{app:lem-drift-bound}
The proof requires an additional lemma that lower bounds the number of jobs that leave the system without getting service in each period due to abandonment. Specifically, for a state $i$, let $F_i(t)=Z_i(t) - \sum_{k \in \child(i)} Q_k(t+1)$ be the number of state-$i$ jobs transitioning into the $\perp$ state at the end of period $t$  (abandoning). Given a minimum abandonment probability $\theta > 0$ and a set of states $\set{X}$, the below concentration result shows that $\sum_{i \in \set{X}} F_i(t)$ is approximately lower bounded by $\theta \sum_{i \in \set{X}} Z_i(t).$
\begin{lemma}\label{lem:bound-abandon}
For any set of states $\set{X}$, period $t$ and $\delta \in (0,e^{-1}]$, 
\[
\Pr\left\{\sum_{i \in \set{X}} F_i(t) \geq \theta\sum_{i \in \set{X}} Z_i(t) - 
2\sqrt{N/\theta}\ln \frac{1}{\delta}\right\} \geq 1 - 2\delta^2.
\]
\end{lemma}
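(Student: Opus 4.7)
The approach is to condition on the randomness up to and including the service decision in period $t$, which reveals $\bolds{Z}(t) = (Z_i(t))_{i \in \set{S}}$, and then exploit the conditional independence of the abandonment events. Specifically, after service in period $t$, each of the $Z_i(t)$ remaining state-$i$ jobs independently transitions to the empty state $\perp$ with probability $q_i := 1 - \sum_{k \in \child(i)} P(i,k) \geq \theta$. Consequently, conditional on $\bolds{Z}(t)$, the sum $\sum_{i \in \set{X}} F_i(t)$ is a sum of $n := \sum_{i \in \set{X}} Z_i(t)$ independent (though not identically distributed) Bernoulli random variables bounded in $[0,1]$, whose conditional mean is $\sum_{i \in \set{X}} q_i Z_i(t) \geq \theta \sum_{i \in \set{X}} Z_i(t)$.

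Given this representation, I would apply Hoeffding's inequality, which yields
\[
\Pr\Big\{\sum_{i \in \set{X}} F_i(t) < \theta \sum_{i \in \set{X}} Z_i(t) - s ~\Big|~ \bolds{Z}(t)\Big\} \leq \exp\!\left(-\frac{2 s^2}{n}\right).
\]
To convert this conditional bound into an unconditional one, I would invoke the almost-sure part of Lemma~\ref{lem:queue-upper-bound}: since $\sum_{i \in \set{X}} Z_i(t) \leq \sum_{i \in \set{X}} Q_i(t) \leq N\min(L, |\set{X}|) \leq NL$ holds deterministically, we have $n \leq NL$ pointwise. Setting $s = \sqrt{NL \ln(1/\delta)}$ therefore gives $\exp(-2s^2/n) \leq \exp(-2\ln(1/\delta)) = \delta^2$, uniformly in $\bolds{Z}(t)$, and taking expectation over $\bolds{Z}(t)$ yields the claimed bound.

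Once the right conditioning is chosen, the argument is essentially a one-line application of a standard concentration inequality, so there is no major obstacle. The only two points that require a little care are (i) verifying that the abandonment events are conditionally independent, which is immediate because jobs transition independently according to the given Markov chain, and (ii) ensuring that the ceiling on $n$ is a \emph{pointwise} (almost-sure) bound rather than merely a high-probability one, since otherwise the bad event on which the queue-length bound fails would contaminate the $\delta^2$ tail. This is precisely why the almost-sure statement in Lemma~\ref{lem:queue-upper-bound} is stated in that form and is available for use here.
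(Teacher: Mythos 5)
Your proposal is correct and follows essentially the same route as the paper's proof: condition on $\bolds{Z}(t)$, observe that the abandonment indicators are conditionally independent Bernoullis with mean at least $\theta$ each, apply Hoeffding's inequality, and use the almost-sure bound $\sum_{i \in \set{X}} Z_i(t) \leq NL$ from Lemma~\ref{lem:queue-upper-bound} to make the tail uniform before integrating out $\bolds{Z}(t)$. The only cosmetic difference is that the paper applies Hoeffding with deviation $\sqrt{n\ln\frac{1}{\delta}}$ at the conditional level and then weakens the threshold to $\sqrt{NL\ln\frac{1}{\delta}}$, whereas you plug in $s = \sqrt{NL\ln\frac{1}{\delta}}$ directly; both yield the same $\delta^2$ bound.
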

\begin{proof}
Letting $\theta_i = 1 - \sum_{k \in \child(i)} P(i,k)$ be the abandonment probability for a state-$i$ job, $F_i(t)$ is a Binomial distribution with $Z_i(t)$ trials and success probability $\theta_i$. Moreover, $F_i(t), i\in \set{X},$ are independent because each job transitions according to the Markov chain independently. As a result, the sum $\sum_{i \in \set{X}} F_i(t)$ is the sum of $\sum_{i \in \set{X}} Z_i(t)$ independent random variables taking value in $\{0,1\}$. Conditioning on $\sum_{i \in \set{X}} Z_i(t) = n$, the conditional expectation of $\sum_{i \in \set{X}} F_i(t)$ is at least $n\theta$. Applying Hoeffding's Inequality (Fact~\ref{fact:hoeffding}) gives
\begin{equation}\label{eq:bound-abandon-cond}
\Pr\left\{\sum_{i \in \set{X}} F_i(t) \geq n\theta - \sqrt{n\ln \frac{1}{\delta}}\mid \sum_{i \in \set{X}} Z_i(t) = n\right\} \geq 1 - \exp\left(\frac{-2n\ln \frac{1}{\delta}}{n}\right) = 1 - \delta^2.
\end{equation}
Moreover, letting $M = \left\lfloor 4(N/\theta)\ln\frac{1}{\delta}\right\rfloor$, the sum $\sum_{i \in \set{X}} Z_i(t)$ is no larger than $M$ with high probability: 
\begin{align*}
\Pr\left\{\sum_{i \in \set{X}} Z_i(t) \leq M\right\} &\geq \Pr\left\{\sum_{i \in \set{X}} Q_i(t) \leq M\right\} \tag{by $Z_i(t) \leq Q_i(t)$}\\
&=  \Pr\left\{\sum_{i \in \set{X}} Q_i(t) \leq 4(N/\theta)\ln\frac{1}{\delta}\right\} \tag{by the fact that $Q_i(t)$ is an integer}\\
&\geq \Pr\left\{\sum_{i \in \set{X}} Q_i(t) \leq (N/\theta) + 3\sqrt{N/\theta}\ln\frac{1}{\delta}\right\} \geq 1 - \delta^2 \tag{by Lemma~\ref{lem:queue-upper-bound}}.
\end{align*}
Combining the above inequality with the law of total probability,
\begin{align*}
&\Pr\left\{\sum_{i \in \set{X}} F_i(t) \geq \theta\sum_{i \in \set{X}} Z_i(t) - 2\sqrt{N/\theta}\ln \frac{1}{\delta}\right\} \\
&\geq \sum_{n=0}^{M}\Pr\left\{\sum_{i \in \set{X}} F_i(t) \geq \theta\sum_{i \in \set{X}} Z_i(t) - 2\sqrt{N/\theta}\ln \frac{1}{\delta} \mid \sum_{i \in \set{X}} Z_i(t) = n\right\} \Pr\left\{\sum_{i \in \set{X}} Z_i(t) = n\right\} \\
&\geq \sum_{n=0}^{M} \Pr\left\{\sum_{i \in \set{X}} F_i(t) \geq n\theta - \sqrt{n\ln \frac{1}{\delta}} \mid \sum_{i \in \set{X}} Z_i(t) = n\right\} \Pr\left\{\sum_{i \in \set{X}} Z_i(t) = n\right\} \\
&\geq (1-\delta^2)\sum_{n=0}^{M} \Pr\left\{\sum_{i \in \set{X}} Z_i(t) = n\right\} \geq (1-\delta^2)(1-\delta^2) \geq 1-2\delta^2.
\end{align*}
\end{proof}

\begin{proof}[Proof of Lemma~\ref{lem:drift-bound}]
    Recall that $\set{T} = \Top(\bo_{[m]})$ and thus $\sub(\set{T}) = \sub(\bo_{[m]}).$ We define the following good event $\set{G}$ of system states such that if $\bolds{X}(t) = (\bolds{Q}(t), R(t),\bolds{Q}(t+1)) \in \set{G}$, then
\begin{enumerate}[label=\textnormal{(G-\arabic*)}]
\item $\sum_{i \in \set{T}} Q_i(t+1) \leq \lambda N \sum_{i \in \set{T}} \pi(i) + 3\sqrt{N/\theta}\ln\frac{1}{\delta}$;\label{item:G-p2}
\item $\sum_{i \in \sub(\set{T})} f_i(\bolds{X}(t)) \geq \theta \sum_{i \in \sub(\set{T})} z_i(\bolds{X}(t)) - 2\sqrt{N/\theta}\ln\frac{1}{\delta}$, where we denote the number of abandoning state-$i$ jobs by $f_i(\bolds{X}(t)) = z_i(\bolds{X}(t)) - \sum_{k \in \child(i)} Q_k(t)$. \label{item:G-p3}
\end{enumerate}
We note that $\Pr\{\bolds{X}(\infty) \in \set{G}\} \geq 1 -3\delta^2.$ To see this, $X(\infty)$ satisfies \ref{item:G-p2} with probability at least $1 - \delta^2$ by Lemma~\ref{lem:queue-upper-bound} and it satisfies \ref{item:G-p3} with probability $1 - 2\delta^2$ by Lemma~\ref{lem:bound-abandon}. Using a union bound gives $\Pr\{\bolds{X}(\infty) \in \set{G}\} \geq 1 - 3\delta^2.$ We thus showed (i) of the lemma.

To show (ii), for any $t$, condition on $\set{X}(t)$ and define an event $\set{G}^{\text{service}}$ where the number of available services is not far below its mean: 
\begin{equation}\label{eq:def-Gr}
\set{G}^{\text{service}} = \left\{R(t+1) \geq \mu \cdot N - \sqrt{N\ln\frac{1}{\delta}}\right\}. 
\end{equation}
By Hoeffding's Inequality (Fact~\ref{fact:hoeffding}), $\Pr\{\set{G}^{\text{service}} | \bolds{X}(t)\} \geq 1 - \delta^2$ since $R(t+1)$ is a Binomial distribution with mean $N\mu$ and is independent from $\bolds{X}(t)$.

Given $\bolds{X}(t) \in \set{G}$ and $\set{G}^{\text{service}}$, we show \eqref{eq:self-reflected} in Lemma~\ref{lem:drift-bound} by expanding the drift:
\begin{align}
\Phi(\bolds{X}(t+1)) - \Phi(\bolds{X}(t)) &= \sum_{i \in \sub(\set{T})} Z_i(t+1) - \sum_{i \in \sub(\set{T})} Z_i(t) \nonumber\\
&= \sum_{i \in \sub(\set{T})} (Q_i(t+1)-R_i(t+1))-\sum_{i \in \sub(\set{T})} Z_i(t) \nonumber\\
&\hspace{-0.5in}= \sum_{i \in \set{T}} Q_i(t+1) +\sum_{i \in \sub(\set{T}) \setminus \set{T}} Q_i(t+1)  - \sum_{i \in \sub(\set{T})} R_i(t+1) -  \sum_{i \in \sub(\set{T})} Z_i(t). \label{eq:drift-decompose}
\end{align}
By the rule of $\prio(\bo)$ in \eqref{eq:priority-service}, the $R(t)$ available services must first serve jobs with states in $\sub(\set{T})$ before they are used to serve jobs in $\set{S} \setminus \sub(\set{T}).$ As a result, there are two cases:
\begin{itemize}
\item $\sum_{i \in \sub(\set{T})} R_i(t+1) = \sum_{i \in \sub(\set{T})} Q_i(t+1)$: in this case the available services are sufficient to serve all jobs in $\sub(\set{T})$. The first three terms in \eqref{eq:drift-decompose} cancel out, showing that in this case
\[
\Phi(\bolds{X}(t+1)) - \Phi(\bolds{X}(t)) \leq -\sum_{i \in \sub(\set{T})} Z_i(t).
\]

\item $\sum_{i \in \sub(\set{T})} R_i(t+1) < \sum_{i \in \sub(\set{T})} Q_i(t+1)$: in this case $\sum_{i \in \sub(\set{T})} R_i(t+1) = R(t+1)$. This is because the jobs with states in $\sub(\set{T})$ must take all $R(t)$ available services (otherwise we can increase the $R_i(t+1)$ for some $i \in \sub(\set{T})$). Putting this into \eqref{eq:drift-decompose} gives
\begin{align}
&\Phi(\bolds{X}(t+1)) - \Phi(\bolds{X}(t))  \leq \sum_{i \in \set{T}} Q_i(t+1) + \sum_{i \in \sub(\set{T}) \setminus \set{T}} Q_i(t+1) - R(t+1) - \sum_{i \in \sub(\set{T})} Z_i(t) \label{eq:drift-decompose-2}
\end{align}
Recall the definition of $m$ that $\lambda \sum_{i \in \set{T}} \pi(i) \leq \mu.$ Combining this bound with  Property \ref{item:G-p2} of event $\set{G}$ and the definition of event $\set{G}^{\text{service}}$ shows
\begin{equation}\label{eq:bound-q-r}
\sum_{i \in \set{T}} Q_i(t+1) - R(t+1) \leq N\lambda\sum_{i \in \set{T}} \pi(i) - N\mu + 3\sqrt{N/\theta}\ln\frac{1}{\delta} + \sqrt{N\ln\frac{1}{\delta}} \leq 4\sqrt{N/\theta}\ln\frac{1}{\delta}. 
\end{equation}
Putting \eqref{eq:bound-q-r} into \eqref{eq:drift-decompose-2} shows
\[
\Phi(\bolds{X}(t+1))-\Phi(\bolds{X}(t)) \leq \sum_{i \in \sub(\set{T}) \setminus \set{T}} Q_i(t+1) - \sum_{i \in \sub(\set{T})} Z_i(t) + 4\sqrt{N/\theta}\ln\frac{1}{\delta}. 
\]
To upper bound the right hand side, the set of parents of $\sub(\set{T}) \setminus \set{T}$ is a subset of $\sub(\set{T})$. Therefore,
\begin{align*}
\Phi(\bolds{X}(t+1))-\Phi(\bolds{X}(t)) &\leq \sum_{i \in \sub(\set{T}) \setminus \set{T}} Q_i(t+1)  - \sum_{i \in \sub(\set{T})} Z_i(t) + 4\sqrt{N/\theta}\ln\frac{1}{\delta} \\
&= \sum_{i \in \sub(\set{T})} \sum_{k \in \child(i)} Q_k(t+1) - \sum_{i \in \sub(\set{T})} Z_i(t) + 4\sqrt{N/\theta}\ln\frac{1}{\delta} \\
&\overset{(a)}{=} -\sum_{i \in \sub(\set{T})} F_i(t) + 4\sqrt{N/\theta}\ln\frac{1}{\delta}  \\
&\hspace{-1in}\overset{(b)}{\leq} -\theta \sum_{i \in \sub(\set{T})} Z_i(t) + 6\sqrt{N/\theta}\ln\frac{1}{\delta} =-\theta \Phi(\bolds{X}(t))+ 6\sqrt{N/\theta}\ln\frac{1}{\delta}.
\end{align*}
where $(a)$ is by the definition of the number of abandoning state-$i$ jobs $F_i(t)$ and $(b)$ is by Property \ref{item:G-p3} of event $\set{G}$.
\end{itemize}
Summarizing the above two cases shows that if $\bolds{X}(t) \in \set{G}$ and $\set{G}^{\text{service}}$ happen, then \eqref{eq:self-reflected} holds true. 

Lastly, since the parents of all states in $\sub(\set{T}) \setminus \set{T}$ must be in $\sub(\set{T})$, $\sum_{i \in \sub(\set{T}) \setminus \set{T}} Q_i(t+1) \leq \sum_{i \in \sub(\set{T})} Z_i(t)$. As a result, $\Phi_h(\bolds{X}(t+1)) - \Phi_h(\bolds{X}(t)) \leq \sum_{i \in \set{T}} Q_{\focNode}(t+1) \leq NL$ by \eqref{eq:drift-decompose} and Lemma~\ref{lem:queue-upper-bound}. Therefore, the drift has an almost sure upper bound $NL$.
\end{proof}

\subsection{Proof of Lemma~\ref{lem:iterative-ssc} (Section~\ref{sec:stochastic})}\label{app:lem-iterative-ssc}

\begin{proof}[Proof of Lemma~\ref{lem:iterative-ssc}]
The proof is by applying Lemmas~\ref{lem:tail-bound} and \ref{lem:drift-bound}. Lemma~\ref{lem:drift-bound} shows that there exists a good event $\set{G}$ with $\Pr\{\bolds{X}(\infty) \in \set{G}\} \geq 1 -3\delta^2$, such that conditioning on $\bolds{X}(t) \in \set{G}$, \eqref{eq:self-reflected} holds with probability at least $1 - \delta^2$. Take $\varepsilon = 3 \delta^2, \Delta = v_{\whp} = 6\sqrt{N/\theta}\ln\frac{1}{\delta}, B = \frac{2\Delta}{\theta}, v_{\max} = NL.$ Conditions of Lemma~\ref{lem:tail-bound} are met because $\Pr\{\bolds{X}(\infty) \in \set{G}\} \geq 1 - \varepsilon, \expect{\Phi(\bolds{X}(\infty))} < +\infty,$ and 
\begin{enumerate}
    \item [(i)] if $\bolds{X}(t) \in \set{G}$, with probability at least $1 - \delta^2 \geq 1 - \varepsilon$ conditioning on $\bolds{X}(t)$, \eqref{eq:self-reflected} holds and implies that $\Phi(\bolds{X}(t+1)) - \Phi(\bolds{X}(t)) \leq 6\sqrt{N/\theta}\ln\frac{1}{\delta} = v_{\whp}$;
    \item [(ii)] 
    if $\bolds{X}(t) \in \set{G}$ and $\Phi(\bolds{X}(t)) \geq B$, with probability at least $1 - \delta^2 \geq 1 - \varepsilon$ conditioning on $\bolds{X}(t)$, \eqref{eq:self-reflected} gives
        $\Phi(\bolds{X}(t+1)) - \Phi(\bolds{X}(t)) \leq -\theta \Phi(\bolds{X}(t)) + 6\sqrt{N/\theta}\ln \frac{1}{\delta} \leq -\theta \cdot B +  \Delta = -\Delta.$
    \item [(iii)] $\Phi_h(\bolds{X}(t+1)) - \Phi_h(\bolds{X}(t)) \leq v_{\max} = NL$ by the last result of Lemma~\ref{lem:drift-bound}.
\end{enumerate}
As a result, Lemma~\ref{lem:tail-bound} applies. Taking $\ell = \lfloor 3 \ln \frac{1}{\delta}\rfloor$ (thus $\ell + 1 \geq 2\ln \frac{1}{\delta}$) in Lemma~\ref{lem:tail-bound} gives
\begin{align*}
\Pr\left\{\Phi_h(\bolds{X}(\infty)) > B + 6v_{\whp}\ln\frac{1}{\delta}\right\} \leq 2^{-(\ell + 1)} + 6(\ell + 1)\varepsilon v_{\max} &\leq \delta^2 + 6\left(3\ln \frac{1}{\delta} + 1\right)3\delta^2NL\\
&\hspace{-0.8in}\leq (1 + 6\cdot (3 + 1) \cdot 3) \delta^2NL \ln\frac{1}{\delta} = 73NL\delta^2\ln\frac{1}{\delta},
\end{align*}
where we use the assumption that $\delta \leq e^{-1}.$ Moreover, 
\[
B + 6v_{\whp}\ln \frac{1}{\delta} \leq \left(\frac{2}{\theta} + 6\ln \frac{1}{\delta}\right)6\sqrt{N/\theta}\ln\frac{1}{\delta} \leq 48\theta^{-1.5}\sqrt{N}\ln^2\frac{1}{\delta},
\]
which finishes the proof.
\end{proof}

\subsection{Proof of Lemma~\ref{lem:connect-partial} (Section~\ref{sec:stochastic})}\label{app:lem-connect-partial}
Recall the degeneracy parameter $\kappa = 1 - \sum_{i \in \Top(\bo_{[m]}) \cap \sub(\partialNode)} \pi(i) / \pi(\partialNode)$ in \eqref{eq:def-degeneracy}. We first lower bound it by the minimum abandonment probability $\theta$.
\begin{lemma}\label{lem:degen-aban}
The minimum abandonment probability $\theta$ lower bounds the degeneracy parameter $\kappa$.
\end{lemma}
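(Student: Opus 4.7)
The plan is to interpret the sum in the definition of $\kappa$ as a probability associated with a single Markov-chain trajectory starting at $\partialNode$, and then to isolate the abandonment probability at $\partialNode$ itself as a source of slack. If $\partialNode = \perp$ then $\kappa = +\infty \geq \theta$ by convention, so I will focus on the case $\partialNode \in \set{S}$.

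First I would observe that $\partialNode = o_{m+1}$, so $\partialNode \notin \bo_{[m]}$, and since $\Top(\bo_{[m]}) \subseteq \bo_{[m]}$ this gives $\partialNode \notin \Top(\bo_{[m]})$. Consequently every $i \in \set{T}' := \Top(\bo_{[m]}) \cap \sub(\partialNode)$ is a \emph{proper} descendant of $\partialNode$, so in particular $i \in \sub(k)$ for some $k \in \child(\partialNode)$. Next, because $\set{T}'$ is a subset of the top set $\Top(\bo_{[m]})$, condition (iii) of \eqref{eq:top-set} guarantees that its elements are pairwise non-ancestor. In the tree-structured Markov chain this means that the events $\{\text{an unserved trajectory from }\partialNode\text{ passes through state }i\}$ for different $i \in \set{T}'$ are mutually exclusive.

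The key step is then to identify the sum $\sum_{i \in \set{T}'} \pi(i)/\pi(\partialNode)$ with the probability that a trajectory started at $\partialNode$ eventually visits some state in $\set{T}'$. Indeed, $\pi(i)/\pi(\partialNode) = \prod_{a \in \anc(i)\setminus\anc(\partialNode)} p(a)$ is precisely the probability of reaching $i$ from $\partialNode$ along the unique ancestral path, and by the previous paragraph these probabilities simply add. Since any such trajectory must first take a transition from $\partialNode$ into $\child(\partialNode)$ before reaching any proper descendant, this probability is at most $\sum_{k \in \child(\partialNode)} P(\partialNode, k) \leq 1 - \theta$, where the last inequality is the definition of $\theta$.

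Plugging this bound into the definition of $\kappa$ in \eqref{eq:def-degeneracy} yields $\kappa = 1 - \sum_{i \in \set{T}'} \pi(i)/\pi(\partialNode) \geq 1 - (1-\theta) = \theta$, as required. I do not anticipate a real obstacle here; the only thing to be careful about is verifying that $\partialNode$ itself does not belong to $\set{T}'$ (otherwise the sum would contain a term equal to $1$), which is why the first observation $\partialNode \notin \Top(\bo_{[m]})$ is essential.
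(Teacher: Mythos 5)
Your proof is correct. The case split on $\partialNode = \perp$, the observation that $\partialNode = o_{m+1} \notin \bo_{[m]} \supseteq \Top(\bo_{[m]})$ (so every element of $\set{T}' = \Top(\bo_{[m]}) \cap \sub(\partialNode)$ is a proper descendant of $\partialNode$), and the final chain $\sum_{i \in \set{T}'} \pi(i)/\pi(\partialNode) \leq \sum_{k \in \child(\partialNode)} P(\partialNode,k) \leq 1-\theta$ all check out. The only place where you diverge from the paper is in how you justify the middle inequality $\sum_{i \in \set{T}'} \pi(i) \leq \sum_{k \in \child(\partialNode)} \pi(k)$: the paper applies its general Lemma~\ref{lem:top-prob} with $\set{X} = \sub(\child(\partialNode))$ and $\set{Y} = \set{T}'$ (noting $\Top(\set{X}) = \child(\partialNode)$ and $\Top(\set{Y}) = \set{Y}$), whereas you argue it directly: condition (iii) of \eqref{eq:top-set} makes $\set{T}'$ an antichain, so the reach-events from $\partialNode$ are mutually exclusive and their total probability is dominated by the probability of first stepping into $\child(\partialNode)$. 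Your route is more self-contained and arguably more transparent probabilistically; the paper's route is a one-line invocation of a lemma it has already proved by induction and reuses elsewhere (e.g., in the proofs of Lemmas~\ref{lem:optimal-fluid} and~\ref{lem:queue-upper-bound}). Both are valid, and your care in checking $\partialNode \notin \set{T}'$ is exactly the point that makes the bound $1-\theta$ rather than $1$.
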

\begin{proof}
Recall that $\set{T}_m = \Top(\bo_{[m]})$ and thus $\kappa = 1 - \sum_{i \in \set{T}_m \cap \sub(\partialNode)} \pi(i) / \pi(\partialNode)$ from \eqref{eq:def-degeneracy}. Showing $\kappa \geq \theta$ is equivalent to show $(1-\theta)\pi(\partialNode) \geq \sum_{i \in \set{T}_m \cap \sub(\partialNode)} \pi(i).$ 

To see why this is true, define the set $\set{X} = \sub(\child(\partialNode))$, which is the subtree of children of $\partialNode.$ Then $\Top(\set{X})$ consists of the children of $\partialNode$, i.e., $\Top(\set{X}) = \child(\partialNode).$ In addition, define $\set{Y} = \set{T}_m \cap \sub(\partialNode)$. The top set of $\set{Y}$ is itself since it is a subset of $\set{T}_m$, which is itself a top set of $\bo_{[m]}.$ As a result,
\[
\sum_{i \in \set{T}_m \cap \sub(\partialNode)} \pi(i) = \sum_{i \in \set{Y}} \pi(i) = \sum_{i \in \Top(\set{Y})} \pi(i) \overset{(a)}{\leq} \sum_{i \in \Top(\set{X})} \pi(i) = \sum_{i \in \child(\partialNode)} \pi(i) \overset{(b)}{\leq} (1-\theta)\pi(\partialNode),
\]
where inequality (a) is because $\set{Y}$ is a subset of $\set{X}$ and Lemma~\ref{lem:top-prob} applies; inequality (b) is because $\sum_{i \in \child(\partialNode)}\pi(i) = \sum_{i \in \child(\partialNode)} \pi(\partialNode)P(\partialNode,i) = \pi(\partialNode)\sum_{i \in \child(\partialNode)}P(\partialNode,i) \leq \pi(\partialNode)(1-\theta)$ as $\theta$ is the minimum abandonment probability. We thus finish the proof by establishing an equivalence of $\kappa \geq \theta.$
\end{proof}

\begin{proof}[Proof of Lemma~\ref{lem:connect-partial}.]
If $\partialNode = \perp$ the result trivially holds; we thus suppose $\partialNode \neq \perp$, which by definition is equal to $o_{m+1}.$ The limiting distribution $Z_{\partialNode}(\infty)$ satisfies $Z_{\partialNode}(\infty) = Q_{\partialNode}(\infty) - R_p(\infty)$ with 
\[
R_p(\infty) = \min\left\{Q_p(\infty), \left(R(\infty) - \sum_{i \in \bo_{[m]}} Q_i(\infty)\right)^+\right\}.
\]
Therefore, $Z_{\partialNode}(\infty) = \max\left(0, Q_{\partialNode}(\infty) - \left(R(\infty) - \sum_{i \in \bo_{[m]}} Q_i(\infty)\right)^+\right)$, implying
\begin{align}
Z_{\partialNode}(\infty) &\leq \max\left(0, Q_{\partialNode}(\infty) - R(\infty) + \sum_{i \in \bo_{[m]}} Q_i(\infty)\right) \nonumber \\
&\hspace{-0.7in}= \max\left(0, Q_{\partialNode}(\infty)  + \sum_{i \in \set{T}_m \setminus \sub(\partialNode)} Q_i(\infty) + \sum_{i \in \set{T}_m \cap \sub(\partialNode)} Q_i(\infty) + \sum_{i \in \bo_{[m]} \setminus \set{T}_m} Q_i(\infty) - R(\infty)\right).\label{eq:bound-zp}
\end{align}
Define a good event $\set{G}_z$ with the following three properties:
\begin{enumerate}[label=\textnormal{(Gz-\arabic*)}]
\item the available service is not far below its mean, i.e.,  $R(\infty) \geq N\mu - \sqrt{N\ln N}$; \label{item:gz-p1}
\item the number of  jobs with states in $\set{T}_m \setminus \sub(\partialNode)$ is not high, i.e., \[\sum_{i \in \set{T}_m \setminus \sub(\partialNode) \cup \{\partialNode\}} Q_i(\infty) \leq N\lambda \sum_{i \in \set{T}_m \setminus \sub(\partialNode) \cup \{\partialNode\}} \pi(i) + 3\sqrt{N/\theta}\ln N.\] \label{item:gz-p2}
\item there are few jobs with states in $\sub(\set{T}_m) \setminus \set{T}_m$, i.e., $\sum_{i \in \sub(\set{T}_m) \setminus \set{T}_m} Q_i(\infty) \leq \tilde{U}\sqrt{N}.$
\end{enumerate}
Each of the three properties happens with probability at least $1 - 1 / N$: the first is by Hoeffding's Inequality in Fact~\ref{fact:hoeffding}; the second is by Lemma~\ref{lem:queue-upper-bound}; and the third is by the lemma assumption and the fact that $\sum_{i \in \sub(\bo_{[m]})} Z_i(\infty)$  stochastically dominates $\sum_{i \in \sub(\set{T}_m) \setminus \set{T}_m} Q_i(\infty)$. Applying a union bound gives $\Pr\{\set{G}_z\} \geq 1 - 3 / N.$ Conditioning on $\set{G}_z$, \eqref{eq:bound-zp} gives
\begin{align*}
Z_p(\infty) &\leq \max\left(0, N\lambda \sum_{i \in \set{T}_m \setminus \sub(\partialNode)\cup \{\partialNode\}} \pi(i) + \sum_{i \in \set{T}_m \cap \sub(\partialNode)} Q_i(\infty)+4\sqrt{N/\theta}\ln N + \tilde{U}\sqrt{N} - N\mu\right) \\
&\leq N\lambda \sum_{i \in \set{T}_m \setminus \sub(\partialNode)\cup \{\partialNode\}} \pi(i) - N\mu + \sum_{i \in \set{T}_m \cap \sub(\partialNode)} Q_i(\infty)+4\sqrt{N/\theta}\ln N + \tilde{U}\sqrt{N}, 
\end{align*}
where the second inequality is because $\lambda \sum_{i \in \set{T}_m \setminus \sub(\partialNode)\cup \{\partialNode\}} \pi(i) > \mu$. Otherwise we can increase $m$ while keeping $\lambda \sum_{i \in \set{T}_m} \pi(i) \leq \mu$, which contradicts the definition that $m$ is the maximum such position. Since $Z_{\partialNode} \leq N$ almost surely and $\Pr\{\set{G}_z\} \geq 1 - 3 / N$, the above inequality implies
\begin{equation}\label{eq:bound-ex-z}
\expect{Z_{\partialNode}(\infty)} \leq N\lambda \sum_{i \in \set{T}_m \setminus \sub(\partialNode)\cup \{\partialNode\}} \pi(i) - N\mu + \expect{\sum_{i \in \set{T}_m \cap \sub(\partialNode)} Q_i(\infty)}+4\sqrt{N/\theta}\ln N + \tilde{U}\sqrt{N} + 3.
\end{equation}
For any state $i$ in the subtree of $\sub(\partialNode)$, its steady-state number of waiting jobs must satisfy $\expect{Q_i(\infty)} \leq \expect{Z_{\partialNode}(\infty)}\frac{\pi(i)}{\pi(\partialNode)}$ because (i) to be in state $i$ a job must first be in state $\partialNode$; (ii) the conditional probability of a state-$\partialNode$ job becoming a state-$i$ job is $\pi(i)/\pi(\partialNode)$ assuming the job does not leave because of getting service. As a result, $\expect{\sum_{i \in \set{T}_m \cap \sub(\partialNode)} Q_i(\infty)} \leq \expect{Z_{\partialNode}(\infty)}\sum_{i \in \set{T}_m \cap \sub(\partialNode)} \pi(i) / \pi(\partialNode).$ Putting it into \eqref{eq:bound-ex-z} gives
\[
\expect{Z_{\partialNode}(\infty)} \leq N\lambda \sum_{i \in \set{T}_m \setminus \sub(\partialNode)\cup \{\partialNode\}} \pi(i) - N\mu + \expect{Z_{\partialNode}(\infty)}\sum_{i \in \set{T}_m \cap \sub(\partialNode)} \pi(i) / \pi(\partialNode) +4\sqrt{N/\theta}\ln N + \tilde{U}\sqrt{N} + 3.
\]
Rearranging the term and recalling $\kappa = 1 - \sum_{i \in \set{T}_m \cap \sub(\partialNode)} \pi(i) / \pi(\partialNode)$, we obtain \[
\expect{Z_{\partialNode}(\infty)} \leq N\cdot \boxed{\frac{\lambda \sum_{i \in \set{T}_m \setminus \sub(\partialNode)\cup \{\partialNode\}} \pi(i) - \mu}{\kappa}} + \frac{4\sqrt{N/\theta}\ln N + \tilde{U}\sqrt{N} + 3}{\kappa}.
\]
For the boxed term, observe that 
\begin{align*}
z_{\partialNode}^{\bo} = \lambda \pi(\partialNode) - \nu_{\partialNode}^{\bo} = \lambda \pi(\partialNode) - \frac{\mu - \lambda \sum_{i \in \set{T}_m} \pi(i)}{\kappa} &= \frac{\lambda \pi(\partialNode) - \lambda \sum_{i \in \set{T}_m \cap \sub(\partialNode)}\pi(i)+\sum_{i \in \set{T}_m} \pi(i) - \mu}{\kappa} \\
&\hspace{-1in}= \frac{\lambda \pi(\partialNode) + \lambda \sum_{i \in \set{T}_m \setminus \sub(\set{\partialNode})} \pi(i)-\mu}{\kappa} = \boxed{\frac{\lambda \sum_{i \in \set{T}_m \setminus \sub(\set{\partialNode})\cup\{\partialNode\}} \pi(i)-\mu}{\kappa}}.
\end{align*}
We thus conclude $\expect{Z_{\partialNode}(\infty)} \leq Nz_{\partialNode}^{\bo} + \frac{4\sqrt{N/\theta}\ln N + \tilde{U}\sqrt{N} + 3}{\kappa} \leq Nz_{\partialNode}^{\bo} + \frac{4\sqrt{N/\theta}\ln N + \tilde{U}\sqrt{N} + 3}{\theta}$ by Lemma~\ref{lem:degen-aban}.
\end{proof}

\subsection{Proof of Lemma~\ref{lem:ssc-cost-diff} (Section~\ref{sec:stochastic})}\label{app:lem-ssc-cost-diff}

\begin{proof}[Proof of Lemma~\ref{lem:ssc-cost-diff}]
By its definition \eqref{eq:def-cost}, the long-run average holding cost of $\prio(\bo)$ is 
\begin{align*}
C(N,\prio(\bo)) &= \lim \sup_{T \to \infty} \frac{1}{T}\expect{\sum_{t=1}^T \sum_{j \in \set{Q}(t) \setminus \set{R}(t)} c(S_j(t)} = \lim \sup_{T \to \infty} \frac{1}{T} \expect{\sum_{t=1}^T \sum_{i \in \set{S}} c(i)Z_i(t)} \\
&\hspace{0.5in}= \sum_{i \in \set{S}} c(i)\left(\lim\sup_{T \to \infty}\expect{\frac{1}{T}\sum_{t=1}^T Z_i(t)}\right) = \sum_{i \in \set{S}} c(i)\expect{Z_i(\infty)}.
\end{align*}
The last equation is because $Z_i(t)$ converges in distribution to $Z_i(\infty)$ and is bounded ($0 \leq Z_i(t) \leq Q_i(t)$ and $Q_i(t) \leq N$ by Lemma~\ref{lem:queue-upper-bound}). Therefore, $C(N,\prio(\bo)) - \sum_{i \in \set{S}} z_i^{\bo}$ is equal to 
\begin{align*}
C(N,\prio(\bo)) - N\sum_{i \in \set{S}} c_iz_i^{\bo}&=\underbrace{\sum_{i \in \sub(\bo_{[m]})}c_i\expect{Z_i(\infty) - Nz_i^{\bo}}}_{\text{fully served}} + \underbrace{\sum_{i \in \sub(\partialNode) \setminus \sub(\bo_{[m]})} c_i\expect{Z_i(\infty)-Nz_i^{\bo}}}_{\text{partially served}} \\
&\hspace{0.5in}+ \underbrace{\sum_{i \in \set{S} \setminus \sub(\bo_{[m]}) \setminus \sub(\partialNode)} c_i\expect{Z_i(\infty) - Nz_i^{\bo}}}_{\text{never served}}.
\end{align*}
The proof bounds each of the three terms. For the first term, since $z_i^{\bo} \geq 0$ by definition, it suffices to bound $\sum_{i \in \sub(\bo_{[m]})} \expect{Z_i(\infty)}$, which is less than $\sum_{i \in \sub(\bo_{[m]})} \expect{Q_i(\infty)} \leq N\sum_{i \in \set{S}} \pi(i)  \leq N/\theta$. Hence, by Lemma~\ref{lem:queue-upper-bound} and the assumption that $\Pr\{\sum_{i \in \bo_{[m]}} Z_i(\infty) \leq \tilde{U}\sqrt{N}\} \geq (N-1)/N$:
\begin{equation}\label{eq:bound-served}
(\text{fully served}) \leq c_{\max} \sum_{i \in \sub(\bo_{[m]})} \expect{Z_i(\infty)} \leq c_{\max}\left(\tilde{U}\sqrt{N} + N/\theta\cdot \frac{1}{N}\right) \leq c_{\max}(\tilde{U}\sqrt{N}+1/\theta).
\end{equation}
For the second term, an induction on the tree shows that for any state $k$ and $i$ with $i \in \sub(k),$ the number of remaining jobs with state $i$ in period $t$ is at most $\expect{Z_i(t)} \leq \expect{Z_k(t - d(k,i))}\pi(i) / \pi(k)$, where $d(k,i)$ is the tree distance from state $k$ to state $i$. As a result, $\expect{Z_i(\infty)} \leq \expect{Z_k(\infty)}\pi(i) / \pi(k)$ by taking $t \to \infty.$ Moreover, as discussed in the proof sketch, any state $i \in \sub(\partialNode) \sub(\bo_{[m]})$ is either the partially-served state \ref{item:partially-served} or a partially-reduced state \ref{item:partially-reduced}. This implies $z_i^{\bo} = z_{\partialNode}^{\bo}\pi(i) / \pi(\partialNode)$ by our construction. Using this observation, the second term satisfies 
\begin{align*}
\sum_{i \in \sub(\partialNode) \setminus \sub(\bo_{[m]})} c_i\expect{Z_i(\infty)-Nz_i^{\bo}} 
&= \sum_{i \in \sub(\partialNode) \setminus \sub(\bo_{[m]})} c_i\left(\expect{Z_i(\infty)}-Nz_{\partialNode}^{\bo}\pi(i)/\pi(\partialNode)\right) \\
&\leq \sum_{i \in \sub(\partialNode) \setminus \sub(\bo_{[m]})} c_i\left(\expect{Z_{\partialNode}(\infty)}\pi(i)/\pi(\partialNode)-Nz_{\partialNode}^{\bo}\pi(i)/\pi(\partialNode)\right) \\
&= (\expect{Z_{\partialNode}(\infty)} - Nz_{\partialNode}^{\bo})\sum_{i \in \sub(\partialNode) \setminus \sub(\bo_{[m]})}c_i\pi(i) / \pi(\partialNode) \\
&\hspace{-1in}\overset{(a)}{\leq} (\expect{Z_{\partialNode}(\infty)} - Nz_{\partialNode}^{\bo})c_{\max}/\theta \overset{(b)}{\leq} \frac{c_{\max}}{\theta^2}\left(4\sqrt{N/\theta}\ln N + \tilde{U}\sqrt{N}+ 3\right).
\end{align*}
where (a) uses that $\sum_{i \in \sub(\partialNode)} \pi(i) = \sum_{\ell=0}^{L-1}\sum_{i \in \sub(\partialNode) \cap \set{S}_{\ell}} \pi(i) \leq \sum_{\ell=0}^{L-1}\pi(\partialNode)\theta^{\ell} \leq \pi(\partialNode)/\theta$ and (b) uses Lemma~\ref{lem:connect-partial}.

For the third term, observe that for any state $i$ and period $t$, $\expect{Z_i(t)} \leq N\lambda \pi(i)$ by induction from its parent state. Therefore, $\expect{Z_i(\infty)} \leq N\lambda \pi(i)$ by taking $t \to \infty.$ Since any state $i \in \set{S} \setminus \sub(\bo_{[m]}) \setminus \sub(\partialNode)$ is an un-reduced state \ref{item:un-reduced} with $z_i^{\bo} = \lambda \pi(i)$, the third term is equal to zero. 

Combining the above bounds for the three terms gives
\begin{align*}
C(N, \prio(\bo)) - N\sum_{i \in \set{S}} c_iz_i^{\bo} &\leq c_{\max}\left(\tilde{U}\sqrt{N}+1/\theta+\frac{1}{\theta^2}\left(4\sqrt{N/\theta}\ln N + \tilde{U}\sqrt{N}+3\right)\right) \\
&\leq 10c_{\max} \theta^{-2}\left(\sqrt{N/\theta}\ln N+\tilde{U}\sqrt{N}\right).
\end{align*}
\end{proof}

\subsection{Proof of Lemma~\ref{lem:stochastic} (Sections~\ref{sec:proof-pafou} and \ref{sec:stochastic})}\label{app:lem-stochastic}

To give the proof of Lemma~\ref{lem:stochastic}, we first simplify the bound in Lemma~\ref{lem:iterative-ssc} by tuning the parameter~$\delta$. 
\begin{lemma}\label{lem:ssc-simplified}
With probability at least $1 - 1 / N$, the number of remaining jobs in $\sub(\bo_{[m]})$ is upper bounded by $192\theta^{-1.5}\sqrt{N}\ln^2(73NL).$
\end{lemma}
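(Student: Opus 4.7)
}
The plan is to view this lemma as a direct corollary of Lemma~\ref{lem:iterative-ssc} by making a concrete choice of the free parameter $\delta$. Recall that Lemma~\ref{lem:iterative-ssc} gives, for any $\delta \in (0, e^{-1})$,
\[
\Pr\Bigl\{\sum_{i \in \sub(o_{[m]})} Z_i(\infty) > 40\theta^{-1}\sqrt{NL}\,\ln^2\tfrac{1}{\delta}\Bigr\} \;\leq\; 49 N L\, \delta^2 \ln\tfrac{1}{\delta}.
\]
So we just need to pick $\delta$ so that (i) the deviation threshold matches $160\theta^{-1}\sqrt{NL}\ln^2(55NL)$ and (ii) the failure probability is at most $1/N$.

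The key choice is $\delta = (55NL)^{-2}$. First, since $N, L \geq 1$ implies $55NL \geq 55 > e$, we have $\delta \in (0, e^{-1})$, so Lemma~\ref{lem:iterative-ssc} applies. For the threshold, $\ln(1/\delta) = 2\ln(55NL)$, and hence
\[
40\theta^{-1}\sqrt{NL}\,\ln^2(1/\delta) \;=\; 40\theta^{-1}\sqrt{NL}\,\bigl(2\ln(55NL)\bigr)^2 \;=\; 160\theta^{-1}\sqrt{NL}\,\ln^2(55NL),
\]
which matches the target bound exactly. For the probability side,
\[
49 N L \,\delta^2 \ln\tfrac{1}{\delta} \;=\; \frac{98\, NL\,\ln(55NL)}{(55NL)^4} \;=\; \frac{98\ln(55NL)}{55^4\, N^3 L^3},
\]
so the last step is to verify $98\ln(55NL) \leq 55^4\, N^2 L^3$ for all integers $N, L \geq 1$. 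This is a routine calculation: the right-hand side is $9\,150\,625\, N^2L^3$, which dominates $98\ln(55NL)$ since $\ln(55NL)$ grows only logarithmically while $N^2 L^3 \geq 1$ and the constant $55^4$ is already enormous compared to $98\ln 55$. This gives $49NL\delta^2\ln(1/\delta) \leq 1/N$, as required.

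There is no real obstacle here; the entire argument is parameter tuning, and the only thing to be careful about is choosing $\delta$ as a power of $55NL$ (rather than, say, $1/N$) so that the factor of $4$ arising from $\ln^2(1/\delta) = 4\ln^2(55NL)$ combines cleanly with the constant $40$ in Lemma~\ref{lem:iterative-ssc} to produce the constant $160$ and the base $55NL$ inside the logarithm. Once this choice is fixed, both required inequalities follow with substantial slack.
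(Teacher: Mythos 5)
Your proof is correct and takes the same approach as the paper: both derive the lemma by substituting a suitable $\delta$ into Lemma~\ref{lem:iterative-ssc}. The only difference is in execution—the paper defines $\delta^\star$ implicitly via the fixed-point equation $\ln(49NL)+\ln\ln(1/\delta^\star)=\ln(1/\delta^\star)$ and then bounds $\ln(1/\delta^\star)\leq 2\ln(55NL)$ through a case analysis, whereas your explicit choice $\delta=(55NL)^{-2}$ reaches the identical constants directly and with less work; all of your arithmetic checks out.
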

\begin{proof}
Lemma~\ref{lem:iterative-ssc} shows that for any $\delta \in (0,e^{-1}],$ $\Pr\{\sum_{i \in \sub(\bo_{[m]}} Z_i(\infty) > 48\theta^{-1.5}\sqrt{N}\ln^2\frac{1}{\delta}\} \leq 73NL\delta^2\ln\frac{1}{\delta}$. To highlight the dependence on $\delta$, we define 
\begin{equation}\label{eq:def-u-epsilon-delta}
U(\delta) =  48\theta^{-1.5}\ln^2\frac{1}{\delta}, \quad \varepsilon(\delta) = 73NL\delta^2\ln\frac{1}{\delta}.
\end{equation}
We aim to pick a $\delta^\star \in (0,e^{-1}]$ such that $\varepsilon(\delta^\star) \leq 1 / N$ and then bound $U(\delta^\star).$ Lemma~\ref{lem:iterative-ssc} then implies $\Pr\{\sum_{i \in \sub(\bo_{[m]})} Z_i(\infty) > U(\delta^\star)\sqrt{N}\} \leq 1 / N$.

Set $\delta^\star$ to be the maximum value in $(0,1]$ such that 
\begin{equation}\label{eq:def-delta}
\left(\ln(73NL)+\ln\ln\frac{1}{\delta^\star}\right) = \ln\frac{1}{\delta^\star}.
\end{equation}
Such a $\delta^\star$ must exist because (1) both the left and right hand sides of the equation are continuous function when $\delta \in (0,1]$; (2) when $\delta \to 0^+$, the right hand side grows to infinity faster than the left hand side; (3) when $\delta = 1$, the left hand side is larger than zero but the right hand side is zero. Moreover, $\delta^\star \leq e^{-1}$ as otherwise the left hand side of \eqref{eq:def-delta} is larger than the right hand side. 

We first show $\varepsilon(\delta^\star) \leq 1 / N.$ Taking exponentiation of both sides of \eqref{eq:def-delta} gives
\[
73NL\ln\frac{1}{\delta^\star}=\frac{1}{\delta^\star} \Rightarrow \varepsilon(\delta^\star) = \delta^\star \Rightarrow \varepsilon(\delta^\star) = \frac{1}{73NL\ln\frac{1}{\delta}^\star} \leq \frac{1}{N}. 
\]
Upper bounding $U(\delta^\star)$ is less immediate as it involves a term $\ln^2 \frac{1}{\delta^\star}.$ We define two auxiliary quantities, $\delta^0 = 1 / \exp(2\ln(73NL))$ and $\delta^1 = 1 / \exp(\exp(4))$ and consider three cases:
\begin{itemize}
\item if $\delta^\star \geq \delta^0$, then $\ln\frac{1}{\delta^\star} \leq 2\ln(73NL).$
\item if $\delta^\star \geq \delta^1$, then $\ln\ln\frac{1}{\delta^\star} \leq \ln\ln\frac{1}{\delta^1}$ and \eqref{eq:def-delta} gives $\ln\frac{1}{\delta^\star}  \leq \ln(73NL)+4 \leq 2\ln(73NL).$
\item It is impossible to see $\delta^\star < \min(\delta^0,\delta^1)$. This is because if $\delta^\star < \delta^1$, then $\ln \frac{1}{\delta^\star} > \exp(4)$ and thus $\ln\ln\frac{1}{\delta^\star} \leq \frac{1}{2}\ln\frac{1}{\delta^\star}$ by Fact~\ref{fact:ln-order}. As a result,
\[
\ln(73NL)+\ln\ln\frac{1}{\delta^\star}\leq \ln(73NL) + \frac{1}{2}\ln\frac{1}{\delta^\star} \overset{(a)}{<} \frac{1}{2}\ln\frac{1}{\delta^\star} + \frac{1}{2}\ln\frac{1}{\delta^\star} = \ln\frac{1}{\delta^\star},
\]
which contradicts \eqref{eq:def-delta}. The strict  inequality (a) is by the case assumption that $\delta^\star < \delta^0$.
\end{itemize}
Summarizing the above discussion gives $\ln\frac{1}{\delta^\star} \leq 2\ln(73NL)$ and thus using \eqref{eq:def-u-epsilon-delta},
\[
U(\delta^\star) \leq 192\theta^{-1.5}\ln^2(73NL),
\]
which finishes the proof.
\end{proof}
\begin{proof}[Proof of Lemma~\ref{lem:stochastic}]
By Lemma~\ref{lem:ssc-simplified}, \[\Pr\left\{\sum_{i \in \sub(\bo_{[m]})} Z_i(\infty) \leq 192\theta^{-1.5}\sqrt{N}\ln^2(73NL)\right\} \geq 1-1/N.\] As a result of Lemma~\ref{lem:ssc-cost-diff},
\begin{align*}
C(N,\prio(\bo)) - N\sum_{i \in \set{S}} c_iz_i^{\bo} &\leq 10c_{\max}\theta^{-2}(\sqrt{N/\theta}\ln N + 192\theta^{-1.5}\sqrt{N}\ln^2(73NL))\\
&\leq 1930c_{\max}\theta^{-3.5}\ln^2(73NL)\sqrt{N}.
\end{align*}
\end{proof}

\section{Robustness Guarantee for Inexact Estimations}\label{app:robustalg}
This section studies the robustness property of $\alg$ when its indices can be inaccurate due to estimation errors of the value functions. Suppose that instead of the ideal indices $\{\ind_{\alg}(i)\}_{i \in \set{S}}$, the algorithm operates with a list of estimated indices $\ind_{\esti} = \{\ind_{\esti}(i)\}_{i \in \set{S}}$ (for example, based on value functions estimated by some reinforcement learning algorithm as our practical algorithm $\hindalg$ in Section~\ref{sec:prac-imp}). These indices define a priority ordering $\bo^{\esti}$ for job states (assuming a consistent tie-breaking rule like in $\alg$). Our result is that the priority scheduling algorithm based on $\bo^{\esti}$ has a suboptimality gap that depends linearly on the error of $\ind_{\esti}$.
\begin{proposition}\label{prop:robustalg}
Letting $\varepsilon \coloneqq \max_{i \in \set{S}} |\ind_{\esti}(i) - \ind_{\alg}(i)|$,  algorithm $\prio(\bo^{\esti})$ satisfies
\[
\subopt(N,\prio(\bo^{\esti})) \leq (4\varepsilon/\theta) N + U(\theta,\ln(NL))\sqrt{N},~\text{
where $U(\theta,\ln(NL))$ is as in Theorem~\ref{thm:pafou}.}
\]
\end{proposition}
The remainder of this section proves Proposition~\ref{prop:robustalg}. Fix the estimated indices and denote the error $\max_{i \in \set{S}} |\ind_{\esti}(i) - \ind_{\alg}(i)|$ by $\varepsilon$. Let $(\bolds{q}^{\esti}, \bolds{\nu}^{\esti})$ denote the equilibrium of the priority algorithm $\prio(\bo^{\esti})$ obtained by the water-filling procedure in Section~\ref{sec:fluid-sol}. The key part of the proof is to show that this equilibrium has optimal fluid cost except for an extra term depending on the estimation error (an analogue of Lemma~\ref{lem:optimal-fluid}). We defer its proof to Section~\ref{sec:lem-fluid-error}.
\begin{lemma}\label{lem:fluid-error}
The equilibrium of $\prio(\bo^{\esti})$ has fluid cost $\sum_{i \in \set{S}} c_i(q_i^{\esti} - \nu_i^{\esti}) \leq C^\star + 4\varepsilon / \theta$.
\end{lemma}
\begin{proof}[Proof of Proposition~\ref{prop:robustalg}]
Since $\prio(\bo^{\esti})$ is a priority algorithm, for any system size $N$,
\begin{align*}
\subopt(N, \prio(\bo^{\esti})) &= C(N, \prio(\bo^{\esti})) - \inf_{\nalg \in \Pi} C(N, \nalg) \\
&\overset{\text{Lemma~\ref{lem:lower-bound}}}{\leq } C(N,\prio(\bo^{\esti})) - NC^\star \\
&\overset{\text{Lemma~\ref{lem:stochastic}}}{\leq } N\sum_{i \in \set{S}} c_i (q_i^{\esti} - \nu_i^{\esti}) + U(\theta,\ln(NL))\sqrt{N}- NC^\star \\
&\overset{\text{Lemma~\ref{lem:fluid-error}}}{\leq} N(C^\star + 4\varepsilon/\theta) + U(\theta,\ln(NL))\sqrt{N}- NC^\star\\ 
&= 4(\varepsilon/\theta)N + U(\theta,\ln(NL))\sqrt{N}.
\end{align*}
\end{proof}

\subsection{Bounding the fluid suboptimality (Lemma~\ref{lem:fluid-error})}\label{sec:lem-fluid-error}
For ease of notation we use $\bo$ to denote $\bo^{\esti}.$ As in Section~\ref{sec:fluid-sol}, we denote by $\bo_k$ the rank-$k$ state in $\bo$ and by $\bo_{[k]}$ the first $k$ states in $\bo$. Similar to how we prove the fluid optimality in Appendix \ref{app:prio-optimal}, the proof bounds the suboptimality of $(\bolds{q}^{\esti},\bolds{v}^{\esti})$ by connecting it with the optimal dual. Recall that $\gamma^\star$ is the optimal capacity dual. Let $\beta_i^\star = \beta_i^\star(\gamma^\star) = \max(0,c(i)+V^f(\gamma^\star,i)-\gamma^\star)$. The vector $(\beta_i^\star)_{i \in \set{S}}$ is an optimal solution to \eqref{eq:dual} by Lemma~\ref{lem:dual-structure}. Also let $(\bolds{q}^\star,\bolds{\nu}^\star)$ be an optimal solution to \eqref{eq:orifluid}. By strong duality and rearranging terms, the suboptimality of the equilibrium, $
\sum_{i \in \set{S}} c_i(q_i^{\esti} - \nu_i^{\esti}) - C^\star$, is equal to 
\begin{align}
\sum_{a \in \set{S}} \nu_a^\star c^f(a) - \sum_{a \in \set{S}} \nu_a^{\esti} c^f(a) &= (\lambda \sum_{i \in \set{S}} \pi_i\beta_i^\star + \mu \gamma^\star) - \sum_{a \in \set{S}} \nu_a^{\esti} c^f(a) \nonumber \\
&= 
\underbrace{\gamma^\star \left(\mu - \sum_{i \in \set{S}} \nu^{\esti}_i\right)}_{\text{Term 1}}+\underbrace{\sum_{i \in \set{S}} \beta_i^\star\left(\lambda \pi(i) - \sum_{a \in \anc(i)} \nu^{\esti}_a\frac{\pi(i)}{\pi(a)}\right) + }_{\text{Term 2}} \nonumber\\
&\hspace{0.2in}+\underbrace{\sum_{a \in \set{S}} \nu^{\esti}_a\left(\gamma^\star+\sum_{i \in \sub(a)} \beta^\star_i \frac{\pi(i)}{\pi(a)} - c^f(a)\right)}_{\text{Term 3}}. \label{eq:subopt-decomp}
\end{align}
The proof proceeds by bounding these three terms separately. The main intuition is that the priority order under $\bo^{\esti}$ should approximately rank states according to the three classes in Lemma~\ref{lem:prio-optimal}.

\begin{claim}
Term 1 is equal to zero.
\end{claim}
\begin{proof}
For the first term, we can directly apply the steps in the proof of Lemma~\ref{lem:prio-optimal} in Appendix~\ref{app:prio-optimal}. In particular, Lemma~\ref{lem:feasible-nu} shows that if the partially-served state exists, then $\sum_{i \in \set{S}} \nu_i^{\esti} = \mu$. Moreover, if this state does not exist, then the proof of Lemma~\ref{lem:prio-optimal} for Condition~\ref{item:c-1} shows that $\gamma^\star = 0$ because the capacity constraint is not tight (see \eqref{eq:g-gamma}). Therefore, Term 1 is equal zero.
\end{proof}

\begin{claim}
Term 2 is at most $2\varepsilon / \theta$.
\end{claim}
\begin{proof}
Let $h$ be the first position in $\bo$ such that $\ind_{\alg}(\bo_h) \leq \gamma^\star$ (if it does not exist then let $h = |\set{S}|+1$). Therefore, for any position $k < h$, its $\alg$ index satisfies $\ind_{\alg}(\bo_k) > \gamma^star$. Moreover, for position $k > h$,  its $\alg$ index satisfies
\[
\ind_{\alg}(\bo_k) \leq \ind_{\esti}(\bo_k) + \varepsilon \leq \ind_{\esti}(\bo_h) + \varepsilon \leq \ind_{\alg}(\bo_h) + 2\varepsilon \leq \gamma^\star + 2\varepsilon,
\]
where the first inequality is by the definition of the error $\varepsilon$; the second is because $\bo_k$ ranks before $bo_h$ under index $\ind_{\esti}$; the third is again by the error $\varepsilon$; and the last is by the definition of $h$. Therefore, for $i \in \set{S} \setminus \bo_{[h-1]}$, its dual variable satisfies
\[\beta_i^\star = \max(0,c(i)+V^f(\gamma^\star,i) - \gamma^\star) = \max(0,\ind_{\alg}(i) - \gamma^\star) \leq 2\varepsilon.\]
For the subset $\bo_{[h-1]}$, since every state in it has $c(i)+V^f(\gamma^\star,i) > \gamma^\star$, it is a subset of $\set{S}_{\mhigh}$ (defined in Lemma~\ref{lem:prio-optimal}).  Applying Lemma~\ref{lem:top-prob} and \eqref{eq:shi-prob} gives $\sum_{i \in \Top(\bo_{[h-1]})} \lambda \pi(i) \leq \sum_{i \in \Top(\set{S}_{\mhigh})} \lambda \pi(i) \leq \mu$. Therefore, the water-filling procedure in Section~\ref{sec:fluid-sol} ensures that for any $i \in \bo_{[h-1]}$, it satisfies $q_i^{\esti} = \nu_i^{\esti}$ and thus $\lambda \pi(i) = \sum_{a \in \anc(i)} \nu_a^{\esti} \pi(i) / \pi(a)$ by \eqref{eq:equiv-constraint}. We thus upper bound Term 2 by
\[
\text{Term 2} \leq \sum_{i \in \set{S} \setminus \bo_{[h-1]}} \beta_i^\star \lambda \pi(i) \leq 2\varepsilon \sum_{i \in \set{S}} \lambda \pi(i) \leq 2(\varepsilon/\theta).
\]
\end{proof}

\begin{claim}
Either Term 3 is at most $2\varepsilon / \theta$ or $\sum_{i \in \set{S}} c(i)(q_i^{\esti} - \nu_i^{\esti}) = C^\star$.
\end{claim}
\begin{proof}
Using Lemma~\ref{lem:sub-tree-beta}, We first simplify Term 3 by
\begin{equation}\label{eq:simterm3}
\text{Term 3} = \sum_{a \in \set{S}} \nu_a^{\esti} (\gamma^{\star} - V(\gamma^\star, a)) = \sum_{a \in \set{S}} \nu_a^{\esti} \max(0,\gamma^\star - \ind_{\alg}(a)),
\end{equation}
where we apply the fact that $V(\gamma^\star,a) = \min(\gamma^\star, \ind_{\alg}(a)).$
Let $e$ be the last position such that $\ind_{\alg}(\bo_e) \geq \gamma^\star$ (if it does not exist, let $e = 0$). For any position $k > e$, the corresponding state has $\ind_{\alg}(\bo_k) < \gamma^\star$. Moreover, for any $k < e$, 
\[
\ind_{\alg}(\bo_k) \geq \ind_{\esti}(\bo_k) - \varepsilon \geq \ind_{\esti}(\bo_e) - \varepsilon \geq \ind_{\alg}(\bo_e) - 2\varepsilon \geq \gamma^\star - 2\varepsilon.
\]
Let $m$ be the maximum position such that $\sum_{i \in \Top(\bo_{[m]})} \lambda \pi(i) \leq \mu$. The water-filling procedure in Section~\ref{sec:fluid-sol} ensures that any state after position $m + 1$ will have no allocated capacity, i.e., $\nu^{\esti}_{\bo_k} = 0$ for $k > m + 1$. Next consider two cases, which correspond to the two possible results of the claim:
\begin{itemize}
\item Suppose $e \geq m + 1$. This implies that $\nu_{\bo_k}^{\esti} = 0$ for any $k > e$. Applying \eqref{eq:simterm3} and above bounds on $\ind_{\alg}(\bo_k)$ for $k \leq e$ and $k > e$ gives
\begin{align*}
\text{Term 3} &= \sum_{a \in \bo_{[e]}} \nu_a^{\esti}\max(0,\gamma^\star - \ind_{\alg}(a)) +  \sum_{a \in \set{S} \setminus \bo_{[e]}} \nu_a^{\esti}\max(0,\gamma^\star - \ind_{\alg}(a)) \\
&\leq 2\varepsilon \sum_{a \in \bo_{[e]}} \nu_a^{\esti} \leq 2\varepsilon\mu \leq 2\varepsilon/\theta,
\end{align*}
where the first inequality is because (i) $\ind_{\alg}(\bo_k) \geq \gamma^\star - \varepsilon$ for $k \leq e$ and (ii) $\nu_{\bo_k}^{\esti} = 0$ for $k > e \geq m+1$; the second inequality is because $\sum_{a \in \set{S}} \nu_a^{\esti} \leq \mu$ by the capacity constraint; and the last inequality is because $\mu \leq 1 \leq 1 / \theta$. This corresponds to the first case of the claim.
\item Suppose instead that $e < m+1$. Then the water-filling procedure ensures that $q_i^{\esti} = \nu_i^{\esti}$ for any $i \in \sub(\bo_{[e]})$, i.e., there should be no job with state in $\sub(\bo_{[e]}).$ Moreover, \eqref{eq:bound-opt} shows 
$C^\star \geq \lambda \sum_{i \in  \set{S}_{\mlow}} \pi(i)c(i),$
where the set $\set{S}_{\mlow}$ contains all states with $\alg$ index less than $\gamma^\star$. Since we showed that $\ind_{\alg}(\bo_k) < \gamma^\star$ for $k > e$, the set $\set{S}_{\mlow}$ contains a subset $\set{S} \setminus \bo_{[e]}.$ Therefore, $C^\star \geq \lambda \sum_{i \in  \set{S}_{\mlow}} \pi(i)c(i) \geq \lambda \sum_{i \in  \set{S} \setminus \bo_{[e]}} \pi(i)c(i).$ The cost of the equilibrium of $\prio(\bo)$ then satisfies
\[
\sum_{i \in \set{S}} c(i)(q_i^{\esti} - \nu_i^{\esti}) \leq \sum_{i \in \set{S} \setminus \sub(\bo_{[e]})} c(i)(q_i^{\esti} - \nu_i^{\esti}) \leq \sum_{i \in \set{S} \setminus \bo_{[e]}} c(i)(\lambda \pi(i)) \leq C^\star,
\]
where the first inequality is because $q_i^{\esti} = \nu_i^{\esti}$ for any $i \in \sub(\bo_{[e]})$; the second inequality is because $\bo_{[e]}$ is a subset of $\sub(\bo_{[e]})$ and $q_i^{\esti} \leq \lambda \pi(i)$ by \eqref{eq:def-fluid-q}. We thus show the second case of the claim. 
\end{itemize}
\end{proof}

\begin{proof}[Proof of Lemma~\ref{lem:fluid-error}]
By the last claim, if Term 3 is not at most $2\varepsilon / \theta$, we can already conclude that the equilibrium is fluid optimal, finishing the proof. If it is at most $2 \varepsilon / \theta$, we combine it with the first two claims and apply the decomposition in \eqref{eq:subopt-decomp} to get the desired result.
\end{proof}

\section{Asymptotic Optimality of Gittins-Index Algorithm (Remark~\ref{remark:gittins})}\label{app:gittins}
This section shows how our analytical framework can prove asymptotic optimality of a different algorithm based on Gittins index. To define Gittins index, we first show that for a fixed state $i$, there is a threshold such that the cost-to-go function $V(\gamma,i)$ is equal to $\gamma$ when $\gamma$ is below the threshold and greater than $\gamma$ when $\gamma$ is larger than the threshold.   
The proof follows from standard results (e.g. lemma 3.5 of \cite{scully2025gittins}); for completeness, we provide it in Section~\ref{sec:gittins-exist}.
\begin{lemma}\label{lem:gittins-exist}
For any state $i$, there exists $g \geq 0$ such that $V(\gamma,i) = \gamma$ for $\gamma \leq g$ and $V(\gamma,i) < \gamma$ for $\gamma > g$.
\end{lemma}
With Lemma~\ref{lem:gittins-exist}, any state $i$ possesses a Gittins index, defined by 
\begin{equation}\label{eq:gittins}
\gi(i) \coloneqq \sup\{\gamma \geq 0 \colon V(\gamma,i) = \gamma\}.
\end{equation}
State-of-the-art computational methods solve the Gittins index in sub-cubic time on the size of the state space \cite{gast2023testing}. With these indices, we consider the following algorithm ($\gi$) that prioritizes jobs whose states have high Gittins index for service.

\begin{algorithm}[H]
\LinesNumbered
\DontPrintSemicolon
  \caption{
  \textsc{Gittins Index} ($\gi$)}\label{algo:gittins}
  \KwData{cost vector $\bolds{c}$ and transition probability $P$}
  \tcc{Offline training}
  Solve the Gittins index \eqref{eq:gittins} for all states \; 
  \tcc{Online scheduling}
  \For{$t = 1$ \KwTo $T$}{
    Observe the set of waiting jobs $\set{Q}(t)$ and the random services $R(t)$ \;
    Assign to each job $j \in \set{Q}(t)$ an index $\gi(S_j(t))$\; 
    Serve the $R(t)$ jobs from $\set{Q}(t)$ with highest indices using a consistent tie-breaking rule \;
    Incur cost of unserved jobs and observe new arrivals \;
  }
\end{algorithm}
Applying the analytical framework in Section~\ref{sec:analysis} shows that $\gi$ has vanishing suboptimality gap.
\begin{proposition}\label{prop:gittins}
The guarantee in Theorem \ref{thm:pafou} holds for $\gi$.
\end{proposition}
To prove Proposition~\ref{prop:gittins}, it is sufficient to prove that the equilibrium of $\textsc{GI}$ has optimal fluid cost (an analogue of Lemma~\ref{lem:optimal-fluid}). This is because Lemmas~\ref{lem:stochastic} and \ref{lem:lower-bound} do not depend on a specific priority algorithm and the proof of Theorem~\ref{thm:pafou} just combines Lemmas~\ref{lem:optimal-fluid}, \ref{lem:stochastic}, and \ref{lem:lower-bound}. Recall that the proof of Lemma~\ref{lem:optimal-fluid} follows directly from Lemma~\ref{lem:prio-optimal}, which partitions states into three classes $\set{S}_{\mhigh}, \set{S}_{\mequal}, \set{S}_{\mlow}$ depending whether $c(i)+V^f(\gamma^\star,i)$ is higher, equal, or lower than $\gamma^\star$ for a given state $i$. By Lemma~\ref{lem:prio-optimal}, a sufficient condition for the equilibrium of a priority algorithm to be fluid optimal is ranking $\set{S}_{\mhigh}$ before $\set{S}_{\mequal}$ before $\set{S}_{\mlow}$. The below lemma shows that this is the case for $\textsc{GI}$, whose proof is in Section~\ref{sec:gittins-dual}.
\begin{lemma}\label{lem:Gittins-dual}
The Gittins index satisfies that (i) for any state $i \in \set{S}_{\mhigh}$, $\gi(i) > \gamma^\star$; (ii) for any state $i \in \set{S}_{\mequal}$, $\gi(i) = \gamma^\star$; (iii) for any state $i \in \set{S}_{\mlow}$, $\gi(i) < \gamma^\star$.
\end{lemma}
\begin{proof}[Proof of Proposition~\ref{prop:gittins}]
Since the $\textsc{GI}$ algorithm ranks states based on their Gittins index, Lemma~\ref{lem:Gittins-dual} implies that it ranks $\set{S}_{\mhigh}$ before $\set{S}_{\mequal}$ before $\set{S}_{\mlow}$. Lemma~\ref{lem:prio-optimal} thus shows that its equilibrium is fluid optimal. Combining this result with Lemmas~\ref{lem:stochastic} and \ref{lem:lower-bound} thus finish the proof as in the proof of Theorem~\ref{thm:pafou} in Section~\ref{sec:proof-pafou}.
\end{proof}
\begin{remark}\label{remark:connection} 
There is certain equivalence between $\gi$, $\alg$, and $\textsc{Velocity}$ ($c\mu-$rule, which prioritizes jobs with highest instantaneous holding costs).

First, for $\gi$ and $\alg$, by Lemma~\ref{lem:Gittins-dual} and Lemma~\ref{lem:prio-optimal}, they are equivalent in the \textbf{fluid} setting as both of them will serve all states in $\set{S}_{\mhigh}$, one state in $\set{S}_{\mequal}$ (indeed, they will serve the same state if they follow the same tie-breaking rule), and no state in $\set{S}_{\mlow}$.However, these two priority rules can have different orderings of the states and thus different behaviors in the \textbf{stochastic} setting. For example, consider a setting with $\lambda < \mu$, so $\gamma^\star = 0$ as the system is underloaded. There are three states $\{\rootNode, s_1, s_2\}$ and $\rootNode$ transitions to $s_1$ or $s_2$ with equal probability. Their instantaneous holding costs are $c_{\rootNode}=12, c_1=11, c_2 =1$ respectively. $\alg$ thus prioritizes $\rootNode$ over $s_1$ over $s_2$ since their indices equal their instantaneous holding costs (as $\gamma^*=0$). However, $\gi$ will prioritize $s_1$ over $\rootNode$ over $s_2$ because their Gittins indices are $\gi(\rootNode) = \min(c_{\rootNode}, c_{\rootNode}/2 + (c_1 + c_2) / 4)=9, \gi(s_1) = c_1 = 11, \gi(s_2) = c_2 = 1.$ 

Second, for $\gi$ and $\textsc{Velocity}$, they are equivalent when the capacity is sufficiently small. This is because the state $i^\star$ with the highest instantaneous holding cost must also have the highest Gittins index; indeed, the Gittins index for this state is equal to its instantaneous holding cost. To see this, note that an equivalent form of Gittins index for a state $i$ is
\[
\gi(i) = \sup_{\tau \geq 1} \expect{\sum_{t=1}^{\tau} c(X(t)) \mid X(1) = i} / \expect{\tau},
\]
where $\tau$ is a stopping time and $X(\cdot)$ is a stochastic process evolving according to the job state Markov chain. Therefore, for state $i^\star$, its Gittins index is $\gi(i^\star) = c(i^\star)$ because continuing the process will not increase the value. For any other state, its Gittins index is at most $c(i^\star)$ since $c(X(t)) \leq c(i^\star)$ for any $t$. Therefore, both Gittins index and $\textsc{Velocity}$ will prioritize $i^\star$ over all other states. When the capacity is sufficiently small (e.g., only one job can be served for one period), Gittins index and $\textsc{Velocity}$ will only serve a job with state $i^\star$ (if there is any such job).
\end{remark}

\subsection{Existence of Gittins index (Lemmas~\ref{lem:gittins-exist})}\label{sec:gittins-exist} 
To prove Lemma~\ref{lem:gittins-exist}, we show below properties of $V(\gamma,i)$ as a function of the dual $\gamma$ in Lemma~\ref{lem:val-properties}. 
\begin{lemma}\label{lem:val-properties}
For any state $i$, the function $\gamma \mapsto V(\gamma,i)$ satisfies three properties: 
\begin{enumerate}
 \item [(i)] it is concave and upper bounded by $V(\gamma,i) \leq \gamma$;
 \item [(ii)] it is non-decreasing and $1$-Lipschitz: for any $0 \leq \gamma < \gamma'$, 
 $0 \leq V(\gamma',i) - V(\gamma,i) \leq \gamma'-\gamma.$
\end{enumerate}
\end{lemma}
\begin{proof}
In the Markovian ski-rental problem of Section~\ref{sec:ski-rental}, let $\smallsquare$ denote the action of serving a job (buying) and $\triangleright$ denote the action of not serving it (renting). As in eq. (3.4) of \cite{scully2025gittins}, the value function $V(\gamma,i)$ is equal to the minimum cost among the finite set of stationary policies mapping from states to the two actions:
\begin{equation}\label{eq:optimal-val}
V(\gamma,i) = \min_{\pi \colon \set{S} \to \{\smallsquare, \triangleright\}} \left(E_{\pi}(i) + \gamma P_{\pi}(i)\right),
\end{equation}
where $E_{\pi}(i)$ is the expected cost of policy $\pi$ until the state transitions into the terminal state when starting from $i$ and $P_{\pi}(i)$ is the probability that policy $\pi$ takes action $\smallsquare$.

For Property (i), the concavity of $\gamma \mapsto V(\gamma,i)$ is because  \eqref{eq:optimal-val} is the minimum among a finite set of linear functions. Moreover, $V(\gamma,i) \leq \gamma$ because of the trivial policy of always serving a job.

For Property (ii), let $V^{\pi}(\gamma,i) = E_{\pi}(i)+\gamma P_{\pi}(i).$ This function is non-decreasing and $1-$Lipchitz in $\gamma$. The proof completes because  the minimum of finitely many such functions in \eqref{eq:optimal-val} preserves such properties.
\end{proof}
\begin{proof}[Proof of Lemma~\ref{lem:gittins-exist}]
For any state $i$, define a function $f(\gamma) = V(\gamma,i) - \gamma$. This function is concave because $V(\gamma,i)$ (as well as the linear function) is concave in $\gamma$ ((i) of Lemma~\ref{lem:val-properties}). Therefore, the super-level set $\{\gamma \geq 0 \colon f(\gamma) \geq 0\}$ is convex, i.e., is an interval of $\mathbb{R}_+.$ Since $V(\gamma,i) \leq \gamma$ ((i) of Lemma~\ref{lem:val-properties}), the set $\{\gamma\geq 0\colon V(\gamma,i) = \gamma\}$ is the same as the set $\{\gamma \geq 0 \colon f(\gamma) \geq 0\}$ and is an interval. This interval must be finite because $\lim_{\gamma \to \infty} V(\gamma,i) < \infty$ by the finiteness of Markov tree depth assumed in Section~\ref{sec:model}. Moreover, it starts from $0$ since $V(0,i) = 0.$ We  thus conclude that there exists a value $g \geq 0$ such that $V(\gamma,i) = \gamma$ for $\gamma < g$ and $V(\gamma,i) < \gamma$ for $\gamma > g$. The proof finishes by noting that $V(\gamma,i) - \gamma$ is Lipschitz (by (ii) of Lemma~\ref{lem:val-properties}) and thus continuity requires that $V(g,i) = g$.
\end{proof}
\subsection{Gittins index and optimal dual (Lemma~\ref{lem:Gittins-dual})}\label{sec:gittins-dual}
To prove Lemma~\ref{lem:Gittins-dual}, we define a function $V_{i,\triangleright}(\gamma) = c(i) + V^f(\gamma,i)$, which is the optimal cost of not serving a job. The proof of Lemma~\ref{lem:Gittins-dual} uses the property that $V_{i,\triangleright}(\gamma)$ is $(1-\theta)-$Lipschitz, where $\theta$ is the minimum abandonment probability $\theta =\min_{i \in \set{S}}1 - \sum_{k \in \child(i)} P(i,k)$.
\begin{lemma}\label{lem:contraction}
For any state $i$ and $0 \leq \gamma < \gamma'$, $ V_{i,\triangleright}(\gamma') - V_{i,\triangleright}(\gamma) \in [0,(1-\theta)(\gamma'-\gamma)].$
\end{lemma}
\begin{proof}
Fix any state $i$ and any two non-negative values $\gamma < \gamma'.$ The difference $V_{i,\triangleright}(\gamma') - V_{i,\triangleright}(\gamma)$ is:
\[
V_{i,\triangleright}(\gamma') - V_{i,\triangleright}(\gamma) = V^f(\gamma',i) - V^f(\gamma,i) = \sum_{k \in \child(i)} P(i,k)(V(\gamma',k) - V(\gamma,k)).
\]
By (ii) of Lemma~\ref{lem:val-properties}, the difference is non-negative because $V(\gamma',k) \geq V(\gamma, k)$ for any state $k$ when $\gamma' > \gamma$. In addition, applying the $1-$Lipschitz property in Lemma~\ref{lem:val-properties}, $V(\gamma',k) - V(\gamma,k) \leq \gamma' - \gamma$ for any state $k$. Therefore, 
\[
V_{i,\triangleright}(\gamma') - V_{i,\triangleright}(\gamma) = \sum_{k \in \child(i)} P(i,k)(V(\gamma',k) - V(\gamma,k)) \leq \sum_{k \in \child(i)} P(i,k) (\gamma'-\gamma) \leq (1-\theta)(\gamma'-\gamma).
\]
\end{proof}
\begin{proof}[Proof of Lemma~\ref{lem:Gittins-dual}]
The proof shows the three results separately.

For result (i), fix a state $i \in \set{S}_{\mhigh}$, which satisfies  $V_{i,\triangleright}(\gamma^\star) > \gamma^\star.$ Take $\gamma' = V_{i,\triangleright}(\gamma^\star) > \gamma^\star.$ By the monotonicity in Lemma~\ref{lem:contraction}, the value function at $\gamma'$ satisfies $V_{i, \triangleright}(\gamma') \geq V_{i, \triangleright}(\gamma^\star) = \gamma'$. Therefore, since $V(\gamma',i) = \min(\gamma', V_{i,\triangleright}(\gamma')) = \gamma'$, the definition of $\gi(i)$ in \eqref{eq:gittins} gives $\gi(i) \geq \gamma' > \gamma^\star.$

For result (ii), take any state $i \in \set{S}_{\mequal}$, which satisfies $V_{i,\triangleright}(\gamma^{\star}) = \gamma^\star$. By definition \eqref{eq:gittins}, the Gittins index satisfies $\gi(i) \geq \gamma^\star$. Moreover, for any $\gamma' > \gamma^\star$, the Lipschitz property in Lemma~\ref{lem:contraction} shows that 
\[
V_{i,\triangleright}(\gamma') \leq V_{i,\triangleright}(\gamma^\star) + (1-\theta)(\gamma' - \gamma^\star) = \gamma^\star + (1-\theta)(\gamma' - \gamma^\star) < \gamma'.
\]
As a result, $V(\gamma', i) = \min(\gamma', V_{i,\triangleright}(\gamma')) < \gamma'$, showing that $\gi(i) \leq \gamma^\star$ and thus $\gi(i) = \gamma^\star$.

For result (iii), take any state $i \in \set{S}_{\mlow}$, which satisfies $V_{i, \triangleright}(\gamma^\star) < \gamma^\star$ and thus $V(\gamma^\star,i) < \gamma^\star$. Lemma~\ref{lem:gittins-exist} then implies that $\gi(i) < \gamma^\star$.
\end{proof}

\section{Robustness Checks of Numerical Results (Section~\ref{ssec:robustness})}
\label{app:simulations}
This appendix includes robustness checks for our model and numerical results. Appendix~\ref{app:model-general} discusses how our model can directly capture the expected policy-violating views without separating out the probability of violation. Appendices~\ref{app:paid} and \ref{app:organic-synthetic} study the robustness of our performance gain in synthetic datasets mimicking online ads and user-generated content respectively. Appendix~\ref{app:uncalibrated} assesses how miscalibration in the probability of violation affects algorithmic performance. Appendix~\ref{app:pvio-update} explores how our algorithm may capture evolution in the probability of violation. Appendix~\ref{app:bandit-learning} examines the compatibility of different scheduling algorithms with prior learning-based approaches on the probability of violation.

\subsection{Instantaneous cost based on expected policy-violating views}\label{app:model-general}
Section~\ref{sec:instantiate} approximates the metric of policy-violating views by replacing the actual violation with a probability of violation. This section shows that theoretically this is unnecessary. Our model can be instantiated to minimize the expectation of policy-violating views \eqref{eq:def-vioviews}. For a job (content piece) $j$ in the system, its state $S_j(t)$ at time $t$ captures all available information related to its actual violation $\violating(j)$ and its number of views. The cost for this job at time $t$, $c(S_j(t))$, is defined by
\begin{equation}\label{eq:cost-both}
c(S_j(t)) = \expect{\violating(j)\times \view(j,t) \mid S_j(t)},
\end{equation}
i.e., the conditional expectation of its instantaneous violating views for period $t$ based on current information. In this way, whether or not the (predicted) probability of violation is independent from the view trajectory or may change across time is irrelevant: it is not part of the model, and we can include it  into the state definition, viewing it as information related to the actual violation of a content piece.

\subsection{Simulations on ads content with synthetic data}\label{app:paid}
This subsection focuses on content moderation of online ads. In social media platforms, advertisers typically set up campaigns specifying a) a set of ads in the campaign, b) a metric they want to optimize (such as click-through rate), c) budget to be spent on the campaign, and d) time duration over which the campaign runs.  From the set of ads in the campaign, the platform attempts to identify the best performing ad(s) (as per the specified metric) and then spends most of the budget on this set of ads. Ad platforms typically use exploration-exploitation techniques to find the best ads \citep{chakrabarti2008mortal, schwartz2017customer}. Hence, apriori, the view trajectory of an ad is uncertain. All ads tend to get some views in the initial (exploration) period whereas later, only the set of chosen best performing {ads} get most of the views. Further, after the exploration period, the views on the best performing ads are typically stable over time due to pacing algorithms used by ad platforms \citep{agarwal2014budget}. 

\paragraph{Data generation.} Motivated by the above discussion, we create a synthetic dataset of ads as follows. The training set and test set are generated in the same manner, but with different initial random seed. For each dataset, there are $5,000$ campaigns. Each campaign has $5$ ads. Denoting the $k-$th ad of campaign $u$ by ad $(u,k)$, the training (test) dataset $\set{D}_{\text{train}}$ ($\set{D}_{\text{test}}$) consists
of $5,000 \times 5 = 25,000$ contents where a content $j$ is an ad $(u,k)$ for some $u \leq 5,000$ and $k \leq 5$. The policy-violating probability $p\violating((u,k))$ is identical across all ads of the same campaign, which is sampled from a $\mathrm{Beta}(1,3)$ distribution. The actual violation $\violating((u,k))$ of an ad is an i.i.d. Bernoulli random variable with mean $p\violating((u,k)).$ 

To generate the view trajectory of ads, a campaign $u$ has a random variable $X_u$ sampled from a Pareto distribution with a shape parameter $0.8$, which captures the campaign's per-period budget. An ad $(u,k)$ has a mean reward (e.g. click-through rate) $r_{u,k}$ sampled from a $\mathrm{Beta}(1,5)$ distribution. Over a horizon of $L = 100$ periods, the platform seeks to identify an ad from this campaign with the highest mean reward, which is a multi-armed bandit problem. We implement the UCB1 algorithm \citep{auer2002finite} for this purpose assuming that ad $(u,k)$ has a Bernoulli reward with mean $r_{u,k}$ when promoted (pulled) by the platform. The number of views ad $(u,k)$ will get for the $d-$th period, i.e., $\view((u,k), d)$ in the dataset \eqref{eq:dataset}, is equal to a Poisson random variable with rate $X_u$ if this ad is promoted for period $d$; and $0$ otherwise. As illustrated in the first plot of Figure~\ref{fig:example-trajectory}, the synthetic view trajectories for ads with {the} highest number of views grow roughly linearly. This reflects the characteristic of ad view trajectories where the ad identified as 
the best performing eventually gets a stable number of views in every period due to the platform's pacing algorithm.

\paragraph{Simulation results.} 
From Figure~\ref{fig:bandit-calibrated}, $\hindalg$ consistently has fewer policy-violating views than the other heuristics across the range of review ratios. Moreover, using $\hindalg$ reduces the total policy-violating views under $\textsc{pViolating}$ by $54\%$ to $85\%$, and that under $\textsc{pIV}$ or $\textsc{Velocity}$ by $2.6\%$ to $23.7\%$, depending on the review ratio. Moreover, the right plot of Figure~\ref{fig:bandit-calibrated} shows that in the ads setting, $\hindalg$ demonstrates $9\%$ to $45\%$ reviewer-hour savings compared to $\textsc{pIV}$ and $\textsc{Velocity}$ when the review ratio is not too small (at least $0.025$).

\begin{figure}[hbtp]
\centering
\includegraphics[width=2.7in]{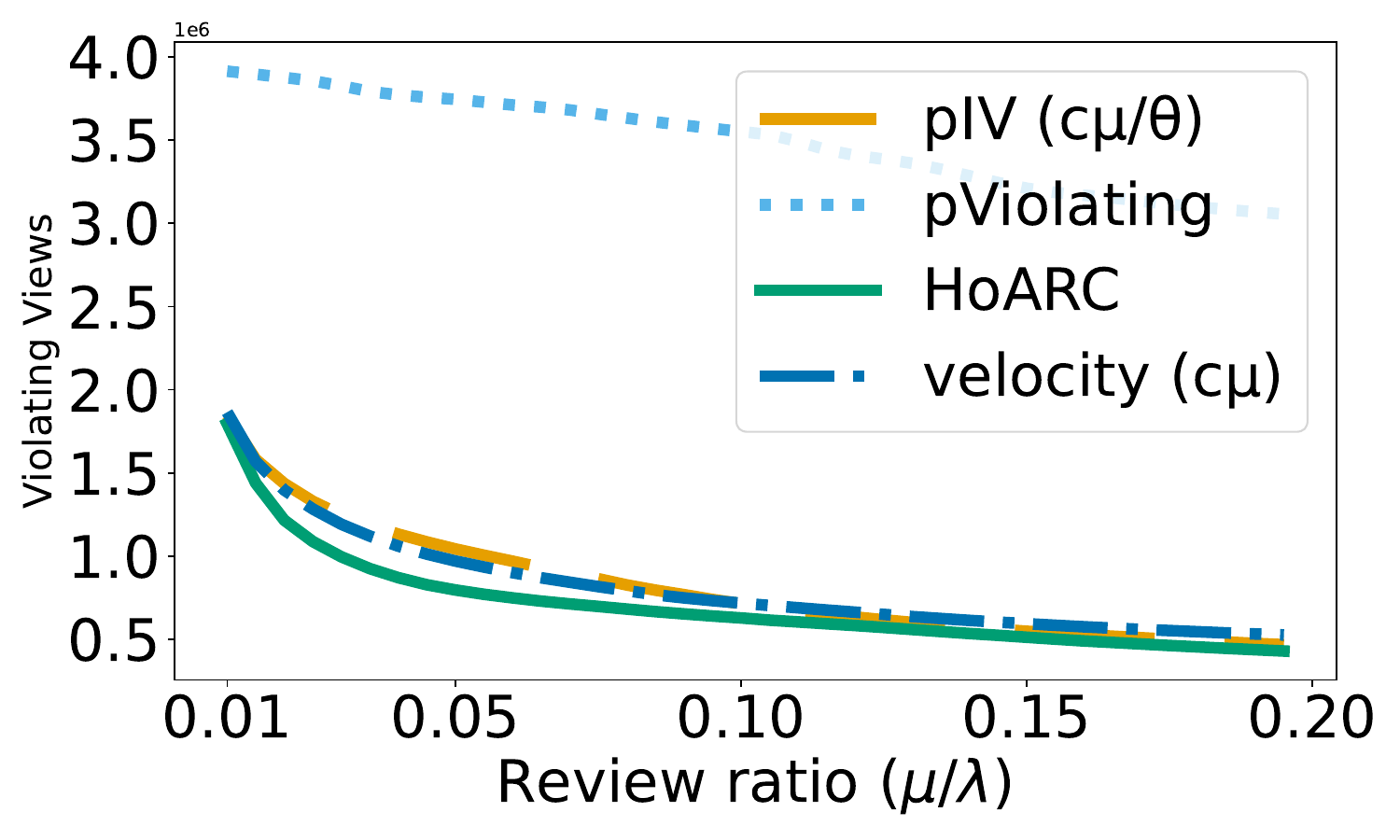}
\includegraphics[width=2.7in]{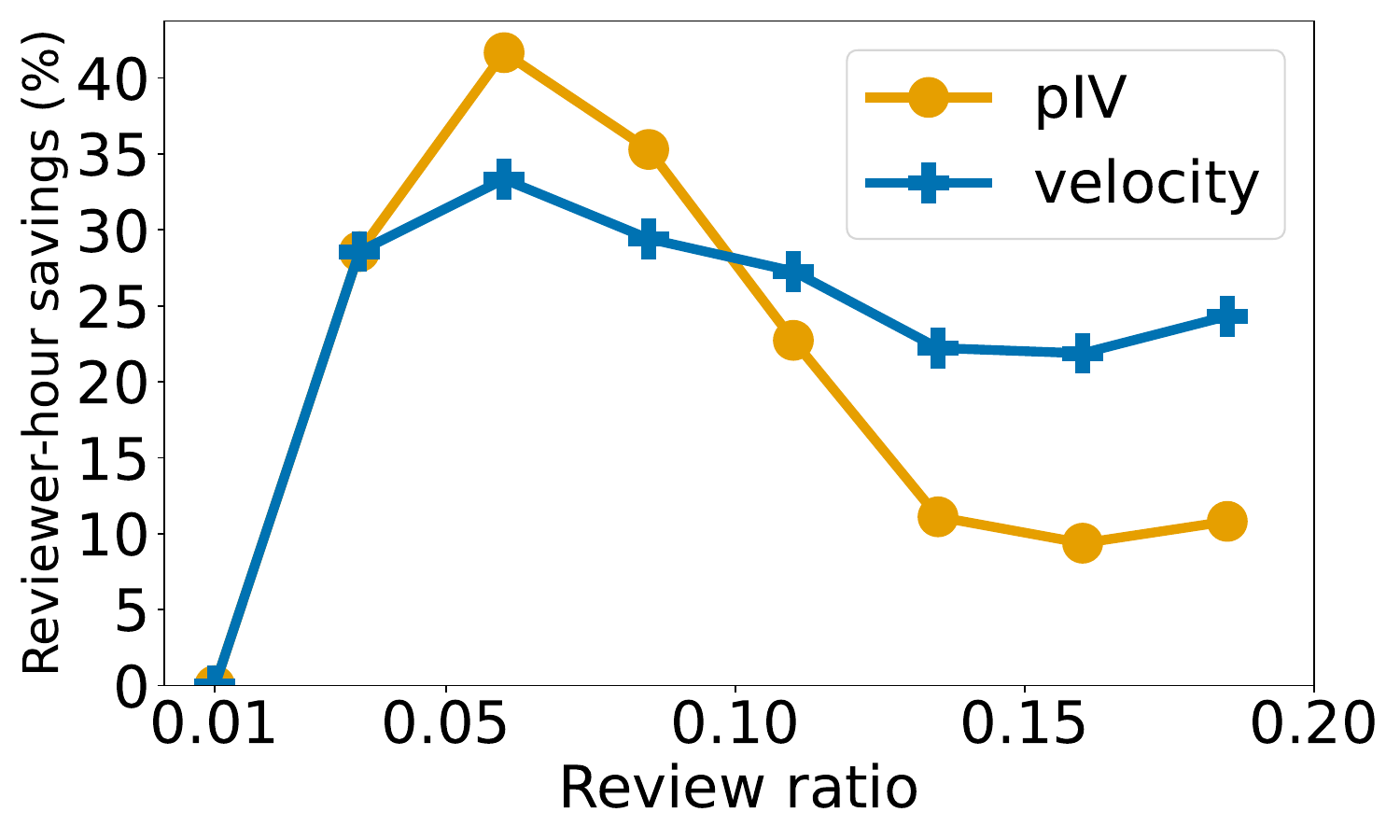}
\caption{Ads: reduced policy-violating views (\%) and reviewer-hour savings by $\hindalg$}
\label{fig:bandit-calibrated}
\end{figure}

\subsection{Simulations on user-generated content with synthetic data}\label{app:organic-synthetic}
This subsection evaluates different scheduling algorithms on content moderation of user-generated content (UGC), i.e., content generated and shared by social media users. The view patterns of UGC can be very different from that of online ads for which the platform exhibits more control. To capture the unique nature of UGC, this section generates synthetic view trajectory based on Hawkes processes, which have been widely utilized to model view trajectories of UGC \citep{rizoiu2017expecting, haimovich2021popularity}. 

\paragraph{Data generation.} For both the training and test datasets, we generate the view trajectory of $20,000$ pieces of content by a Hawkes process with exponentially decaying kernels motivated by \cite{haimovich2021popularity}, which describes a prediction model for views of Facebook posts. In our synthetic dataset, a content $j$ has a hyper-parameter $\alpha_j$ sampled uniformly from $[0.8,2]$ dictating the rate of decay of its views. With one view in the first period, the number of views in the $d-$th period, $\view(j,d)$ for $2 \leq d \leq 200$, is a Poisson random variable with mean given by
\begin{equation}\label{eq:hawkes}
\min\left\{5000, \sum_{d'=1}^{d-1} \left(1 + Y_{d',d}\right)\view(j,d')e^{-\alpha_j(d - d')}\right\},
\end{equation}
where $Y_{d',d}$ is a Pareto random variable with scale $4 / \alpha_j,$ and we cap the value by $5000$ to prevent the view from blowing up. An intuitive understanding of \eqref{eq:hawkes} is that each view in a past period $d'$ activates new views {(e.g., produces new shares of the post)} in period $d$ with an exponentially decaying probability. When an old view activates, the number of new views it prompts follows a Pareto distribution. This is motivated by the observation that certain combinations of power-law distributions approximate the node degree distribution of social networks \citep{gjoka2010walking}. See Figure~\ref{fig:example-trajectory} for examples of view trajectories in this synthetic dataset. Finally, the probability of policy-violating $p\violating(j)$ is sampled from a $\mathrm{Beta}(\alpha_j+4/\alpha_j,6)$ distribution.\footnote{With the above data generation procedure, we allow the probability of a piece content being policy-violating to be correlated with its view trajectory. This reflects the possibility that some inherent features of a content can impact both its probability of being policy-violating and its number of views.} 

\paragraph{Simulation results.} As in Figure~\ref{fig:bandit-calibrated}, Figure~\ref{fig:organic-calibrated} show the policy-violating views of the algorithms and reviewer-hour savings when using $\hindalg$ instead of existing heuristics. Other than for a very small review ratio ($r < 2\%$), $\hindalg$ consistently outperforms existing heuristics with  $1.5\%$ to $19\%$ reduction in policy-violating views and $10\%$ to $43\%$ reduction in reviewer hours for the synthetic UGC setting.

\begin{figure}[hbtp]
\centering
\includegraphics[width=2.7in]{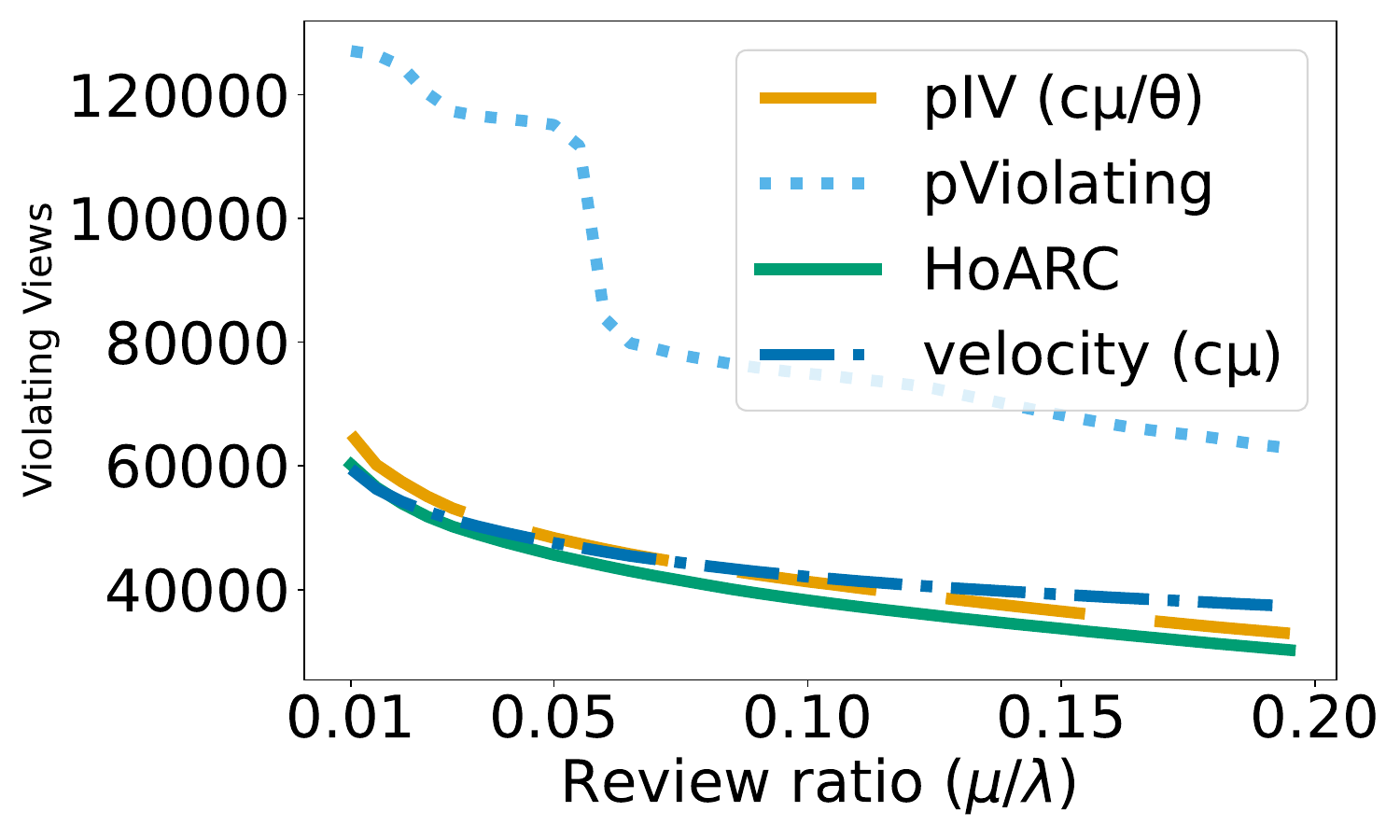}
\includegraphics[width=2.7in]{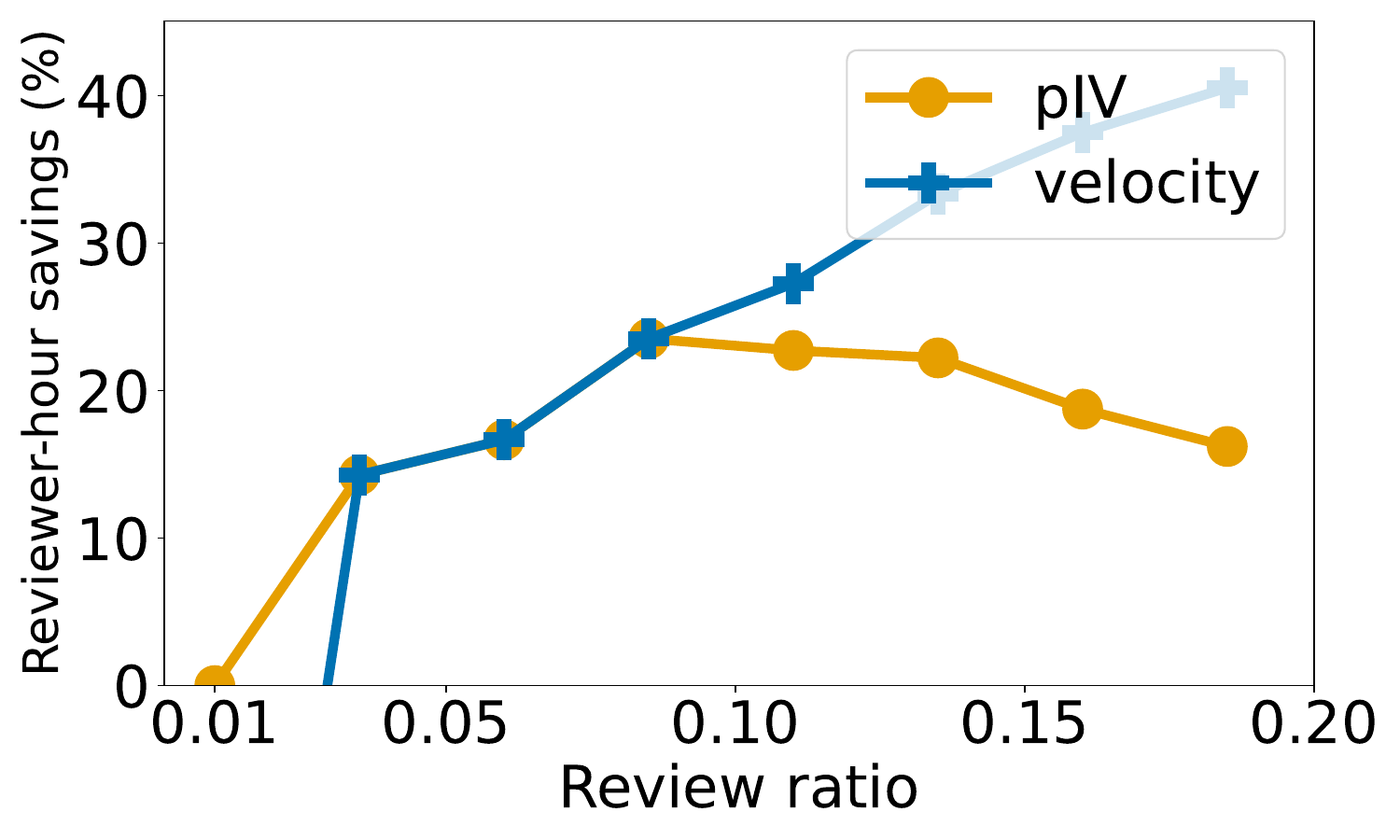}
\caption{UGC: reduced policy-violating views (\%) and reviewer-hour savings by $\hindalg$}
\label{fig:organic-calibrated}
\end{figure}

\subsection{The impact of uncalibrated predictions}\label{app:uncalibrated}
In Section~\ref{sec:video-real} and Appendices~\ref{app:paid}- \ref{app:organic-synthetic}, the ground truth violation of content $j$, $\violating(j)$, is generated as a Bernoulli random variable with mean $p\violating(j)$. This assumes that the predicted probability of policy violation is \emph{well calibrated} \citep{degroot1983comparison}. In practice, such an assumption often fails to hold true due to non-stationarity in content trends; see the discussion in \cite{avadhanula2022}. 

To study the robustness of our algorithm to uncalibrated predictions of policy-violating probability, we apply perturbed values of $p\violating(j)$ when simulating the scheduling algorithms. Specifically,  the actual violation of content $j$, $\violating(j)$ is still generated as a Bernoulli random variable with mean $p\violating(j)$. Then for a given maximum calibration error $\varepsilon$, which varies in $\{0.03 \cdot k: 0 \leq k < 10\}$, the prediction for content $j$ has a calibration error $e_j$ that is i.i.d. generated from a uniform random variable $[-\varepsilon,\varepsilon].$ We simulate each scheduling algorithm with a perturbed version of the policy-violating probabilities, $\widetilde{p\violating}(j)$, which is equal to $p\violating(j) + e_j$ clipped within $[0,1]$. The review ratio is set as $5\%$.

\begin{figure}
\centering
\includegraphics[width=2in]{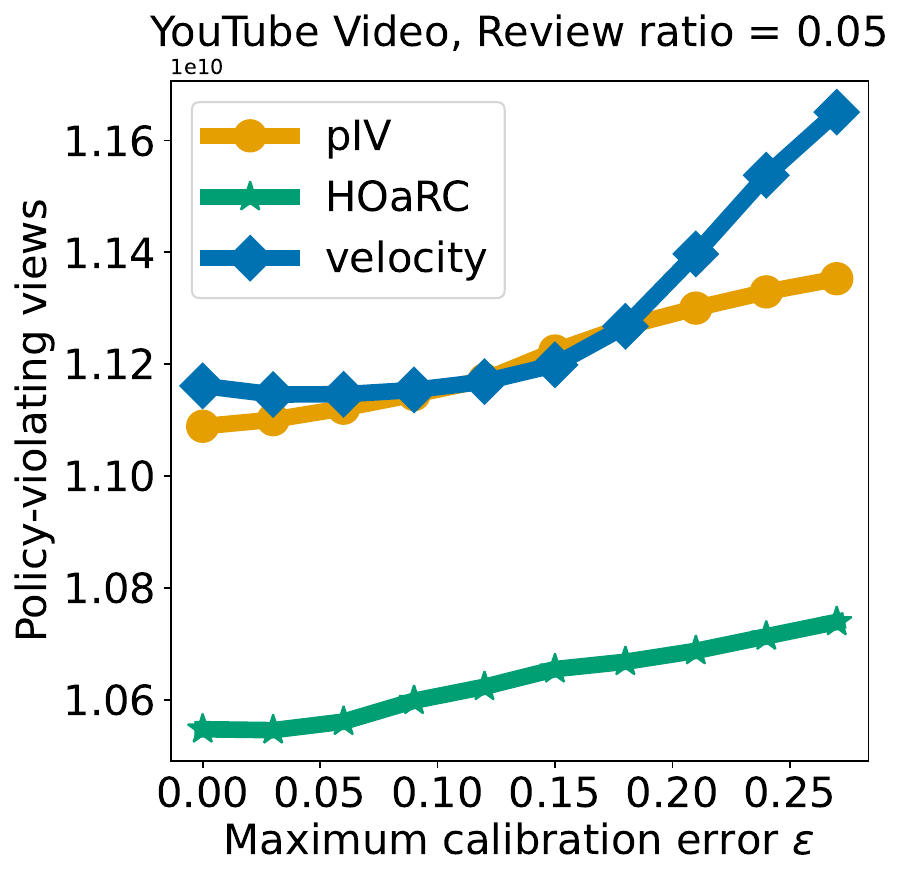}
\includegraphics[width=2in]{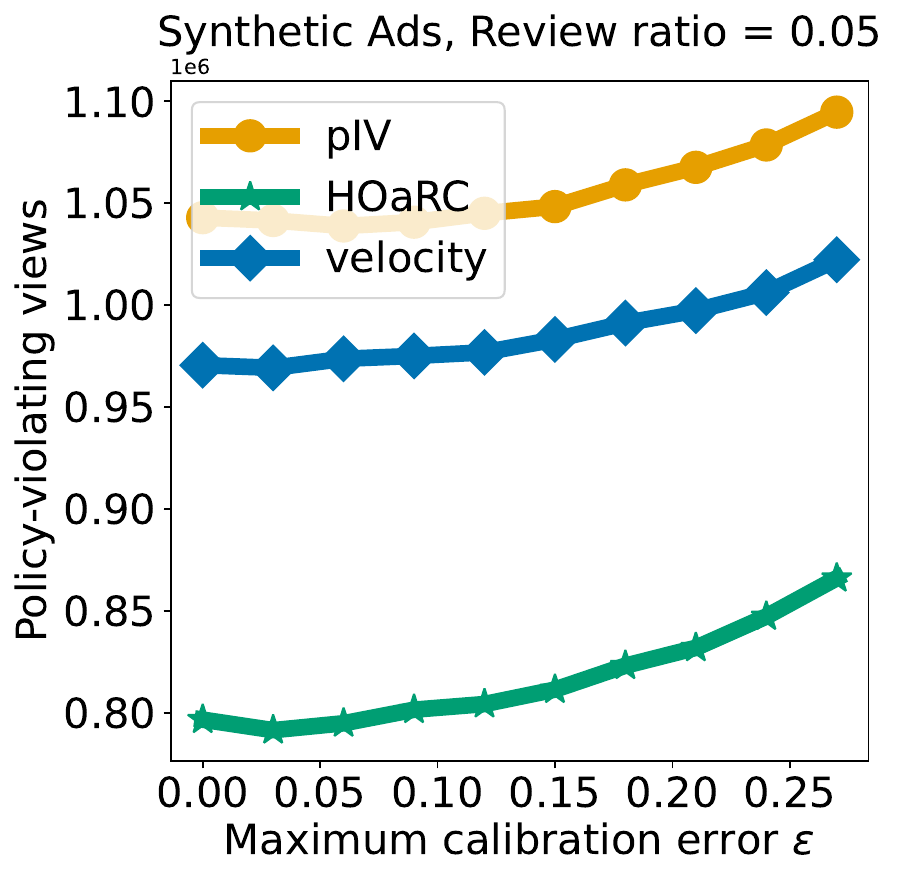}
\includegraphics[width=2.1in]{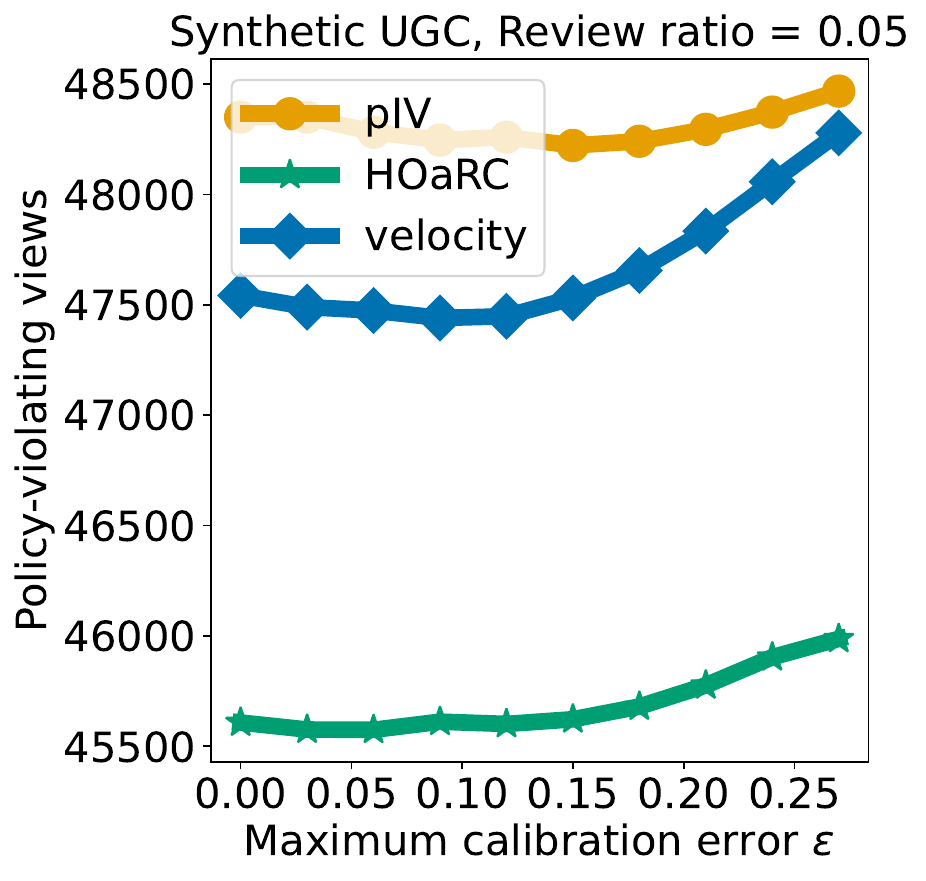}
\caption{Policy-violating views when the maximum calibration error varies ($5\%$ review ratio)}
\label{fig:uncalibrated-5}
\end{figure}

Figure~\ref{fig:uncalibrated-5} shows the policy-violating views of the considered scheduling algorithms as a function of the maximum calibration error $\varepsilon$ for the same datasets considered in Section \ref{sec:video-real} and Appendices~\ref{app:paid}- \ref{app:organic-synthetic}. Note that we do not include $\textsc{pViolating}$ as its policy-violating views is substantially higher than the others. A few observations follow from the figure. First, the policy-violating views of any algorithm generally increase with the maximum calibration error. Moreover, \textsc{Velocity} suffers the most from increase in calibration error as compared to the other two algorithms, which is evident in the synthetic UGC and YouTube video settings. Finally, the performance gain of $\hindalg$ remains consistent across different $\varepsilon$, supporting the robustness of $\hindalg$ to calibration error. 

\subsection{Capturing evolution in the probability of violation}\label{app:pvio-update}

This subsection extends the setting in Section~\ref{sec:numerics} to capture evolution in the probability of violation. We first discuss how our model in Section~\ref{sec:model} can indeed capture this setting and then present simulation results that study when capturing the evolution is fruitful or not. Throughout this section, we use $p\violating(j,\tau)$ to denote the platform's (predicted) probability of violation for content $j$ in the $\tau-$th period since it arrives to capture the evolution.

A simple way to capture such evolution is to use the model instantiation in Appendix~\ref{app:model-general}, which does not separate out the probability of violation in the objective. To capture evolution in the probability of violation, we only need to include into the state of a job $S_j(t)$ the current probability of violation $p\violating(j,t)$. However, as discussed in Section~\ref{sec:instantiate}, the model in Appendix~\ref{app:model-general} requires costly data.

\noindent\textbf{Simulation set-up.} The above tradeoff between an accurate model and the difficulties to collect good data for it motivates the following simulation. The simulation tries to understand when do we need to work with the accurate model in Appendix~\ref{app:model-general} to incorporate the platform's changing belief on the probability of violation.

The simulation set-up follows Section~\ref{sec:video-real} but incorporates time-varying probability of violation. In Section~\ref{sec:video-real}, we generated the probability of violation  uniformly in $[0,1].$ Here, we generate a series, $p\violating(j,0),p\violating(j,1),\ldots,p\violating(j,L)$ ($L$ is the maximum lifetime of a content piece), corresponding the platform's belief on the probability of violation of content $j$ in different periods from its arrival. The generation process mimics how a platform may update its belief based on new information (such as user reports). Specifically, for a content piece $j$,
\begin{itemize}
\item its actual violation $\violating(j)$ is a Bernoulli random variable with mean $p_j \sim \mathrm{Uniform}[0,1]$; the quantity $\violating_j$ is only observable upon human review. The probability $p_j$ corresponds to the correct prior the platform should have about this content piece. 
\item The platform starts with an initial prior $p\violating(j,0) \in (0,1)$ about the violation $\violating_j$ when the content just arrives. This prior may may not be equal to the correct prior $p_j$.
\item For the $\tau-$period after the content piece's arrival, with $\tau \geq 0$, the platform collects new information from its users about this content. In particular, if the content is policy-violating ($\violating_j = 1$), the platform observes a signal $Y(j,\tau)\sim \mathrm{Bernoulli}(0.1)$, i.e., the probability that some user reports this content is $0.1$. On the other hand, if the content is not policy-violating ($\violating_j = 0$), the platform may still observe a signal $Y(j,\tau) \sim \mathrm{Bernoulli}(0.01)$; that is, there can be false positives. 
\item Based on the observed signal $Y(j,\tau)$, the platform performs a Bayes rule to obtain a posterior probability of violation by:
\[
p\violating(j,\tau+1) = \left\{
\begin{aligned}
\frac{0.1p\violating(j,\tau)}{0.1p\violating(j,\tau)+0.01(1-p\violating(j,\tau))},&~~\text{if }Y(j,\tau) = 1\\
\frac{0.9p\violating(j,\tau)}{0.9p\violating(j,\tau)+0.99(1-p\violating(j,\tau))},&~~\text{if }Y(j,\tau) = 0.
\end{aligned}
\right.
\]
\end{itemize}
It is easy to see that the probability of violation $p\violating(j,\tau)$ converges to the actual violation $\violating(j)$ as $\tau \to \infty$. However, the convergence speed depends on the initial prior $p\violating(j,0).$ The simulation considers two settings for the prior: \textsc{CorrectPrior}, where the initial prior is set to the correct prior ($p\violating(j,0) = p(j)$); \textsc{WrongPrior}, where the initial prior is uniformly at random from $[10^{-3},10^{-2}]$.

\noindent\textbf{Tested algorithms:} The simulation tests how algorithms perform under the $\textsc{CorrectPrior}$ and $\textsc{WrongPrior}$ settings. It considers four algorithms (we choose not to include $\textsc{pViolating}$ as its performance is clearly suboptimal from Figure~\ref{fig:youtube-calibrated}):
\begin{itemize}
\item \textsc{Velocity}: recall that $d(j,t)$ is the number of periods content $j$ has arrived until period $t$. The index for content $j$ in period $t$ is $p\violating(j,d(j,t)) \times \view(j,t-1)$. This is the same velocity scheduling algorithm in Section~\ref{sec:set-up}, but it plugs in the current probability of violation (while the original algorithm just uses the same probability for all periods.)
\item \textsc{pIV}: again, we change the original algorithm by using the current probability of violation. The index is $p\violating(j,d(j,t)) \times $predicted remaining views of content $j$ from period $t$.
\item $\hindalg$: this is the original algorithm in \eqref{eq:ind-hindalg}, which trains a machine learning model to predict the expectation of $\min(\gamma,\text{future views})$. The index is given by 
\[
p\violating(j,d(j,t)) \times \left(\view(j,t-1) + \expect{\min(\gamma,\text{future views})\mid S_j(t)} \right),
\]
where the expectation corresponds to the output of a trained machine learning model with the current state / features $S_j(t)$ as inputs.
\item $\hindalg_{\vioviews}$: different from the original $\hindalg$, which only models evolution in the view trajectory, this algorithm incorporates the evolution in the probability of violation. In particular, it uses the cost definition in \eqref{eq:cost-both} and trains a machine learning model that predicts $\min(\gamma,\violating(j) \times \text{future views})$. For a content piece $j$ in period $t$, it uses an index
\[
p\violating(j,d(j,t)) \times \view(j,t-1) + \expect{\min(\gamma,\violating(j)\times \text{future views}) \mid S_j(t)}.
\]
Note that the expectation term in the index is \emph{not} multiplied by the current probability of violation as  the former already captures the uncertainty in the violation variable. Moreover, the expectation term captures the future evolution of the probability of violation by including $p\violating(j,d(j,t))$ into the state representation $S_j(t)$.
\end{itemize}
\noindent\textbf{Simulation results:} We simulate the above four algorithms in the \textsc{CorrectPrior} and the \textsc{WrongPrior} settings. Figure~\ref{fig:correctPrior} shows the policy-violating views of these algorithms in the \textsc{CorrectPrior} setting under different review ratios. The plot shows that when the platform has correct prior ($p\violating(j,0) = p_j$ for any content $j$), $\hindalg$ is performing sufficiently well, which is consistent with the results in Figure~\ref{fig:youtube-calibrated} when the platform simply uses the prior throughout a content piece's lifetime. Moreover, explicitly incorporating the evolution in the probability of violation is unnecessary, as $\hindalg_{\vioviews}$ is not performing better than $\hindalg$. Indeed, $\hindalg_{\vioviews}$ is slightly less efficient because it needs to solve a more difficult prediction problem of predicting $\violating_j \times \text{future views}$. 
\begin{figure}[H]
\centering
\includegraphics[width=0.8\linewidth]{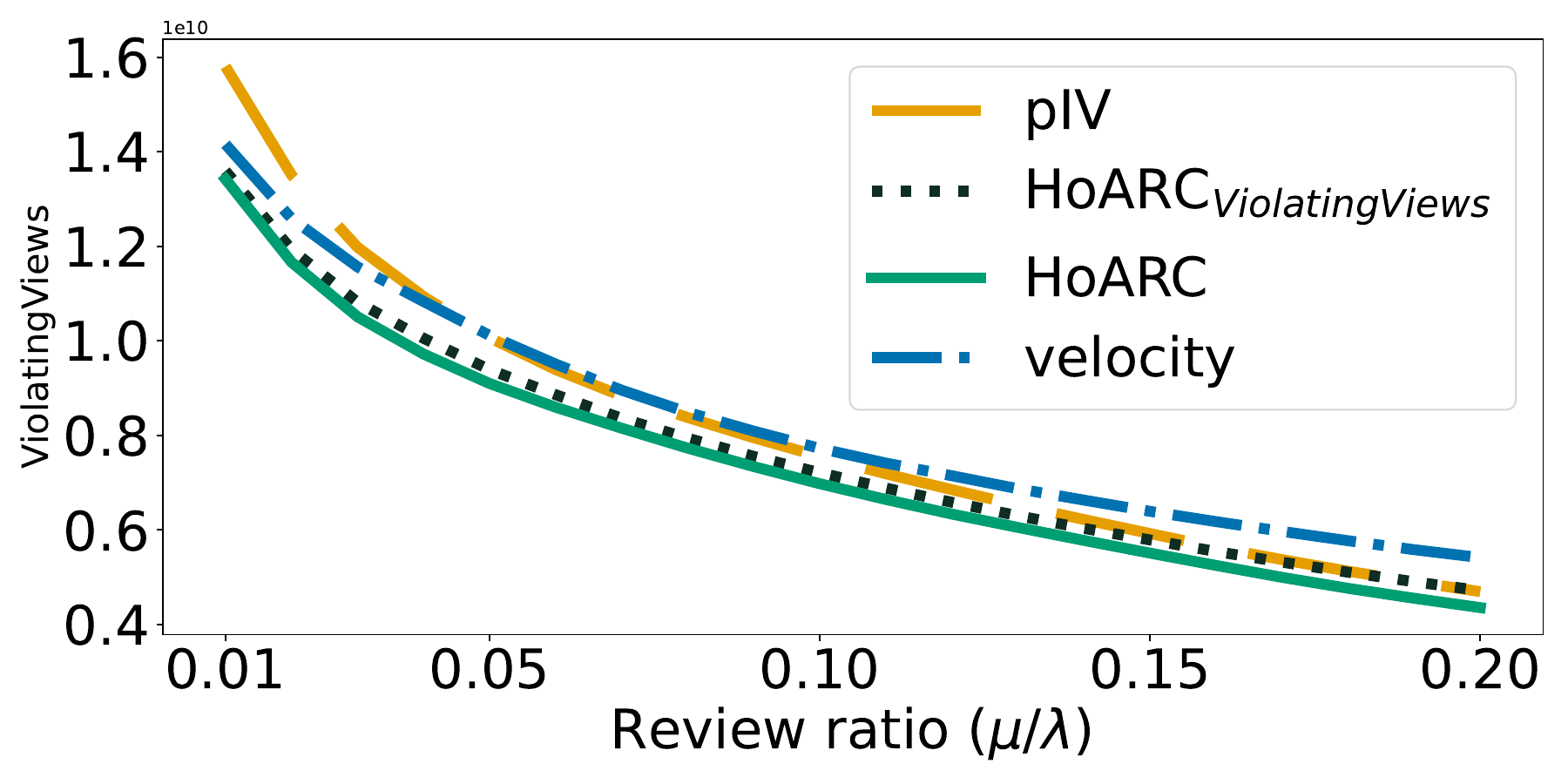}
\caption{Policy-violating views when the platform has a correct prior of violation of each content and gradually improves it.}
\label{fig:correctPrior}
\end{figure}

When the platform does not have a correct prior, incorporating the evolution of $p\violating$ into the model becomes useful. Figure~\ref{fig:wrongPrior} shows the policy-violating views of the considered algorithms in the $\textsc{WrongPrior}$ setting. In this case, $\hindalg$ has degraded performance because the probability of violation it uses in a period is inaccurate. However, $\hindalg_{\vioviews}$ retains its strong performance and substantially outperforms other algorithms.

\begin{figure}[H]
    \centering
    \includegraphics[width=0.8\linewidth]{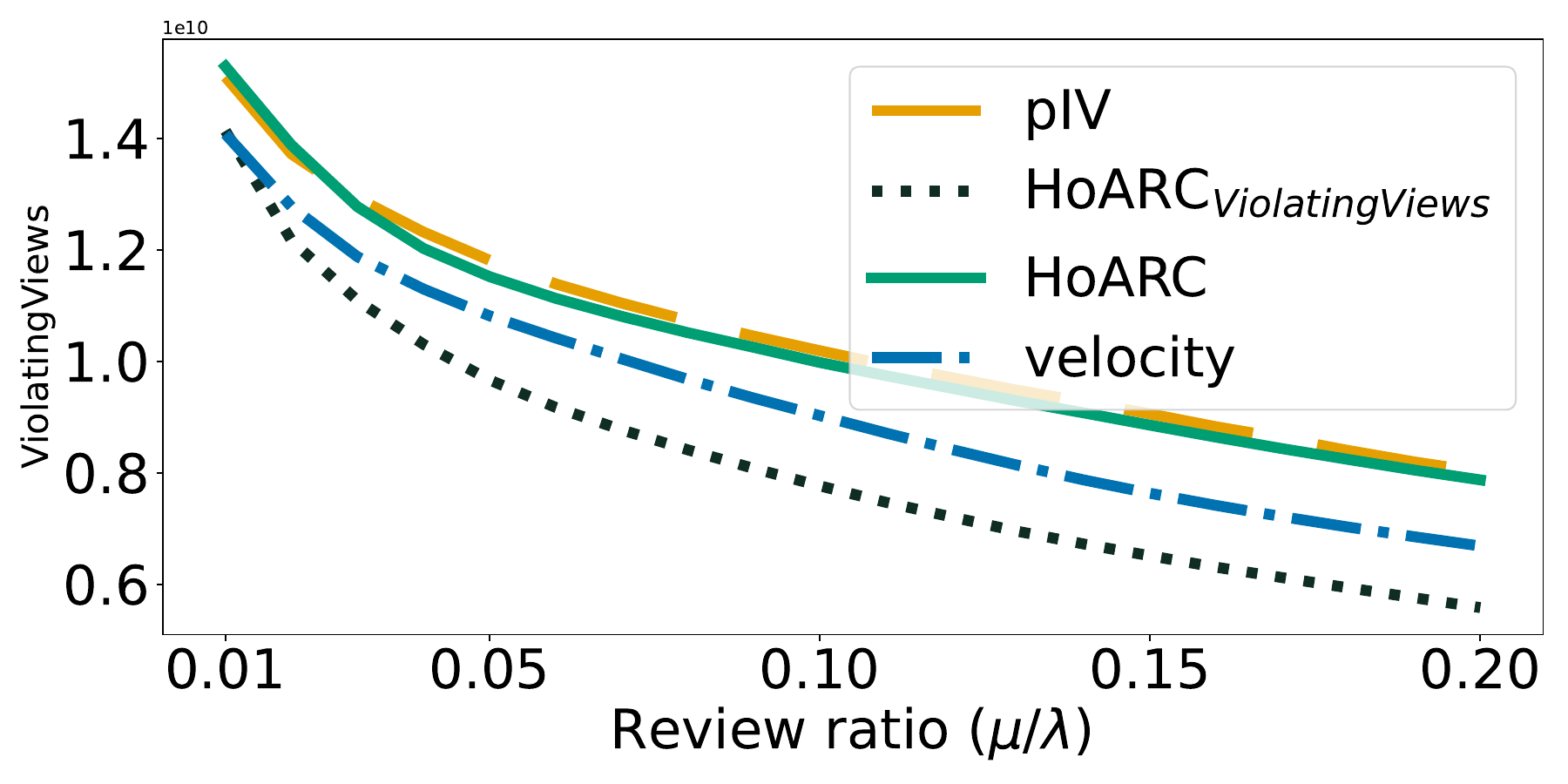}
    \caption{Policy-violating views when the platform has a wrong prior of violation of each content and gradually improves it.}
    \label{fig:wrongPrior}
\end{figure}

\subsection{Incorporating bandit learning on the probability of violation}\label{app:bandit-learning}
This subsection studies how our method, which aims to tackle uncertainty in the view trajectory, may integrate with recent bandit-learning-based approaches \cite{avadhanula2022,lykouris2024learning} for the uncertainty in the probability of violation. In what follows, we first review the basic idea of \cite{avadhanula2022} (\cite{lykouris2024learning} is building on top of it and provides further theoretical guarantee). We then discuss the simulation set-up and lastly the simulation results. 

\noindent\textbf{Review of the bandit-learning-based approach.} Our simulations so far have assumed that the probability of violation, $p\violating(j)$, is given and correct, or may evolve according to a simple process as in Appendix~\ref{app:pvio-update}. In practice, platforms use sophisticated machine learning classifiers to predict this probability \cite{halevy2022preserving} using features such as the contained texts, photos, and videos in a piece of content. Since these are large models, they are hard-to-train and only trained infrequently, causing their effectiveness to become weaker and weaker due to non-stationary content violation trends. To address this issue, \cite{avadhanula2022} proposes using labels collected online to fine-tune an easy-to-train small aggregator of several hard-to-train classifiers. The algorithm in \cite{avadhanula2022} works as follows:
\begin{itemize}
    \item when a piece of content $j$ arrives, $m$ trained classifiers corresponding to different platform policies will output a list of probability that this content violates these policies, denoted by $x_{j,i}$ for classifier $i \in \{0,\ldots,m-1\}$. \cite{avadhanula2022} divides the $[0,1]$ intervals into $b$ bins $\{\set{B}_k\}_{k \leq b}$ and creates a feature vector $\btheta_j$ of dimension $m \times b$ such that $\btheta_{j, i\times b+k} = x_{j,i}\indic{x_{j,i} \in \set{B}_k}$ for any $i \in \{0,\ldots,m-1\}, k \in \{1,\ldots,b\}$.
    \item Let $\set{D}(t)$ be the dataset of reviewed content pieces before period $t$, i.e., $\set{D}(t) = \cup_{\tau < t} \set{R}_{\tau}$ where $\set{R}_{\tau}$ is the set of reviewed content in period $\tau$. 
    \item \cite{avadhanula2022} applies an upper-confidence-bound (UCB) estimate for the probability of violation $p\violating(j,t)$ of content $j$ in period $t$. Specifically, focusing on each dimension $d$ of the feature vector, \cite{avadhanula2022} estimates a one-dimensional linear model with an unknown parameter $\beta_d$ such that $\violating(j') \approx \beta_d \theta_{j',d}$ for $j' \in \set{D}(t).$ It then creates a UCB estimate of $\beta_d$, denoted by $\bar{\beta}_d$, and estimates the probability of violation by setting $p\violating(j,t) = \max_{d \leq mb} \theta_{j,d}\bar{\beta}_d$. The motivation of using an UCB estimate is to overestimate the probability of violation and thus to encourage reviews for content whose probability of violation has high uncertainty.
\end{itemize}
\noindent\textbf{Simulation set-up.} To simulate a setting where learning $p\violating$ is challenging, we incorporate datasets with real-world text content to the setting in Section~\ref{sec:video-real}. Following \cite{lykouris2024learning}, we use a dataset containing texts from Wikipedia (the test set in \cite{dataset-wiki}) and a dataset containing texts from a platform called Civil Comments (the test set in \cite{dataset-civil}). Each data-point of these datasets includes the original text and ground truth human labels on whether this text violates each of five policies. To obtain their feature vectors as discussed above, we query a pretrained machine learning model \cite{detoxify} for each text. The model outputs a list of five $[0,1]-$values representing the probabilities that this text will violate each of the five policies. We then create a feature vector for each text based on the above procedure by mapping each value into a bin.

The next step is to match these text data with the YouTube dataset in Section~\ref{sec:video-real} (which only contains view trajectories). For each each data-point in the training set of Section~\ref{sec:video-real}, we randomly assign it a text data-point in the Wikipedia dataset. That is, we view this content piece having the text from the assigned text data-point and the view trajectory from the original trajectory data-point. Similarly, for each data-point in the test set of Section~\ref{sec:video-real}, we randomly assign it a data-point in the Civil Comments dataset. This assignment process creates a distribution shift in the text population between the training set and the test set, under which bandit learning with online violation data may be useful.

The simulation considers two methods to obtain the probability of violation. Denote the training set by $\set{D}_{\text{train}}$ and the test set by $\set{D}_{\text{test}}$. The first method, \textsc{Offline}, trains an offline linear model $\{\hat{\beta}_d\}_{d \leq mb}$ based on the texts in the training set so that $\violating(j') \approx \beta_d \theta_{j',d}$ for $j' \in \set{D}_{\text{train}}$. For each content piece $j$ in the test set, we set $p\violating(j,t) = \max_{d \leq mb} \theta_{j,d}\hat{\beta}_d$ for any period $t$. That is, the prediction is fixed to the output of the offline model. The second method, $\textsc{Bandit}$, follows the bandit-learning approach of \cite{avadhanula2022} discussed above, which gradually adapts to the test set based on online collected data.

\noindent \textbf{Simulation results.} The simulation tests six algorithmic combinations between $\{\textsc{Offline},\textsc{Bandit}\}$ for $p\violating$ estimation and $\{\textsc{Velocity}, \textsc{pIV}, \hindalg\}$ for scheduling principles. The below table reports their policy-violating views when the review ratio is $1\%, 5\%, 10\%$, or $15\%$. This table shows that (1) using bandit learning generally improves the performance (except for \textsc{Velocity}); (2) our algorithm $\hindalg$ obtains the best performance (smallest number of policy-violating views) when the learning approach is either $\textsc{Offline}$ or $\textsc{Bandit}$ across review ratios, highlighting that our new scheduling principle is compatible with existing learning-based approaches.
\setlength{\tabcolsep}{4pt}
\begin{table*}[ht]
\centering
\begin{tabular}{l|rr|rr|rr|rr}
\hline
& \multicolumn{2}{c|}{1\%} & \multicolumn{2}{c|}{5\%} & \multicolumn{2}{c|}{10\%} & \multicolumn{2}{c}{15\%} \\
\cline{2-9}
& $\textsc{Offline}$ & $\textsc{Bandit}$ & $\textsc{Offline}$ & $\textsc{Bandit}$ & $\textsc{Offline}$ & $\textsc{Bandit}$ & $\textsc{Offline}$ & $\textsc{Bandit}$ \\
\hline
$\textsc{Velocity}$ & 16.2 & 16.2 & 10.6 & 10.8 & 7.69 & 7.17 & 5.47 & 5.53 \\
$\textsc{pIV}$      & 27.4 & 26.8 & 10.3 & 9.18 & 5.95 & 5.54 & 3.99 & 3.93 \\
$\hindalg$          & 15.8 & 15.8 & 10.1 & 8.51 & 5.48 & 5.33 & 3.79 & 3.70 \\
\hline
\end{tabular}
\caption{Policy-violating views across review ratios, with all values normalized to $10^{8}$.}
\label{tab:results_all_ratios}
\end{table*}

\section{Useful Facts}\label{app:fact}
This section states several useful analytical facts and concentration bounds. 

The below fact shows condition for $x$ such that $a\ln x \leq x / 2$.
\begin{fact}\label{fact:ln-order}
If $a \geq 1$ and $x \geq \exp(4a\ln(e\cdot a))$, then $a\ln x \leq x / 2.$
\end{fact}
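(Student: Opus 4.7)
The plan is to treat this as a one-variable calculus inequality by studying the function $f(x) = x/2 - a\ln x$ on $[x_0, \infty)$, where $x_0 := \exp(4a\ln(e\cdot a))$. It suffices to show two things: (i) $f$ is monotone nondecreasing on $[x_0,\infty)$, and (ii) $f(x_0) \geq 0$. Together these give $f(x) \geq f(x_0) \geq 0$ for all $x \geq x_0$, which is exactly the claim $a\ln x \leq x/2$.

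For (i), I would differentiate: $f'(x) = 1/2 - a/x$, which is nonnegative as soon as $x \geq 2a$. So I only need to verify $x_0 \geq 2a$, i.e.\ $4a\ln(ea) \geq \ln(2a)$. Since $a \geq 1$ gives $\ln(ea) = 1 + \ln a \geq 1$, the left side is at least $4a \geq 4$, while the right side is at most $\ln(2a) \leq 2a - 1$ (or simply bounded by inspection), so this comparison is immediate.

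For (ii), plugging $\ln x_0 = 4a\ln(ea)$ into $f(x_0) \geq 0$ reduces the claim to
\begin{equation*}
x_0 = (ea)^{4a} \;\geq\; 8a^2 \ln(ea).
\end{equation*}
Here I would lower-bound the left side by $(ea)^4 = e^4 a^4$ (using $4a \geq 4$ and $ea \geq 1$), and upper-bound the right side using $\ln(ea) = 1 + \ln a \leq 2a$ for $a \geq 1$, giving $8a^2\ln(ea) \leq 16 a^3$. The desired inequality then becomes $e^4 a^4 \geq 16 a^3$, i.e.\ $a \geq 16/e^4 \approx 0.29$, which holds trivially under $a \geq 1$.

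The main obstacle, such as it is, is just choosing clean slack-free bounds in step (ii) so that the inequality $(ea)^{4a} \geq 8a^2\ln(ea)$ can be verified by a one-line estimate rather than a messy case split on $a$; the constant $4$ in the exponent of $\exp(4a\ln(ea))$ is comfortably large enough that the crude bounds $(ea)^{4a} \geq e^4 a^4$ and $\ln(ea) \leq 2a$ suffice. No induction or auxiliary lemmas from earlier in the paper are needed.
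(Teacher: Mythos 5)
Your proposal is correct and follows essentially the same route as the paper: both study $g(x)=x/2-a\ln x$, show $g'\geq 0$ for $x\geq 2a$, and verify nonnegativity at the left endpoint $x_0=\exp(4a\ln(ea))$. The only (immaterial) difference is in how the endpoint value is bounded — the paper uses the Taylor bound $e^y\geq 1+y+y^2/2$, while you use $(ea)^{4a}\geq e^4a^4$ and $\ln(ea)\leq 2a$; both estimates go through.
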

\begin{proof}
Let $g(x) = x/2 - a\ln x$. Its derivative is $g'(x) = 1/2 - a/x$ which is positive when $x \geq 2a$. Moreover, by Taylor's expansion,
\[
g(e^{4a\ln (e\cdot a)}) = \frac{1}{2}e^{4a\ln (e\cdot a)}- 4a^2\ln (e \cdot a) \geq \frac{1}{2} +2a\ln (e\cdot a) + 4a^2\ln^2(e\cdot a) - 4a^2 \ln (e \cdot a) \geq 0,
\]
where the last inequality is because $4a^2 \ln^2(e\cdot a)  \geq 4a^2 \ln(e\cdot a).$ Using $e^{4a\ln(e\cdot a)} \geq 4a\ln(e \cdot a) \geq 2a$ shows that the function $g(x)$ is increasing and non-negative over $x \geq \exp(4a\ln(e\cdot a)).$
\end{proof}

We use Hoeffding's Inequality, whose proof can be found in, e.g., Theorem 2.8 of \cite{boucheron2013concentration}.
\begin{fact}[Hoeffding's Inequality]\label{fact:hoeffding}
Given $n$ independent random variables $X_1,\ldots,X_n$ such that $X_i \in [a_i,b_i]$ almost surely for all $i$. Letting $S = \sum_{i=1}^n (X_i 
- \expect{X_i})$, for any $x > 0$,
\[
\max\left(\Pr\{S \leq -x\},\Pr\{S \geq x\}\right) \leq \exp\left(-\frac{2x^2}{\sum_{i=1}^n (b_i-a_i)^2}\right).
\]
\end{fact}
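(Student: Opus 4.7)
The plan is to prove the upper tail $\Pr\{S \geq x\} \leq \exp(-2x^2/\sum_i(b_i-a_i)^2)$; the lower tail then follows from the identical argument applied to $-X_i \in [-b_i,-a_i]$. The approach is the classical Chernoff--Cram\'er method. For any $\lambda > 0$, Markov's inequality applied to the nonnegative random variable $e^{\lambda S}$ gives $\Pr\{S \geq x\} \leq e^{-\lambda x}\expect{e^{\lambda S}}$. Independence of the centered summands $Y_i \coloneqq X_i - \expect{X_i}$ yields $\expect{e^{\lambda S}} = \prod_{i=1}^n \expect{e^{\lambda Y_i}}$, reducing the problem to bounding the moment generating function of a single centered, bounded random variable.

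The core technical step is Hoeffding's lemma: if $\expect{Y} = 0$ and $Y \in [a,b]$ almost surely, then $\expect{e^{\lambda Y}} \leq \exp(\lambda^2(b-a)^2/8)$. I would prove this via convexity of $t \mapsto e^{\lambda t}$: writing $t = \frac{b-t}{b-a}\,a + \frac{t-a}{b-a}\,b$ and applying the inequality pointwise at $t = Y$ gives $e^{\lambda Y} \leq \frac{b-Y}{b-a}e^{\lambda a} + \frac{Y-a}{b-a}e^{\lambda b}$. Taking expectations (using $\expect{Y}=0$, which also forces $a \leq 0 \leq b$) yields $\expect{e^{\lambda Y}} \leq (1-p)e^{\lambda a} + p\,e^{\lambda b}$ with $p \coloneqq -a/(b-a) \in [0,1]$. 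Setting $h \coloneqq \lambda(b-a)$ and taking logarithms produces $\log \expect{e^{\lambda Y}} \leq \phi(h) \coloneqq -ph + \log(1 - p + p e^h)$, and a direct computation gives $\phi(0) = \phi'(0) = 0$ together with $\phi''(h) = \frac{p(1-p)e^h}{(1-p+pe^h)^2} \leq \tfrac{1}{4}$ (the maximum variance of a tilted Bernoulli, attained when $pe^h = 1-p$). Taylor's theorem with remainder then gives $\phi(h) \leq h^2/8 = \lambda^2(b-a)^2/8$.

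Combining the pieces, each factor is bounded via Hoeffding's lemma applied to $Y_i$, whose range equals $b_i - a_i$, so that $\expect{e^{\lambda S}} \leq \exp\!\bigl(\lambda^2 \sum_i (b_i-a_i)^2 / 8\bigr)$. Hence $\Pr\{S \geq x\} \leq \exp\!\bigl(-\lambda x + \lambda^2 \sum_i(b_i-a_i)^2 / 8\bigr)$, and minimizing this quadratic in $\lambda$ at $\lambda^\star = 4x/\sum_i(b_i-a_i)^2 > 0$ produces the claimed exponent $-2x^2/\sum_i(b_i-a_i)^2$. The main obstacle is the uniform bound $\phi''(h) \leq 1/4$ inside Hoeffding's lemma: this is exactly what yields the constant $1/8$ in the subgaussian exponent and ultimately the factor of $2$ in the final bound, and everything else is a mechanical chain of Markov's inequality, tensorization by independence, and one-variable calculus.
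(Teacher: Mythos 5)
Your proof is correct and is the standard Chernoff--Cram\'er argument via Hoeffding's lemma, which is exactly the proof in the reference the paper cites for this fact (Theorem 2.8 of Boucheron, Lugosi, and Massart); the paper itself gives no proof. Nothing to add.
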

We also use the following Chernoff bound restated from theorem 2.4 of \cite{chung2006complex}.
\begin{fact}[Chernoff bound]\label{fact:chernoff} Given $n$ independent random variables $X_1,\ldots,X_n$ such that $X_i \in \{0,1\}$ for all $i$. Letting $X = \sum_{i = 1}^n X_i,$ for any $\varepsilon > 0$,
\[
Pr\{X \geq \expect{X} + \varepsilon\} \leq \exp\left(-\frac{\varepsilon^2}{2(\expect{X} + \varepsilon / 3)}\right).
\]
\end{fact}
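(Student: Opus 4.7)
The plan is to apply the standard exponential moment (Chernoff) method and then pass from a Bennett-type bound to the stated Bernstein form via an elementary one-variable inequality. Write $\mu = \expect{X} = \sum_{i=1}^n p_i$ where $p_i = \Pr\{X_i = 1\} \in [0,1]$. For any $t > 0$, Markov's inequality applied to $e^{tX}$ gives
\[
\Pr\{X \geq \mu + \varepsilon\} \;\leq\; e^{-t(\mu+\varepsilon)} \, \expect{e^{tX}}.
\]
Since the $X_i$ are independent and $\{0,1\}$-valued, $\expect{e^{tX_i}} = 1 + p_i(e^t-1) \leq \exp\bigl(p_i(e^t-1)\bigr)$ using $1 + x \leq e^x$, so $\expect{e^{tX}} \leq \exp\bigl(\mu(e^t-1)\bigr)$.

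Next, I would optimize over $t$. Setting $t = \ln\bigl(1 + \varepsilon/\mu\bigr) > 0$ and letting $u = \varepsilon/\mu$, a direct computation yields the classical Bennett bound
\[
\Pr\{X \geq \mu + \varepsilon\} \;\leq\; \exp\!\bigl(-\mu\, \phi(u)\bigr), \qquad \phi(u) \coloneqq (1+u)\ln(1+u) - u.
\]
(If $\mu = 0$ then $X \equiv 0$ almost surely and the bound is trivial; if $\mu > 0$ but $\varepsilon/\mu$ is large, this expression is still well-defined.)

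The final step is to translate $\phi(u)$ into the Bernstein-type denominator in the statement via the elementary inequality
\[
\phi(u) \;=\; (1+u)\ln(1+u) - u \;\geq\; \frac{u^2}{2 + \tfrac{2}{3}u}, \qquad \forall\, u \geq 0.
\]
Granting this, $\mu\,\phi(\varepsilon/\mu) \geq \varepsilon^2 / \bigl(2\mu + \tfrac{2}{3}\varepsilon\bigr) = \varepsilon^2 / \bigl(2(\mu + \varepsilon/3)\bigr)$, which is exactly the exponent in the fact. The inequality above is the one routinely used to pass from Bennett to Bernstein; I would verify it by setting $h(u) = (2+\tfrac{2}{3}u)\phi(u) - u^2$, checking $h(0) = h'(0) = h''(0) = 0$, and showing $h'''(u) \geq 0$ for $u \geq 0$ (since $\phi'''(u) = -(1+u)^{-2}$ combines with lower-order terms to leave a non-negative third derivative). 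This is the one spot requiring care, but it is a routine one-variable calculus verification; the rest is textbook Chernoff machinery and no novel ingredient is needed.
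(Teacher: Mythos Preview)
Your proof is correct: the exponential-moment argument, the optimization yielding the Bennett bound $\exp(-\mu\phi(u))$, and the passage to Bernstein form via $\phi(u) \geq u^2/(2+\tfrac{2}{3}u)$ are all standard and valid. Your third-derivative verification of $h(u) = (2+\tfrac{2}{3}u)\phi(u) - u^2 \geq 0$ works exactly as you sketch (one computes $h'''(u) = \tfrac{4}{3}u/(1+u)^2 \geq 0$ and integrates up from $h(0)=h'(0)=h''(0)=0$).

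The paper, however, does not prove this fact at all: it is simply restated from \cite[Theorem 2.4]{chung2006complex} and used as a black box. So there is no ``paper's own proof'' to compare against; you have supplied a complete argument where the paper only cites one.
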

\end{document}